\definecolor{SkyBlue}{RGB}{14, 118, 188}
\definecolor{BrightRed}{RGB}{223,82, 78}
\DeclareMathOperator{\eig}{eig}
\DeclareMathOperator{\tr}{tr}
\DeclareMathOperator{\sign}{sign}
\DeclareMathOperator{\pen}{pen}
\DeclareMathOperator{\vect}{vec}
\DeclareMathOperator*{\argmax}{arg\,max}
\DeclareMathOperator*{\argmin}{arg\,min}
\newtheorem{myTheorem}{Theorem}
\newtheorem{myLemma}{Lemma}
\newtheorem{myCorollary}{Corollary}
\newtheorem{myRemark}{Remark}
\newtheorem{myDefinition}{Definition}
\newcommand{\bY}{\bm{Y}}
\newcommand{\by}{\bm{y}}
\newcommand{\bx}{\bm{x}}
\def\bw{\bm{w}}
\newcommand{\N}{\mathcal{N}}
\def\P{\mathbb{P}}
\newcommand{\E}{\mathbb{E}}
\newcommand{\bdelta}{\boldsymbol{\delta}}
\title{Estimating sparse direct effects in multivariate regression with the spike-and-slab LASSO}
\author{Yunyi Shen\thanks{Laboratory for Information \& Decision Systems, Massachusetts Institute of Technology. The work was done while the author was at the University of Wisconsin--Madison.}, Claudia Sol\'{i}s-Lemus\thanks{Wisconsin Institute for Discovery \& Dept.~of Plant Pathology, University of Wisconsin--Madison, Correspondence to: solislemus@wisc.edu}, and Sameer K. Deshpande \thanks{Dept.~of Statistics, University of Wisconsin--Madison. Correspondence to: sameer.deshpande@wisc.edu}}
\begin{document}
\maketitle
\begin{abstract}
The multivariate regression interpretation of the Gaussian chain graph model simultaneously parametrizes (i) the direct effects of $p$ predictors on $q$ outcomes and (ii) the residual partial covariances between pairs of outcomes.
We introduce a new method for fitting sparse versions of these models with spike-and-slab LASSO (SSL) priors. 
We develop an Expectation Conditional Maximization algorithm to obtain sparse estimates of the $p \times q$ matrix of direct effects and the $q \times q$ residual precision matrix.
Our algorithm iteratively solves a sequence of penalized maximum likelihood problems with self-adaptive penalties that gradually filter out negligible regression coefficients and partial covariances. 
Because it adaptively penalizes individual model parameters, our method is seen to outperform fixed-penalty competitors on simulated data.
We establish the posterior contraction rate for our model, buttressing our method's excellent empirical performance with strong theoretical guarantees. 
Using our method, we estimated the direct effects of diet and residence type on the composition of the gut microbiome of elderly adults.
\end{abstract}

\newpage
\section{Introduction}
\label{sec:introduction}
\subsection{Motivation: Gut microbiome composition}
\label{sec:motivation}
Between 10 and 100 trillion microorganisms live within each person's lower intestines.
These bacteria, fungi, viruses, and other microbes constitute the human gut \textit{microbiome}  \citep{guinane2013role}. 
The composition of human gut microbiome can substantially affect our health and well-being \citep{shreiner2015gut}: in addition to playing an integral role in digestion and metabolic processes, microbes living in the gut can mediate immune response to certain diseases \citep{kamada2014regulation} and may even influence disease pathogenesis and progression \citep{wang2011gut}.

Emerging evidence suggests that the gut microbiome can mediate the effects of diet and medication use on human health \citep{singh2017influence}. 
That is, these factors may first affect the composition of the gut microbiome, which in turn influences health outcomes. 
Further, these factors can impact the composition of the microbiome in both direct and indirect ways.
For instance, many antibiotics target and kill certain microbial species, thereby directly affecting the abundances of the targeted species.
However, by killing the targeted species, the antibiotics may reduce the overall competition for nutrients, which allows non-targeted species to proliferate.
Thus, by directly reducing the abundance of a small number targeted microbes, antibiotics may indirectly increase the abundance of many other non-targeted species. 
%e.g. antibiotic treatment for can have effect to the entire microbial community \citep{fishbein2023antibiotic, schwartz2020understanding, yassour2016natural,blaser2016antibiotic} even if the drug targets on specific pathogen \citep{thorpe2018enhanced,avis2021targeted}. 

In Section~\ref{sec:real_data_experiments}, we re-analyze a dataset \citet{claesson2012gut} containing $n = 178$ covariate-response pairs $(\bx, \by)$ where $\bx$ contains measurements of $p = 11$ factors related to diet, medication use, and residence type and $\by$ contains the logit-transformed relative abundances of $q = 14$ different microbial taxa. 
Our main analytic goal is to identify which lifestyle factors directly affect the relative abundances of which taxa. 

%to identify the direct and indirect effects of $p = 11$ factors related to diet, medication use, and residence type on the logit-transformed relative abundances of $q = 14$ different microbial taxa.
%Our goal in this paper is to estimate such direct and indirect effects.
%Specifically, in Section~\ref{sec:real_data_experiments}, we re-analyze a dataset from \cite{claesson2012gut} containing $n = 178$ predictor-response pairs $(\bx, \by)$ where $\bx$ contains measures of $p = 11$ factors related to diet, medication use, and residence type, and $\by$ contains the logit-transformed relative abundances of $q = 14$ different microbial taxa. 
%\skd{Here is where we need to mention sparsity; specifically, can we cite some studies that suggest that we should there many marginal associations, they are in reality driven by a small number of direct effects?}

\subsection{Directly modeling direct effects}
\label{sec:direct_effects}

It is tempting to fit $q$ separate linear models, one for each outcome $Y_{k}.$
After all, doing so allows us to estimate the average change in each $Y_{k}$ associated with a change in $X_{j},$ keeping all other covariates fixed.
However, as we alluded in Section~\ref{sec:motivation}, there are two mechanisms through which a change in $X_{j}$ can produce a change in $Y_{k}.$
First, it is possible that $X_{j}$ may directly change $Y_{k}$ itself.
More subtly, it is possible that $X_{j}$ directly induces a change in some other outcome $Y_{k'},$ which in turns induces a change in $Y_{k}$ (i.e. an indirect effect).
Generally speaking, fitting separate linear models to our data estimates combinations of these direct and indirect effects.
Unfortunately, without modeling the residual dependence between outcomes, we cannot disentangle direct effects from the indirect effects.

To that end, a natural starting point for our analysis is the multivariate linear regression model that asserts 
\begin{equation}
\label{eq:marginal_model}
\by \vert B, \Omega, \bx \sim \mathcal{N}(B^{\top}\bx, \Omega^{-1}),
\end{equation}
where $B = (\beta_{j,k})$ is a $p \times q$ matrix of regression coefficients and $\Omega = (\omega_{k,k'})$ is a symmetric, positive definite $q \times q$ residual precision matrix. 
Under this model, $\omega_{k,k'},$ the $(k,k')$ entry of $\Omega,$ quantifies the residual partial covariance between outcomes $Y_{k}$ and $Y_{k'}$ that remains after adjusting for the effects of the covariates. 
Unfortunately, $\beta_{j,k}$ does not quantify the direct effect of $X_{j}$ on outcome $Y_{k}.$
To see this, observe that
\begin{equation}
\label{eq:beta_marginal_formula}
\beta_{j,k} = \E[Y_{k} \vert X_{j} = x + 1, X_{-j}] - \E[Y_{k} \vert X_{j} = x, X_{-j}].
\end{equation}
Notice that the expectations in right-hand side of Equation~\eqref{eq:beta_marginal_formula} do not condition on the values of the other outcomes $Y_{k'}$ for $k' \neq k.$
In other words, $\beta_{j,k}$ represents a certain \textit{marginal} association between $X_{j}$ and $Y_{k},$ keeping all other covariates fixed.
Specifically, $\beta_{j,k}$ represents a weighted average of (i) $X_{j}$'s direct effect on $Y_{k}$ and (ii) $X_{j}$'s indirect effect, which is induced through $X_{j}$'s direct effect on other outcomes $Y_{k'}$'s that themselves may be related to $Y_{k}.$

Although the model in Equation~\eqref{eq:marginal_model} does not directly parametrize the direct effects of interest, we can nevertheless compute them from $B$ and $\Omega.$
Letting $\psi_{j,k}$ be the $(j,k)$ entry of the matrix $\Psi = B\Omega,$ it turns out that 
%Denote the direct effect of $X_{j}$ on $Y_{k}$ as $\psi_{j,k},$ it turns out that
\begin{equation}
\label{eq:psi_interpretation}
\psi_{j,k}/\omega_{k,k} = \E[Y_{k} \vert X_{j} = x_{j} + 1, Y_{-k}, X_{-j}] - \E[Y_{k} \vert X_{j} = x_{j}, Y_{-k}, X_{-j}].
\end{equation}
%and that we can write the matrix $\Psi = (\psi_{j,k})$ as $\Psi = B\Omega.$
Based on this decomposition, a straightforward approach to estimating direct effects involves first fitting the model in Equation~\eqref{eq:marginal_model} and then computing the matrix $\Psi.$

This analytic plan is unfortunately inadequate for our purposes.
For one thing, although neither $p$ nor $q$ is especially large in our application, the total number of parameters ($pq + q(q+1)/2$) exceeds the total sample size $n.$
While we can overcome this challenge by assuming that $B$ and $\Omega$ are sparse (and estimating them accordingly), the resulting matrix of scaled direct effects $\Psi$ tends not to be sparse.
The combination of a sparse $B$, sparse $\Omega,$ and dense $\Psi$ corresponds to a situation in (i) a single covariate directly affects multiple outcomes but (ii) appears to be associated (in the usual linear regression sense) to very few.
In the context of our microbiome applications, a sparse $B$ and dense $\Psi$ would mean that several of the lifestyle factors directly affect the abundance of several taxa but that we would observe very few marginal associations. 
Such a scenario is implausible in the context of microbiome data; in fact the exact opposite tends to be true, with a small number of direct effects producing several marginal associations \citep[see, e.g.,][]{yassour2016natural,blaser2016antibiotic, thorpe2018enhanced, schwartz2020understanding, avis2021targeted, fishbein2023antibiotic}.
%\skd{Claudia, can you comment? Does this argument make any sense? And do you think the refs are appropriate or could they be improved?}

We instead propose fitting a re-paramatrized version of the model in Equation~\eqref{eq:marginal_model}:
\begin{equation}
\label{eq:cg_model}
\by \vert \Psi, \Omega, \bx \sim \N(\Omega^{-1}\Psi^{\top}\bx, \Omega^{-1}),
\end{equation}
where $\Psi$ and $\Omega$ are now assumed to be sparse.
Now, we may interpret $\psi_{j,k} \neq 0$ to mean that $X_{j}$ has a direct effect on $Y_{k}.$
Furthermore, whenever $\psi_{j,k} = 0,$ we can conclude that any marginal correlation between $X_{j}$ and $Y_{k}$ is due solely to $X_{j}$'s direct effects on other outcomes $Y_{k'}$ that are themselves conditionally correlated with $Y_{k}.$

\subsection{Our contributions}
\label{sec:contributions}

We introduce the chain graph spike-and-slab LASSO (cgSSL) for fitting the model in Equation~\eqref{eq:cg_model} by placing separate spike-and-slab LASSO priors \citep{RockovaGeorge2018_ssl} on the entries of $\Psi$ and on the off-diagonal entries of $\Omega.$
We derive an efficient Expectation Conditional Maximization algorithm to compute the \textit{maximum a posteriori} (MAP) estimates of $\Psi$ and $\Omega.$
We further quantify the uncertainty around this estimate using \citet{Newton2021}'s weighted Bayesian bootstrap.
Our algorithm involves solving a sequence of penalized maximum likelihood problems with individualized penalties for each parameter $\psi_{j,k}$ and $\omega_{k,k'}.$
In fact, these individualized penalties are \textit{self-adaptive}: the penalties are updated according to the previous iteration's parameter estimates, with smaller (resp.~larger) parameter estimates receiving larger (resp.~smaller) penalties. 
In this way, the algorithm automatically and adaptively \textit{learns} the appropriate amount of shrinkage to apply to each parameter.
On synthetic data, our algorithm displays excellent support recovery and estimation performance.
We further establish the posterior contraction rate for each of $\Psi, \Omega, \Psi\Omega^{-1},$ and $X\Psi\Omega^{-1}.$
Our contraction results provide asymptotic justification for MAP estimation and also upper bound the minimax optimal rates of estimating these quantities in the Frobenius norm. 
To the best of our knowledge, ours are the first posterior contraction results for sparse Gaussian chain graph models with element-wise priors on $\Psi$ and $\Omega.$

In Section~\ref{sec:background}, we review the Gaussian chain graph model and the spike-and-slab LASSO.
We next introduce the cgSSL prior in Section~\ref{sec:cgSSL_prior} and carefully derive our ECM algorithm for finding the MAP in Sections~\ref{sec:map_estimation}.
We present our asymptotic results in Section~\ref{sec:theory} before demonstrating the cgSSL's excellent finite sample performance on several synthetic datasets in Section~\ref{sec:synthetic_experiments}.
We apply the cgSSL to our motivating gut microbiome data in Section~\ref{sec:real_data_experiments}.
Finally, in Section~\ref{sec:discussion}, we outline avenues for future research in spike-and-slab uncertainty quantification and modeling outcomes of mixed type.

\section{Background}
\label{sec:background}
\subsection{Motivating~\eqref{eq:cg_model} and related graphical models}
\label{sec:related_graphical_models}

Under the model in Equation~\eqref{eq:cg_model}, we have for each $k = 1, \ldots, q,$ 
\begin{equation}
\label{eq:y_conditional}
Y_{k} \vert Y_{-k}, X, \Psi, \Omega \sim \mathcal{N}\left(-\omega_{k,k}^{-1}\sum_{k' \neq k}{\omega_{k,k'}y_{k'}} + \omega_{k,k}^{-1}\sum_{j = 1}^{p}{\psi_{j,k}X_{j}}, \omega_{k,k}^{-1}\right).
\end{equation}
In this way, the parameters $\Psi$ and $\Omega$ directly encode important conditional dependence relationship between the covariates and outcomes.
Specifically, if $\psi_{j,k} = 0,$ we can conclude from Equation~\eqref{eq:y_conditional} that $Y_{k}$ is conditionally independent from $X_{j}$ given all other covariates and outcomes.
And if $\omega_{k,k'} = 0,$ then $Y_{k}$ is conditionally independent of $Y_{k'},$ given all other covariates and outcomes.
Consequently, despite its somewhat complicated notation, we can represent the conditional dependencies encoded by the model in Equation~\eqref{eq:cg_model} with a simple graphical model.

Specifically, we construct a graph with $p+q$ vertices, one for each covariate $X_{j}$ and outcome $Y_{k}.$
We then draw a directed edges from $X_{j}$'s vertex to $Y_{k}$'s vertex whenever $\psi_{j,k} \neq 0.$
We additionally draw two directed edges (or, equivalently, a single bi-directed edge) between $Y_{k}$'s and $Y_{k'}$'s vertices whenever $\omega_{k,k'} \neq 0.$
Figure~\ref{fig:labelled_graph} shows a cartoon illustration of such a graph with $p = 3$ covariates and $q = 4$ outcomes.

The graph so constructed is a chain graph and is faithful to the model in Equation~\eqref{eq:cg_model} under \citet{cox1993linear}'s multivariate regression (MVR) interpretation of chain graph; see \citet{sonntag2015chain} for a comparison of different chain graph interpretations. 
%Following the terminology of \citet{sonntag2015chain}, %\skd{a citation would be amazing here}, 
%we say that a vertex $v$ in the graph is a parent of another vertex $v'$ if the graph contains directed edge from $v$ to $v'.$
%By construction, vertices in such graphs conditionally depend only on their parents.
%In other words, under the multivariate regression (MVR) interpretation of chain graphs \citep{cox1993linear}, the graph so constructed is faithful to the model in Equation~\eqref{eq:cg_model}.
Based on this interpretation, and following the examples of \citet{McCarter2014} and \citet{shen2021bayesian}, we will refer to the model in Equation~\eqref{eq:cg_model} as the Gaussian chain graph model.

%We then draw an bidirectional edge between the vertices corresponding to $Y_{k}$ and $Y_{k'}$ whenever $\omega_{k,k'}\ne 0$ and we draw a direct edge from $X_{j}$'s vertex to $Y_{k}$'s vertex whenever $\psi_{j,k} \neq 0.$
%In this way, the resulting graph structure encodes specific conditional dependence relationships implied by the model. 
%We can write the conditional distribution of any node as a linear regression of its parents, especially for $Y_k$'s, we have $Y_{k}|Y_{-k},X=\sum_{k'\ne n} (-\omega_{k,k'}/\omega_{k,k}) Y_{k'}+\sum_{j}\psi_{j,k}/\omega_{k,k}X_{j}+\epsilon_k$ where $\epsilon_k\sim \N(0,1/\omega_{k,k})$. For $\psi_{j,k}=0$, $X_j$ is not not parent of $Y_k$ there is no edge between them. 
%For $\omega_{k,k'}=0$, there is no edge between node $Y_k$ and $Y_{k'}$ and they are not parent of each other. 
%Thus the graphical model identified by having bidriectional edge between $Y_k,Y_{k'}$ when $\omega_{k,k'}\ne 0$ and directional graph between $X_{j}$ and $Y_{k}$ when $\psi_{j,k}\ne 0$ is faithful in the sense that each node only conditionally depends on its parents. 
%This could be understood as the multivariate regression (MVR) interpretation of a chain graph introduced by \citet{cox1993linear} with two components $X$ and $Y$, sometimes refers as conditional autoregressions in spatial statistics literature (e.g. \citet{VerHoef2018}, see \citet{sonntag2015chain} for a comparison between interpretations of CG). 
\begin{figure}[h]
\centering
\includegraphics[width = 0.4\textwidth]{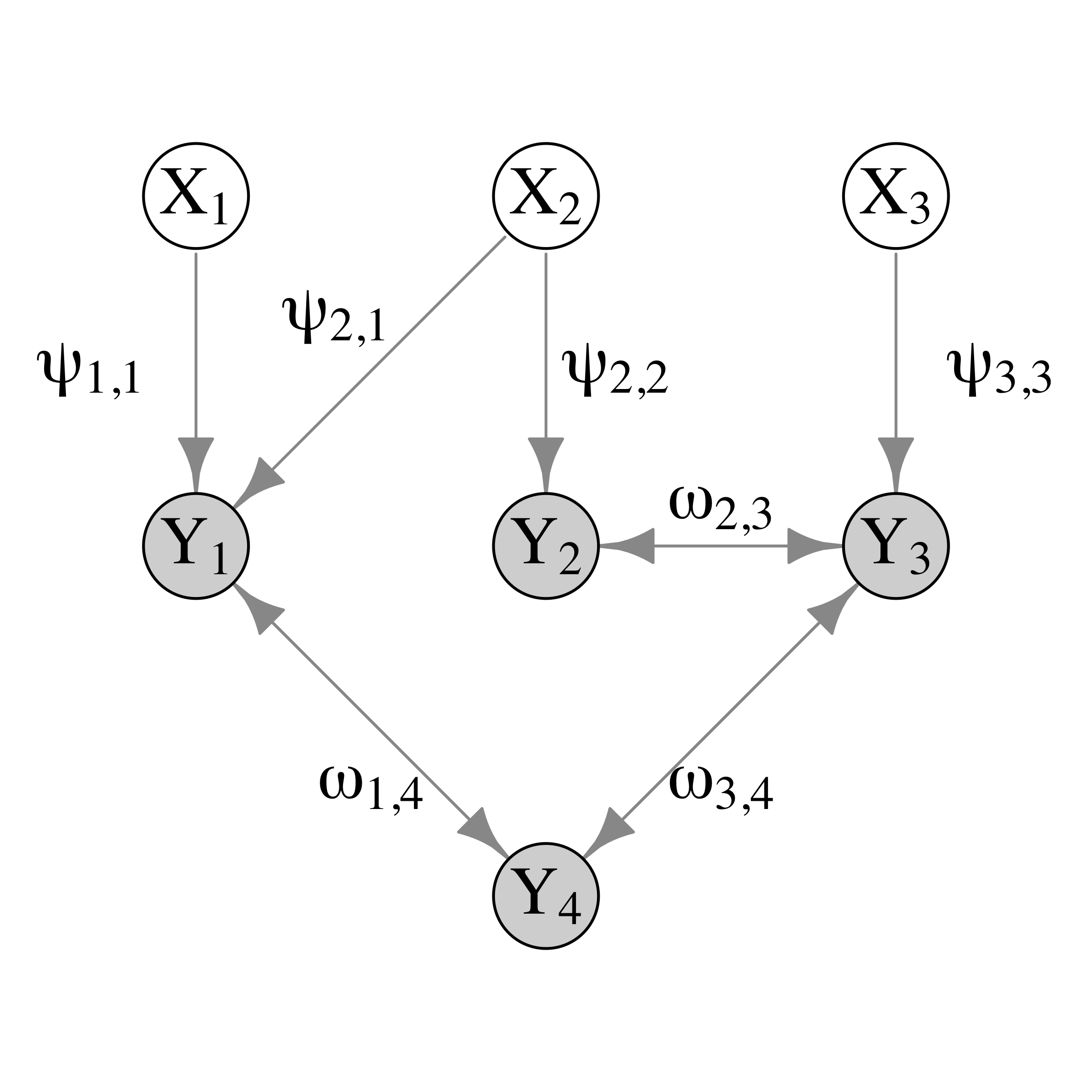}
\caption{Cartoon illustrations of a Gaussian chain graph model with $p = 3$ covariates and $q = 4$ outcomes. Edges indicate conditional dependence. Edge labels correspond to non-zero parameters in~Equation \eqref{eq:cg_model}.}
\label{fig:labelled_graph}
\end{figure}

%Such graphs, constructed by drawing directed edges whenever $\psi_{j,k} \neq 0$ and bi-directed edges whenever $\omega_{k,k'} \neq 0,$ can be understood using the multivariate regression (MVR) interpretation of chain graphs \citep{cox1993linear} with two components (one for the covariates and one for the outcomes); see \citet{sonntag2015chain} for comparisons between the multiple interpretations of chain graphs.
%They are also related to conditional autoregressive models used in spatial statistics \citep[see, e.g.,][]{VerHoef2018}.
\cite{McCarter2014} proposing fitting sparse Gaussian chain graph models by maximizing a penalized log-likelihood.
They specifically introduced homogeneous $L_1$ penalties on the entries of $\Psi$ and $\Omega$ and used cross-validation to set the penalty parameters for $\Psi$ and $\Omega.$
\cite{shen2021bayesian} developed a Bayesian version of that chain graphical LASSO and put a gamma prior on the penalty parameters.
In this way, they automatically learned the degree to which each $\psi_{j,k}$ and $\omega_{k,k'}$ should be shrunk to zero.
Although these papers differ in how they determined the appropriate amount of penalization, both deployed a single fixed penalty on all entries in $\Psi$ and a single fixed penalty on all entries in $\Omega.$
With fixed penalties, larger parameter estimates are shrunk towards zero as aggressively as smaller estimates, which can introduce substantial estimation bias. 

Fitting a sparse Gaussian chain graph model is related to the covariate-adjusted Gaussian graphical model problem \citep[see, e.g.,][and references therein]{Consonni2017, Ni2019}.
In that problem, interest lies in uncovering the conditional dependencies between $q$ outcomes that remain after controlling for $p$ covariates.
Typically, this is done by fitting a sparse version of the marginal model in Equation~\eqref{eq:marginal_model} and examining the support of $\Omega$ \citep[see, e.g.,][]{Cai2013, Chen2016, Chen2018}.
Within the context of Figure~\ref{fig:labelled_graph}, those works focus on detecting $Y$--$Y$ edges, while we are primarily interested in detecting $X$--$Y$ edges.
As we alluded to in Section~\ref{sec:direct_effects}, although it is possible to estimate direct effects by first estimating $B$ and $\Omega,$ doing so generally results in a very dense $\Psi$ that is at odds with our expectations, with one important exception: whenever the $j$-th row of $B$ contains all zeros, the $j$-th row of $\Psi$ will as well.

\citet{Consonni2017} introduced an Objective Bayes approach to fitting~\eqref{eq:marginal_model} with a row-sparse $B.$
Although their method can yield sparse $\Psi,$ it makes the restrictive assumption that each covariate directly affects all or none of the outcomes.
In contrast, our proposed procedure places no restrictions on the support of $\Psi,$ allowing covariates to directly affect all, none, or some of the outcomes.
 
 \subsection{Spike-and-slab variable selection \& asymptotics}
The spike-and-slab prior originally comprised a mixture of a point mass at 0 (the ``spike'') and a uniform distribution over a wide interval \citep[the ``slab'';][]{mitchell1988bayesian}.
\citet{george1993variable} introduced a continuous relaxation of the original prior, respectively replacing the point mass spike and uniform slab distributions with zero-mean Gaussians with small and large variances.
Intuitively, the spike generates the ``essentially negligible'' model parameters while the slab generates the ``significant'' parameters values.
In high dimensional problems, spike-and-slab priors often produce extremely multimodal posteriors, which render many Markov chain Monte Carlo strategies computationally prohibitive.

In response, \citet{RockovaGeorge2014_emvs} introduced EMVS, a fast Expectation Maximization \citep[EM;][]{Dempster1977_em} algorithm targeting the \textit{maximum a posteriori} (MAP) estimate of the regression parameters.
They later extended EMVS, which used Gaussian spike and slab distributions, to use Laplacian spike and slab distributions in \citet{RockovaGeorge2018_ssl}.
The resulting spike-and-slab LASSO (SSL) procedure demonstrated excellent empirical performance.  
The SSL algorithm solved a sequence of maximum likelihood problems with adaptive $L_{1}$-penalties that shrink larger parameter estimates to zero less aggressively than smaller parameter estimates.

\citet{RockovaGeorge2014_emvs}'s general EM technique for maximizing spike-and-slab posteriors has been successfully applied to many problems.
For instance, \citet{Tang2017_ssl_glm} deployed the SSL to fit sparse generalized linear models while \citet{Bai2020_groupSSL} introduced a grouped version of the SSL that adaptively shrinks groups of parameter values towards zero.
Beyond single-outcome regression, continuous spike-and-slab priors have been used to estimate sparse Gaussian graphical models \citep{Li2019_ssglasso, Gan2019_unequal, Gan2019_multiple}, sparse factor models \citep{RockovaGeorge2016_factor}, and to biclustering \citep{Moran2021_biclustering}. 
\citet{Deshpande2019} introduce a multivariate SSL for estimating $B$ and $\Omega$ in the marginal regression model in Equation~\eqref{eq:marginal_model}.
In each extension, the adaptive penalization esulted in superior support recovery and parameter estimation compared to fixed penalty methods. 

\citet{RockovaGeorge2018_ssl} proved that, under mild regularity conditions, the posterior induced by the SSL prior in high-dimensional, single-outcome linear regression contracts at a near minimax-optimal rate as $n \rightarrow \infty.$
%Their contraction result implies that the MAP estimate returned by their EM algorithm is consistent and is, up to a log factor, rate-optimal.
\citet{Bai2020_groupSSL} extended these results to the group SSL posterior with an unknown variance.
In the context of Gaussian graphical models, \citet{Gan2019_unequal} showed that the MAP estimator corresponding to placing spike-and-slab LASSO priors on the off-diagonal elements of a precision matrix is consistent.
They did not, however, establish the contraction rate of the posterior.
\citet{Ning2020} showed that the joint posterior distribution of $(B,\Omega)$ in the multivariate regression model in Equation~\eqref{eq:marginal_model} concentrates when using a group spike-and-slab prior with Laplace slab and point mass spike on $B$ and a carefully selected prior on the eigendecomposition of $\Omega^{-1}.$
However, the asymptotic properties of the posterior formed by placing SSL priors on the entries of $\Psi$ and $\Omega$ have not yet been established.

\section{Introducing the cgSSL}
\label{sec:proposed_method}
\subsection{The cgSSL prior}
\label{sec:cgSSL_prior}
To quantify the prior belief that many entries in $\Psi$ are essentially negligible, we model each $\psi_{j,k}$ as having been drawn either from a spike distribution, which is sharply concentrated around zero, or a slab distribution, which is much more diffuse.
More specifically, we take the spike distribution to be $\text{Laplace}(\lambda_{0})$ and the slab distribution to be $\text{Laplace}(\lambda_{1}),$ where $0 < \lambda_{1} \ll \lambda_{0}$ are fixed positive constants.
%This way, the spike distribution is much more heavily concentrated around zero than is the slab.
We further let $\theta \in [0,1]$ be the prior probability that each $\psi_{j,k}$ is drawn from the slab and model the $\psi_{j,k}$'s as conditionally independent given $\theta.$
The prior density for $\Psi$, conditional on $\theta,$ is 
\begin{equation}
\label{eq:psi_prior_density}
\pi(\Psi \vert \theta) = \prod_{j = 1}^{p}{\prod_{k = 1}^{q}{\left(\frac{\theta\lambda_{1}}{2}e^{-\lambda_{1}\lvert \psi_{j,k}\rvert} + \frac{(1-\theta)\lambda_{0}}{2}e^{-\lambda_{0}\lvert \psi_{j,k} \rvert}\right)}}.
\end{equation}

Since $\Omega$ is symmetric, it is enough to specify a prior on the entries $\omega_{k,k'}$ where $k \leq k'.$
To this end, we begin by placing an entirely analogous spike-and-slab prior on the off-diagonal entries.
That is, for $k < k',$ we model each $\omega_{k,k'}$ as being drawn from a $\text{Laplace}(\xi_{1})$ with probability $\eta \in [0,1]$ or a $\text{Laplace}(\xi_{0})$ with probability $1 - \eta,$ where $0 < \xi_{1} \ll \xi_{0}.$
We similarly model each $\omega_{k,k'}$ as conditionally independent given $\eta$ and place independent exponential $\text{Exp}(\xi_{1})$ priors on the diagonal entries of $\Omega.$ 
We truncate the resulting distribution of $\Omega \vert \theta$ to the positive definite cone, yielding the prior density
\begin{align}
\begin{split}
\label{eq:omega_prior_density}
\pi(\Omega \vert \eta) &\propto \left(\prod_{1 \leq k < k' \leq q}\left[\frac{\eta\xi_1}{2}e^{-\xi_1|\omega_{k,k'}|}+\frac{(1-\eta)\xi_0}{2}e^{-\xi_0|\omega_{k,k'}|}\right]\right) \\
~&~~ \times \left(\prod_{k=1}^q e^{-\xi_{1}\omega_{k,k}}\right) \times \mathbbm{1}(\Omega\succ 0).
\end{split}
\end{align}
Note that the truncation implicitly introduces dependence between the entries in $\Omega.$
In Section S1.5 of the Supplementary Materials , we demonstrate that the truncated prior still marginally shrinks every $\omega_{k,k'}$ towards zero.
We have also observed empirically that, at least for small $q,$ the marginal prior of $\omega_{k,k'}$ tends to be more concentrated around zero after truncation than before truncation (see Figure S1 in the Supplementary Materials ).

The quantities $1-\theta$ and $1-\eta$ respectively capture the proportion of essentially negligible entries in $\Psi$ and $\Omega.$
We specify independent Beta priors for $\theta$ and $\eta$: $\theta \sim \text{Beta}(a_{\theta}, b_{\theta})$ and $\eta \sim \text{Beta}(a_{\eta}, b_{\eta}),$ where $a_{\theta}, b_{\theta}, a_{\eta}, b_{\eta} > 0$ are fixed positive constants.

\subsection{MAP estimation and uncertainty quantification}
\label{sec:map_estimation}

\subsubsection{MAP Estimation}
%Unfortunately the posterior distribution of $(\Psi, \theta, \Omega, \eta) \vert \bY$ is analytically intractable. 
We follow \citet{RockovaGeorge2018_ssl} and approximate the \textit{maximum a posteriori} (MAP) estimate of $(\Psi, \theta, \Omega, \eta).$
Throughout, we assume that the columns of $X$ are centered and scaled to have norm $\sqrt{n}.$
For $\Omega \succ 0,$ the log posterior density is, up to an additive constant,
\begin{align*}
\log\pi(\Psi, \theta,\Omega,\eta \vert \by) &= \frac{n}{2} \times \log\lvert \Omega\rvert - \frac{1}{2}\text{tr}\left( \left(\bY - X\Psi\Omega^{-1}\right)^{\top}\Omega\left(\bY - X\Psi\Omega^{-1}\right)\right) \\
&+ \sum_{k = 1}^{q}{\sum_{j = 1}^{p}{\log\left(\theta\lambda_{1}e^{-\lambda_{1}\lvert \psi_{j,k} \rvert} + (1 - \theta)\lambda_{0}e^{-\lambda_{0}\lvert \psi_{j,k}\rvert}\right)}} \\
&+ \sum_{k = 1}^{q}{\left[-\xi_{1}\omega_{k,k} + \sum_{k' = k+1}^{q}{\log\left(\eta\xi_{1}e^{-\xi_{1}\lvert\omega_{k,k'}\rvert} + (1-\eta)\xi_{0}e^{-\xi_{0}\lvert \omega_{k,k'}\rvert}\right)}\right]} \\
&+ (a_{\theta}-1)\log(\theta)+ (b_{\theta}-1)\log(1-\theta) \\
&+ (a_{\eta} - 1)\log(\eta) + (b_{\eta}-1)\log(1-\eta),
\end{align*}
where the first line is the log-likelihood implied by the model in Equation~\eqref{eq:cg_model}.

Optimizing $\log \pi(\Psi, \theta, \Omega,\eta \vert \bY)$ directly is complicated by the non-concavity of $\log \pi(\Omega \vert \eta)$ (i.e. the term in the third line above).
Instead, we iteratively optimize a surrogate objective using an EM-like algorithm.
To motivate this approach, observe that we can obtain the prior density $\pi(\Omega \vert \eta)$ in Equation~\eqref{eq:omega_prior_density} by marginalizing an \textit{augmented} prior 
$$
\pi(\Omega \vert \eta) = \int{\pi(\Omega \vert \bdelta)\pi(\bdelta \vert \eta)d\bdelta}
$$
where $\bdelta = \{\delta_{k,k'}: 1 \leq k < k' \leq q\}$ is a collection of $q(q-1)/2$ i.i.d. $\text{Bernoulli}(\eta)$ variables, 
\begin{align*}
\pi(\Omega \vert \bdelta) &\propto \left(\prod_{1\le k<k'\le q}{\left(\xi_{1}e^{-\xi_{1}\lvert \omega_{k,k'}\rvert}\right)^{\delta_{k,k'}}\left(\xi_{0}e^{-\xi_{0}\lvert\omega_{k,k'}\rvert}\right)^{1-\delta_{k,k'}}}\right) \\
~&~~\times \left(\prod_{k=1}^q e^{-\xi_{1}\omega_{k,k}}\right) \times \mathbbm{1}(\Omega \succ 0),
\end{align*}
and $\delta_{k,k'}$ encodes whether $\omega_{k,k'}$ is drawn from the slab $(\delta_{k,k'} = 1$) or the spike ($\delta_{k,k'} = 0$).

The above marginalization immediately suggests an EM algorithm: rather than optimize $\log \pi(\Psi, \theta, \Omega, \eta \vert \bY)$ directly, we can iteratively optimize a surrogate objective formed by marginalizing the augmented log posterior density. 
That is, starting from some initial guess $(\Psi^{(0)}, \theta^{(0)}, \Omega^{(0)}, \eta^{(0)}),$ for $t > 1,$ the $t^{\text{th}}$ iteration of our algorithm consists of two steps.
In the first step, we compute the surrogate objective
$$
F^{(t)}(\Psi, \theta, \Omega, \eta) = \E_{\bdelta \vert \cdot}[\log \pi(\Psi, \theta, \Omega, \eta, \bdelta \vert \by) \vert \Psi = \Psi^{(t-1)}, \theta = \theta^{(t-1)}, \Omega = \Omega^{(t-1)}, \eta = \eta^{(t-1)}],
$$
where the expectation is taken with respect to the conditional posterior distribution of the indicators $\bdelta$ given the current value of $(\Psi, \theta, \Omega, \eta).$
Then in the second step, we maximize the surrogate objective and set $(\Psi^{(t)}, \theta^{(t)}, \Omega^{(t)}, \eta^{(t)}) = \argmax F^{(t)}(\Psi,\theta, \Omega,  \eta).$
We defer closed form expressions for the log densities of the augmented posterior (i.e., $\log \pi(\Psi,\theta,\Omega,\eta, \bdelta \vert \bY)$) and the surrogate objective function (i.e., $F^{(t)}(\Psi,\theta,\Omega,\eta)$) to Equations S1.5 and S1.6 in the Supplementary Materials . 
%We derive the explicit formula for the surrogate objective function $F^{(t)}(\Psi,\theta,\Omega,\eta)$ in Equation \skd{TO DO} in the Supplementary Materials.

Given $\Omega$ and $\eta,$ the indicators $\delta_{k,k'}$ are conditionally independent, making it simple to derive a closed form expression for the surrogate objective $F^{(t)}.$
Unfortunately, maximizing $F^{(t)}$ is still difficult.
Consequently, we carry out two conditional maximizations, first optimizing with respect to $(\Psi,\theta)$ while holding $(\Omega,\eta)$ fixed, and then optimizing with respect to $(\Omega,\eta)$ while holding $(\Psi,\theta)$ fixed.
That is, in the second step of each iteration of our algorithm, we set
\begin{align}
(\Psi^{(t)}, \theta^{(t)}) &= \argmax_{\Psi,\theta}~ F^{(t)}(\Psi, \theta, \Omega^{(t-1)},  \eta^{(t-1)})  \label{eq:psi_theta_update} \\
(\Omega^{(t)}, \eta^{(t)}) &= \argmax_{\Omega,\eta}~ F^{(t)}(\Psi^{(t)}, \theta^{(t)},\Omega,  \eta). \label{eq:omega_eta_update}
\end{align}
In summary, we propose finding the MAP estimate of $(\Psi, \theta, \Omega, \eta)$ using an Expectation Conditional Maximization \citep[ECM;][]{MengRubin1993_ecm} algorithm. 

The objective function $F^{(t)}(\Psi, \theta, \Omega^{(t-1)}, \eta^{(t-1)})$ in Equation~\eqref{eq:psi_theta_update} can be written as the sum of a function of $\Psi$ alone and a function of $\theta$ alone.
So we separately compute $\Psi^{(t)}$ and $\theta^{(t)}$ while fixing $(\Omega, \eta) = (\Omega^{(t-1)}, \eta^{(t-1)})$.
The objective function in Equation~\eqref{eq:omega_eta_update} is similarly separable and we separately compute $\Omega^{(t)}$ and $\eta^{(t)}$ while fixing $(\Psi, \theta) = (\Psi^{(t)}, \theta^{(t)}).$
Computing $\theta^{(t)}$ and $\eta^{(t)}$ is relatively straightforward: we compute $\theta^{(t)}$ using Newton's method and there is a closed form expression for $\eta^{(t)}.$

The main computational challenge lies in computing each of $\Psi^{(t)}$ and $\Omega^{(t)}$ conditionally given all other parameters.
At a high level, we compute $\Psi^{(t)}$ with a cyclical coordinate descent algorithm that blends soft- and hard-thresholding.
Essentially, each entry $\psi_{j,k}$ is thresholded at a level determined by the conditional posterior probability that $\psi_{j,k}$ was drawn from the slab distribution; see Section S1.3 of the Supplementary Materials  for details. 
We compute $\Omega^{(t)}$ by solving an optimization problem that is similar to the graphical LASSO \citep[GLASSO;][]{Friedman2008} objective but includes additional trace terms.
We solve that problem by forming a quadratic approximation of the objective and following a Newton direction for a carefully chosen step size.
See Sections S1.4 and S2 of the Supplementary Materials  for the detailed derivation of the algorithm used to update $\Omega$ and a proof that the algorithm converges to the unique optimum.

\subsubsection{Implementation considerations}
The ability of the proposed ECM algorithm to identify the MAP critically depends on two sets of hyperparameters and the initial estimate $\Psi^{(0}$ and $\Omega^{(0)}.$
The first set of hyperparameters consists of the spike and slab penalties $\lambda_{0}, \lambda_{1}, \xi_{0}$ and $\xi_{1}.$
The second set, containing $a_{\theta}, b_{\theta}, a_{\eta},$ and $b_{\eta},$ encode our initial beliefs about the overall proportion of non-negligible entries in $\Psi$ and $\Omega.$
In this section, we recommend default hyperparameter settings and sketch a particular path-following scheme that provides good initialization for the ECM algorithm.
We have found these recommendations to work very well in practice and present a systematic hyperparameter sensitivity analysis in Section 3.1 of the Supplementary Materials .

It is initially tempting to run our ECM algorithm with very small slab penalties $\lambda_{1}$ and $\xi_{1}$ and very large spike penalties $\lambda_{0}$ and $\xi_{0}$ so that the slabs cover a wide range of non-zero parameter values while the spikes are supported only on narrow ranges of extremely small parameter values.
Unfortunately, our algorithm was quite sensitive to initialization with such choices.
In fact, in early experiments with synthetic data, with very large spike and very small slab penalties, the algorithm tended to estimate $\Psi$ with a zero matrix and $\Omega$ with a diagonal matrix with very small diagonal entries, unless it was initialized close to the true data-generating parameter values.
Such initialization is, of course, impossible in practice.
To overcome this challenge, rather than run cgSSL with a single set of spike and slab penalties, we run our ECM algorithm along sequences of increasing values of the spike-penalties using warm-starts.

Specifically, we fix the slab penalties $\lambda_{1}$ and $\xi_{1}$ and specify grids of $L$ increasing spike penalties $\mathcal{I}_\lambda=\{\lambda_0^{(1)}<\dots <\lambda_0^{(L)}\}$ and $\mathcal{I}_\xi=\{\xi_0^{(1)}<\dots<\xi_0^{(L)}\}.$ 
We then run cgSSL with warm-starts for each combination of spike penalties, yielding a set of posterior modes $\{(\Psi^{(s,t)}, \theta^{(s,t)}, \Omega^{(s,t)}, \eta^{(s,t)})\}$ indexed by the choices $(\lambda_{0}^{(s)}, \xi_{0}^{(t)}).$
To warm-start the estimation of the mode corresponding to $(\lambda_{0}^{(s)}, \xi_{0}^{(t)}),$ we first compute the modes for $(\lambda_{0}^{(s-1)},\xi_{0}^{(t-1)})$, $(\lambda_{0}^{(s)},\xi_{0}^{(t-1)})$ and $(\lambda_{0}^{(s-1)},\xi_{0}^{(t)}).$
Then we initialize the ECM algorithm from the mode with highest density (computed using $(\lambda^{(s)}_{0}, \xi^{(t)}_{0})$). 

Our path-following strategy mirrors one first introduced in \citet{RockovaGeorge2014_emvs} and subsequently deployed by many others \citep[see, e.g.,][]{RockovaGeorge2016_factor, RockovaGeorge2018_ssl, Moran2019_variance, Moran2021_biclustering}. 
In fact, our strategy exactly matches the one used by \citet{Deshpande2019}, who fit sparse versions of the model in Equation~\eqref{eq:marginal_model} by placing spike-and-slab LASSO priors on the $\beta_{j,k}$'s and $\omega_{k,k'}$'s.
Taking a cue from that literature, we refer to our strategy as \textit{dynamic posterior exploration} (DPE).

It is important to stress that the goal of DPE is not to tune or optimize hyperparameters.
Instead, we use DPE to identify a good initialization from which to launch our algorithm with large spike penalties.
It specifically does so by computing $L^{2}$ posterior modes, one for each combination of spike penalties.  
We can additionally think of DPE as passing an initial estimate of $\Psi$ and $\Omega$ through a sequence of increasingly discerning filters.  
As the spike penalties increase, DPE filters out more and more negligible parameter values, producing sparser and sparse estimates of $\Psi$ and $\Omega.$

In practice, we recommend setting $\lambda_{1} = 1$ and $\xi_{1} = 0.01n$ and letting $\mathcal{I}_{\lambda}$ contain ten evenly spaced values ranging from $10$ to $n$ and $\mathcal{I}_{\xi}$ contain ten evenly spaced values from $0.1n$ to $n.$
We further recommend reporting the final mode computed by DPE, corresponding to the largest spike penalties, as a final parameter estimate.
With these choices, the sequence of posterior modes computed by DPE appeared to stabilize in the underlying parameter space.
That is, for large penalty values $\lambda_{0}^{(s)}, \lambda_{0}^{(s')}, \xi_{0}^{(t)},$ and $\xi_{0}^{(t')},$ we have $(\Psi^{(s,t)}, \theta^{(s,t)}, \Omega^{(s,t)}, \eta^{(s,t)}) \approx (\Psi^{(s',t')}, \theta^{(s',t')}, \Omega^{(s',t')}, \eta^{(s',t')});$ see Figure S2 in the Supplementary Materials  for an illustration.

While experimenting with different penalty choices, we observed that when the spike and slab penalties were approximately equal and small, our ECM algorithm would often return very dense estimates of $\Psi$ and diagonal estimates of $\Omega$ with large diagonal entries.
Essentially, when the spike and slab distributions are not too different and when neither encourages strong shrinkage, our ECM algorithm tended to overfit the data using a dense $\Psi$, leaving very little residual variation to be quantified with $\Omega.$
We found that we could detect such pathological behavior by examining the condition number of the matrix $Y\Omega - X\Psi$. 
To avoid propagating dense $\Psi$'s and diagonal $\Omega$'s through DPE, we terminate our ECM algorithm early whenever the condition number of $Y\Omega - X\Psi$ exceeds $10n.$
We then set the corresponding $\Psi^{(s,t)} = 0_{p\times q}$ and $\Omega^{(s,t)} = I_{q\times q}$ and continue the dynamic exploration from that point.
Though it is not foolproof, we have found this heuristic to work well in practice.
We also note that \citet{Moran2019_variance} utilized a similar early termination strategy in the single-outcome high-dimensional linear regression setting with unknown variance. 

To further discourage our ECM algorithm from over-fitting the data with a dense $\Psi,$ we recommend the default priors $\theta \sim \text{Beta}(1,pq)$ and $\eta \sim \text{Beta}(1,q).$
These priors concentrate probability around models with very few direct covariate effects on the outcomes and very few conditional dependencies between outcomes, after adjusting for the covariates. 
These choices mirror similar prior choices made by \citet{RockovaGeorge2018_ssl} and \citet{Deshpande2019}. 

To summarize, we recommend setting $a_{\theta}=1, b_{\theta}=pq, a_{\eta}=1,$ and $b_{\eta}=q,$ and running DPE with $\lambda_{1} = 1$ and $\xi_{1} = 0.01n$ and letting $\mathcal{I}_{\lambda}$ contain ten evenly spaced values ranging from $10$ to $n$ and $\mathcal{I}_{\xi}$ contain ten evenly spaced values from $0.1n$ to $n.$
We implemented these choices as the default options in the \textsf{R} \citep{R_citation} package \textbf{mSSL}, which is available at \url{https://github.com/YunyiShen/mSSL}.
In Section S3.1 of the Supplementary Materials , we assess the sensitivity of our DPE implementation to (i) the range of spike penalty values in the grids $\mathcal{I}_{\lambda}$ and $\mathcal{I}_{\xi}$; (ii) the size of the grids $\mathcal{I}_{\lambda}$ and $\mathcal{I}_{\xi}$; and the choice of prior for $\theta$ and $\eta.$
We found that the final point estimates returned by our default implementation displayed better recovered the supports of $\Psi$ and $\Omega$ than those found with larger or smaller penalty values.
We additionally did not observe much sensitivity to the number of spike penalties in the grids $\mathcal{I}_{\lambda}$ and $\mathcal{I}_{\xi}$ nor to the choice of priors for $\theta$ and $\eta.$

\subsubsection{Uncertainty quantification via the weight Bayesian bootstrap}

The cgSSL posterior distribution is not log-concave, rendering efficient MCMC or importance sampling computationally prohibitive.
\citet{Newton2021}'s weighted Bayesian bootstrap, on the other hand, offers a computationally practical and embarrassingly parallel alternative. 
At a high-level, the procedure works by repeatedly solving a MAP estimation problem that randomly re-weights every observation's contribution to the log-likelihood and the log-prior density. 
We specifically use the following two-step procedure.
In the first step, we run cgSSL with DPE and obtain point estimates $\hat{\Psi}, \hat{\theta}, \hat{\Omega},$ and $\hat{\eta}.$
Then in the second step, we repeatedly solve the single optimization problem
\begin{equation}
\label{eq:master_for_bb}
    \argmin_{\Psi,\Omega,}\left\{ \sum_{i=1}^n w_i\ell_{i}(\Psi,\Omega) +w_0\left[\log \pi^{(L)}_{\psi}(\Psi )+\log \pi^{(L)}_{\omega}(\Omega)\right] \right\},
\end{equation}
where $\bw = (w_{0}, w_{1}, \ldots, w_{n})$ is a vector of independent $\text{Gamma}(1,1)$ weights, $\pi^{(L)}(\Psi)$ is the conditional prior density of $\Psi \vert \theta = \hat{\theta}^{(L,L)}$ with $\lambda_{0} = \lambda_{0}^{(L)},$ and $\pi^{(L)}(\Omega)$ is analogously defined.
Note, we do not re-run the full DPE procedure to generate each bootstrap sample.
Instead, we fix the values of $\theta$ and $\eta$ and also warm-start our optimization from the stabilized $\Psi$ and $\Omega$ estimates.

\section{Asymptotic theory of  cgSSL}
\label{sec:theory}
If the Gaussian chain graph model in Equation~\eqref{eq:cg_model} is well-specified --- that is, if our data are truly generated according to the model --- will the posterior distribution of $\Psi$ and $\Omega$ collapse to a point-mass at the true data generating parameters as $n\to\infty$?
In this section, we answer the question affirmatively: under some mild assumptions and with some slight modifications, the cgSSL posterior concentrates around the truth. 
We further establish the rate of concentration, which quantifies the speed at which the posterior distribution shrinks to the true data generating parameters.
We begin by briefly reviewing our general proof strategy before precisely stating our assumptions and results.
Proofs of our main results are available in Section S5 of the Supplementary Materials .

\subsection{Proof strategy}
Following \citet{Ning2020} and \citet{Bai2020_groupSSL}, we first showed that the posterior of $(\Psi,\Omega)$ concentrates in log-affinity.
Posterior concentration of the individual parameters followed as a consequence.
To show that the posterior concentrates in log-affinity, we verified the three conditions of Theorem 8.23 of \citet{ghosal2017fundamentals}.
First, we confirmed that the cgSSL prior places enough prior probability mass in small neighborhoods around every possible choice of $(\Psi,\Omega).$
This was done by verifying that for each $(\Psi,\Omega),$ the prior probability contained in a small Kullback-Leibler ball around $(\Psi,\Omega)$ can be lower bounded by a function of the ball's radius.
Then we studied a sequence of likelihood ratio tests defined on sieves of the parameter space that can correctly distinguish between parameter values that are sufficiently far away from each other in log-affinity.
In particular, we bounded the error rate of such tests and then bounded the covering number of the sieves.

\citet{Ning2020} studied the sparse marginal regression model in Equation~\eqref{eq:marginal_model} instead of the sparse chain graph. 
Although these are somewhat different models, our overall proof strategy is quite similar to theirs.
However, there are important technical differences.
First, they placed a prior on $\Omega$'s eigendecomposition while we placed an arguably simpler and more natural element-wise prior on $\Omega.$
The second and more substantive difference is in how we bound the covering number of sieves of the underlying parameter space.
Because they specified exactly sparse priors on the elements of $B = \Psi\Omega^{-1},$ it was enough for them to carefully bound the covering number of exactly low-dimensional sets of the form $\mathcal{A} \times \{0\}^{r}$ where $\mathcal{A}$ is some subset of a multi-dimensional Euclidean space and $r > 0$ is a positive integer.
In contrast, because we specified absolutely continuous priors on the elements of $\Psi,$ we had to cover ``effectively low-dimensional'' sets of the form $\mathcal{A} \times [-\delta, \delta]^{r}$ for small $\delta > 0.$
Our key lemma, Lemma S4 of the Supplementary Materials , provides sufficient conditions on $\delta$ for bounding the $\epsilon$-packing number of effectively low-dimensional sets using the $\epsilon'$-packing number of $\mathcal{A}$ for a carefully chosen $\epsilon' > 0.$

\subsection{Contraction of cgSSL}

In order to establish our posterior concentration results, we first assume that the data were generated according to a sparse Gaussian chain graph model with true parameter $\Psi_{0}$ and $\Omega_{0}.$ 
We additionally modify our prior on $\Omega$ by truncating it to the set $\{\Omega \succ \tau\}$ for some small $\tau$ that does not depend on $n$.
Finally, we make the following assumptions about the spectra of $\Psi_{0}$ and $\Omega_{0}$ and on the dimensions $n,p,$ and $q.$
Note that for sequences $\{a_{n}\}$ and $\{b_{n}\},$ we write $a_{n}\lesssim b_{n}$ to mean that there is some constant $C$ independent of $n$ such that $a_{n} \leq Cb_{n}.$
%We need to make additional assumptions about the spectra of $\Psi_{0}$ and $\Omega_{0}$ and on the dimensions $n, p$ and $q.$
\begin{itemize}
\item[\textbf{A1}]{Bounded operator norms: $\Psi_{0} \in \mathcal{T}_{0}=\{\Psi:|||\Psi|||_2<a_1\}$ and $\Omega_{0} \in \mathcal{H}_{0} = \{\Omega:\text{eig}(\Omega) \subseteq [1/b_2,1/b_1]\}$ where $||| \cdot |||$ is the operator norm, $\text{eig}$ is the set of eigenvalues, and $a_{1}, b_{1}, b_{2} > 0$ are fixed positive constants not depending on $n.$}
\item[\textbf{A2}]{Dimensionality: We assume that $\log(n)\lesssim \log(q);$  $\log(n)\lesssim \log(p);$  and 
$$
\max\{p,q,s_0^\Omega,s_0^\Psi\}\log(\max\{p,q\})/n\to 0,
$$
where $s_{0}^{\Omega}$ and $s_{0}^{\Psi}$ are the number of non-zero free parameters in $\Omega$ and $\Psi.$}
%for sufficient large $n$, there exists a constant $C$ independent of $n$ such that $a_n\le Cb_n$}
\item[\textbf{A3}]{Tuning the $\Psi$ prior: We assume that $(1-\theta)/\theta\sim (pq)^{2+a'}$;
$\lambda_0 \sim \max\{n,pq\}^{2+b'}$; and $\lambda_1\asymp  1/n$ where $a' > 0$ and $b' > 1/2$ are fixed constants not depending on $n$}
\item[\textbf{A4}]{Tuning the $\Omega$ prior: We assume that $(1-\eta)/\eta\sim \max\{Q,pq\}^{2+a};$ $\xi_0\sim \max\{Q,pq,n\}^{4+b};$ and $\xi_{1} \asymp 1/\max\{Q,n\},$ where $Q = q(q-1)/2$ and $a,b > 0$ are fixed constants not depending on $n.$}
\end{itemize}

Before proceeding, we highlight two key differences between the above assumptions and model introduced in Section~\ref{sec:cgSSL_prior}.
Although the prior in Section~\ref{sec:cgSSL_prior} restricts $\Omega$ to the positive-definite cone, our modified prior and Assumption A1 bound the smallest eigenvalue of $\Omega$ away from zero.
The stronger assumption ensures that the entries of $\Psi\Omega^{-1}$ do not diverge in our theoretical analysis and parallels those made by \citet{Gan2019_unequal}, \citet{Ning2020}, and \citet{Sagar2021precision}.
Based on these works and ours, we conjecture that bounded eigenvalue assumptions may in fact be necessary.
Additionally, like \citet{RockovaGeorge2018_ssl} and \citet{Gan2019_unequal}, we restricted our theoretical analysis to the setting where the proportion of non-negligible parameters, $\theta$ and $\eta,$ are fixed and known (Assumptions A3 and A4).

\begin{myTheorem}[Posterior contraction]
\label{thm:posterior_contraction}
Under Assumptions A1--A4, there is a constant $M_{1} > 0$ not depending on $n$ such that %for some constant $M_{1} > 0$ that does not depend on $n$ we have
\begin{align}
\sup_{\Psi\in\mathcal{T}_0,\Omega\in\mathcal{H}_0}\E_0 \Pi\left(\Psi:||X(\Psi\Omega^{-1}-\Psi_0\Omega^{-1}_0)||_F^2\ge M_1n\epsilon_n^2|Y_1,\dots,Y_n\right) & \longrightarrow 0 \label{eqn:contract_regressionfun_cg}\\
\sup_{\Psi\in\mathcal{T}_0,\Omega\in\mathcal{H}_0}\E_0 \Pi\left(\Omega:||\Omega-\Omega_0||_F^2\ge M_1\epsilon_n^2|Y_1,\dots,Y_n\right) &\longrightarrow 0 \label{eqn:contract_omega_cg}
\end{align}
 where $\epsilon_n=\sqrt{\max\{p,q,s_0^\Omega,s_0^\Psi\}\log(\max\{p,q\})/n}.$ Note that $\epsilon_{n} \to 0$ as $n \rightarrow \infty.$ 
 \end{myTheorem}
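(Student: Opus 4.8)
The plan is to establish contraction first in an average log-affinity metric and then transfer that to the Frobenius-norm statements in the theorem, using the standard posterior-contraction machinery for independent but non-identically distributed observations (the pairs $\by_i \mid \bx_i$ share $(\Psi,\Omega)$ but have distinct means because the $\bx_i$ differ). Concretely, I would invoke a Ghosal--van der Vaart-style result (Theorem 8.23 of \citet{ghosal2017fundamentals}), which reduces the problem to three ingredients: (i) a prior-mass (Kullback--Leibler) condition, (ii) a sieve whose complement carries exponentially small prior probability, and (iii) an entropy bound on the sieve. Because the model is Gaussian, it is cleanest to reparametrize through $B = \Psi\Omega^{-1}$, so that the conditional mean $B^{\top}\bx_i$ is linear in the parameter while $\Omega$ controls the covariance; the closed forms for the Gaussian KL divergence and Hellinger affinity then make every quantity explicit.

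For the prior-mass condition I would lower-bound the prior probability of a small KL ball around $(\Psi_0,\Omega_0)$. Using the Gaussian KL formula, the average KL divergence and its variation are controlled by $\lVert X(B-B_0)\rVert_F^2/n$ and $\lVert \Omega-\Omega_0\rVert_F^2$, uniformly over the operator-norm constraints of Assumption A1, so it suffices to place enough prior mass on $\{\lVert \Psi-\Psi_0\rVert, \lVert \Omega-\Omega_0\rVert \lesssim \epsilon_n\}$. Here the SSL structure is decisive: for each of the $s_0^{\Psi}$ (resp.\ $s_0^{\Omega}$) truly nonzero coordinates the slab supplies mass near the true value, while for each remaining coordinate the heavily concentrated spike supplies mass in a tiny neighborhood of $0$, since a $\text{Laplace}(\lambda_0)$ variable lies within $\delta$ of the origin with probability $1-e^{-\lambda_0\delta}$, which sits negligibly below $1$ once $\lambda_0\delta$ is large. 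Taking logs and tallying the $s_0$ slab contributions against the near-costless spike contributions, the dimension bookkeeping of Assumption A2 together with the tuning rates of A3--A4 is exactly what forces the product to exceed $e^{-c n \epsilon_n^2}$ (this is the content of Lemma~\ref{lemma:KL_cg}).

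For the sieve and entropy steps I would take $\mathcal{P}_n$ to be the set of $(\Psi,\Omega)$ obeying the operator-norm bounds of A1 together with a cap on the number of ``large'' coordinates, the remaining coordinates confined to a box $[-\delta,\delta]$ with $\delta$ polynomially small. The rapidly decaying spike under A3--A4 renders the prior mass of $\mathcal{P}_n^{c}$ exponentially negligible. The entropy of $\mathcal{P}_n$ is where the real work lies: because our spike is \emph{absolutely continuous} rather than a point mass, the sieve is not exactly low-dimensional but only \emph{effectively} low-dimensional, i.e.\ of the form $\mathcal{A}\times[-\delta,\delta]^{r}$. I would bound its $\epsilon_n$-packing number by the packing number of the low-dimensional face $\mathcal{A}$ at a slightly inflated radius, provided $\delta$ is small enough that the box perturbs the log-affinity by a vanishing amount, which is precisely the role of Lemma~\ref{lemma:packing_number_lp}. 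I expect this entropy estimate to be the main obstacle, since it is exactly the step where the continuous-spike prior departs from \citet{Ning2020}'s point-mass analysis and where one must quantify how a thin continuous slab around the nominally zero coordinates inflates the covering number.

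Finally, with contraction established in average log-affinity, I would convert it into the stated Frobenius bounds. For Gaussians the log-affinity lower-bounds both the mean discrepancy $\lVert X(B-B_0)\rVert_F^2/n$ and an eigenvalue-based functional of $\Omega-\Omega_0$; Assumption A1, which keeps the spectrum of $\Omega$ bounded and bounded away from zero, makes these lower bounds uniform over $\mathcal{T}_0\times\mathcal{H}_0$ and hence yields \eqref{eqn:contract_regressionfun_cg} and \eqref{eqn:contract_omega_cg} at the claimed rate $\epsilon_n$.
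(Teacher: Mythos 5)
Your proposal follows essentially the same route as the paper: contraction is first established in average log-affinity by verifying the three conditions of Theorem 8.23 of \citet{ghosal2017fundamentals} (the KL prior-mass bound of Lemma~\ref{lemma:KL_cg}, a sieve with exponentially small prior complement, and an entropy bound that reduces the ``effectively low-dimensional'' sets $\mathcal{A}\times[-\delta,\delta]^{r}$ to exactly low-dimensional ones via Lemma~\ref{lemma:packing_number_lp}), and is then transferred to the Frobenius statements using the bounded spectrum of $\Omega$ from Assumption A1. You correctly identify the key technical departure from \citet{Ning2020} (the continuous spike forces the shallow-cylinder packing argument), and your reparametrization through $B=\Psi\Omega^{-1}$ is only a cosmetic difference from the paper's decomposition $\Psi\Omega^{-1}-\Psi_0\Omega_0^{-1}=(\Delta_\Psi-\Psi_0\Omega_0^{-1}\Delta_\Omega)\Omega^{-1}$.

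The one step that would fail as you describe it is the sieve. You define $\mathcal{P}_n$ by ``the operator-norm bounds of A1,'' i.e.\ by constant radii, and attribute the exponential smallness of $\Pi(\mathcal{P}_n^{c})$ to the rapidly decaying spike. The spike only controls the box constraint and the count of large coordinates; the norm constraint must be controlled by the \emph{slab} tails, and under A3--A4 the slabs are extremely diffuse ($\lambda_1,\xi_1\asymp 1/n$), so the prior probability that some coordinate escapes a constant-radius ball decays only polynomially in $pq$ --- far larger than the required $e^{-C_2 n\epsilon_n^2}$ with $n\epsilon_n^2\asymp s^{\star}\log(\max\{p,q\})$. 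The paper instead takes $\ell_1$-balls with radii growing polynomially in $p$ and $q$ (namely $\lVert\Psi\rVert_1\le 2C_3p$ and $\lVert\Omega\rVert_1\le 8C_3q$), bounds their complements by Gamma tail inequalities, and then re-verifies that the metric entropy of these \emph{larger} sets is still $O(n\epsilon_n^2)$ because the log of the covering number scales like $s^{\star}\log(R/\epsilon)$ with $\log(R/\epsilon)=O(\log\max\{p,q,n\})$. You would need to make the same enlargement (and the accompanying entropy re-check) for your argument to close.
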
 
 
A key step in proving Theorem~\ref{thm:posterior_contraction} is Lemma~\ref{lemma:dimension_recovery}.
In order to state this lemma, we denote the effective dimensions of $\Psi$ and $\Omega$ by $\lvert \nu_{\psi}(\Psi) \rvert $ and $\lvert \nu_{\omega}(\Omega) \rvert.$
The effective dimension of $\Psi$ (resp. $\Omega$) counts the number of entries (resp. off-diagonal entries in the lower-triangle) whose absolute value exceeds the intersection point of the spike and slab prior densities. 

\begin{myLemma}[Dimension recovery]
    \label{lemma:dimension_recovery}
    For a sufficiently large constant $C_3'>0$, we have:
    \begin{align}
        \sup_{\Psi\in\mathcal{T}_0,\Omega\in\mathcal{H}_0}\E_0\Pi\left(\Psi:|\nu_{\psi}(\Psi)|>C_3's^\star|Y_1,\dots,Y_n\right)&\to 0\\
        \sup_{B\in\mathcal{T}_0,\Omega\in\mathcal{H}_0}\E_0\Pi\left(\Omega:|\nu_{\omega}(\Omega)|>C_3's^\star|Y_1,\dots,Y_n\right)&\to 0
    \end{align}
    where $s^\star=\max\{p,q,s_0^\Omega,s_0^\Psi\}.$ %and $|\nu_{\psi}(\Psi)|$, $|\nu_{\omega}(\Omega)|$ quantify the \textit{effective dimensions} of $\Psi$ and $\Omega.$
    %The effective  defined as number of entries (for $\Psi$) and off-diagonal entries (for $\Omega$) whose absolute value exceed the intersection point of the spike and slab components of the prior.  
\end{myLemma}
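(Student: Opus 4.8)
The plan is to control the posterior probability of each ``large effective dimension'' event through the standard ratio representation and then reduce the problem to a purely prior computation. Writing $A_\Psi = \{|\nu(\Psi)| > C_3' s^\star\}$ and letting $\ell_n$ denote the Gaussian chain-graph log-likelihood, I would first express
\[
\Pi(A_\Psi \mid Y_1,\dots,Y_n) = \frac{\int_{A_\Psi} e^{\ell_n(\Psi,\Omega) - \ell_n(\Psi_0,\Omega_0)}\, d\Pi(\Psi,\Omega)}{\int e^{\ell_n(\Psi,\Omega) - \ell_n(\Psi_0,\Omega_0)}\, d\Pi(\Psi,\Omega)} =: \frac{N}{D}.
\]
Splitting the expectation over the event $\{D \geq e^{-C_2 n \epsilon_n^2}\}$ and its complement and using $N/D \le 1$ gives
\[
\E_0\,\Pi(A_\Psi \mid Y_1,\dots,Y_n) \le \P_0\!\left(D < e^{-C_2 n\epsilon_n^2}\right) + e^{C_2 n\epsilon_n^2}\,\E_0[N].
\]
Because the observations are independent given the design and $(\Psi_0,\Omega_0)$ generates the data, Fubini and the fact that $\E_0[e^{\ell_n(\Psi,\Omega)-\ell_n(\Psi_0,\Omega_0)}] = 1$ for each fixed $(\Psi,\Omega)$ collapse the last term to $e^{C_2 n\epsilon_n^2}\,\Pi(A_\Psi)$. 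The entire lemma therefore reduces to (i) a denominator (evidence) lower bound and (ii) an exponentially small prior-mass bound on $A_\Psi$.

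For (i) I would invoke the KL prior-mass estimate already furnished by Lemma~\ref{lemma:KL_cg}: since the prior charges a Kullback--Leibler ball of radius $\epsilon_n$ around $(\Psi_0,\Omega_0)$ with mass at least $e^{-C_1 n\epsilon_n^2}$, the usual evidence lower bound (e.g.\ Lemma~8.10 of \citet{ghosal2017fundamentals}) yields $\P_0(D < e^{-C_2 n\epsilon_n^2}) \to 0$ for a suitable $C_2 > C_1$. This step is routine once the KL condition is in hand.

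The heart of the argument is (ii). Because Assumptions A3--A4 fix $\theta$, the entries $\psi_{j,k}$ are i.i.d.\ draws from the spike-and-slab mixture, so $|\nu(\Psi)|$ is $\mathrm{Binomial}(pq, w)$ with $w = \Pi(|\psi_{j,k}| > \delta_\Psi)$ and $\delta_\Psi$ the spike/slab crossover. I would bound $w$ directly: the slab contributes at most $\theta e^{-\lambda_1 \delta_\Psi} \le \theta$, and after substituting $\delta_\Psi = (\lambda_0-\lambda_1)^{-1}\log\!\big((1-\theta)\lambda_0/(\theta\lambda_1)\big)$ the spike contributes $(1-\theta)e^{-\lambda_0\delta_\Psi} \asymp \theta\lambda_1/\lambda_0$, so that $w \lesssim \theta \asymp (pq)^{-(2+a')}$ under A3. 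A Chernoff binomial-tail bound then gives $\Pi(A_\Psi) \le \big(e\,pq\,w/(C_3' s^\star)\big)^{C_3' s^\star} \le e^{-C_3'(1+a')s^\star\log(pq)}$. Recalling $n\epsilon_n^2 = s^\star\log(\max\{p,q\})$ and $\log(pq) \ge \log(\max\{p,q\})$, choosing $C_3'$ large enough that $C_3'(1+a') > C_2$ forces $e^{C_2 n\epsilon_n^2}\Pi(A_\Psi) \to 0$ uniformly over $\mathcal{T}_0\times\mathcal{H}_0$, which settles the $\Psi$ half.

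The $\Omega$ half follows the same template, with $\eta \asymp \max\{Q,pq\}^{-(2+a)}$ (where $Q = q(q-1)/2$) playing the role of $\theta$ and the off-diagonal crossover $\delta_\Omega$ that of $\delta_\Psi$. I expect the main obstacle to lie precisely here: unlike the $\psi_{j,k}$, the off-diagonal $\omega_{k,k'}$ are not independent under the prior owing to the positive-definiteness truncation $\mathbbm{1}(\Omega \succ 0)$, so $|\nu(\Omega)|$ is not exactly binomial. I would handle this by writing the truncated prior mass as a ratio of unnormalized integrals, upper-bounding the numerator by dropping the constraint $\Omega \succ 0$ (which only enlarges the region and restores independence of the off-diagonal entries, reducing it to the binomial computation above), and lower-bounding the normalizing denominator by the prior mass of a small neighborhood of a fixed diagonal matrix with bounded spectrum. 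Showing that this normalizing constant is bounded below by a quantity that does not erode the exponential gain from the tiny slab weight $\eta$ is the delicate step; everything else mirrors the $\Psi$ computation.
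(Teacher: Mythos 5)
Your proposal follows essentially the same route as the paper's proof: the ratio representation of the posterior, the KL-based evidence lower bound for the denominator, Fubini reducing the numerator to a pure prior-mass computation, a binomial-tail bound on the effective dimension driven by the small slab weight $\theta$ (resp.\ $\eta$), and comparison with the untruncated prior to handle the positive-definiteness truncation on $\Omega$. The one step you flag as delicate --- lower-bounding the normalizing constant $\tilde{\Pi}(\Omega\succ\tau I)$ --- is resolved in the paper exactly as you suggest, by restricting to a set of diagonally dominant matrices near a well-conditioned diagonal matrix (via Gershgorin discs), which gives a bound of the form $e^{-2\xi Q+\log(R)}$ that, since $\xi\asymp 1/\max\{Q,n\}$, does not erode the exponential gain from the tiny slab weight.
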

Lemma~\ref{lemma:dimension_recovery} guarantees that the cgSSL posterior does not grossly overestimate the number of non-zero entries in $\Psi$ and $\Omega.$

Note that the result in Equation~\eqref{eqn:contract_regressionfun_cg} shows that the vector $X\Psi\Omega^{-1}$ converges to the vector of evaluations of the true regression function $\Omega_{0}^{-1}\Psi_{0}^{\top}\bx.$
Importantly, apart from Assumption A2 about the dimensions of $X,$ Theorem~\ref{thm:posterior_contraction} does not require any assumptions about the design matrix $X.$
The contraction rates for $\Psi$ and $\Psi\Omega^{-1},$ however, depend critically on a restricted eigenvalue of $X,$ which we define as
$$
\phi^{2}(s) = \inf_{A\in \mathbb{R}^{p\times q}:0\le |\nu(A)|\le s} \left\{ \frac{\lVert XA \rVert^{2}_{F}}{n\lVert A \rVert^{2}_{F}} \right\}.
$$

\begin{myCorollary}[Regression coefficients recovery]
    \label{coro:contraction_B_cg}
    Under Assumptions A1--A4, there is a constant $M' > 0$ not depending on $n$ such that
    %Denote the restricted eigenvalue $\phi^2(\tilde{s})=\inf \left\{\frac{||X\Psi||^2_F}{n||\Psi||_F^2}:0\le |\nu_{\psi}(\Psi)|\le \tilde{s}\right\}$, we have for some $M'>0$:
    \begin{align}
        \sup_{\Psi\in \mathcal{T}_0,\Omega\in\mathcal{H}_0}\E_0 \Pi\left(||\Psi\Omega^{-1}-\Psi_0\Omega^{-1}_0||_F^2\ge \frac{M'\epsilon_n^2}{\phi^2(s_0^\Psi+C_3's^\star)}\right)&\to 0 \label{eqn:recover_marginal_cg}\\
        \sup_{\Psi\in \mathcal{T}_0,\Omega\in\mathcal{H}_0}\E_0 \Pi\left(||\Psi-\Psi_0||_F^2\ge \frac{M'\epsilon_n^2}{\min\{\phi^2(s_0^\Psi+C_3's^\star),1\}}\right)&\to 0.\label{eqn:recover_cond_cg}
    \end{align}
\end{myCorollary}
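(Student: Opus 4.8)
The plan is to deduce both statements from the in-sample contraction of Equation~\eqref{eqn:contract_regressionfun_cg} and the precision-matrix contraction of Equation~\eqref{eqn:contract_omega_cg} in Theorem~\ref{thm:posterior_contraction}, together with the dimension-recovery guarantee of Lemma~\ref{lemma:dimension_recovery}. The restricted eigenvalue $\phi^{2}(\cdot)$ is the device that converts the in-sample prediction error $\lVert X(\Psi\Omega^{-1}-\Psi_{0}\Omega_{0}^{-1})\rVert_{F}^{2}$ into the estimation error $\lVert \Psi\Omega^{-1}-\Psi_{0}\Omega_{0}^{-1}\rVert_{F}^{2}$, but this conversion is only valid when $\Psi\Omega^{-1}-\Psi_{0}\Omega_{0}^{-1}$ has controlled effective dimension, which is exactly what Lemma~\ref{lemma:dimension_recovery} supplies.

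First I would prove Equation~\eqref{eqn:recover_marginal_cg}. Writing $B=\Psi\Omega^{-1}$ and $B_{0}=\Psi_{0}\Omega_{0}^{-1}$, the key structural observation is that right multiplication by the invertible matrix $\Omega^{-1}$ preserves row support: a zero row of $\Psi$ maps to a zero row of $B$, and a nonzero row maps to a nonzero row. Hence $B$ inherits the row sparsity of $\Psi$ and $B_{0}$ that of $\Psi_{0}$, so the number of active rows of $B-B_{0}$ is at most the number of active rows of $\Psi$ plus that of $\Psi_{0}$. On the high-probability event of Lemma~\ref{lemma:dimension_recovery} the effective dimension of $\Psi$ is at most $C_{3}'s^{\star}$, while $\Psi_{0}$ contributes at most $s_{0}^{\Psi}$, so $B-B_{0}$ lies in the feasible set defining $\phi^{2}(s_{0}^{\Psi}+C_{3}'s^{\star})$. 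Combining the definition of $\phi^{2}$ with Equation~\eqref{eqn:contract_regressionfun_cg} yields
\begin{equation*}
\lVert B-B_{0}\rVert_{F}^{2}\le\frac{\lVert X(B-B_{0})\rVert_{F}^{2}}{n\,\phi^{2}(s_{0}^{\Psi}+C_{3}'s^{\star})}<\frac{M_{1}\epsilon_{n}^{2}}{\phi^{2}(s_{0}^{\Psi}+C_{3}'s^{\star})}
\end{equation*}
on the intersection of the favorable events of Theorem~\ref{thm:posterior_contraction} and Lemma~\ref{lemma:dimension_recovery}; a union bound over the two vanishing posterior masses gives Equation~\eqref{eqn:recover_marginal_cg} with $M'=M_{1}$.

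Next I would establish Equation~\eqref{eqn:recover_cond_cg} by relating $\Psi$ to the marginal effects through $\Psi=B\Omega$. The identity
\begin{equation*}
\Psi-\Psi_{0}=(B-B_{0})\Omega+B_{0}(\Omega-\Omega_{0}),
\end{equation*}
combined with the triangle inequality, submultiplicativity $\lVert UV\rVert_{F}\le\lVert U\rVert_{F}\,|||V|||_{2}$, and $(x+y)^{2}\le 2x^{2}+2y^{2}$, reduces matters to bounding $\lVert B-B_{0}\rVert_{F}^{2}$, $\lVert\Omega-\Omega_{0}\rVert_{F}^{2}$, and the operator norms $|||B_{0}|||_{2}$ and $|||\Omega|||_{2}$. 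Here $|||B_{0}|||_{2}\le|||\Psi_{0}|||_{2}\,|||\Omega_{0}^{-1}|||_{2}\le a_{1}b_{2}$ is bounded by Assumption~A1, while $|||\Omega|||_{2}$ is controlled on the event of Equation~\eqref{eqn:contract_omega_cg}: since $\lVert\Omega-\Omega_{0}\rVert_{F}^{2}<M_{1}\epsilon_{n}^{2}\to 0$ and $\Omega_{0}\in\mathcal{H}_{0}$, one has $|||\Omega|||_{2}\le 1/b_{1}+\lVert\Omega-\Omega_{0}\rVert_{F}$, which is bounded for large $n$. Feeding in Equation~\eqref{eqn:recover_marginal_cg} and Equation~\eqref{eqn:contract_omega_cg} then produces a bound of the form $\epsilon_{n}^{2}\bigl(c_{1}/\phi^{2}(s_{0}^{\Psi}+C_{3}'s^{\star})+c_{2}\bigr)$, which I would absorb into $M'\epsilon_{n}^{2}/\min\{\phi^{2}(s_{0}^{\Psi}+C_{3}'s^{\star}),1\}$ after a final union bound.

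The main obstacle is the effective-dimension bookkeeping in the first step: one must reconcile the entry-based effective dimension $\lvert\nu(\cdot)\rvert$ controlled by Lemma~\ref{lemma:dimension_recovery} with the row-based sparsity notion implicitly required for $\phi^{2}(\cdot)$ to be bounded below, and verify that right multiplication by $\Omega^{-1}$ does not inflate this dimension. A secondary delicate point is that a posterior draw $\Omega$ need not lie in $\mathcal{H}_{0}$, so its operator norm must be tamed through the contraction result rather than by assumption; throughout one must ensure that all favorable events occur simultaneously with posterior probability tending to one.
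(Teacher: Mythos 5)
Your proposal matches the paper's own argument: both parts rest on Theorem~\ref{thm:posterior_contraction}, the dimension-recovery Lemma~\ref{lemma:dimension_recovery}, and the restricted eigenvalue $\phi^2(s_0^\Psi+C_3's^\star)$ to pass from $\|X(\Psi\Omega^{-1}-\Psi_0\Omega_0^{-1})\|_F$ to $\|\Psi\Omega^{-1}-\Psi_0\Omega_0^{-1}\|_F$, and your decomposition $\Psi-\Psi_0=(B-B_0)\Omega+B_0(\Omega-\Omega_0)$ is algebraically identical to the paper's step $\|\Psi-\Psi_0\|_F\le\|(B-B_0)\Omega\|_F+\|\Psi_0\Omega_0^{-1}(\Omega-\Omega_0)\|_F$ combined with the operator-norm bounds on $\Omega$, $\Psi_0$, and $\Omega_0^{-1}$. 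The effective-dimension bookkeeping you flag as the main obstacle (reconciling the entry-wise $|\nu(\cdot)|$ with the sparsity of $\Psi\Omega^{-1}-\Psi_0\Omega_0^{-1}$) is likewise left implicit in the paper, which simply asserts that the restricted eigenvalue at level $s_0^\Psi+C_3's^\star$ applies.
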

Corollary~\ref{coro:contraction_B_cg} shows that the posterior distribution of $\Psi\Omega^{-1}$ can contract at a faster or slower rate than the posterior distributions of $X\Psi\Omega^{-1}$ and $\Omega,$ depending on the design matrix.
In particular, when $X$ is poorly conditioned, we might expect the rate to be slower. 
In contrast, the term $\min\{\phi^2(s_0^\Psi+C_3's^\star),1\}$ appearing in the denominator of the rate in Equation~\eqref{eqn:recover_cond_cg} implies that the posterior distribution of $\Psi$ cannot concentrate at a faster rate than the posterior distributions of $\Psi\Omega^{-1}$ and $\Omega,$ regardless of the design matrix.
To develop some intuition about this phenomenon, notice that
$$
\Psi - \Psi_{0} = (\Psi\Omega^{-1}-\Psi_0\Omega^{-1}_0)\Omega+(\Psi_0\Omega_0^{-1}(\Omega-\Omega_0)\Omega^{-1})\Omega.
$$
Roughly speaking, the decomposition suggests that in order to estimate $\Psi$ well, we must estimate both $\Omega$ and $\Psi\Omega^{-1}$ well.
That is, estimating $\Psi$ is at least as hard, statistically, as estimating $\Omega$ and $\Psi\Omega^{-1}.$
Taken together, Corollary~\ref{coro:contraction_B_cg} suggests that while a carefully constructed design matrix can improve estimation of the matrix of \textit{marginal} effects, $B = \Psi\Omega^{-1},$ it cannot generally improve estimation of the matrix of \textit{direct} effects $\Psi.$

\section{Synthetic experiments}
\label{sec:synthetic_experiments}
We performed a simulation study to assess how well cgSSL with DPE (\texttt{cgSSL-DPE}) (i) recovers the supports of $\Psi$ and $\Omega$ and (ii) estimates each matrix.
We simulated several synthetic datasets of various dimensions and with different sparsity patterns in $\Omega$ (Figure~\ref{fig:simulation_design}) from the model in Equation~\eqref{eq:cg_model}.
We compared cgSSL to several competitors: a fixed-penalty method (\texttt{cgLASSO}), which uses 10-fold cross-validation to select a single penalty $\lambda$ for the entries in $\Psi$ and a single fixed penalty $\xi$ for the entries in $\Omega$; \citet{shen2021bayesian}'s CAR-LASSO procedure (\texttt{CAR}), which puts a common Laplace prior on the entries in $\Psi$ and a common Laplace prior entries in $\Omega$; \citet{shen2021bayesian}'s adaptive CAR-LASSO (\texttt{CAR-A}), which puts individualized Laplace priors on the entries in $\Psi$ and $\Omega;$ \citet{Deshpande2019}'s \texttt{mSSL} procedure that places spike-and-slab LASSO priors on the entries of $B$ and $\Omega$ in Equation~\eqref{eq:marginal_model}; and \citet{Consonni2017}'s Objective Bayes procedure (\texttt{OBFB}).

Before proceeding, we note that \texttt{mSSL} is inherently misspecified as it assumes $B = \Psi\Omega^{-1}$ is sparse instead of $\Psi.$
Nevertheless, we included \texttt{mSSL} in our experiments to investigate how misspecifying the mean structure affects our ability to recover $\Omega.$
\texttt{OBFB} is similarly misspecified, albeit to a somewhat lesser extent, as it assumes that $B$ (and consequently $\Psi$) is row-sparse.
Unfortunately, the implementation of \texttt{OBFB} only returns posterior samples of indicators encoding which rows of $B$ and which entries in $\Omega$ are non-zero.
We computed a point-estimate of $\Omega'$s support using the posterior mode of the relevant indicators.
We were unable, however, to reliably reconstruct $\Psi$'s support from \texttt{OBFB}'s output.
This is because a zero-entry in $\Psi$ can occur when a non-zero row of $B$ is orthogonal to a non-zero column in $\Omega.$
For these reasons, we only report \texttt{OBFB}'s performance in recovering the support of $\Omega.$ 
%Additionally, we note that \texttt{OBFB} also assumes that $B = \Psi|omega^{-1}$ is sparse. 
%However, unlike \texttt{mSSL}, \texttt{OBFB{
%assumes that $B$ is row-sparse (i.e., entire rows of either zero or non-zero), which effectively restricts 
% and Objective Bayes covariate-adjusted sparse graphical model \texttt{OBFB} method by \citet{Consonni2017} where they used a multivariate regression to adjust for covariates and the target was a sparse $\Omega$, they further assume row-wise sparsity in $B=\Psi\Omega^{-1}$. This method sampled posterior of the graph structure of $\Omega$ and we obtain point estimates by posterior mode of each edge. 
%We note that %\texttt{CAR} and \texttt{CAR-A} are MCMC procedures while the others perform optimization.
%Unlike \texttt{cgLASSO} and \texttt{CAR}, which apply the same amount of shrinkage to each $\psi_{j,k}$ and each $\omega_{k,k'},$ \texttt{CAR-A} and \texttt{cgSSL-DPE} apply different amounts of shrinkage to each parameter.
%Note that Similarly for \texttt{OBFB}  assumed row-wise sparsity in $B$ and do not output estimate of $B$ thus comparing $\Psi$ recovery is hard using this method so we report recovery of $\Omega$ for comparison. 

Across all choices of dimension and $\Omega,$ we found that \texttt{cgSSL-DPE} achieved somewhat lower sensitivity but much higher precision in estimating the supports of both $\Psi$ and $\Omega$ than the competing methods.
This means that while \texttt{cgSSL-DPE} tended to return fewer non-zero parameter estimates than the other methods, we can be much more certain that those parameters are truly non-zero.
Put another way, although the other methods can recover more of the truly non-zero signal, they do so at the expense of making many more false positive identifications in the supports of $\Psi$ and $\Omega$ than \texttt{cgSSL-DPE}. 

\subsection{Simulation design}

We simulated data with three different dimensions $(n,p,q) = (100, 10, 10),$ $(100, 20, 30),$ and $(400, 100, 30).$
For each choice of $(n,p,q),$ we considered seven different $\Omega$'s: (i) an AR(1) model for $\Omega^{-1}$ so that $\Omega$ is tri-diagonal; (ii) an AR(2) model for $\Omega^{-1}$ so that $\omega_{k,k'} = 0$ whenever $\lvert k - k'\vert > 2$; (iii) a block model in which $\Omega$ is block-diagonal with two dense $q/2 \times q/2$ diagonal blocks; (iv) a star graph where the off-diagonal entry $\omega_{k,k'} = 0$ unless $k$ or $k'$ is equal to 1; (v) a small-world network; (vi) a tree network; and (viii) dense model with all off-diagonal elements $\omega_{k,k'} = 2.$

In the AR(1) model we set $(\Omega^{-1})_{k,k'} = 0.7^{\lvert k - k' \rvert}$ so that $\omega_{k,k'} = 0$ whenever $\lvert k - k' \vert > 1.$
In the AR(2) model, we set $\omega_{k,k} = 1, \omega_{k-1,k} = \omega_{k,k-1} = 0.5,$ and $\omega_{k-2,k} = \omega_{k,k-2} = 0.25.$
For the block model, we partitioned $\Sigma = \Omega^{-1}$ into 4 $q/2 \times q/2$ blocks and set all entries in the off-diagonal blocks of $\Sigma$ to zero. 
We then set $\sigma_{k,k} = 1$ and $\sigma_{k,k'} = 0.5$ for $1 \leq k \neq k' \leq q/2$ and for $q/2 + 1 \leq k \neq k' \leq q.$
For the star graph, we set $\omega_{k,k} = 1$, $\omega_{1,k} = \omega_{k,1} = 0.1$ for each $k > 1,$ and set the remaining off-diagonal elements of $\Omega$ equal to zero.
For the small-world and tree networks, we first generated an appropriate random graph and then drew $\Omega$ from a G-Wishart distribution \citep{Roverato2002, Lenkoski2013} with three degrees of freedom and an identity scale matrix.
We generated the small-world graph using the Watts-Strogatz \citep{WattsStrogatz1998} model with a single community and rewiring probability of 0.1.
We generated the tree graph by running a loop-erased random walk on a complete graph. 

These seven specifications of $\Omega$ (top row of Figure~\ref{fig:simulation_design}) correspond to rather different underlying graphical structure among the outcomes (bottom row of Figure~\ref{fig:simulation_design}). 
The AR(1) model, for instance, represents an extremely sparse but regular structure while the AR(2) model is somewhat less sparse.
While the star model and AR(1) model contain the same number of edges, the underlying graphs have markedly different degree distributions.
Compared to the AR(1), AR(2), and star models, the block model is considerably denser.
We included a dense $\Omega$ to assess how well all of the methods perform in a misspecified regime. 

\begin{figure}[h]
\centering
\begin{subfigure}[b]{0.13\textwidth}
\centering
\includegraphics[width = \textwidth]{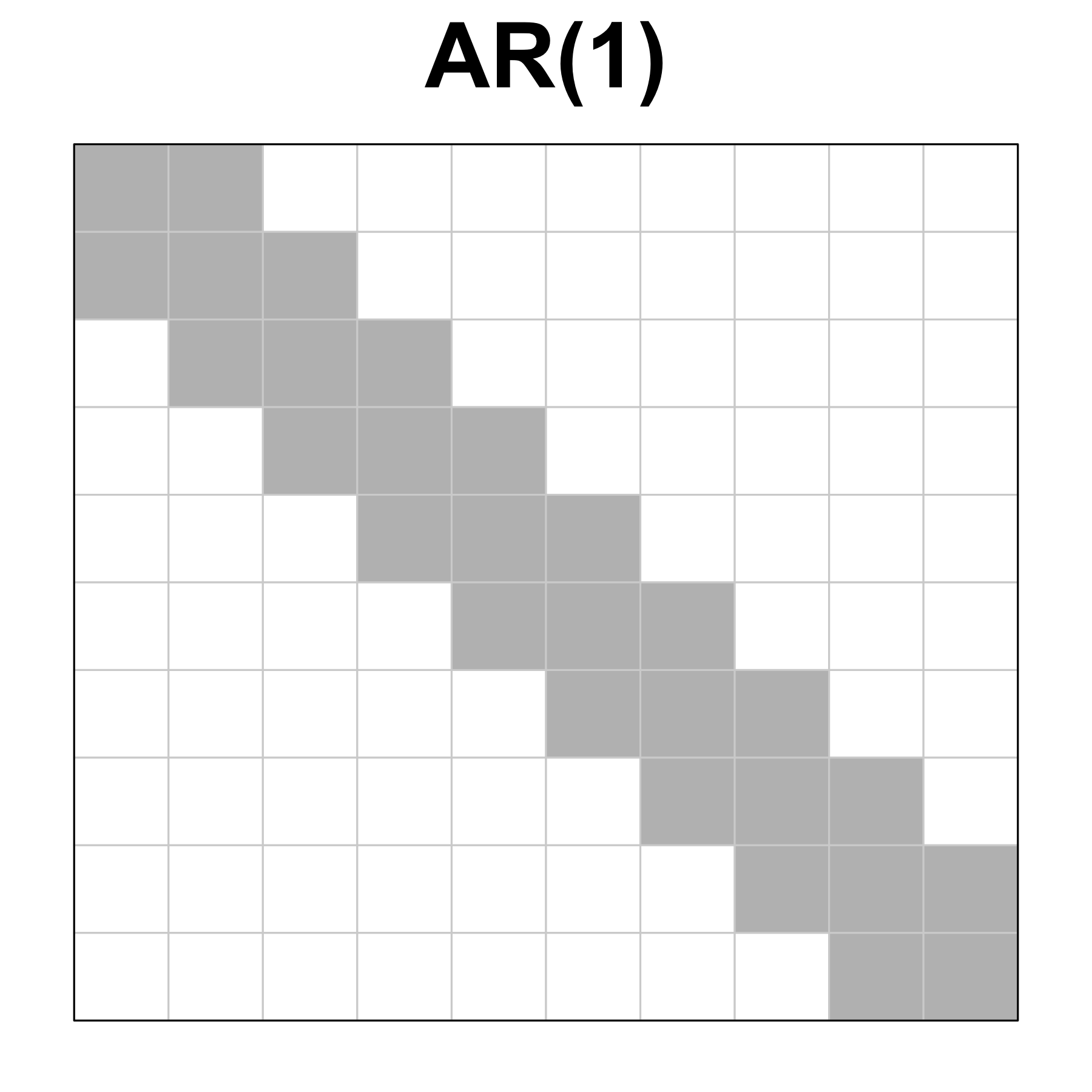}
\end{subfigure}
\begin{subfigure}[b]{0.13\textwidth}
\centering
\includegraphics[width = \textwidth]{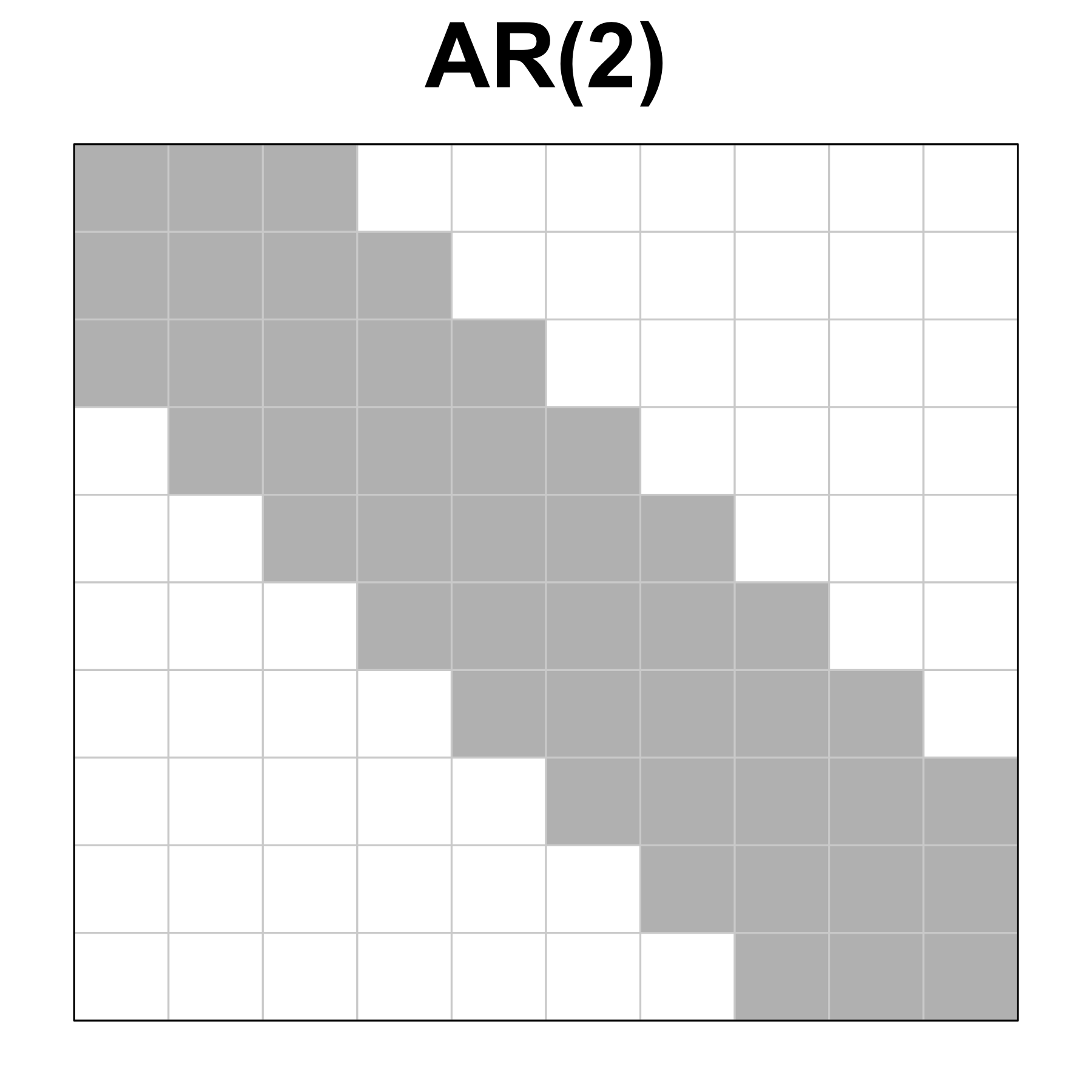}
\end{subfigure}
\begin{subfigure}[b]{0.13\textwidth}
\centering
\includegraphics[width = \textwidth]{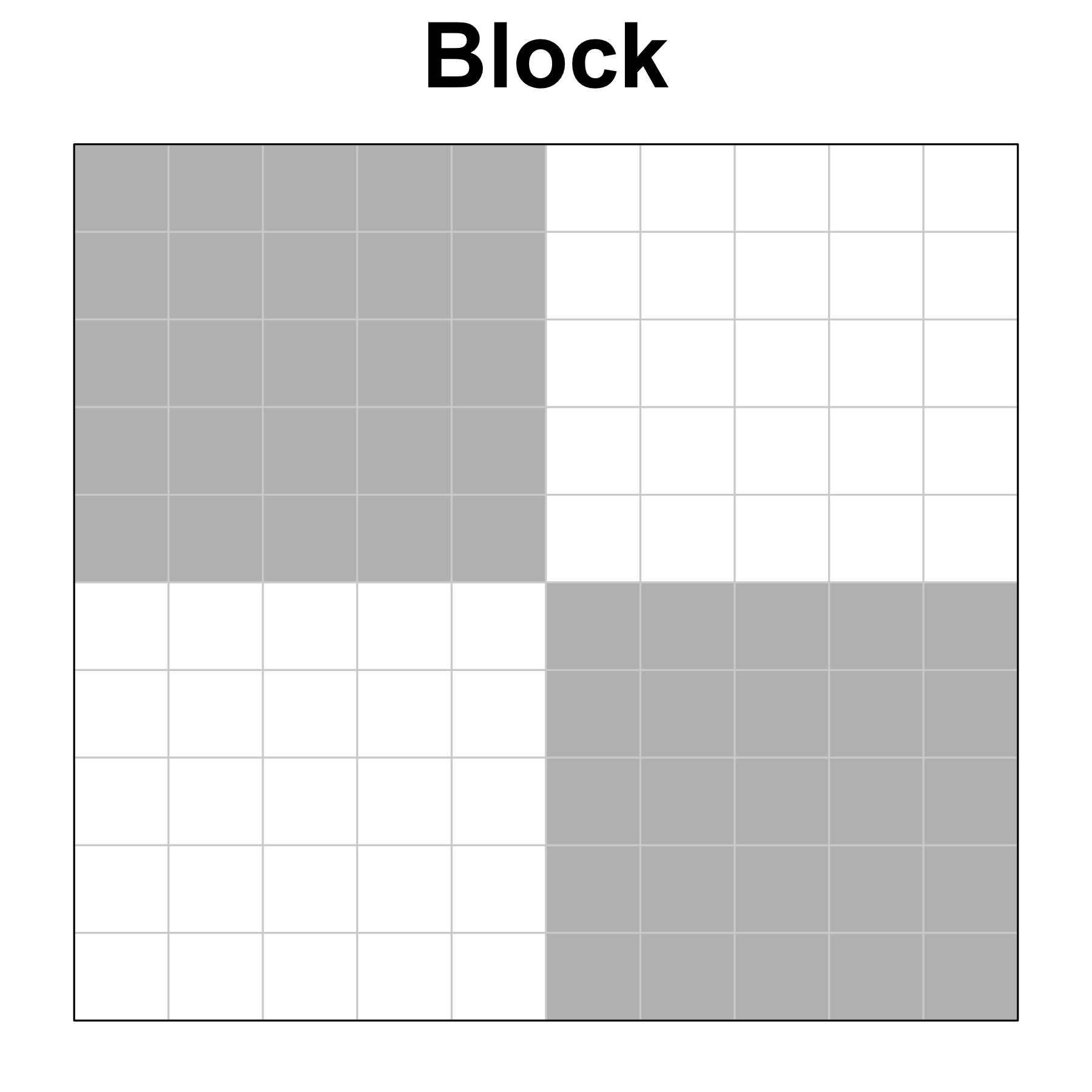}
\end{subfigure}
\begin{subfigure}[b]{0.13\textwidth}
\centering
\includegraphics[width = \textwidth]{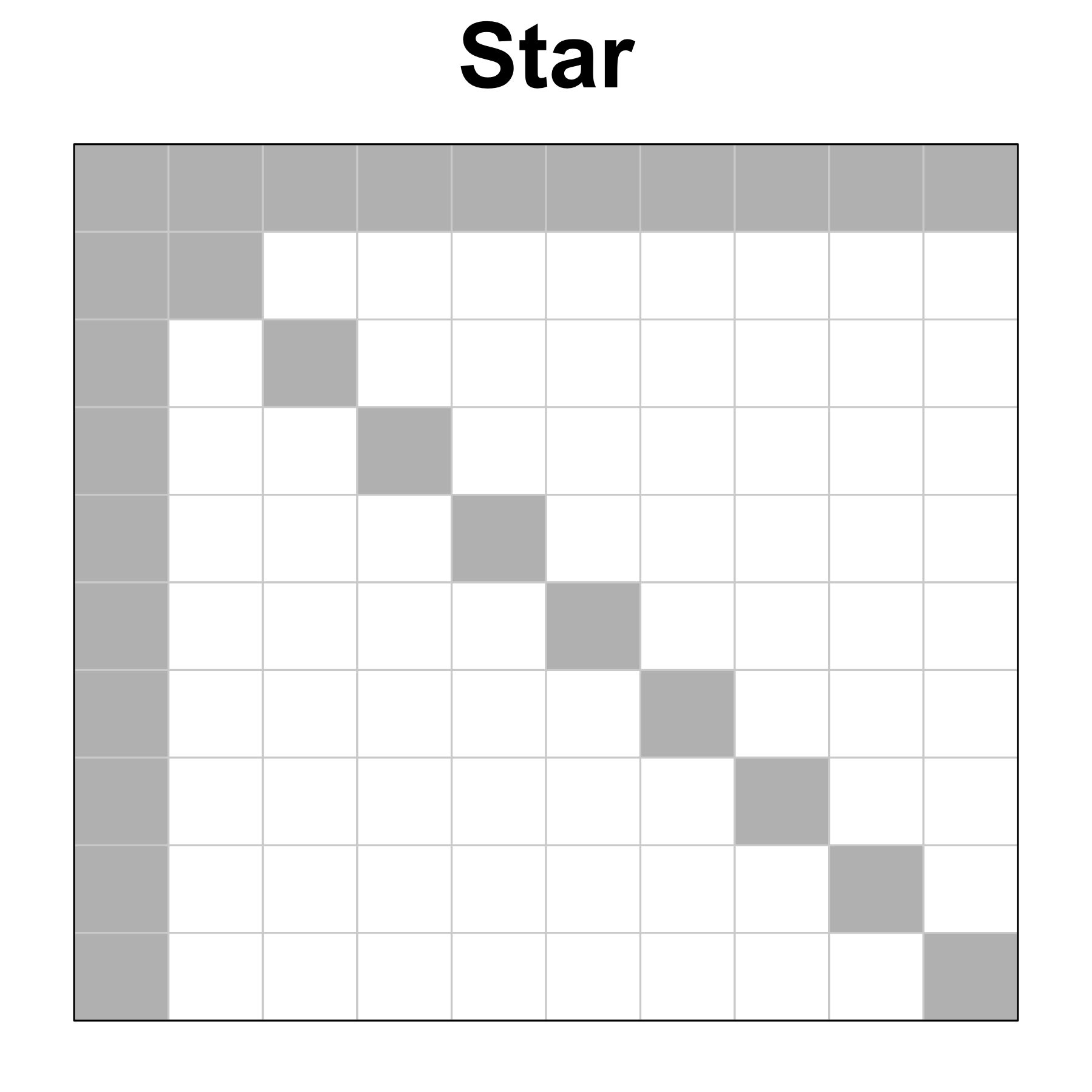}
\end{subfigure}
\begin{subfigure}[b]{0.13\textwidth}
\centering
\includegraphics[width = \textwidth]{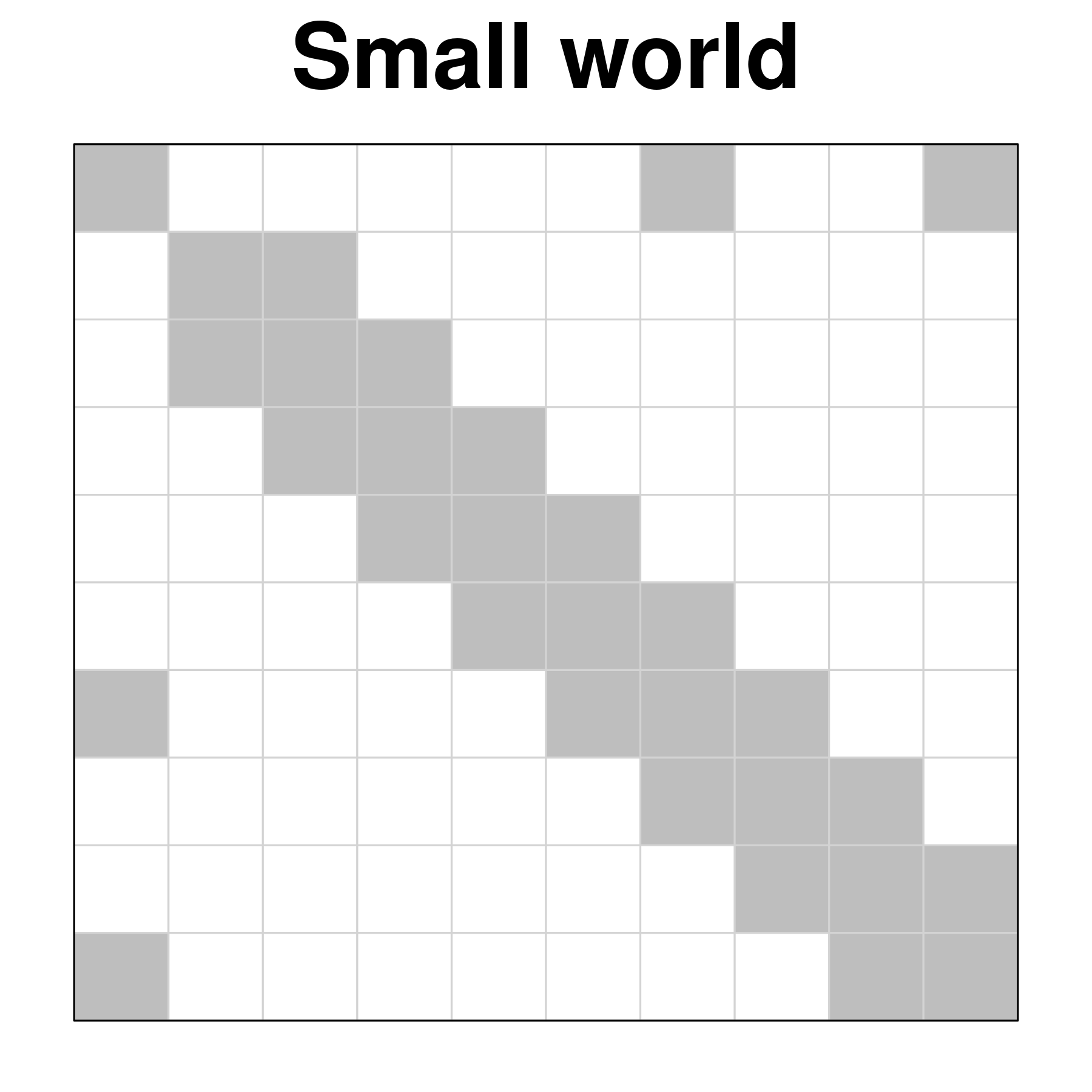}
\end{subfigure}	
\begin{subfigure}[b]{0.13\textwidth}
\centering
\includegraphics[width = \textwidth]{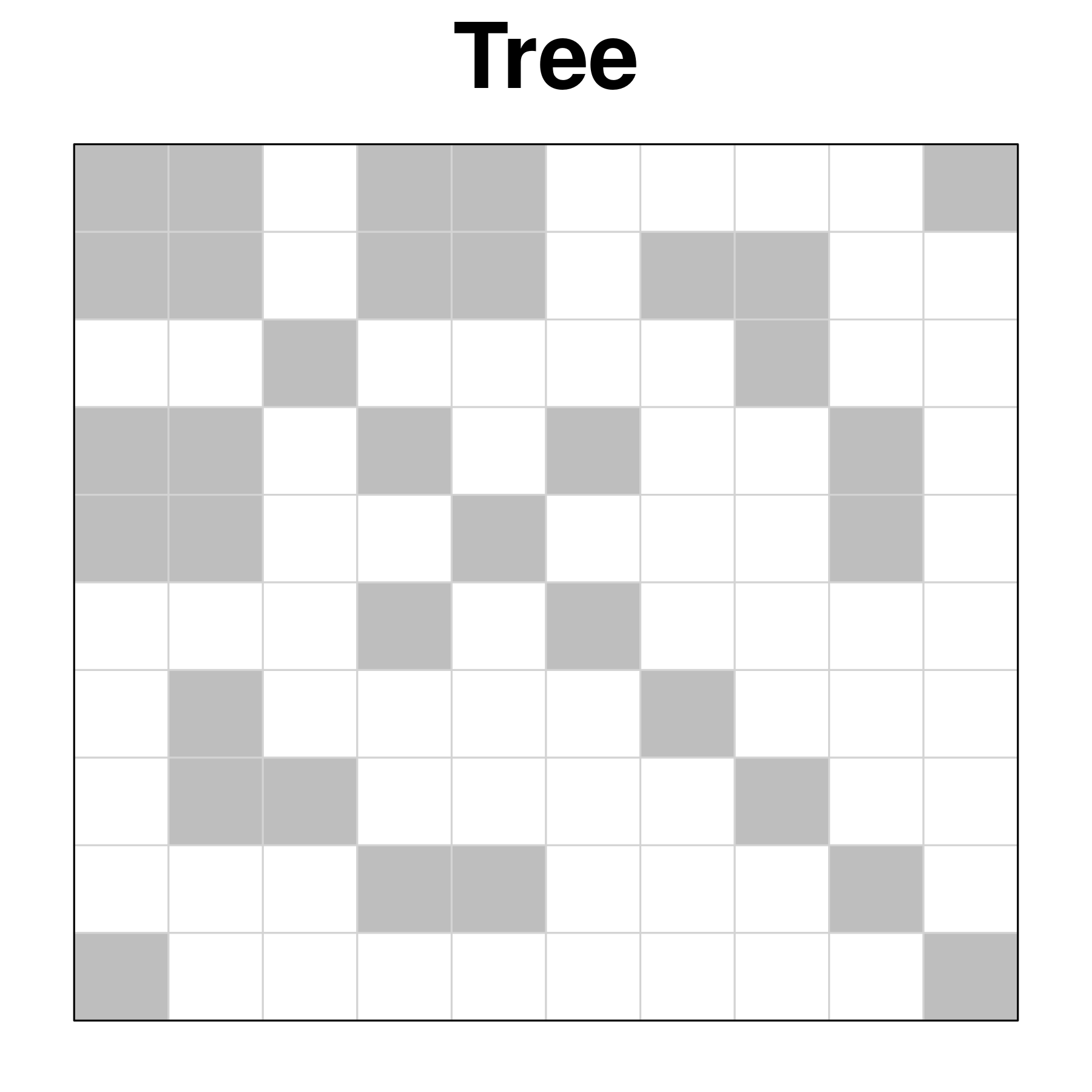}
\end{subfigure}	
\begin{subfigure}[b]{0.13\textwidth}
\centering
\includegraphics[width = \textwidth]{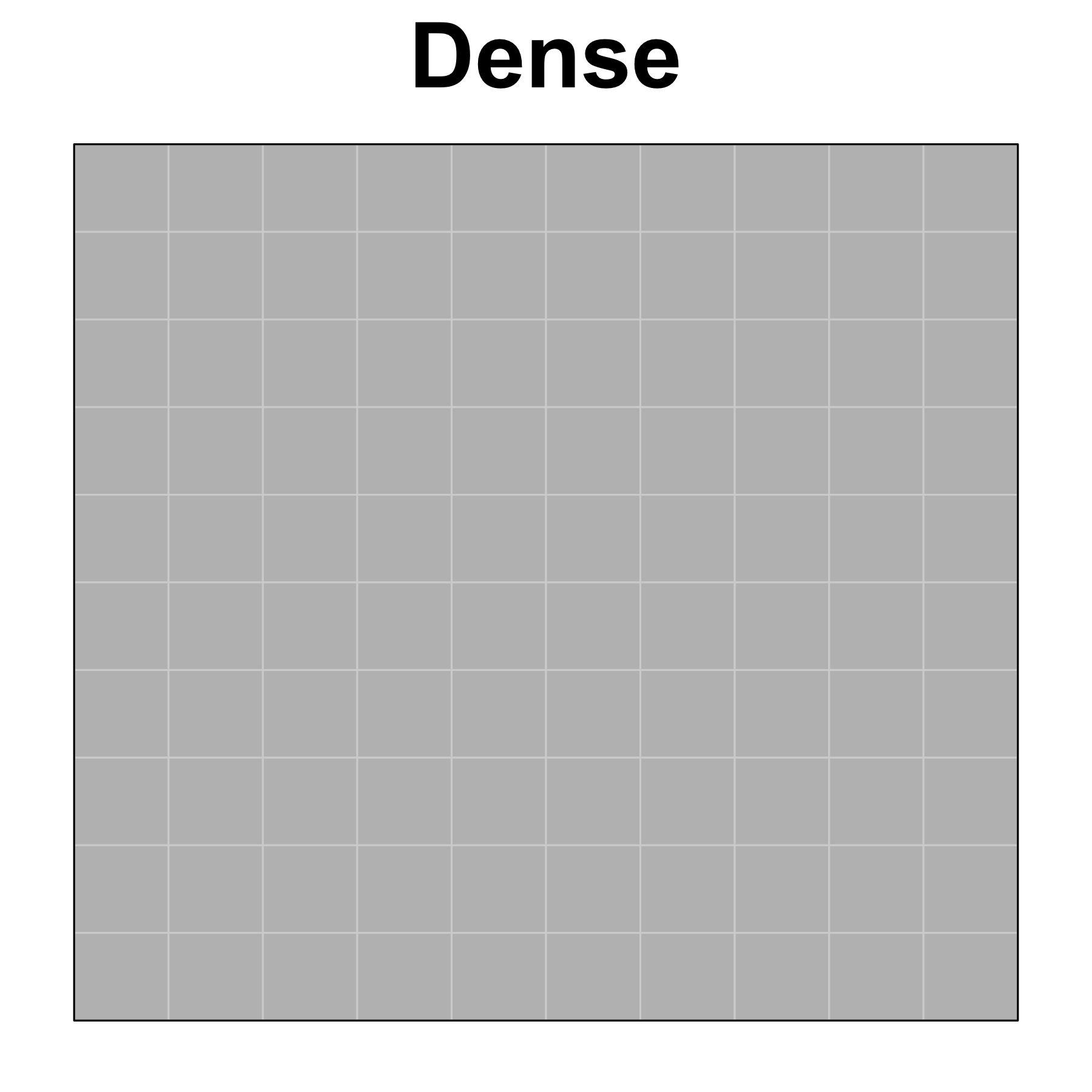}
\end{subfigure}	

\begin{subfigure}[b]{0.13\textwidth}
\centering
\includegraphics[width = \textwidth]{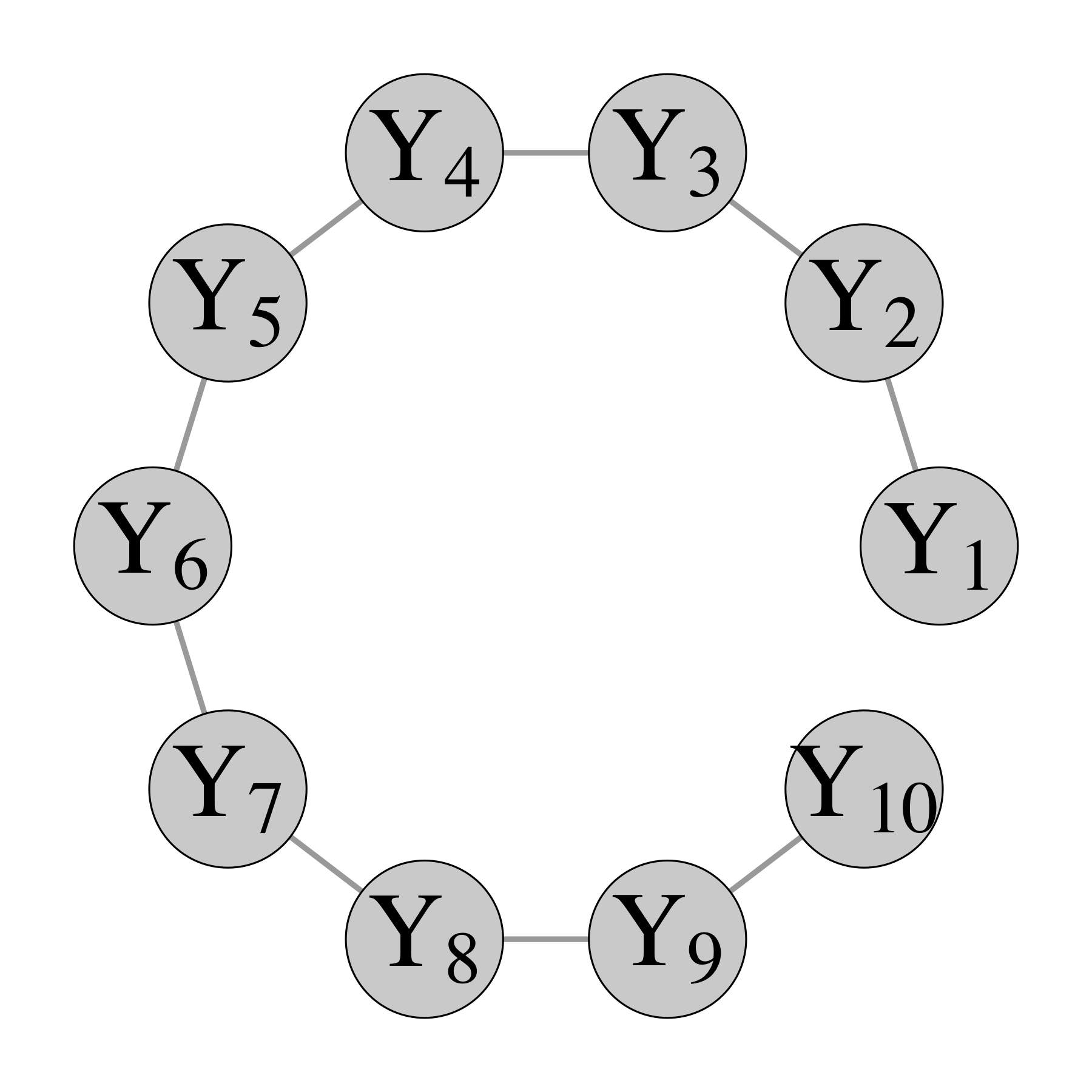}
\end{subfigure}
\begin{subfigure}[b]{0.13\textwidth}
\centering
\includegraphics[width = \textwidth]{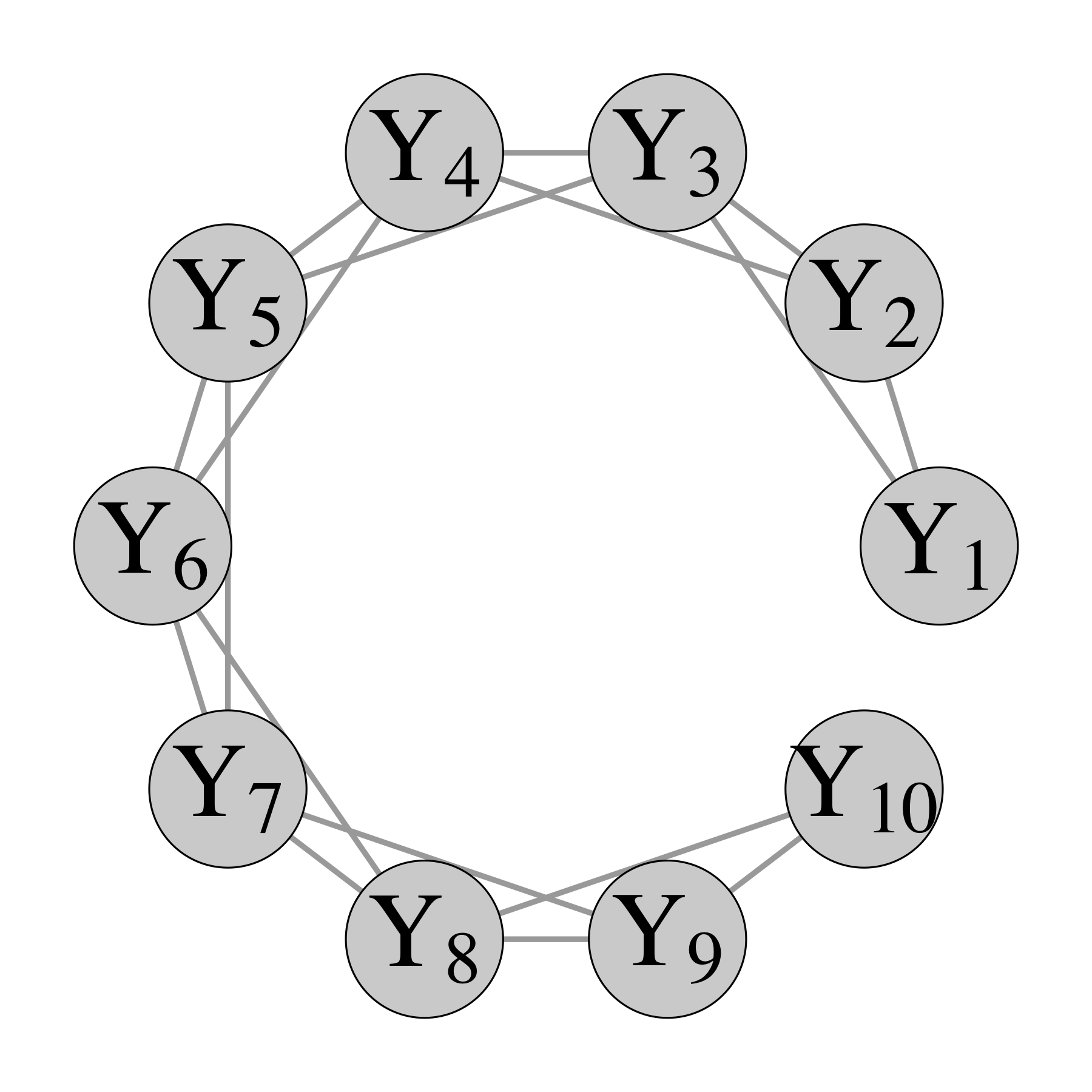}
\end{subfigure}
\begin{subfigure}[b]{0.13\textwidth}
\centering
\includegraphics[width = \textwidth]{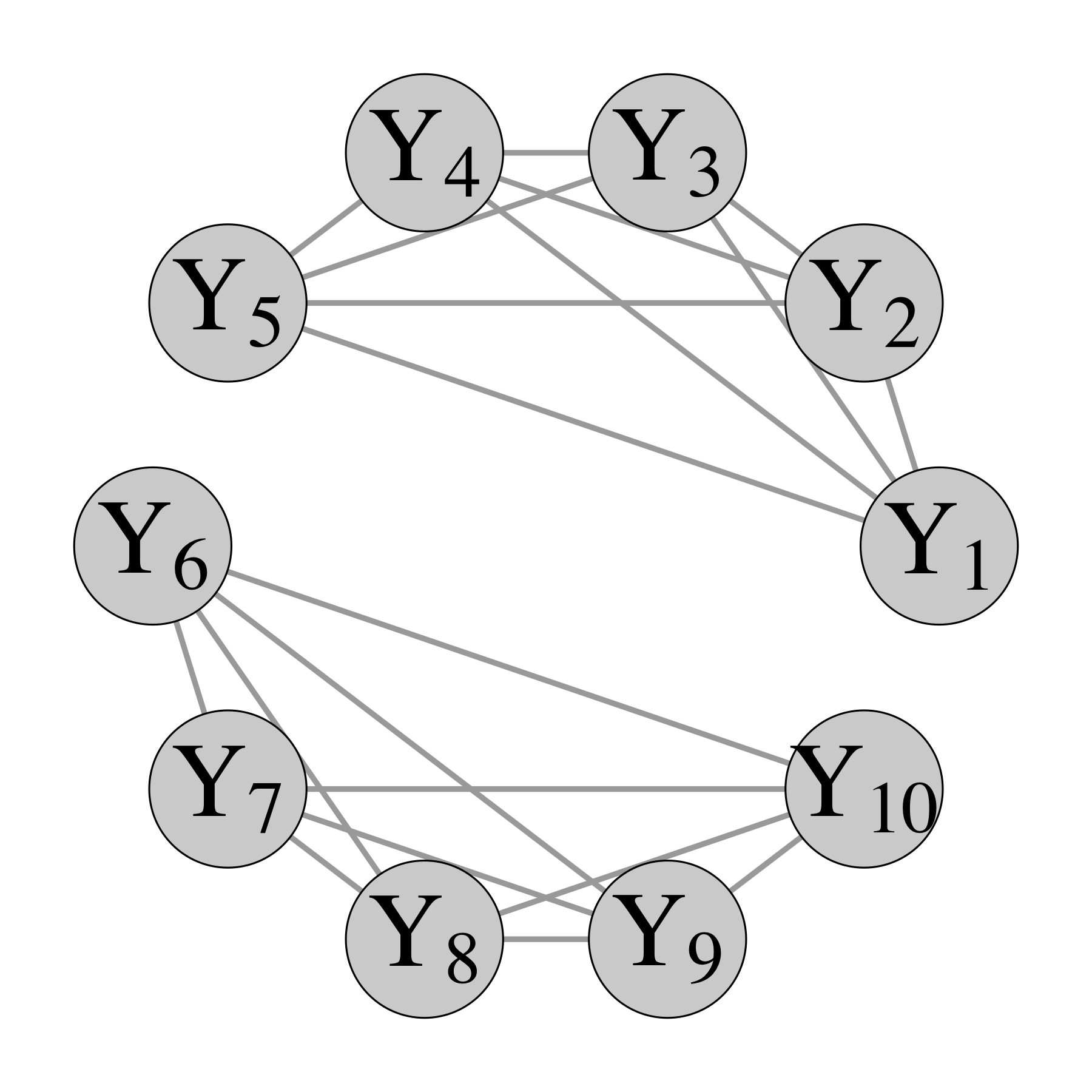}
\end{subfigure}
\begin{subfigure}[b]{0.13\textwidth}
\centering
\includegraphics[width =\textwidth]{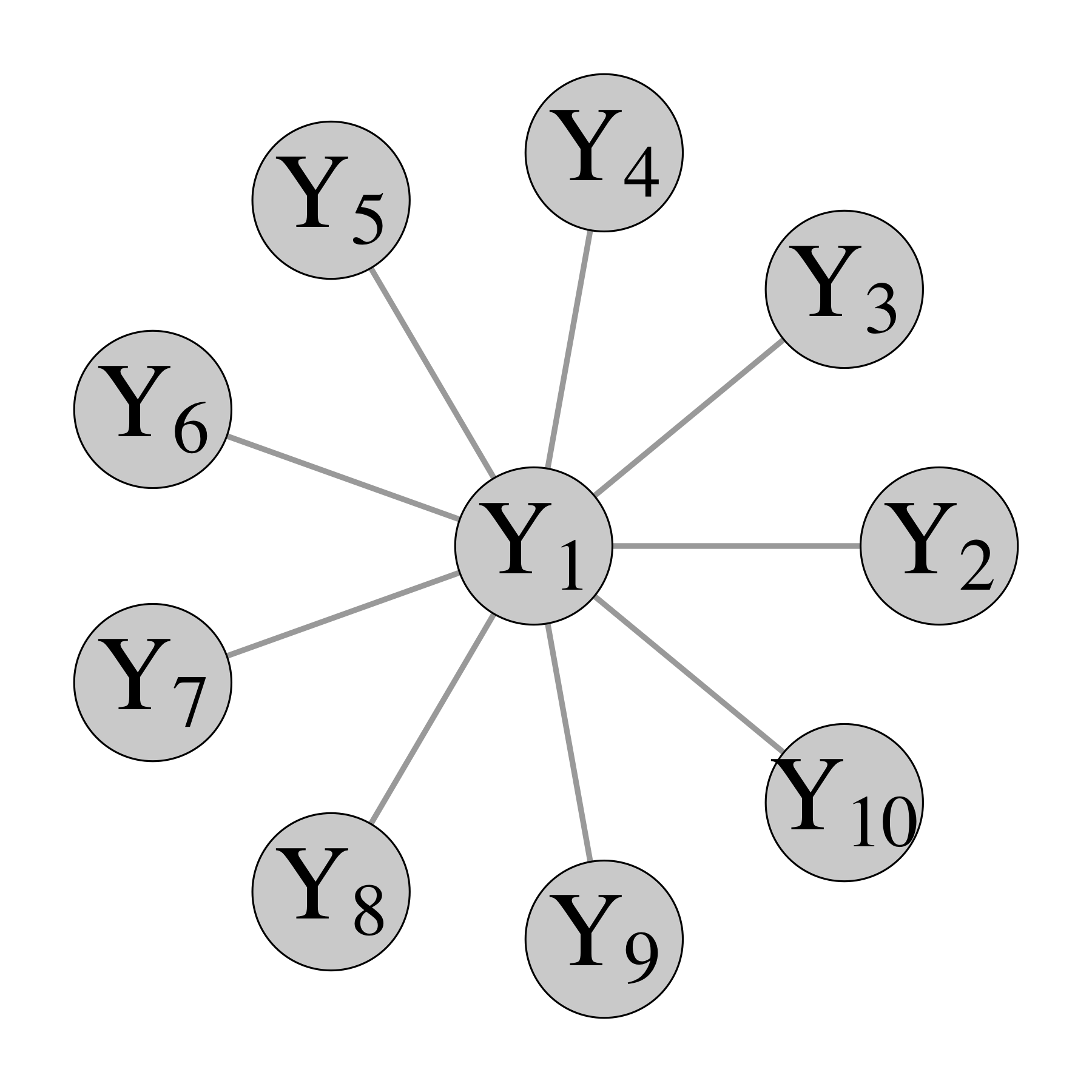}
\end{subfigure}
\begin{subfigure}[b]{0.13\textwidth}
\centering
\includegraphics[width = \textwidth]{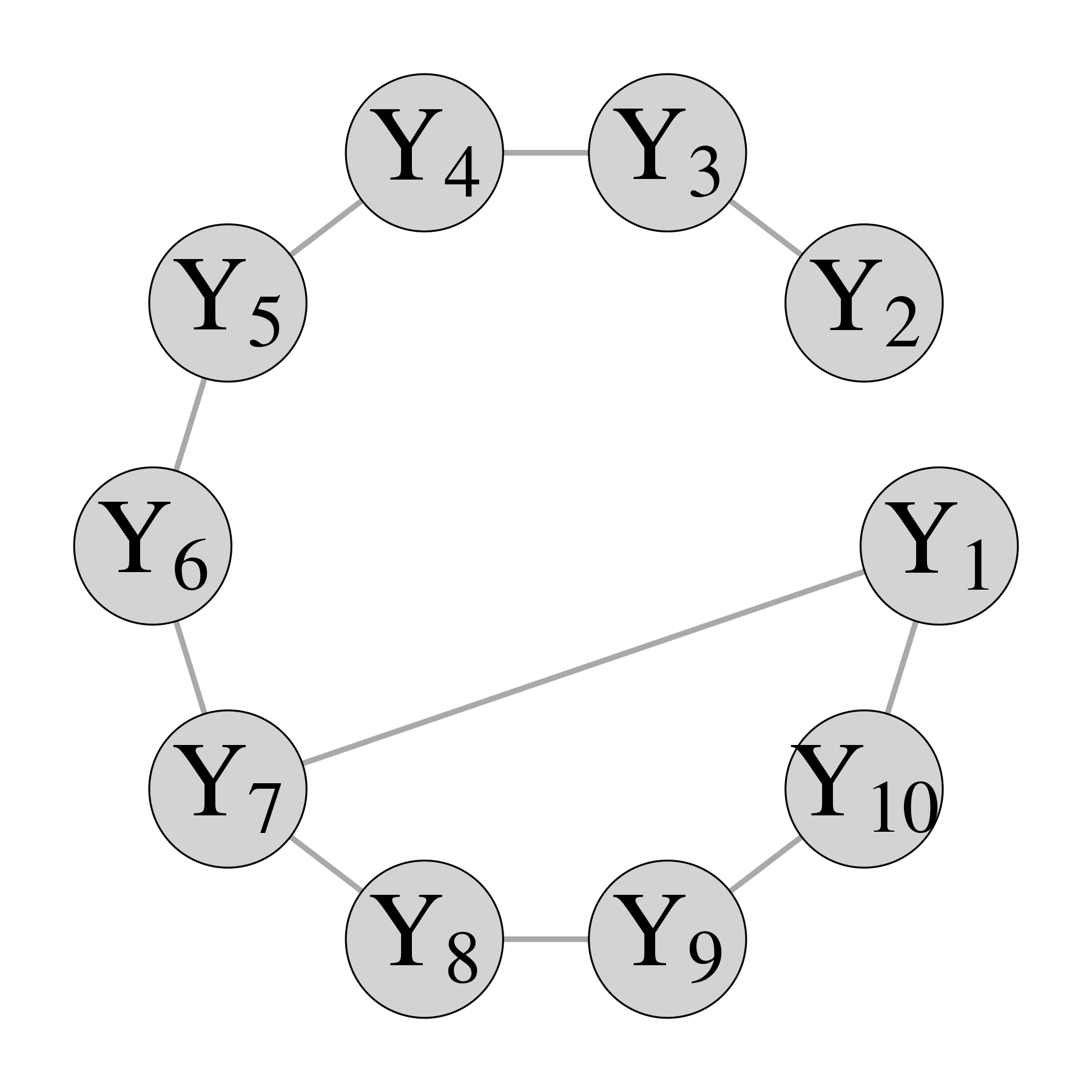}
\end{subfigure}	
\begin{subfigure}[b]{0.13\textwidth}
\centering
\includegraphics[width = \textwidth]{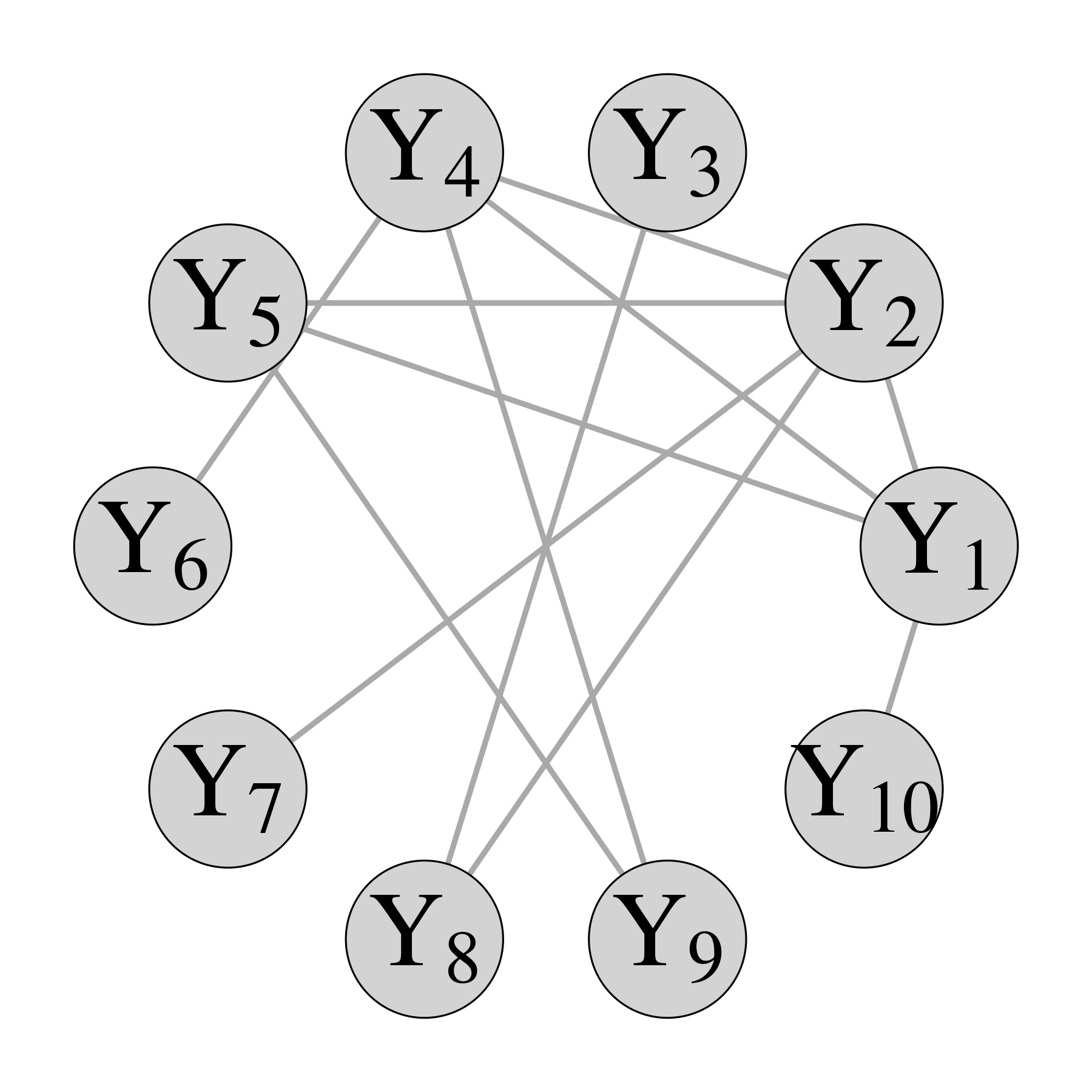}
\end{subfigure}	
\begin{subfigure}[b]{0.13\textwidth}
\centering
\includegraphics[width = \textwidth]{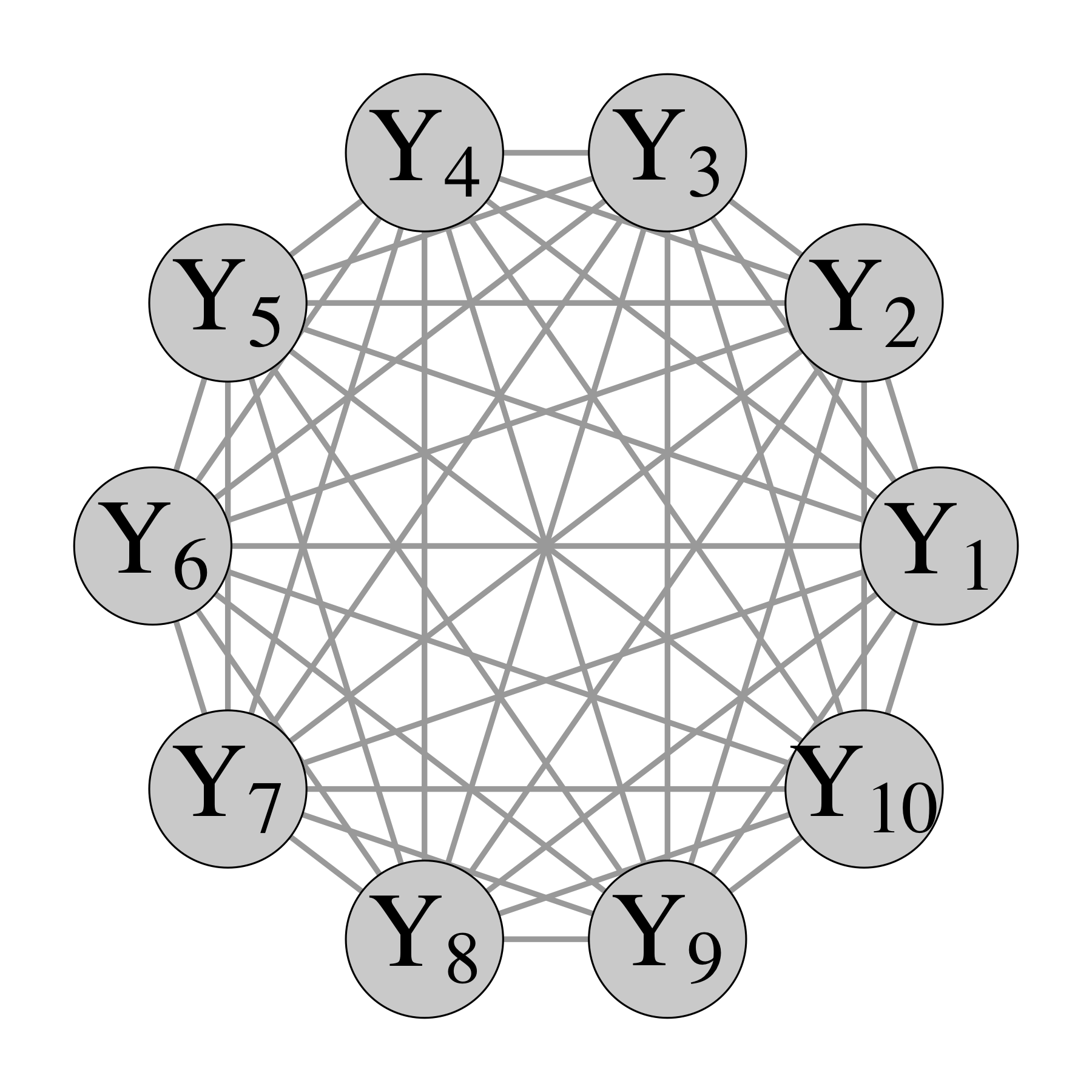}
\end{subfigure}	
\caption{Supports of each $\Omega$ for $q = 10$ (top) and corresponding graph (bottom). Gray cells in top row indicate non-zero entries $\omega_{k,k'}$ and white cells indicate zeros}
\label{fig:simulation_design}
\end{figure}

In total, we considered 21 combinations of dimensions $(n,p,q)$ and $\Omega.$
We generated $\Psi$ by randomly selecting 20\% of entries to be non-zero and drawing the non-zero entries uniformly from [-2,2].
For each combination of $(n,p,q), \Omega$ and $\Psi,$ we generated 100 synthetic datasets from the Gaussian chain graph model in Equation~\eqref{eq:cg_model}.
The design matrix $X$ contained independent standard normal entries.

\subsection{Results}

To assess estimation performance, we computed the Frobenius norm between the estimated matrices and the true data generating matrices.
We additionally computed the coverage of our 95\% bootstrap intervals, averaged over all entries in $\Psi$ and $\Omega$ (\texttt{cgSSL-dpe+BB})
To assess the support recovery performance, we counted the number of elements in each of $\Psi$ and $\Omega$ that were (i) correctly estimated as non-zero (true positives; TP); (ii) correctly estimated as zero (true negatives; TN); (iii) incorrectly estimated as non-zero (false positives; FP); and (iv) incorrectly estimated as zero (false negatives; FN).
We report the sensitivity (TP/(TP + FN)) and precision (TP/(TP + FP)). 
We also report the sensitivity and precision of estimating the supports of $\Psi$ and $\Omega$ by checking whether zero is contained in the bootstrap interval for each $\psi_{j,k}$ and $\omega_{k,k'}.$
Generally speaking, we prefer methods with high sensitivity and high precision.
High sensitivity indicates that the method has correctly estimated most of the true non-zero parameters as non-zero.
High precision, on the other hand, indicates that most of the estimated non-zero parameters are truly non-zero.
For brevity, we only report results for the $(n,p,q) = (400,100,30)$ setting in Table~\ref{tab:psi_omega_relationship}.
Results for the other dimensions were similar; see Tables~S1 and S2 of the Supplementary Materials .

We performed all our experiments in a shared high-throughput computing environment \citep{chtc} on nodes with 5 GB RAM and 2 CPU cores running \textsf{R} \citep[v.\ 4.13;][]{R_citation}.
We ran \texttt{cgSSL-DPE} using the default hyperparameter settings recommended in Section~\ref{sec:map_estimation}.
Similarly, we ran \texttt{mSSL} using the default settings that were recommended in \citet{Deshpande2019}. 
For each MCMC method (\texttt{CAR}, \texttt{CAR-A}, and \texttt{OFBF}), we ran four Markov chains for 10,000 iterations each, discarding the first 5,000 samples from each as ``burn-in,'' and retaining all subsequent samples.
In general, although the simulated Markov chains did not mix (see below for more discussion), we nevertheless report the results based on the 20,000 obtained samples. 
Because \texttt{cgLASSO}'s cross-validation step often did not finish within 72 hours (the maximum time limit set by our cluster) when run serially, we parallelized our \texttt{cgLASSO} implementation, allowing our cluster to schedule separate jobs running each combination of fold and penalty.
%Although we allocated two CPU cores to run each method on each synthetic dataset, only \texttt{cgLASSO} utilized the second core and involved parallelization.
%This was because \texttt{cgLASSO}'s cross-validation step often did not finish within 72 hours (the maximum time limit our cluster sets for each job) when run serially and without parallelization.
%For our experiments, we parallelized our \texttt{cgLASSO} implementation, allowing our cluster to schedule separate jobs running each combination of fold and penalty.
Because the scheduler sometimes delayed running certain jobs, we were unable to reliably time \texttt{cgLASSO} and do not report its runtimes in Table~\ref{tab:psi_omega_relationship}.
Extrapolating from preliminary runs, however, we estimate that running one full fold would take around ten hours. 

In terms of identifying non-zero direct effects (i.e., estimating the support of $\Psi$), \texttt{cgLASSO} consistently achieved the highest sensitivity. 
But, as the precision results indicate, the majority of \texttt{cgLASSO}'s ``discoveries'' were in fact false positives.
On further inspection, we determined that such behavior was the result of the cross-validation step in \texttt{cgLASSO}, which tended to select very small penalty values that promoted very little shrinkage.
The other fixed penalty method, \texttt{CAR}, similarly displayed high sensitivity and low precision.
%Interestingly, for several choices of $\Omega,$ the precisions of \texttt{cgLASSO} and \texttt{CAR} for recovering the support of $\Psi$ were less than 0.5.
In contrast, methods that deployed adaptive penalties (\texttt{CAR-A} and \texttt{cgSSL-DPE}), displayed higher precision in estimating the support of $\Psi.$
In fact, at least for estimating the support of $\Psi,$ \texttt{cgSSL-DPE} made no false positives in the vast majority of simulation replications.

We observed essentially the same phenomenon for $\Omega$: although \texttt{cgSSL-DPE} generally returned fewer non-zero estimates of $\omega_{k,k'}$, the vast majority of these estimates were true positives.
In a sense, the fixed penalty methods (\texttt{cgLASSO} and \texttt{CAR}) cast a very wide net when searching for non-zero signal in $\Psi$ and $\Omega,$ leading to large number of false positive identifications in the supports of these matrices.
Adaptive penalty methods, on the other hand, were much more discerning. 
Interestingly, \texttt{mSSL} recovered $\Omega$'s support substantially better than \texttt{OBFB}. 
We suspect that the discrepancy stems from the fact that \texttt{mSSL} deploys adaptive penalization while \texttt{OBFB} utilizes a fixed Wishart prior for $\Omega.$
%We additionally observed that \texttt{OBFB}  
%Additionally, misspecification of the mean structure appeared to limit \texttt{mSSL} and \texttt{OBFB}'s ability to recover the support of $\Omega.$
%Misshandling of the mean structure appeared to limit the ability of \texttt{OBFB} to correctly recover structure of $\Omega$ in high dimensions. 

\newpage
\begin{table}[H]
\centering
\caption{Sensitivity, precision, and Frobenius error for $\Psi$ and $\Omega$ when $(n,p,q) = (400, 100, 30).$ For each $\Omega,$ the best performance is bold-faced. For dense $\Omega,$ \texttt{cgLASSO} with tuned penalties did not converge within 72 hours.}
\label{tab:psi_omega_relationship}
%\small
\footnotesize
%\scriptsize
\begin{tabular}{lccccccc}
\hline
~ & \multicolumn{3}{c}{$\Psi$ recovery} & \multicolumn{3}{c}{$\Omega$ recovery} & Runtime \\ 
Method & SEN & PREC & FROB & SEN & PREC & FROB & Time (min) \\ \hline
\multicolumn{8}{c}{$AR(1)$ model} \\ \hline
\texttt{cgLASSO} & \textbf{1} & 0.2 & 0.07 & 0.94 & 0.46 & 28.0 & --\\
\texttt{CAR} & 0.82 & 0.46 & 0.02 & \textbf{1} & 0.27 & \textbf{2.2} & 472.7 \\
\texttt{CAR-A} & 0.86 & 0.73& 0.01& \textbf{1} & \textbf{0.89} & 7.3 & 564.8 \\
\texttt{OBFB} & -- & -- & -- &  0.07 &0.08 & --  & 54.0 \\
%\texttt{cgSSL-dcpe} & 0.74 (0.01) & 0.89 (0.03) & 0.07 (0) & \textbf{1 (0)} & 0.46 (0.03) & 70.84 (3.72) & \\
\texttt{cgSSL} & 0.87 & \textbf{0.99} & \textbf{0.00} & \textbf{1}& 0.78 & 3.4 & \textbf{26.7} \\ 
\texttt{cgSSL+BB} & 0.88 & \textbf{0.99} & -- & \textbf{1} & 0.83 & -- & 55.4 \\ 
\texttt{mSSL} & 0.95 & 0.25 & 0.06  & \textbf{1} & 0.82 & 9.9 & 51.6 \\ \hline
\multicolumn{8}{c}{$AR(2)$ model} \\ \hline
\texttt{cgLASSO} & \textbf{1} & 0.2  & 0.2 & 0.79 & 0.63  & 10.8 & --\\
\texttt{CAR} & 0.85 & 0.5  & 0.01 & 0.98 & 0.49 & 0.4 & 493.8 \\
\texttt{CAR-A} & 0.89 & 0.77 & 0.01 & 1 & 0.94 & 1.2 & 458.1 \\
\texttt{OBFB} & -- & -- & -- & 0.06 &0.13 & -- & 49.6 \\
%\texttt{cgSSL-dcpe} & 0.87 (0.01) & 0.79 (0.14) & 0.04 (0.02) & \textbf{1 (0)} & 0.31 (0.05) & 10.5 (6.66)\\
\texttt{cgSSL} & 0.92& \textbf{1} & \textbf{0.00} & \textbf{1 }& 0.47 & \textbf{0.3} & 13.0 \\ 
\texttt{cgSSL+BB} & 0.92& \textbf{1} & -- & \textbf{1} & 0.61 & --  & 30.3 \\ 
\texttt{mSSL} & 0.99 & 0.23 & 0.03 & 1 & \textbf{0.95} & 2.5 & \textbf{0.08} \\ \hline
\multicolumn{8}{c}{Block model} \\ \hline
\texttt{cgLASSO} & \textbf{1} & 0.20 & 0.4 & 0.87& 0.97 & 10.1 & --\\
\texttt{CAR} & 0.84 & 0.46 & 0.02 & 0.71 & 0.76 & 3.4 & 480.6 \\
\texttt{CAR-A} & 0.88 & 0.70 & \textbf{0.01} & 0.75 & \textbf{0.99} & 4.1 & 512.4 \\
\texttt{OBFB} & -- & -- & -- & 0.06 &0.50 &  -- & 74.5\\
%\texttt{cgSSL-dcpe} & 0.9 (0.01) & 0.22 (0) & 0.82 (0.04) & 0 (0.01) & 0.64 (NA) & 29.51 (0.24)\\
\texttt{cgSSL} & 0.86 & \textbf{0.99} & \textbf{0.01} & 0.98 & 0.98& \textbf{1.4} & 17.3 \\ 
\texttt{cgSSL+BB} & 0.88 & \textbf{0.99} & -- & \textbf{0.99} & 0.98 & -- &  42.6 \\ 
\texttt{mSSL} & 0.99 & 0.21 & 0.1 & 0.44 & \textbf{0.99} & 28.4 & \textbf{2.0} \\ \hline
\multicolumn{8}{c}{Star model} \\ \hline
\texttt{cgLASSO} & \textbf{0.93} & 0.83 & 0.01 & 0.53  & 0.59 & 4.7 & -- \\
\texttt{CAR} & 0.89 & 0.48 & 0.01& 0.73 & 0.25 & 0.6 & 493.2 \\
\texttt{CAR-A} & 0.90 & 0.70  & 0.01 & 0.87  & 0.74 & 1.1 & 431.8 \\
\texttt{OBFB} & -- & -- & -- &  0.06 & 0.08 & -- & 65.2 \\
%\texttt{cgSSL-dcpe} & 0.88 (0) & \textbf{1 (0)} & \textbf{0 (0)} & \textbf{1 (0)} & 0.89 (0.05) & 0.29 (0.08)\\
\texttt{cgSSL} & 0.89 & \textbf{1} & \textbf{0.00} & \textbf{1} & 0.90 & \textbf{0.3} & 1.1 \\ 
\texttt{cgSSL+BB} & 0.89 & \textbf{1} & -- & \textbf{1} & 0.90 & -- & 3.7 \\ 
\texttt{mSSL} & 0.90  & 0.85 & 0.02 & \textbf{1}& \textbf{1}  & 0.7 & \textbf{0.1} \\ \hline
\multicolumn{8}{c}{Small world model} \\ \hline
\texttt{cgLASSO} & \textbf{0.99} & 0.20 & 0.6 & 0.38 & 0.49 & 468.1 & -- \\
\texttt{CAR} & 0.82 & 0.43  & 0.03  & 0.92  & 0.22 & \textbf{10.7} & 633.7 \\
\texttt{CAR-A} & 0.85 & 0.68 & \textbf{0.01} & 0.92 & \textbf{0.79} & 29.3 & 431.5 \\
\texttt{OBFB} & -- & -- & -- &  0.06 &0.08 & --  & 47.7 \\
%\texttt{cgSSL-dcpe} & 0.65 & 0.77  & 0.091 & 0.9  & 0.56 & 220 \\
\texttt{cgSSL} & 0.82& \textbf{0.99} & \textbf{0.01} & \textbf{0.95 } & 0.78 & 25.8 & 359.7 \\ 
\texttt{cgSSL+BB} &0.82 & \textbf{0.99} & -- & \textbf{0.95} & 0.78 & -- & 749.6 \\
\texttt{mSSL} & 0.88 & 0.34 & 0.1& 0.58 & 0.7 & 122.8 & \textbf{10.9} \\\hline
\multicolumn{8}{c}{Tree model} \\ \hline
\texttt{cgLASSO} & \textbf{0.99} & 0.20 & 0.6 & 0.69 & 0.48 &  381.0 & --\\
\texttt{CAR} & 0.79 & 0.46 & 0.03 & 0.95  & 0.24 & \textbf{14.2} & 716.1 \\
\texttt{CAR-A} & 0.84 & 0.72  & 0.02  & 0.95& \textbf{0.86} & 22.7 &  519.9\\
\texttt{OBFB} & -- & -- & -- &  0.07 &0.08 & -- & 44.7 \\ 
%\texttt{cgSSL-dcpe} & 0.69  & 0.67 & 0.14 & 0.96 & 0.4 & 240\\
\texttt{cgSSL} & 0.84 & \textbf{0.99} & \textbf{0.01} & \textbf{0.97} & 0.61  & 18.8 & 398.6 \\ 
\texttt{cgSSL+BB} & 0.84 & \textbf{0.99} & -- & 0.96 & 0.61 & -- & 1139.1 \\
\texttt{mSSL} &  0.92 & 0.28 & 0.2  & 0.9 & 0.76& 25.1 & \textbf{28.3} \\ \hline
\multicolumn{8}{c}{Dense model} \\ \hline
\texttt{cgLASSO} & -- & -- & -- & -- & -- & --  & --\\
\texttt{CAR} & 0.87 & 0.39  & \textbf{0.01} & 0 & -- & 964.2 & 522.7 \\
\texttt{CAR-A} & 0.88  & 0.52  & \textbf{0.01}  & 0 & -- & 970.0 &  431.8\\
\texttt{OBFB} & -- & -- & -- & 0.06 &\textbf{1} & --& 53.8 \\
%\texttt{cgSSL-dcpe} & 0.87 (0.01) & 0.94 (0.02) & 0.03 (0) & 0.21 (0.02) & \textbf{1 (0)} & \textbf{913.81 (2.33)}\\
\texttt{cgSSL} & 0.86  & \textbf{0.98} & 0.04 & \textbf{0.26} & \textbf{1} & \textbf{918.4} & \textbf{1.7} \\ 
\texttt{cgSSL+BB} & 0.86 & \textbf{0.98} & -- & 0.24 & \textbf{1} & -- & 10.2 \\ 
\texttt{mSSL} & \textbf{0.96} & 0.27 & 0.06 & 0.18  & \textbf{1} & 960.0 & 8.7  \\ \hline
\end{tabular}
\end{table}
%\newpage

In terms of estimation performance, with the exception of the dense $\Omega$ setting, the fixed penalty methods tended to have much larger Frobenius error than the adaptive penalty methods.
Interestingly, for the six sparse $\Omega$'s, no method had high Frobenius for $\Omega$ but low Frobenius error for $\Psi.$
This finding corroborates our intuition about Corollary~\ref{coro:contraction_B_cg}: in order to estimate $\Psi$ well, we must estimate $\Omega$ well.
Additionally, our bootstrap intervals for individual parameters were reasonably well-calibrated and achieved near-nominal coverage (0.9 for $\Psi$ and $\Omega$).
Using these intervals for support recovery was comparable to \texttt{cgSSL-DPE}.

Across all choices of $\Omega,$ \texttt{cgSSL-DPE} was faster than the two MCMC methods.
As alluded to above, the Markov chains simulated by those methods did not appear to have mixed, even after 10,000 iterations: across our experiments, about 25\% of the $\omega_{k,k'}$'s had effective sample sizes less than 1,000 and around 5\% of the parameters had marginal $\hat{R}$ estimates in excess of $1.1.$
Interestingly, \texttt{mSSL} was sometimes faster than \texttt{cgSSL-DPE}, depending on $\Omega.$

\section{Re-analysis of the gut microbiome data with cgSSL}
\label{sec:real_data_experiments}
\citet{claesson2012gut} studied the gut microbiota of elderly adults using data sequenced from fecal samples taken from 178 subjects.
They were primarily interested in understanding differences in the gut microbiome composition across several residence types (in the community, day-hospital, rehabilitation, or in long-term residential care) and across several different types of diet.
They found that the gut microbiomes of residents in long-term care facilities were considerably less diverse than those of residents dwelling in the community.
They additionally reported that diet had a large marginal effect on gut microbe diversity but they did not examine direct effects, which might align more closely with the underlying biological mechanism.
We re-analyzed their data using the cgSSL to estimate the direct effects of each type of diet and residence type on gut microbiome composition.
Before proceeding, we note that while raw microbiome data consists of counts, we used cgSSL to model the logarithms of relative abundances of each taxa. 
Section S4 of the Supplementary Materials  describes how we pre-processed the raw 16s-rRNA data to obtain these log-abundances.

In all, the dataset contains $n = 178$ observations of $p = 11$ predictors and $q = 14$ outcomes.
%Figure~\ref{fig:humangut2} shows the graphical model estimated by cgSSL with DPE and corresponding test on 0.95 level using interval formed by weighted Bayesian bootstrap.
%In the figure, edges are colored according to the sign of the effect, with blue edges corresponding to negative conditional correlation and red edges corresponding to positive conditional correlation. 
We computed two graphs for these data, which are shown in Figure~\ref{fig:humangut2}.
In Figure~\ref{fig:humangut2_cgSSL}, edges correspond to the estimated non-zero entries of $\Psi$ and $\Omega$ returned by cgSSL-DPE.
Edges in Figure~\ref{fig:humangut2_cgSSL_BB} instead correspond to those parameters whose bootstrapped uncertainty intervals did not contain zero.

In both graphs in Figure~\ref{fig:humangut2}, we observed many more edges between the different species (corresponding to non-zero $\omega_{k,k'}$'s) than edges between covariates and species (corresponding to non-zero $\psi_{j,k}$'s).
In both graphical models, we estimated that percutaneous endoscopic gastronomy (PEG), in which a feeding tube is inserted into the abdomen, had a direct effect on the abundance of \textit{Veillonella}, which is involved in lactose fermentation. 
Our findings reassuringly align with those in \citet{takeshita2011enteral}, who reported a negative effect of PEG on this genus.
Although cgSSL-DPE additionally identified staying in a day hospital as having a direct effect on \textit{Caloramator}, the corresponding bootstrap interval contained zero.

\begin{figure}[ht]
	\centering
	\begin{subfigure}[b]{2.4in}
		\centering
		\includegraphics[width=2.4in]{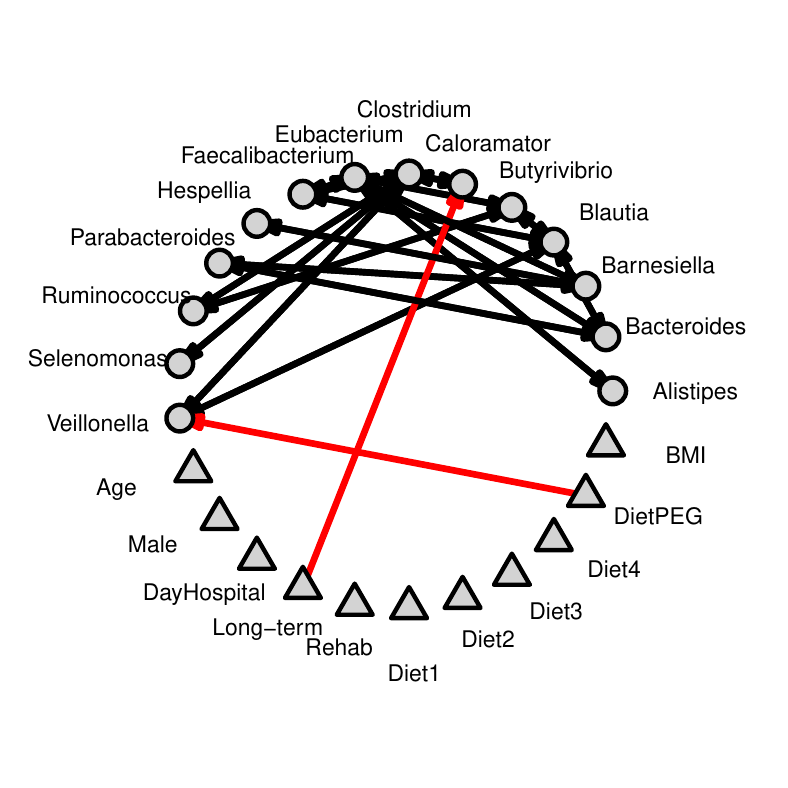}
		\caption{}
		\label{fig:humangut2_cgSSL}
		\end{subfigure}
	\begin{subfigure}[b]{2.4in}
		\centering
		\includegraphics[width=2.4in]{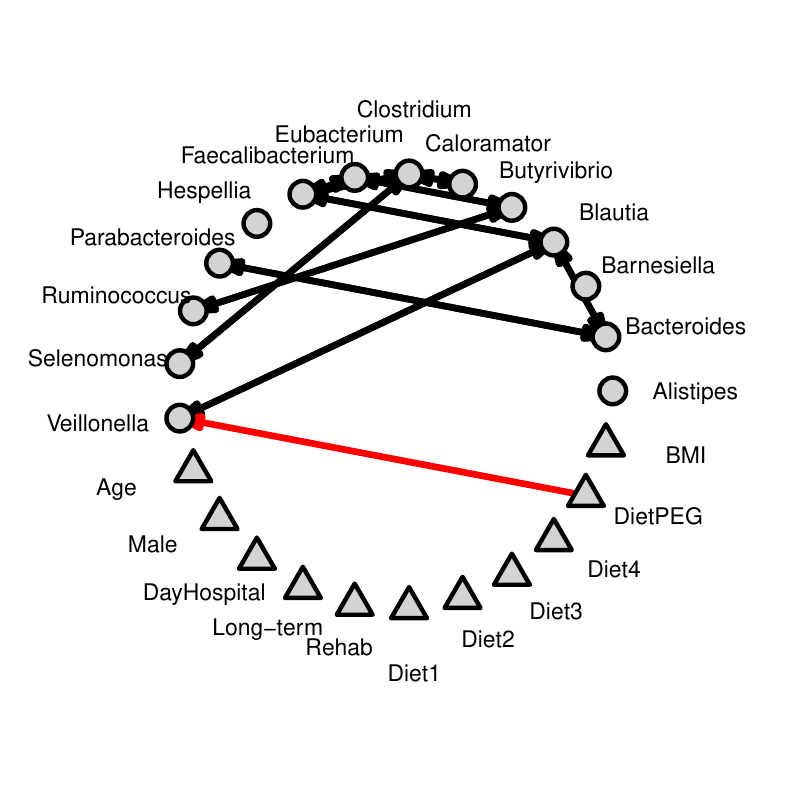}
		\caption{}
		\label{fig:humangut2_cgSSL_BB}
	\end{subfigure}
	\caption[microbe network]{Graphical models for \citet{claesson2012gut}'s gut microbiome dataset estimated by cgSSL-DPE (a) and cgSSL-DPE followed by the weighted Bayesian bootstrap (b). Triangles correspond to covariates and circles correspond to responses. Edges corresponding to non-zero $\psi_{j,k}$'s are colored red.}
	\label{fig:humangut2}
\end{figure}

Our results suggest that the large marginal effects reported by \citet{claesson2012gut} are a by-product of only a few direct effects and substantial residual conditional dependence between species.
For instance, because PEG has a direct effect on \textit{Veillonella}, which is conditionally correlated with \textit{Clostridium}, \textit{Butyrivibrio}, and \textit{Blautia}, PEG displays a marginal effect on each of these other genus.
In this way, the cgSSL can provide a more nuanced understanding of the underlying biological mechanism than simply estimating the matrix of marginal effects $B = \Psi\Omega^{-1}.$
We note, however, that \citet{claesson2012gut}'s dataset does not contain an exhaustive set of environmental and patient life-style predictors.
Accordingly, our re-analysis is limited in the sense that were we able to incorporate additional predictors, the estimated graphical model may be quite different.
Further, although we followed a relatively standard pre-processing workflow, we anticipate that our overall findings will be somewhat sensitive to some of the choices made in converting the raw microbiome data into the log abundances that we modeled.

\section{Discussion}
\label{sec:discussion}
In the Gaussian chain graph model in Equation~\eqref{eq:cg_model}, $\Psi$ is a matrix containing all of the direct effects of $p$ predictors on $q$ outcomes while $\Omega$ is the residual precision matrix that encodes the conditional dependence relationships between the outcomes that remain after adjusting for the predictors.
We have introduced the cgSSL procedure for obtaining simultaneously sparse estimates of $\Psi$ and $\Omega.$
In our procedure, we formally specify spike-and-slab LASSO priors on the free elements of $\Psi$ and $\Omega$ and use an ECM algorithm to maximize the posterior density.
Our ECM algorithm iteratively solves a sequence of penalized maximum likelihood problem with self-adaptive penalties. 
Across several simulated datasets, cgSSL demonstrated excellent support recovery and estimation performance, substantially out-performing competitors that deployed constant penalties.
We further characterized the asymptotic properties of cgSSL posteriors, establishing posterior contraction rates under relatively mild assumptions.
To the best of our knowledge, these are the first posterior contraction results for sparse Gaussian chain graph models with element-wise priors. 

Our main asymptotic result (Theorem~\ref{thm:posterior_contraction}) notwithstanding, quantifying the finite sample uncertainty around the MAP estimate returned by our cgSSL procedure remains a challenging problem.
%Although our main theoretical result (Theorem~\ref{thm:posterior_contraction}) implies that a slightly modified version of the cgSSL procedure from Section~\ref{sec:proposed_method} is asymptotically consistent, quantifying finite sample posterior uncertainty remains challenging.
We found that \citet{Newton2021}'s weighted Bayesian bootstrap can produce uncertainty intervals for individual parameters with close-to-nominal frequentist coverage.
Although the bootstrap does not exactly sample from the posterior, randomly re-centering the prior densities in Equation~\eqref{eq:master_for_bb}, as suggested by \citet{Nie2020}, may improve the approximation.
Understanding the extent to which solving randomly re-weighted and shifted MAP estimation problems faithfully approximate posterior sampling remains an important open question.

Although our motivating example involves only continuous outcomes, many applications feature outcomes of mixed type --- that is both continuous and discrete outcomes.
We anticipate that cgSSL could be extended to such settings using a strategy similar to that in \citet{Kowal2020}.
Specifically, one would model the discrete outcomes as a truncated and transformed latent Gaussian vector and fit a sparse Gaussian chain graphical model to the latent vector. 
The main challenge of such an extension lies in adaptively learning an appropriate transformation.

%\FloatBarrier

\section*{Acknowledgements}
The authors are grateful to Ray Bai and Bo Ning for helpful comments on the theoretical results and to Gemma Moran and Anna Menacher for their feedback on an early draft of the manuscript.

This work was supported by the National Institute of Food and Agriculture, United States Department of Agriculture, Hatch project 1023699.
This work was also supported by the Department of Energy [DE-SC0021016 to C.S.L.]. 
Support for S.K.D. was provided by the University of Wisconsin--Madison, Office of the Vice Chancellor for Research and Graduate Education with funding from the Wisconsin Alumni Research Foundation.

This research was performed using the compute resources and assistance of the UW--Madison Center For High Throughput Computing (CHTC) in the Department of Computer Sciences. 
The CHTC is supported by UW--Madison, the Advanced Computing Initiative, the Wisconsin Alumni Research Foundation, the Wisconsin Institutes for Discovery, and the National Science Foundation, and is an active member of the OSG Consortium, which is supported by the National Science Foundation and the U.S. Department of Energy's Office of Science.

\bibliographystyle{apalike}
\bibliography{references}

\begin{thebibliography}{}

\bibitem[Aitchison, 1982]{aitchison1982statistical}
Aitchison, J. (1982).
\newblock The statistical analysis of compositional data.
\newblock {\em Journal of the Royal Statistical Society: Series B
  (Methodological)}, 44(2):139--160.

\bibitem[Avis et~al., 2021]{avis2021targeted}
Avis, T., Wilson, F.~X., Khan, N., Mason, C.~S., and Powell, D.~J. (2021).
\newblock Targeted microbiome-sparing antibiotics.
\newblock {\em Drug Discovery Today}, 26(9):2198--2203.

\bibitem[Bai et~al., 2020]{Bai2020_groupSSL}
Bai, R., Moran, G.~E., Antonelli, J.~L., Chen, Y., and Boland, M.~R. (2020).
\newblock Spike-and-slab group {LASSO}s for grouped regression and sparse
  generalized additive models.
\newblock {\em Journal of the American Statistical Association}.

\bibitem[Banerjee et~al., 2008]{Banerjee2008}
Banerjee, O., Ghaoui, L.~E., and D'Aspremont, A. (2008).
\newblock Model selection through sparse maximum likelihood estimation for
  multivariate {Gaussian} or binary data.
\newblock {\em Journal of Machine Learning Research}, 9:485--516.

\bibitem[Blaser, 2016]{blaser2016antibiotic}
Blaser, M.~J. (2016).
\newblock Antibiotic use and its consequences for the normal microbiome.
\newblock {\em Science}, 352(6285):544--545.

\bibitem[Boucheron et~al., 2013]{boucheron2013concentration}
Boucheron, S., Lugosi, G., and Massart, P. (2013).
\newblock {\em Concentration inequalities: A nonasymptotic theory of
  independence}.
\newblock Oxford university press.

\bibitem[Boyd and Barratt, 1991]{boyd1991linear}
Boyd, S.~P. and Barratt, C.~H. (1991).
\newblock {\em Linear controller design: limits of performance}.
\newblock Prentice-Hall.

\bibitem[Cai et~al., 2013]{Cai2013}
Cai, T.~T., Li, H., Liu, W., and Xie, J. (2013).
\newblock Covariate-adjusted precision matrix estimation with an application in
  genetical genomics.
\newblock {\em Biometrika}, 100(1):139--156.

\bibitem[{Center for High Throughput Computing}, 2006]{chtc}
{Center for High Throughput Computing} (2006).
\newblock Center for high throughput computing.

\bibitem[Chen et~al., 2018]{Chen2018}
Chen, J., Xu, P., Wang, L., Ma, J., and Gu, Q. (2018).
\newblock Covariate adjusted precision matrix estimation via nonconvex
  optimization.
\newblock In {\em Proceedings of the 35th International Conference on Machine
  Learning}.

\bibitem[Chen et~al., 2016]{Chen2016}
Chen, M., Ren, Z., Zhao, H., and Zhou, H. (2016).
\newblock Asymptotically normal and efficient estimation of covariate-adjusting
  {Gaussian} graphical model.
\newblock {\em Journal of the American Statistical Association},
  111(513):394--406.

\bibitem[Claesson et~al., 2012]{claesson2012gut}
Claesson, M.~J., Jeffery, I.~B., Conde, S., Power, S.~E., {O'Connor}, E.~M.,
  Cusack, S., Harris, H.~M., Coakley, M., Lakshminarayanan, B., O'Sullivan, O.,
  Fitzgerald, G.~F., Deane, J., O'Connor, M., Harnedy, N., O'Connor, K.,
  O'Mahony, D., {van Sinderen}, D., Wallace, M., Brennan, L., Stanton, C.,
  Marchesi, J.~R., Fitzgerald, A.~P., Shanahan, F., Hill, C., Ross, R., and
  O'Toole, P.~W. (2012).
\newblock Gut microbiota composition correlates with diet and health in the
  elderly.
\newblock {\em Nature}, 488(7410):178--184.

\bibitem[Consonni et~al., 2017]{Consonni2017}
Consonni, G., {La Rocca}, L., and Peluso, S. (2017).
\newblock Objective {Bayes} covariate-adjusted sparse graphical model
  selection.
\newblock {\em Scandinavian Journal of Statistics}, 44:741--764.

\bibitem[Cox and Wermuth, 1993]{cox1993linear}
Cox, D.~R. and Wermuth, N. (1993).
\newblock Linear dependencies represented by chain graphs.
\newblock {\em Statistical Science}, pages 204--218.

\bibitem[Dempster et~al., 1977]{Dempster1977_em}
Dempster, A.~P., Laird, N.~M., and Rubin, D.~B. (1977).
\newblock Maximum likelihood from incomplete data via the {EM} algorithm.
\newblock {\em Journal of the Royal Statistical Society: Series B},
  39(1):1--38.

\bibitem[Deshpande et~al., 2019]{Deshpande2019}
Deshpande, S.~K., Ro\v{c}kov\'{a}, V., and George, E.~I. (2019).
\newblock Simultaneous variable and covariance selection with the multivariate
  spike-and-slab {LASSO}.
\newblock {\em Journal of Computational and Graphical Statistics},
  28(4):921--931.

\bibitem[Fishbein et~al., 2023]{fishbein2023antibiotic}
Fishbein, S.~R., Mahmud, B., and Dantas, G. (2023).
\newblock Antibiotic perturbations to the gut microbiome.
\newblock {\em Nature Reviews Microbiology}, pages 1--17.

\bibitem[Friedman et~al., 2008]{Friedman2008}
Friedman, J., Hastie, T., and Tibshirani, R. (2008).
\newblock Sparse inverse covariance estimation with the graphical {LASSO}.
\newblock {\em Biostatistics}, 9(3):432--441.

\bibitem[Gan et~al., 2019a]{Gan2019_unequal}
Gan, L., Narisetty, N.~N., and Liang, F. (2019a).
\newblock Bayesian regularization for graphical models with unequal shrinkage.
\newblock {\em Journal of the American Statistical Association},
  114(527):1218--1231.

\bibitem[Gan et~al., 2019b]{Gan2019_multiple}
Gan, L., Yang, X., Narisetty, N.~N., and Liang, F. (2019b).
\newblock Bayesian joint estimation of multiple graphical models.
\newblock In {\em Advances in Neural Information Processing Systems}.

\bibitem[George and McCulloch, 1993]{george1993variable}
George, E.~I. and McCulloch, R.~E. (1993).
\newblock Variable selection via {Gibbs} sampling.
\newblock {\em Journal of the American Statistical Association},
  88(423):881--889.

\bibitem[Ghosal and {van der Vaart}, 2017]{ghosal2017fundamentals}
Ghosal, S. and {van der Vaart}, A. (2017).
\newblock {\em Fundamentals of Nonparametric Bayesian Inference}.
\newblock Cambridge University Press.

\bibitem[Guinane and Cotter, 2013]{guinane2013role}
Guinane, C.~M. and Cotter, P.~D. (2013).
\newblock Role of the gut microbiota in health and chronic gastrointestinal
  disease: understanding a hidden metabolic organ.
\newblock {\em Therapeutic Advances in Gastroenterology}, 6(4):295--308.

\bibitem[Hsieh et~al., 2011]{Hsieh2011}
Hsieh, C.~J., Sustik, M.~A., Dhillon, I.~S., and Ravikumar, P. (2011).
\newblock Sparse inverse covariance matrix estimation using quadratic
  approximation.
\newblock In {\em Advances in Neural Information Processing Systems}.

\bibitem[Kamada and N{\'u}{\~n}ez, 2014]{kamada2014regulation}
Kamada, N. and N{\'u}{\~n}ez, G. (2014).
\newblock Regulation of the immune system by the resident intestinal bacteria.
\newblock {\em Gastroenterology}, 146(6):1477--1488.

\bibitem[Keegan et~al., 2016]{keegan2016mg}
Keegan, K.~P., Glass, E.~M., and Meyer, F. (2016).
\newblock {MG-RAST}, a metagenomics service for analysis of microbial community
  structure and function.
\newblock In {\em Microbial Environmental Genomics (MEG)}, pages 207--233.
  Springer.

\bibitem[Kowal and Canale, 2020]{Kowal2020}
Kowal, D.~R. and Canale, A. (2020).
\newblock Simultaneous transformation and rounding {(STAR)} models for
  integer-valued data.
\newblock {\em Electronic Journal of Statistics}, 14(1):1744--1772.

\bibitem[Lenkoski, 2013]{Lenkoski2013}
Lenkoski, A. (2013).
\newblock A direct sampler for {G-Wishart} variates.
\newblock {\em STAT}, 2:119--128.

\bibitem[Li et~al., 2019]{Li2019_ssglasso}
Li, Z., Mccormick, T., and Clark, S. (2019).
\newblock {B}ayesian joint spike-and-slab graphical {LASSO}.
\newblock In {\em Proceedings of the 36th International Conference on Machine
  Learning}.

\bibitem[McCarter and Kim, 2014]{McCarter2014}
McCarter, C. and Kim, S. (2014).
\newblock On sparse {Gaussian} chain graph models.
\newblock In {\em Advances in Neural Information Processing Systems}.

\bibitem[Meng and Rubin, 1993]{MengRubin1993_ecm}
Meng, X.-L. and Rubin, D.~B. (1993).
\newblock Maximum likleihood estimation via the {ECM} algorithm: {A} general
  framework.
\newblock {\em Biometrika}, 80(2):267--278.

\bibitem[Mitchell and Beauchamp, 1988]{mitchell1988bayesian}
Mitchell, T.~J. and Beauchamp, J.~J. (1988).
\newblock Bayesian variable selection in linear regression.
\newblock {\em Journal of the American Statistical Association},
  83(404):1023--1032.

\bibitem[Moran et~al., 2019]{Moran2019_variance}
Moran, G.~E., Ro\v{c}kov\'{a}, V., and George, E.~I. (2019).
\newblock Variance prior forms for high-dimensional {Bayesian} variable
  selection.
\newblock {\em Bayesian Analysis}, 14(4):1091--1119.

\bibitem[Moran et~al., 2021]{Moran2021_biclustering}
Moran, G.~E., Ro\v{c}kov\'{a}, V., and George, E.~I. (2021).
\newblock Spike-and-slab {LASSO} biclustering.
\newblock {\em The Annals of Applied Statistics}, 15(1):148--173.

\bibitem[Newton et~al., 2021]{Newton2021}
Newton, M.~A., Polson, N.~G., and Xu, J. (2021).
\newblock Weighted {Bayesian} bootstrap for scalable posterior distributions.
\newblock {\em Canadian Journal of Statistics}, 49(2):421--437.

\bibitem[Ni et~al., 2019]{Ni2019}
Ni, Y., Stingo, F.~C., and Baladandayuthapani, V. (2019).
\newblock {Bayesian} graphical regression.
\newblock {\em Journal of the American Statistical Association},
  114(525):184--197.

\bibitem[Nie and Ro\v{c}kov\'{a}, 2022]{Nie2020}
Nie, L. and Ro\v{c}kov\'{a}, V. (2022).
\newblock {Bayesian} bootstrap spike-and-slab {LASSO}.
\newblock {\em Journal of the American Statistical Association}.

\bibitem[Ning et~al., 2020]{Ning2020}
Ning, B., Jeong, S., and Ghosal, S. (2020).
\newblock {Bayesian} linear regression for multivariate responses under group
  sparsity.
\newblock {\em Bernoulli}, 26(3):2353--2382.

\bibitem[{R Core Team}, 2022]{R_citation}
{R Core Team} (2022).
\newblock {\em R: A Language and Environment for Statistical Computing}.
\newblock R Foundation for Statistical Consulting, Vienna, Austria.

\bibitem[Ro\v{c}kov\'{a} and George, 2014]{RockovaGeorge2014_emvs}
Ro\v{c}kov\'{a}, V. and George, E.~I. (2014).
\newblock {EMVS}: The {EM} approach to {Bayesian} variable selection.
\newblock {\em Journal of the American Statistical Association},
  109(506):828--846.

\bibitem[Ro\v{c}kov\'{a} and George, 2016]{RockovaGeorge2016_factor}
Ro\v{c}kov\'{a}, V. and George, E.~I. (2016).
\newblock Fast {Bayesian} factor analysis via automatic rotations to sparsity.
\newblock {\em Journal of the American Statistical Association},
  111(516):1608--1622.

\bibitem[Ro\v{c}kov\'{a} and George, 2018]{RockovaGeorge2018_ssl}
Ro\v{c}kov\'{a}, V. and George, E.~I. (2018).
\newblock The spike-and-slab {LASSO}.
\newblock {\em Journal of the American Statistical Association},
  113(521):431--444.

\bibitem[Roverato, 2002]{Roverato2002}
Roverato, A. (2002).
\newblock {Hyper Inverse Wishart} distribution for non-decomposable graphs and
  its applicaiton to {Bayesian} inference for {Gaussian} graphical models.
\newblock {\em Scandinavian Journal of Statistics}, 29(3):391--411.

\bibitem[Sagar et~al., 2021]{Sagar2021precision}
Sagar, K., Banerjee, S., Datta, J., and Bhadra, A. (2021).
\newblock Precision matrix estimation under the horseshoe-like prior-penalty
  dual.
\newblock \textit{arXiv preprint arXiv:2104.10750}.

\bibitem[Schwartz et~al., 2020]{schwartz2020understanding}
Schwartz, D.~J., Langdon, A.~E., and Dantas, G. (2020).
\newblock Understanding the impact of antibiotic perturbation on the human
  microbiome.
\newblock {\em Genome Medicine}, 12(1):1--12.

\bibitem[Shen and Sol\'{i}s-Lemus, 2021]{shen2021bayesian}
Shen, Y. and Sol\'{i}s-Lemus, C. (2021).
\newblock Bayesian conditional auto-regressive {LASSO} models to learn sparse
  microbial networks with predictors.
\newblock \textit{arXiv preprint arXiv:2012.08397}.

\bibitem[Shreiner et~al., 2015]{shreiner2015gut}
Shreiner, A.~B., Kao, J.~Y., and Young, V.~B. (2015).
\newblock The gut microbiome in health and in disease.
\newblock {\em Current Opinion in Gastroenterology}, 31(1):69.

\bibitem[Singh et~al., 2017]{singh2017influence}
Singh, R.~K., Chang, H.-W., Yan, D., Lee, K.~M., Ucmak, D., Wong, K., Abrouk,
  M., Farahnik, B., Nakamura, M., Zhu, T.~H., Bhutani, T., and Liao, W. (2017).
\newblock Influence of diet on the gut microbiome and implications for human
  health.
\newblock {\em Journal of Translational Medicine}, 15(1):1--17.

\bibitem[Sonntag and Pe{\~n}a, 2015]{sonntag2015chain}
Sonntag, D. and Pe{\~n}a, J.~M. (2015).
\newblock Chain graphs and gene networks.
\newblock {\em Foundations of Biomedical Knowledge Representation: Methods and
  Applications}, pages 159--178.

\bibitem[Takeshita et~al., 2011]{takeshita2011enteral}
Takeshita, T., Yasui, M., Tomioka, M., Nakano, Y., Shimazaki, Y., and
  Yamashita, Y. (2011).
\newblock Enteral tube feeding alters the oral indigenous microbiota in elderly
  adults.
\newblock {\em Applied and Environmental Microbiology}, 77(19):6739--6745.

\bibitem[Tang et~al., 2017]{Tang2017_ssl_glm}
Tang, Z., Shen, Y., Zhang, X., and Yi, N. (2017).
\newblock The spike-and-slab {LASSO} generalized linear models for prediction
  and associated genes detection.
\newblock {\em Genetics}, 205:77--88.

\bibitem[Thorpe et~al., 2018]{thorpe2018enhanced}
Thorpe, C.~M., Kane, A.~V., Chang, J., Tai, A., Vickers, R.~J., and Snydman,
  D.~R. (2018).
\newblock Enhanced preservation of the human intestinal microbiota by
  ridinilazole, a novel clostridium difficile-targeting antibacterial, compared
  to vancomycin.
\newblock {\em PLOS ONE}, 13(8):e0199810.

\bibitem[Wang et~al., 2011]{wang2011gut}
Wang, Z., Klipfell, E., Bennett, B.~J., Koeth, R., Levison, B.~S., DuGar, B.,
  Feldstein, A.~E., Britt, E.~B., Fu, X., Chung, Y.-M., et~al. (2011).
\newblock Gut flora metabolism of phosphatidylcholine promotes cardiovascular
  disease.
\newblock {\em Nature}, 472(7341):57--63.

\bibitem[Watts and Strogatz, 1998]{WattsStrogatz1998}
Watts, D.~J. and Strogatz, S.~H. (1998).
\newblock Collective dynamics of `small world' networks.
\newblock {\em Nature}, 3933:440--442.

\bibitem[Yassour et~al., 2016]{yassour2016natural}
Yassour, M., Vatanen, T., Siljander, H., H{\"a}m{\"a}l{\"a}inen, A.-M.,
  H{\"a}rk{\"o}nen, T., Ryh{\"a}nen, S.~J., Franzosa, E.~A., Vlamakis, H.,
  Huttenhower, C., Gevers, D., et~al. (2016).
\newblock Natural history of the infant gut microbiome and impact of antibiotic
  treatment on bacterial strain diversity and stability.
\newblock {\em Science Translational Medicine}, 8(343):343ra81--343ra81.

\end{thebibliography}

%% prepare supplementary stuff here
\renewcommand{\theequation}{S\arabic{equation}}
\renewcommand{\thesection}{S\arabic{section}}  
\renewcommand{\thefigure}{S\arabic{figure}}  
\renewcommand{\thetable}{S\arabic{table}} 
\renewcommand{\themyLemma}{S\arabic{myLemma}} 
\renewcommand{\themyTheorem}{S\arabic{myTheorem}} 
\renewcommand{\themyProposition}{S\arabic{myProposition}}
\setcounter{equation}{0}
\setcounter{section}{0}
\setcounter{subsection}{0}
\setcounter{subsubsection}{0}
\setcounter{myLemma}{0}
\setcounter{myTheorem}{0}

\newpage
\begin{center}
{\LARGE \textbf{Supplementary Materials}}
\end{center}
%!TEX root = cg_ssl_supp.tex

In Section~\ref{appendix:cgSSL_algorithm} we derive the Expectation Conditional Maximization (ECM) algorithm used to find the \textit{maximum a posteriori} (MAP) estimates of $\Psi$ and $\Omega$ in the cgSSL model.
One of the conditional maximization steps of that algorithm involves solving a CGLASSO problem.
We introduce a new algorithm, cgQUIC, to solve the general CGLASSO problem in Section~\ref{appendix:cgQUIC}. 
Specifically, we show that the problem has unique global optimum (Theorem~\ref{thm:cgquic_unique}) and that our cgQUIC algorithm converges to this optimum (Theorem~\ref{thm:cgquic_convergence}).
Then, we present additional results from the simulation study described in Section 5 of the main text in Section~\ref{appendix:synthetic_experiment}.
In Section~\ref{app:preprocessing}, we detail the preprocessing steps we took to prepare the gut microbiome data for analysis with the cgSSL.
Finally, we state and prove our main asymptotic results in Section~\ref{appendix:posterior_contraction}.

\section{The cgSSL algorithm}
\label{appendix:cgSSL_algorithm}
In this section, we provide full details of the Expectation Conditional Maximization (ECM) algorithm that is used in the cgSSL procedure.
We describe the algorithm for a fixed set of spike-and-slab penalties ($\lambda_{0},\lambda_{1}, \xi_{0},\xi_{1}$) and fixed set of hyperparameters ($a_{\theta}, b_{\theta}, a_{\eta}, b_{\eta}$).
For notational brevity, we will let $\Theta = \{\Psi, \theta, \Omega, \eta\}$ denote the four parameters of interest.

\subsection{High level overview of the algorithm}
\label{sec:penalty_mixing}

Before describing how we compute $\Psi^{(t)}$ and $\Omega^{(t)},$ we introduce two important functions:
\begin{align*}
p^{\star}(x, \theta)&=\frac{\theta\lambda_1e^{-\lambda_1 \lvert x \rvert}}{\theta\lambda_1e^{-\lambda_1 \lvert x \rvert}+(1-\theta)\lambda_0e^{-\lambda_0 \lvert x \rvert}} &
q^\star(x,\eta)&=\frac{\eta\xi_1e^{-\xi_1 \lvert x \rvert}}{\eta\xi_1e^{-\xi_1 \lvert x \rvert}+(1-\eta)\xi_0e^{-\xi_0 \lvert x \rvert}}.
\end{align*}
Note that $p^{\star}(\psi_{j,k}, \theta)$ is the conditional posterior probability that $\psi_{j,k}$ was drawn from the $\text{Laplace}(\lambda_{1})$ slab distribution.
Similarly, $q^{\star}(\omega_{k,k'}, \eta)$ is just the conditional posterior probability that $\omega_{k,k'}$ was drawn from the $\text{Laplace}(\xi_{1})$ slab: $q^{\star}(\omega_{k,k'}, \eta) = \E[\delta_{k,k'} \vert \bY, \Psi, \Omega, \theta, \eta].$
%For each $1 \leq j \leq p$ and $1 \leq k \leq q,$ $p^{\star}(\psi_{j,k}, \theta)$ is the conditional posterior probability that $\psi_{j,k}$ was drawn from the $\text{Laplace}(\lambda_{1})$ slab distribution. 
%Similarly, for $1\leq k < k' \leq q,$ $q^{\star}(\omega_{k,k'}, \eta)$ is just the conditional posterior probability that $\omega_{k,k'}$ was drawn from the $\text{Laplace}(\xi_{1})$ slab.
%That is, $q^{\star}(\omega_{k,k'}, \eta) = \E[\delta_{k,k'} \vert \bY, \Psi, \Omega, \theta, \eta].$

\textbf{Updating $\Psi$.}  Fixing the value $\Omega = \Omega^{(t-1)},$ computing $\Psi^{(t)}$ is equivalent to solving the following penalized optimization problem
\begin{equation}
\label{eq:psi_update}
\Psi^{(t)} = \argmax_{\Psi}  \left\{-\frac{1}{2}\tr\left((Y\Omega-X\Psi)\Omega^{-1}(Y\Omega-X\Psi)^{\top}\right)+\sum_{jk}\text{pen}(\psi_{j,k} ; \theta)\right\}
\end{equation}
where 
\begin{align*}
\text{pen}(\psi_{j,k}; \theta)&=\log\left(\frac{\pi(\psi_{j,k} \vert \theta)}{\pi(0 \vert \theta)}\right)=-\lambda_1 \lvert \psi_{j,k} \rvert +\log\left(\frac{p^{\star}(\psi_{j,k},\theta)}{p^{\star}(0,\theta)}\right).
\end{align*}
Note that the first term in~\eqref{eq:psi_update} can obtained by distributing a factor of $\Omega$ through the quadratic form that appears in the log-likelihood (see Equations~\eqref{eq:updating_B_eqval} and~\eqref{eq:Psi_penalty}). 

%Following arguments similar to those in \citet{Deshpande2019}, the Karush-Kuhn-Tucker (KKT) condition for the problem in Equation~\eqref{eq:psi_update} tells us that
The Karush-Kuhn-Tucker (KKT) conditions for~\eqref{eq:psi_update} tells us that
\begin{equation}
	\label{eq:KKT_Psi}
	\psi^{(t)}_{j,k}=n^{-1}\left[ \lvert z_{j,k} \rvert-\lambda^\star(\psi^{(t)}_{j,k},\theta)\right]_{+}\sign(z_{j,k}),
\end{equation}
where 
\begin{align*}
	z_{jk}&= n\psi^{(t)}_{j,k}+ X_{j}^{\top}\mathbf{r}_{k} + \sum_{k' \neq k}\frac{(\Omega^{-1})_{k,k'}}{(\Omega^{-1})_{k,k}}X_j^{\top} \mathbf{r}_{k'},
\end{align*}
$\mathbf{r}_{k'}=(Y\Omega-X\Psi^{(t)})_{k'},$ and $\lambda^\star(\psi^{(t)}_{j,k},\theta)=\lambda_1 p^\star (\psi^{(t)}_{j,k},\theta)+\lambda_0(1-p^\star (\psi^{(t)}_{j,k},\theta)).$

The KKT conditions suggest a natural coordinate-ascent strategy for computing $\Psi^{(t)}$: starting from some initial guess, we cyclically update the entries $\psi_{j,k}$ by soft-thresholding $\psi_{j,k}$ at $\lambda_{j,k}^{\star}.$
When the current value of $\psi_{j,k}$ is very large, the corresponding value of $p^{\star}(\psi_{j,k},\theta)$ will be close to one, and the threshold $\lambda_{j,k}^{\star}$ will be close to the slab penalty $\lambda_{1}.$
On the other hand, when $\psi_{j,k}$ is very small, the corresponding $p^{\star}$ value will be close to zero and the threshold $\lambda^{\star}_{j,k}$ will be close to the spike penalty $\lambda_{0}.$
Since $\lambda_{1} \ll \lambda_{0},$ we are therefore able to apply a stronger penalty to the smaller entries of $\Psi$ and a weaker penalty to the larger entries.
Essentially, our cyclical coordinate ascent adaptively shrinks the estimates of $\psi_{j,k}$ by iteratively refining the thresholds $\lambda^{\star}_{j,k}.$ 
Before proceeding, observe that the quantity $z_{j,k}$ depends not only on the inner product between the $X_{j},$ the $j^{\text{th}}$ column of the design matrix, and the partial residual $\mathbf{r}_{k}$ but also on the inner product between $X_{j}$ and all other partial residuals $\mathbf{r}_{k'}$ for $k' \neq k.$
Practically this means that in our cyclical coordinate ascent algorithm, our estimate of the direct effect of predictor $X_{j}$ on outcome $Y_{k}$ can depend on how well we have fit all other outcomes $Y_{k'}.$
Moreover, the entries of $\Omega^{-1}$ determine the degree to which $\psi_{j,k}$ depends on the outcomes $Y_{k'}$ for $k' \neq k.$
Specifically, if $(\Omega^{-1})_{k,k'} = 0,$ then we are unable to leverage information contained in $Y_{k'}$ to inform our estimate of $\psi_{j,k}.$

\textbf{Updating $\Omega.$} 
Fixing $\Psi = \Psi^{(t)},$ $S=n^{-1}Y^{\top}Y,$ and $M=n^{-1}(X\Psi)^\top X\Psi,$ we compute
\begin{equation}
\label{eq:cglasso}
\Omega^{(t)} = \argmax_{\Omega\succ 0}\left\{\frac{n}{2}\left(\log \lvert \Omega \rvert - \tr(S\Omega)-\tr(M\Omega^{-1})\right) - 
\sum_{k = 1}^{q}{\left[\xi_{1} \omega_{k,k} + \sum_{k' > k}{\xi^{\star}_{k,k'}\lvert \omega_{k,k'}\rvert}\right]}\right\}
\end{equation}
where $\xi^{\star}_{k,k'} = \xi_{1}q^{\star}(\omega^{(t-1)}_{k,k'}, \eta^{(t-1)}) + \xi_{0}(1 - q^{\star}(\omega^{(t-1)}_{k,k'}, \eta^{(t-1)})).$ 

The objective in Equation~\eqref{eq:cglasso} is extremely similar to the conventional graphical LASSO \citep[GLASSO;][]{Friedman2008} objective.
However, there are two crucial differences.
First, because the conditional mean of $Y$ depends on $\Omega$ in Equation (1), we have an additional term $\tr(M\Omega^{-1})$ that is absent in the GLASSO objective.
The objective in Equation~\eqref{eq:cglasso} also contains \textit{individualized} penalties $\xi^{\star}_{k,k'}$ on the off-diagonal entries of $\Omega.$
Here, the penalty $\xi^{\star}_{k,k'}$ is large (resp. small) whenever the previous estimate of $\omega_{k,k'}^{(t-1)}$ is small (resp. large).
In other words, our ECM algorithm iteratively refines the amount of penalization applied to each off-diagonal entry in $\Omega.$

Although the objective in Equation~\eqref{eq:cglasso} is somewhat different than the GLASSO objective, we can solve it by suitably modifying an existing GLASSO algorithm.
Specifically, we solve the optimization problem in Equation~\eqref{eq:cglasso} with a modified version of \citet{Hsieh2011}'s QUIC algorithm that iteratively (i) forms a quadratic approximation of the objective and (ii) follows a suitable Newton direction for a step size chosen with an Armijo rule.
In Section~\ref{sec:cgquic_proof} of the Supplementary Materials, we show that the optimization problem in Equation~\eqref{eq:cglasso} has a unique solution and that our modification to QUIC converges to the unique solution. 

\subsection{Detailed derivation}
Recall that we wish to maximize the log posterior density, which for $\Omega \succ 0$ is given by,
\begin{align}
\begin{split}
\label{eq:log_density_supp}
\log \pi(\Theta \vert \bY) &= \frac{n}{2}\log\lvert\Omega\rvert -\frac{1}{2}\tr\left((Y-X\Psi\Omega^{-1})\Omega(Y-X\Psi\Omega)^\top\right) \\
&+ \sum_{j = 1}^{p}{\sum_{k = 1}^{q}{\log\left(\theta\lambda_{1}e^{-\lambda_{1}\lvert \psi_{j,k}\rvert} + (1-\theta)\lambda_{0}e^{-\lambda_{0}\lvert \psi_{j,k} \rvert}\right)}} \\
&+ \sum_{k = 1}^{q-1}{\sum_{k' > k}^{q}{\log\left(\eta\xi_{1}e^{-\xi_1\lvert\omega_{k,k'}\rvert}+(1-\eta)\xi_{0}e^{-\xi_0\lvert\omega_{k,k'}\rvert}\right)}} \\
&- \sum_{k = 1}^{q}{\xi_{1} \omega_{k,k}}\\
&+ (a_{\theta}-1)\log(\theta) + (b_{\theta} - 1)\log(1-\theta) \\
&+ (a_{\eta} - 1)\log(\eta) + (b_{\eta} - 1)\log(1-\eta) \\
\end{split}
\end{align}

Instead of optimizing $\log \pi(\Theta \vert \bY)$ directly, we use an ECM algorithm and iteratively update the surrogate objective
$$
F(\Theta) = \E_{\bdelta \vert \cdot}[\log \pi(\Theta, \delta \vert \bY) \vert \Theta],
$$
where $\log \pi(\Theta, \delta \vert \bY)$ is the log-density of the posterior in an \textit{augmented} model involving the spike-and-slab indicators $\bdelta = \{\delta_{k,k'}: 1 \leq k < k' \leq q\}.$
Note that the expectation is taken with respect to the conditional posterior distribution of $\bdelta$ given $\Theta.$
The log density of the augmented posterior is
\begin{align}
\begin{split}
\label{eqn:augmented_log_density}
\log \pi(\Theta, \bdelta \vert \by) &=\frac{n}{2}\log\lvert\Omega\rvert -\frac{1}{2}\tr\left((Y-X\Psi\Omega^{-1})\Omega(Y-X\Psi\Omega)^\top\right)\\
	&+\sum_{k = 1}^{q}{\sum_{j = 1}^{p}{\log\left(\theta\lambda_1e^{-\lambda_1|\psi_{j,k}|}+(1-\theta)\lambda_0e^{-\lambda_0|\psi_{j,k}|}\right)}} \\
	&-\sum_{k = 1}^{q}{\xi_{1}\omega_{k,k}} \\
	&+\sum_{k = 1}^{q}{\sum_{k' = k+1}^{q}{\left(\xi_{1}\delta_{k,k'} + \xi_{0}(1-\delta_{k,k'})\right)\lvert \omega_{k,k'}\rvert}} \\
	%&+\sum_{k = 1}^{q}{\left[-\xi_{1}\omega_{k,k} + \sum_{k' = k+1}^{q}{\left(\xi_{1}\delta_{k,k'} + \xi_{0}(1-\delta_{k,k'})\right)\lvert \omega_{k,k'}\rvert}\right]} \\
	&+ \sum_{k = 1}^{q}{\sum_{k' = k+1}^{q}{\delta_{k,k'}\log(\eta) + (1 - \delta_{k,k'})\log(1-\eta)}} \\
	&+(a_\theta-1)\log\theta+(b_\theta-1)\log(1-\theta) \\
	&+(a_\eta-1)\log\eta+(b_\eta-1)\log(1-\eta).
\end{split}
\end{align}

In our augmented model, $\delta_{k,k'}$ indicates whether $\omega_{k,k'}$ was drawn from the spike $(\delta_{k,k'} = 0$) or the slab ($\delta_{k,k'} = 1$).
Given $\Theta$ and the data $\bY,$ these indicators are conditionally independent with
$$
\E[\delta_{k,k'} \vert \bY, \Theta] = \frac{\eta \xi_{1}e^{-\xi_{1}\lvert \omega_{k,k'} \rvert}}{\eta \xi_{1}e^{-\xi_{1}\lvert \omega_{k,k'} \rvert} + (1 - \eta) \xi_{0}e^{-\xi_{0}\lvert \omega_{k,k'} \rvert}}.
$$

The surrogate objective $F(\Theta)$ is given by
\begin{align}
\begin{split}
\label{eqn:M_step}
	F(\Theta) &=\frac{n}{2}\log(\vert\Omega\vert)+\tr(Y^{\top}X\Psi)-\frac{1}{2}\tr(Y^{\top}Y\Omega)-\frac{1}{2}\tr((X\Psi)^{\top}(X\Psi)\Omega^{-1})\\
	&+\sum_{ij} \log\left(\theta\lambda_1e^{-\lambda_1|\psi_{j,k}|}+(1-\theta)\lambda_0e^{-\lambda_0|\psi_{j,k}|}\right) \\
	& -\sum_{k<k'}\xi_{k,k'}^\star|\omega_{k,k'}|-\xi_1\sum_{k=1}^q\omega_{k,k}\\
	&+(a_\theta-1)\log\theta+(b_\theta-1)\log(1-\theta) \\ 
	&+(a_\eta-1)\log\eta+(b_\eta-1)\log(1-\eta)
\end{split}
\end{align}
where $\xi_{k,k'}^\star=\xi_1q^\star_{k,k'}+\xi_{0}(1-q^\star_{k,k'})$ and
$$
q^\star(x,\eta) =\frac{\eta\xi_{1}e^{-\xi_{1}\lvert x \rvert }}{\eta\xi_1e^{-\xi_{1} \lvert x \rvert }+(1-\eta)\xi_{0}e^{-\xi_0 \lvert x \rvert }}.
$$

Our ECM algorithm iteratively computes $F(\Theta)$ based on the current value of $\Theta$ (the E-step) and then updates the value of $\Theta$ by performing two conditional maximizations (the CM-step).
More specifically, for $t \geq 1,$ if $\Theta^{(t-1)}$ is the value of $\Theta$ at the start of the $t^{\text{th}}$ iteration, in the E-step we compute
$$
F^{(t)}(\Theta) = \E_{\bdelta \vert \cdot}[\log \pi(\Theta, \delta \vert \bY) \vert \Theta = \Theta^{(t-1)}].
$$

We then compute $\Theta^{(t)}$ by first optimizing $F^{(t)}(\Theta)$ with respect to $(\Psi,\theta)$ while fixing $(\Omega,\eta) = (\Omega^{(t-1)}, \eta^{(t-1)})$ to obtain $(\Psi^{(t)},\theta^{(t)})$.
Then we optimizing $F^{(t)}(\Theta)$ with respect to $(\Omega, \eta)$ while fixing $(\Psi,\theta) = (\Psi^{(t)},\theta^{(t)})$ to obtain $(\Omega^{(t)}, \eta^{(t)}).$
That is, in the CM-step we solve the following optimization problems
\begin{align}
(\Psi^{(t)}, \theta^{(t)}) &= \argmax_{\Psi,\theta}~ F^{(t)}(\Psi,\theta, \Omega^{(t-1)},  \eta^{(t-1)})  \label{eq:psi_theta_update_supp} \\
(\Omega^{(t)}, \eta^{(t)}) &= \argmax_{\Omega,\eta}~ F^{(t)}(\Psi^{(t)}, \theta^{(t)},\Omega,  \eta). \label{eq:omega_eta_update_supp}
\end{align}
Once we solve the optimization problems in Equations~\eqref{eq:psi_theta_update_supp} and~\eqref{eq:omega_eta_update_supp}, we set 
$$
\Theta^{(t)} = (\Psi^{(t)}, \theta^{(t)}, \Omega^{(t)}, \eta^{(t)}).
$$

Our ECM algorithm iterates between the E-step and CM-step until the percentage change in the estimated entries of $\Psi$ and $\Omega$ or log posterior density is below some user-defined tolerance.
In our implementation, we have found that tolerance of $10^{-3}$ works well.
%Our ECM algorithm terminates when change in every $\psi_{j,k}$ and $\omega_{k,k'}$ is less than some user-defined tolerance (e.g. $10^{-3}$ or $10^{-6}$) or if the objective function increase is less than the same tolerance.
The following subsections detail how we carry out each conditional maximization step.

\subsection{Updating $\Psi$ and $\theta$}.

Fixing $(\Omega, \eta) = (\Omega^{(t-1)}, \eta^{(t-1)}),$ observe that
\begin{align}
\begin{split}
\label{eq:updating_B_eqval}
F^{(t)}(\Psi, \theta, \Omega^{(t-1)}, \eta^{(t-1)}) &= -\frac{1}{2}\tr\left((Y-X\Psi\Omega^{-1})\Omega(Y-X\Psi\Omega^{-1})^{\top}\right)+\log\pi(\Psi, \theta) \\
&= -\frac{1}{2}\tr\left((Y-X\Psi\Omega^{-1})\Omega\Omega^{-1}\Omega(Y-X\Psi\Omega^{-1})^{\top}\right)+\log\pi(\Psi, \theta) \\
&= -\frac{1}{2}\tr\left((Y\Omega-X\Psi)\Omega^{-1}(Y\Omega-X\Psi)^{\top}\right)+\log\pi(\Psi, \theta)
\end{split}
\end{align}
where
\begin{align}
\begin{split}
\label{eq:log_prior_Psi_theta}
\log \pi(\Psi,\theta) &= \sum_{j = 1}^{p}{\sum_{k = 1}^{q}{\log\left(\theta\lambda_{1}e^{-\lambda_{1}\lvert \psi_{j,k} \rvert} + (1-\theta)\lambda_{0}e^{-\lambda_{0}\lvert \psi_{j,k} \rvert}\right)}} \\
&+ (a_{\theta} - 1)\log(\theta) + (b_{\theta} - 1)\log(1-\theta)
\end{split}
\end{align}
We solve the optimization problem in Equation~\eqref{eq:updating_B_eqval} using a coordinate ascent strategy that iteratively updates $\Psi$ (resp. $\theta$) while holding $\theta$ (resp. $\Psi$) fixed.
We run the coordinate ascent until the estimates of each $\psi_{j,k}$ no longer change by more than 0.1\%.

\textbf{Updating $\theta$ given $\Psi$}.
Notice that the objective in Equation~\eqref{eq:updating_B_eqval} depends on $\theta$ only through the $\log \pi(\Psi,\theta)$ term.
Accordingly, to updating $\theta$ conditionally on $\Psi,$ it is enough to maximize the expression in Equation~\eqref{eq:log_prior_Psi_theta} as a function of $\theta$ while keeping all $\psi_{j,k}$ terms fixed.
We use Newton's method for this optimization and we terminate once the Newton step has a step size less than the user defined tolerance.

\textbf{Updating $\Psi$ given $\theta$}. With $\theta$ fixed, optimizing Equation~\eqref{eq:updating_B_eqval} is equivalent to solving
\begin{align}
\begin{split}
\label{eq:Psi_penalty}
%\Psi^{(t)} &= \argmax_{\Psi} \left\{-\frac{1}{2}\left((Y\Omega-X\Psi)\Omega^{-1}(Y\Omega-X\Psi)^\top\right)+\log\pi(\Psi\vert\theta) + \log\pi(\theta)\right\}\\
\Psi^{(t)}&=\argmax_{\Psi} \left\{-\frac{1}{2}\tr\left((Y\Omega-X\Psi)\Omega^{-1}(Y\Omega-X\Psi)^\top\right)+\log\pi(\Psi\vert\theta)\right\}\\
&=\argmax_{\Psi} \left\{-\frac{1}{2}\tr\left((Y\Omega-X\Psi)\Omega^{-1}(Y\Omega-X\Psi)^\top\right)+\sum_{j,k}\log\left(\frac{\pi(\psi_{j,k}\vert\theta)}{\pi(0\vert\theta)}\right)\right\}\\
&=\argmax_{\Psi} \left\{-\frac{1}{2}\tr\left((Y\Omega-X\Psi)\Omega^{-1}(Y\Omega-X\Psi)^\top\right)+\sum_{j,k}\pen(\psi_{j,k}\vert\theta)\right\}
\end{split}
\end{align}
where
\begin{align*}
	\pen(\psi_{j,k}\vert\theta)&=\log\left(\frac{\pi(\psi_{j,k}\vert\theta)}{\pi(0\vert\theta)}\right)=-\lambda_1|\psi_{j,k}|+\log\left(\frac{p^\star(\psi_{j,k},\theta)}{p^\star(0,\theta)}\right).
\end{align*}

Following essentially the same arguments as those in \citet{Deshpande2019} and using the fact that the columns of $X$ have norm $\sqrt{n},$ the Karush-Kuhn-Tucker (KKT) condition for optimization problem in the final line of Equation~\eqref{eq:Psi_penalty} tells us that the optimizer $\Psi^{*}$ satisfies
\begin{equation}
	\label{eq:KKT_Psi_supp}
	\psi^{*}_{j,k}=n^{-1}\left[ \lvert z_{j,k} \rvert-\lambda^\star(\psi^{*}_{j,k},\theta)\right]_{+}\sign(z_{j,k}),
\end{equation}
where 
\begin{align*}
	z_{jk}&= n\psi^{*}_{j,k}+ X_{j}^{\top}\mathbf{r}_{k} + \sum_{k' \neq k}\frac{(\Omega^{-1})_{k,k'}}{(\Omega^{-1})_{k,k}}X_j^{\top} \mathbf{r}_{k'}\\
	\mathbf{r}_{k'}&=(Y\Omega-X\Psi^{*})_{k'} \\
	%n\psi^{(t)}_{j,k}+\sum_{k'}\frac{(\Omega^{-1})_{k,k'}}{(\Omega^{-1})_{k,k}}\bx_j^{\top} \mathbf{r}_{k'}\\
	\lambda^\star(\psi^{*}_{j,k},\theta)&=\lambda_1 p^\star (\psi^{*}_{j,k},\theta)+\lambda_0(1-p^\star (\psi^{*}_{j,k},\theta)).
\end{align*}

The KKT condition immediately suggests a cyclical coordinate ascent strategy for solving the problem in Equation~\eqref{eq:Psi_penalty} that involves soft thresholding the running estimates of $\psi_{j,k}.$
Like \citet{RockovaGeorge2018_ssl} and \citet{Deshpande2019}, we can, however, obtain a more refined characterization of the global model $\tilde{\Psi} = (\tilde{\psi}_{j,k})$:
$$
\tilde\psi_{j,k} = n^{-1}\left[\vert z_{j,k}\vert-\lambda^\star(\tilde{\psi}_{j,k},\theta)\right]_{+}\sign(z_{j,k}) \times \mathbbm{1}\left(\lvert z_{j,k} \rvert > \Delta_{j,k}\right),
$$
where
\begin{align*}
	\Delta_{j,k}=\inf_{t>0}\left\{\frac{nt}{2}-\frac{\pen(\tilde{\psi}_{j,k},\theta)}{(\Omega^{-1})_{k,k}t}\right\}.	
\end{align*}
Though the exact thresholds $\Delta_{j,k}$ are difficult to compute, they can be bounded use an analog to Theorem 2.1 of \citet{RockovaGeorge2018_ssl} and Proposition 2 of \citet{Deshpande2019}. 
Specifically, suppose we have $(\lambda_1-\lambda_0)>2\sqrt{n(\Omega^{-1})_{k,k}}$ and $(\lambda^\star(0,\theta)-\lambda_1)^2>-2n(\Omega^{-1})_{k,k}p^\star(0,\theta)$. 
Then we have $\Delta_{j,k}^L\le \Delta_{j,k}\le \Delta^U_{j,k}$, where:

\begin{align*}
	\Delta_{j,k}^L&=\sqrt{-2n((\Omega^{-1})_{k,k})^{-1}\log p^\star(0,\theta)-((\Omega^{-1})_{k,k})^{-2}d}+\lambda_1/(\Omega^{-1})_{k,k}\\
	\Delta_{j,k}^U&=\sqrt{-2n((\Omega^{-1})_{k,k})^{-1}\log p^\star(0,\theta)}+\lambda_1/(\Omega^{-1})_{k,k}\\
\end{align*}
where $d=-(\lambda^\star (\delta_{c+},\theta)-\lambda_1)^2-2n(\Omega^{-1})_{k,k}\log p^\star(\delta_{c+},\theta)$ and $\delta_{c+}$ is the largest root of $\pen''(x|\theta)=(\Omega^{-1})_{k,k}.$

Our refined characterization of $\tilde{\Psi}$ suggests a cyclical coordinate descent strategy that combines hard thresholding at $\Delta_{j,k}$ and soft-thresholding at $\lambda^{\star}_{j,k}.$

\begin{myRemark}
Equation~\eqref{eq:updating_B_eqval} and our approach to solving the optimization problem are extremely similar to Equation 3 and the coordinate ascent strategy used in \citet{Deshpande2019}, who fit sparse marginal multivariate linear models with spike-and-slab LASSO priors.
This is because if $Y \sim \mathcal{N}(X\Psi\Omega^{-1},\Omega^{-1})$ in our chain graph model, then $Y\Omega \sim \mathcal{N}(X\Psi, \Omega).$ 
Thus, if we fix the value of $\Omega,$ we can use any computational strategy for estimating marginal effects in the multivariate linear regression model to estimate $\Psi$ by working with the transformed data $Y\Omega.$
\end{myRemark}

\subsection{Updating $\Omega$ and $\eta$}

Fixing $\Psi = \Psi^{(t)}$ and $\theta = \theta^{(t)}$, we compute $\Omega^{(t)}$ and $\eta^{(t)}$ by optimizing the function

\begin{align}
\begin{split}
\label{eq:Omega_eta_objective}
F^{(t)}(\Psi^{(t)},\theta^{(t)},\Omega,\eta) &= \frac{n}{2}\left[\log\lvert\Omega\rvert - \tr(S\Omega) - \tr(M\Omega^{-1})\right] -\sum_{k < k'}{\xi^{\star}_{k,k'}\lvert \omega_{k,k'}\rvert} - \sum_{k=1}^{q}{\xi_1\omega_{k,k}} \\
~&+\left(a_{\eta} - 1 + \sum_{k < k'}{q_{k,k}^{\star}}\right) \times \log(\eta) \\
~&+ \left(b_{\eta} - 1 + q(q-1)/2 - \sum_{k < k'}{q_{k,k}^{\star}}\right) \times \log(1-\eta)
%~&+ \left(a_{\eta} - 1 + \sum_{k < k'}{q_{k,k}^{\star}}\right) \times \log(\eta) + \left(b_{\eta} - 1 + \sum_{k < k'}{(1-q_{k,k}^{\star})}\right) \times \log(1-\eta)
\end{split}
\end{align}
where $S=\frac{1}{n}Y^{\top}Y$ and $M=\frac{1}{n}(X\Psi)^{\top}X\Psi$. 

We immediately observe that expression in Equation~\eqref{eq:Omega_eta_objective} is separable in $\Omega$ and $\eta,$ meaning that we can compute $\Omega^{(t)}$ and $\eta^{(t)}$ separately.
Specifically, we have
\begin{equation}
\label{eq:eta_update}
\eta^{(t)}=\frac{a_{\eta}-1+\sum_{k<k'}q_{k,k'}^{\star}}{a_{\eta}+b_{\eta}-2+q(q-1)/2}
\end{equation}
and
\begin{align}
\label{eq:Omega_update}
\Omega^{(t)} &= \argmax_{\Omega>0}\left\{\frac{n}{2}\left[\log(\lvert\Omega\rvert)-\tr(S\Omega)-\tr(M\Omega^{-1})\right] -\sum_{k<k'}\xi^\star_{k,k'}|\omega_{k,k'}|-\xi_1\sum_{k=1}^q \omega_{k,k} \right\}.
\end{align}

The objective function in Equation~\eqref{eq:Omega_update} is similar to a graphical LASSO \citep[GLASSO;][]{Friedman2008} problem insofar as both problems involve a term like $\log\lvert\Omega\rvert +\tr(S\Omega)$ and separable $L1$ penalties on the off-diagonal elements of $\Omega.$
However, Equation~\eqref{eq:Omega_update} includes an additional term $\tr(M\Omega^{-1})$, which does not appear in the GLASSO.
This term arises from through the entanglement of $\Psi$ and $\Omega$ in the Gaussian chain graph model and we accordingly call the problem in Equation~\eqref{eq:Omega_update} the CGLASSO problem.
We solve this problem by (i) forming a quadratic approximation of the objective, (ii) computing a suitable Newton direction, and (iii) following that Newton direction for a suitable step size.
We detail this solution strategy in Section~\ref{appendix:cgQUIC}.

\subsection{Effect of prior truncation for $\Omega$}

Recall that we constructed the $\Omega$ prior by first placing independent priors on its elements and then truncating that prior to the positive definite cone.
By truncating the prior, we introduced non-trivial dependence between the off-diagonal elements $\omega_{k,k'}.$
While the untruncated prior helps shrink individual $\omega_{k,k'}$'s toward zero, whether the truncated prior induces similar marginal shrinkage is unclear.
We would be especially concerned if, after truncation, the marginal prior on $\omega_{k,k'}$ was concentrated away from zero, thereby inducing systematic shrinkage towards some non-zero target.
%In particular, we would be quite concerned if truncation resulting in the marginal prior on $\omega_{k,k'}$ was concentrated away from zero, thereby introducing systematic shrinkage towards some non-zero target.

To investigate, we simulated draws from the truncated prior in the $q = 5$ setting where $\eta = 0.5, \xi_{1} = 1$ and $\xi_{0} = 10$ with rejection sampling.
For these choices of spike and slab parameters, the acceptance rate was 3.5\%; see Lemma~\ref{lemma:prior_bounds_omega} for a bound on the probability placed on the positive definite cone by the un-truncated prior.
Figure~\ref{fig:marginal_omega} shows the marginal density of each off-diagonal entry $\omega_{k,k'}.$
The estimated marginal priors are reassuringly symmetric around zero.
More interestingly, they appear to be more concentrated around zero after truncation (black lines) than they were before truncation (red lines).
It would appear, then, that truncation encourages even stronger shrinkage towards zero than the un-truncated prior. 
\begin{figure}[ht]
    \centering
    \includegraphics[width = \textwidth]{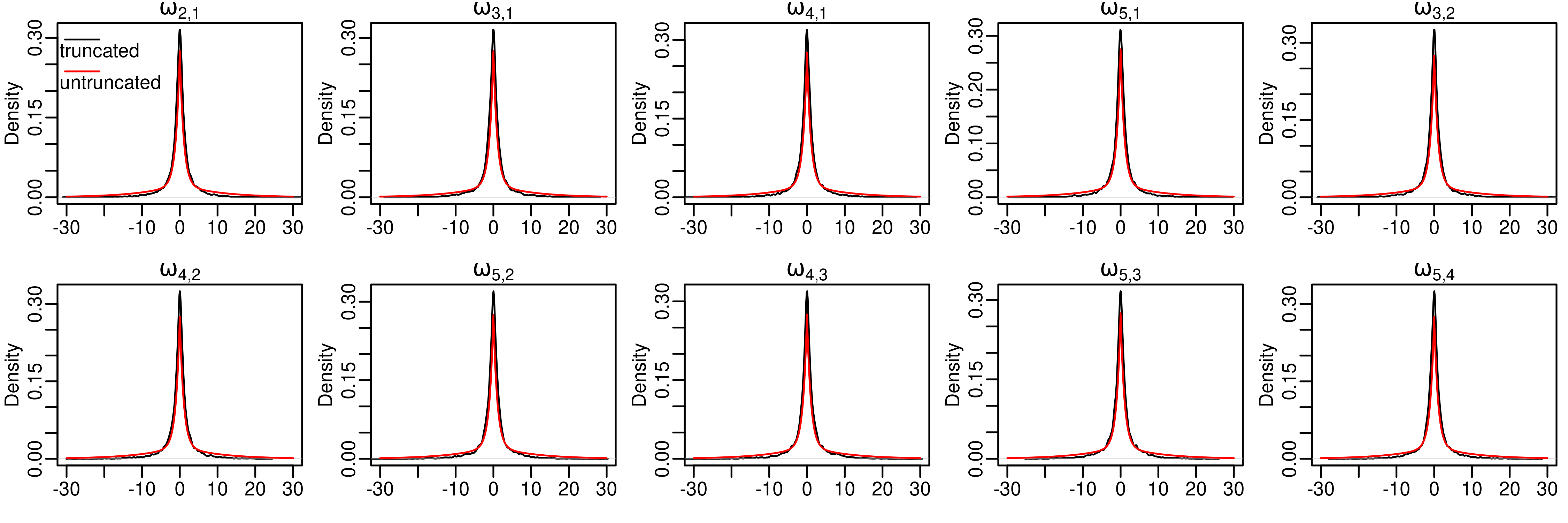}
    \caption{Marginal prior for off diagonal entries with $q=5$, $\eta=\theta=0.5$, $\lambda_1=\xi_1=1$, $\lambda_0=\xi_0=10$ using rejection sampling. The truncated prior is close to the untruncated one with little more concentration around the origin.}
    \label{fig:marginal_omega}
\end{figure}

\subsection{Stabilization of dynamic posterior exploration}

When run with the default settings recommended in Section 3.2 of the main text, our dynamic posterior exploration (DPE) returned posterior modes that appeared to stabilize.
That is, for large penalty values $\lambda_{0}^{(s)}, \lambda_{0}^{(s')}, \xi_{0}^{(t)},$ and $\xi_{0}^{(t')},$ we have $(\Psi^{(s,t)}, \theta^{(s,t)}, \Omega^{(s,t)}, \eta^{(s,t)}) \approx (\Psi^{(s',t')}, \theta^{(s',t')}, \Omega^{(s',t')}, \eta^{(s',t')}).$
Figure~\ref{fig:stabilize} illustrates this phenomenon from a single replication from our simulation study with $p = 10, q = 10,$ and $n = 100.$
Each line represents how the modal estimate of different parameters ($\psi_{j,k}$'s in the top pane and $\omega_{k,k'}$'s in the bottom pane) changes as the spike penalty increases.
\begin{figure}[H]
    \centering
    \includegraphics[width = \textwidth]{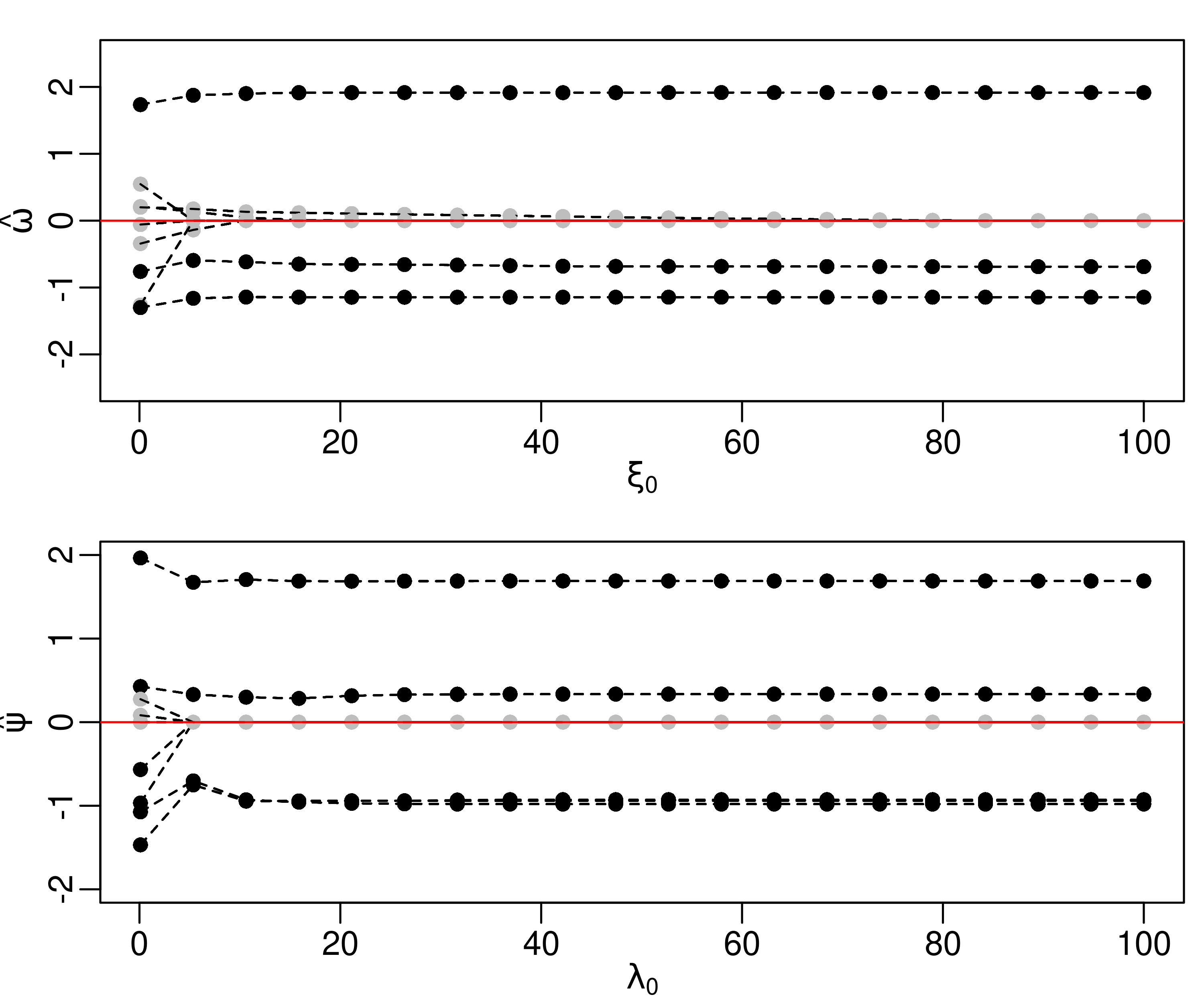}
    \caption{Path of modal estimates returned by DPE as functions of the spike penalties increase.}
    \label{fig:stabilize}
\end{figure}
Such stability has been observed in many other studies of the spike-and-slab LASSO.
It is important to stress that there is no general guarantee that DPE will stabilize for every choice of slab penalties and grids of spike penalties $\mathcal{I}_{\lambda}$ and $\mathcal{I}_{\xi}.$
That said, we have observed stabilization in every experiment we have tried with our recommended default settings.

We conjecture that stabilization will occur so long as the grids contain several large values 
To develop some intuition about why this might happen, consider the adaptive threshold from the $\psi_{j,k}$ update
$$
\lambda^{\star}_{j,k} = \lambda_{1}p^{\star}(\psi_{j,k},\theta) + \lambda_{0}(1 - p^{\star}(\psi_{j,k},\theta))
$$
where
$$
p^{\star}(x, \theta) = \left[1 + \frac{1-\theta}{\theta} \times \frac{\lambda_{0}}{\lambda_{1}} \times \exp\left\{-\left(\lambda_{0} - \lambda_{1}\right)\lvert x \rvert\right\}\right]^{-1}
$$
No suppose that $\psi_{j,k}$ is very small in absolute value so that $p^{\star}(\psi_{j,k},\theta)$ is closer to zero.
When we increase $\lambda_{0}$ in the next step of DPE, the initial $p^{\star}(\psi_{j,k},\theta)$ will become even smaller, making the ensuing penalty $\lambda^{\star}_{j,k}$ even larger.
Similarly, if $\psi_{j,k}$ is non-zero and $p^{\star}(\psi_{j,k},\theta)$ is close to one, then increasing $\lambda_{0}$ will only drive $p^{\star}(\psi_{j,k},\theta)$ closer to one, thereby driving $\lambda^{\star}_{j,k}$ towards the fixed slab penalty $\lambda_{1}.$
As a result, for large enough spike penalties $\lambda_{0},$ zero-estimates tend to remain zero and non-zero estimates tend to remain non-zero.
For a non-zero $\psi_{j,k},$ as $\lambda^{\star}_{j,k}$ converges to $\lambda_{1},$ we end up applying essentially the same amount of regularization from one spike penalty to the next, which manifests as stabilization. 

\section{Chain graphical LASSO with cgQUIC}
\label{appendix:cgQUIC}
Equation~\eqref{eq:Omega_update} is a specific instantiation of what we term the ``chain graphical LASSO'' (CGLASSO) problem, whose general form is
\begin{equation}
\label{eqn:mod_glasso}
\argmin_{\Omega} \left\{-\log(\lvert\Omega\rvert) + \tr(S\Omega) + \tr(M\Omega^{-1})+\sum_{k,k'}{\xi_{k,k'}\lvert \omega_{k,k'}\rvert}\right\}
\end{equation}
where $S$ and $M$ are symmetric positive semi-definite $q \times q$ matrices; $\Omega$ is a symmetric positive definite $q \times q$ matrix; and the $\xi_{k,k'}$'s are symmetric non-negative penalty weights (i.e. we have $\xi_{k,k'}=\xi_{k',k}$).

Notice that when $M$ is the 0 matrix, the CGLASSO problem reduces to a general GLASSO problem, which admits several computational solutions. 
One well known solution is to solve the dual problem, which involves minimization of a log determinant under $L_\infty$ constraint \citep{Banerjee2008}. 
%This approach is practical partially because the primal-dual relationship is simple and tractable. 
%However this solution is not easy for CGLASSO problem. 

Unfortunately, the dual form of the CGLASSO problem does not have such a simple form.
To wit, the dual of the CGLASSO problem with uniform penalty $\xi$ is given by: 
\begin{align*}
\min_{||U||_\infty <\xi} \max_{\Omega\succ 0} \log\lvert \Omega\rvert - \tr[(S+U)\Omega]-\tr[MX^{-1}]
\end{align*}
The inner optimization about $\Omega$ can be solved by setting the derivative to 0; the optimal value of $\Omega$ solves a special case of continuous time algebraic Riccati equation (CARE) \citep{boyd1991linear}:
\begin{equation*}
\Omega-\Omega(S+U)\Omega+M=0
\end{equation*}
Unfortunately, this problem does not have a closed form solution and solving it numerically in every step of the cgSSL is computationally prohibitive.

We instead solve the CGLASSO problem using a suitably modified version of \citet{Hsieh2011}'s QUIC algorithm for solving the GLASSO problem.
At a high level, instead of using the first order gradient or solving dual problem, the algorithm is based on Newton’s method and uses a quadratic approximation. 
Basically, we sequentially cycle over the parameters $\omega_{k,k'}$ and update each parameter by following a Newton direction for suitable step-size.
The step-size is chosen to ensure that our running estimate of $\Omega$ remains positive definite while also ensuring sufficient decrease in the overall objective.
We call our solution CGQUIC, which we summarize in Algorithm~\ref{algo:cgQUIC}.

\begin{algorithm}[htp]
	\caption{The CGQUIC algorithm for CGLASSO problem}
	\label{algo:cgQUIC}
	\KwData{$S=Y^{\top}Y/n$, $M=(X\Psi)^{\top}(X\Psi)/n$, regularization parameter matrix $\Xi$, initial $\Omega_0$, inner stopping tolerance $\epsilon$, parameters $0 < \sigma < 0.5$, $0 < \beta < 1$}
	\KwResult{path of positive definite $\Omega_t$ that converge to $\argmin_\Omega f$ with $f(\Omega)=g(\Omega) +\sum_{k,k'}{\xi_{k,k'}\lvert \omega_{k,k'}\rvert}$, where $g(\Omega)=-\log(\lvert\Omega\rvert) + \tr(S\Omega) + \tr(M\Omega^{-1})$}
	
	Initialize $W_0=\Omega_0^{-1}$;
	
	\For{$t=1,2,\dots$}{
		$D = 0, U = 0$\;
		$Q=MW_{t-1}$\;
		\While{not converged}{
			Partition the variables into fixed and free sets based on gradient\footnote{For compactness, we omit the subscript $t-1$ for $\Omega$ and $W$} \\
			$S_{fixed}:= \{(k,k'):|\nabla_{k,k'} g(\Omega)|<\xi_{k,k'}\text{ and } \omega_{k,k'}=0\}$\;
			$S_{free}:= \{(k,k'):|\nabla_{k,k'} g(\Omega)|\ge\xi_{k,k'}\text{ or } \omega_{k,k'}\ne0\}$\;
			\For{$(k,k')\in S_{free}$}{
				Calculate Newton direction:\\
				$b=S_{k,k'}-W_{k,k'}+w_k^{\top}Dw_{k'}-w_k^{\top}Mw_{k'}+w_{k'}^{\top} DWM w_k+ w_k^{\top} DWM w_{k'}$\;
				$c=\Omega_{k,k'}+D_{k,k'}$\;
				\If{$i\ne j$}{
					$a=W_{k,k'}^2+W_{k,k}W_{k',k'}+W_{k,k}w_{k'}^{\top}Mw_{k'}+W_{k',k'}w_k^{\top}Mw_k+2W_{k,k'}w_k^{\top}Mw_{k'}$\;
				}
				\Else{
					$a=W_{k,k}^2+2W_{k,k}w_k^{\top}Mw_k$\;
				}
				$\mu=-c+[\lvert c-b/a\rvert-\xi_{k,k'}/a]_+ \sign(c-b/a)$ \;
				$D_{k,k'}+=\mu$\;
				$\boldsymbol{u}_k+=\mu \boldsymbol{w}_{k'}$\;
				$\boldsymbol{u}_{k'}+=\mu \boldsymbol{w}_{k}$\;
				
			}
		}
		\For{$\alpha=1,\beta,\beta^2,\dots$}{
			Compute the Cholesky decomposition of $\Omega_{t-1}+\alpha D$\;
			\If{$\Omega_t$ is not positive definite}{
				continue\;
			}
			Compute $f(\Omega_{t-1}+\alpha D)$\;
			\If{$f(\Omega_{t-1}+\alpha D^*)\le f(\Omega_{t-1})+\alpha \sigma\delta, \delta = \tr[\nabla g(\Omega_{t-1})^{\top} D^*] + ||\Omega_{t-1}+D^*||_{1,\Xi}-||\Omega_{t-1}||_{1,\Xi}$}{
				break\;
			}
		}
	$\Omega_t = \Omega_{t-1}+\alpha D$\;
	$W_{t}=\Omega_t^{-1}$ using Cholesky decomposition result\;
	}
return $\{\Omega_t\}$\;
\end{algorithm}

To describe CGQUIC, we first define the ``smooth'' part of the CGLASSO objective as $g(\Omega)$ and the objective function as $f(\Omega)$:

\begin{align}
\begin{split}
\label{eq:cglasso_smooth_part}	
g(\Omega)&=-\log(\lvert\Omega\rvert)+\tr(S\Omega)+\tr(M\Omega^{-1}),\\
f(\Omega)&=g(\Omega)+\sum_{k,k'} \lambda_{k,k'}\omega_{k,k'}.
\end{split}
\end{align}

The function $g(\Omega)$ is twice differentiable and strictly convex. 
To see this, observe that $g(\Omega)$ is just the log-likelihood of a Gaussian chain graph model with known $\Psi.$
Its Hessian is just the negative Fisher information of $\Omega$ and is positive definite. 
%To see the convexity, observe that $g(\Omega)$ is likelihood of a Gaussian chain graph model with known $\Psi$. 
%Since the log-likelihood function depends linearly on the data through the matrix $S,$ the second order derivative is just the negative Fisher information of $\Omega,$ which is positive definite. 
The second-order Taylor expansion of the smooth part $g(\Omega)$ based on $\Omega$ and evaluated at $\Omega+\Delta$ for a symmetric $\Delta$ is

\begin{align}
\begin{split}
	\label{eq:cglasso_smooth_part_quad_approx}
\bar{g}_{\Omega}(\Delta)=&-\log(\lvert\Omega\rvert)+\tr(S\Omega)+\tr(MW)\\
&+ \tr(S\Delta)-\tr(W\Delta)-\tr(WMW\Delta )\\
&+\frac{1}{2}\tr(W\Delta W\Delta)+\tr(WMW\Delta W\Delta)
\end{split}
\end{align}

%The three extra terms comparing to QUIC procedure are $\tr(MW)$, $-\tr(WMW\Delta )$ and $\tr(WMW\Delta W\Delta)$, we will later on focus on these three terms. 

\subsection{Newton Direction}
We now consider the coordinate descent update for the variable $\Omega_{k,k'}$ for $k \leq k'.$ 
Let $D$ denote the current approximation of the Newton direction and let $D'$ be the updated direction.
To preserve symmetry, we set $D'= D+\mu(e_{k}e_{k'}^{\top}+e_{k'}e_{k}^{\top}).$ 
Our goal, then, is to find the optimal $\mu$:
\begin{equation}
\argmin_\mu \{\bar{g}(D+\mu(e_{k}e_{k'}^{\top}
+e_{k'}e_{k}^{\top} ))+2\xi_{k,k'}|\Omega_{k,k'}+D_{k,k'}+\mu|\}
\end{equation}

We begin by substituting $\Delta = D'$ into $\bar{g}(\Delta).$
Note that terms not depending on $\mu$ do not affect the line search.
Compared to QUIC, we have two additional terms, $\tr(WMW\Delta)$ and $\tr(WMW\Delta W \Delta).$
The first term turns out to be linear $\mu$ and the second is quadratic in $\mu.$
To see this, first observe
%Substitute $\Delta = D'$ in $\bar{g}(\Delta)$ we get only two extra terms involving $\mu$ compare to QUIC, we collect all the terms without $\mu$ in some constant $C$ that will not effect the line search. We first simplify $-\tr(WMW\Delta )$ which is a linear term of $\mu$:
\begin{align}
\begin{split}
-\tr(WMW\Delta )&=-\tr(WMW(D+\mu(e_{k}e_{k'}^{\top}
+e_{k'}e_{k}^{\top} ))\\
&=C-\mu\tr(WMWe_{k}e_{k'}^{\top}+WMWe_{k'}e_{k}^{\top})\\
&=C-\mu\tr(e_{k'}^{\top}WMWe_{k}+e_{k}^{\top}WMWe_{k'})\\
&=C-2\mu e_{k}^{\top}WMWe_{k'}\\
&=C-2\mu w_k^{\top}Mw_{k'}
\end{split}
\end{align}
where $w_k$ is the $k^{\text{th}}$ column of $\Omega$. 

Furthermore, we have
\begin{align}
\begin{split}
\tr(WMW\Delta W\Delta)&=\tr[WMW(D+\mu(e_{k}e_{k'}^{\top}
+e_{k'}e_{k}^{\top} )) W(D+\mu(e_{k}e_{k'}^{\top}
+e_{k'}e_{k}^{\top} ))]\\
&=\tr[DWMW+2\mu DWMW(e_{k}e_{k'}^{\top}
+e_{k'}e_{k}^{\top})W] \\
&~~~+ \tr[\mu^2(e_{k}e_{k'}^{\top} +e_{k'}e_{k}^{\top})WMW(e_{k}e_{k'}^{\top}
+e_{k'}e_{k}^{\top})W]\\
~ & ~ \\
=&C+\tr[2\mu DWMW(e_{k}e_{k'}^{\top}
+e_{k'}e_{k}^{\top})W] \\
&~~~+\tr[\mu^2(e_{k}e_{k'}^{\top}
+e_{k'}e_{k}^{\top})WMW(e_{k}e_{k'}^{\top}
+e_{k'}e_{k}^{\top})W]\\
~ & ~ \\
=&C+2\mu w_{k'}^{\top} DWM w_k+2\mu w_k^{\top} DWM w_{k'}^{\top} \\
&~~~+\mu^2\tr[(e_{k}e_{k'}^{\top}
+e_{k'}e_{k}^{\top})WMW(e_{k}e_{k'}^{\top}
+e_{k'}e_{k}^{\top})W]\\
~ & ~ \\
=&C+2\mu (w_{k'}^{\top} DWM w_k+ w_k^{\top} DWM w_{k'}^{\top}) \\
&~~~+\mu^2(W_{k,k}w_{k'}^{\top}Mw_{k'}+W_{k',k'}w_k^{\top}Mw_k+2W_{k,k'}w_k^{\top}Mw_{k'})
\end{split}
\end{align}

%Thus adding the terms in \cite{Hsieh2011} the objective function involving $\mu$ (omitting constant) takes a quadratic form:

By combining the above simplifications, we can minimize the objective with coordinate descent. The update for $\omega_{k,k'}$ is given by:
\begin{align}
\begin{split}
&\frac{1}{2}[W_{k,k'}^2+W_{k,k}W_{k',k'}+W_{k,k}w_{k'}^{\top}Mw_{k'}+W_{k',k'}w_k^{\top}Mw_k+2W_{k,k'}w_k^{\top}Mw_{k'}]\mu^2\\
+&[S_{k,k'}-W_{k,k'}+w_kDw_{k'}-w_kMw_{k'}+w_{k'}^{\top} DWM w_k+ w_k^{\top} DWM w_{k'}]\mu\\
+&\xi_{k,k'}|\Omega_{k,k'}+D_{k,k'}+\mu|
\end{split}
\end{align}

The optimal solution (for off-diagonal $\omega_{k,k'}$) is given by
\begin{equation}
\mu=-c+[\lvert c-b/a \lvert-\xi_{k,k'}/a]_+\sign(c-b/a)
\label{eqn:newton_mu}
\end{equation}
where

\begin{align*}
a&=W_{k,k'}^2+W_{k,k}W_{k',k'}+W_{k,k}w_{k'}^{\top}Mw_{k'}+W_{k',k'}w_k^{\top}Mw_k+2W_{k,k'}w_k^{\top}Mw_{k'}\\
b&=S_{k,k'}-W_{k,k'}+w_k^{\top}Dw_{k'}-w_k^{\top}Mw_{k'}+w_{k'}^{\top} DWM w_k+ w_k^{\top} DWM w_{k'}\\
c&=\omega_{k,k'}+D_{k,k'}
\end{align*}

For diagonal entries, we take $D'=D+\mu e_{k}e_{k}^{\top}$, the two terms involving $D$ are then:
\begin{equation}
\begin{aligned}
-\tr(WMW\Delta )&= C-\mu w_k^{\top}Mw_k\\
\tr(WMW\Delta W\Delta)&=C+2\mu w_k^{\top}DWMw_k+\mu^2 W_{k,k}w_k^{\top}Mw_k
\end{aligned}
\end{equation}

Then we can take 
\begin{align*}
a&=W_{k,k}^2+2W_{k,k}w_k^{\top}Mw_k\\
b&=S_{k,k}-W_{k,k}+w_k^{\top}Dw_k-w_k^{\top}Mw_k+2 w_k^{\top}DWMw_k\\
c&=\omega_{k,k}+D_{k,k}
\end{align*}
and use Equation~\eqref{eqn:newton_mu} to obtain the optimal $\mu$ and thus the updated Newton direction $D'$.  

Note that computing the optimal $\mu$ requires repeated calculation of quantities like $w_k^{\top}Mw_{k'}$ and $w_k^{\top}UMw_{k'}.$
To enable rapid computation, we track and update the values of $U = DW$ and $Q = MW$ during our optimization.

\subsection{Step Size}
Like \citet{Hsieh2011}, we use Armijo's rule to set a step size $\alpha$ that simultaneously ensures our estimate of $\Omega$ remains positive definite and sufficient decrease of our overall objective function. We denote the Newton direction after a complete update over all active coordinates as $D^*$ (see Appendix \ref{supp:active_set_cgquic} for active sets). We require our step size to satisfy the line search condition~\eqref{eqn:decrease_cond}. 
\begin{equation}
f(\Omega+\alpha D^*)\le f(\Omega)+\alpha \sigma\delta, \delta = \tr[\nabla g(\Omega)^{\top} D^*] + ||\Omega+D^*||_{1,\Xi}-||\Omega||_{1,\Xi}
\label{eqn:decrease_cond}
\end{equation}
Three important properties can be established following \cite{Hsieh2011}:
\begin{itemize}
	\item[P1.] The condition was satisfied for small enough $\alpha$. This property is satisfied exactly following proposition 1 of \cite{Hsieh2011}.
	\item[P2.] We have $\delta<0$ for all $\Omega \succ 0$, which ensures that the objective function decreases. This property generally follow Lemma 2 and Proposition 2 of \cite{Hsieh2011}, which requires the Hessian of the smooth part $g(\Omega)$ to be positive definite. In our case the Hessian of $g(\Omega)$ is the Fisher information of the chain graph model, ensuring its positive definiteness. 
	\item[P3.] When $\Omega$ is close to the global optimum, the step size $\alpha=1$ will satisfy the line search condition. To establish this, we follow the proof of Proposition 3 in \cite{Hsieh2011}. 
\end{itemize}

\subsection{Thresholding to Decide the Active Sets}
\label{supp:active_set_cgquic}

Similar to the QUIC procedure, our algorithm does not need to update every $\omega_{k,k'}$ in each iteration. 
We instead follow \citet{Hsieh2011} and only update those parameters exceeding a certain threshold.
More specifically, we can partition the parameters $\omega_{k,k'}$ into a fixed set $S_{\text{fixed}},$ containing those parameters falling below the threshold, and a free set $S_{\text{free}},$ containing those parameters exceeding the threshold.
That is
\begin{equation}
\begin{aligned}
\omega_{k,k'}&\in S_{fixed} \text{ if } |\nabla_{k,k'}g(\Omega)|\le \xi_{k,k'}\text{ and } \omega_{k,k'}=0\\
\omega_{k,k'}&\in S_{free} \text{ otherwise}
\end{aligned}
\end{equation}

We can determine the free set $S_{\text{free}}$ using the minimum-norm sub-gradient $\text{grad}_{k,k'}^Sf(\Omega),$ which is defined in Definition 2 of \citet{Hsieh2011}.
In our case $\nabla g(\Omega)=S-\Omega^{-1}-\Omega^{-1}M\Omega^{-1}$, so the minimum-norm sub-gradient is

%The definition of minimum-norm sub-gradient $\text{grad}_{k,k'}^Sf(\Omega)$ in \cite{Hsieh2011} (Definition 2) is a useful tool for determining the active set since a variable $\omega_{k,k'}$ is fixed if and only if $\omega_{k,k'}=0$ and $\text{grad}_{k,k'}^Sf(\Omega)=0$. In our case $\nabla g(\Omega)=S-\Omega^{-1}-\Omega^{-1}M\Omega^{-1}$, so the minimum-norm sub-gradient is

\begin{equation}
\text{grad}_{k,k'}^Sf(\Omega)=
\begin{cases}
%(S-\Omega^{-1}-\Omega^{-1}M\Omega^{-1})_{k,k'}+\xi_{k,k'} & \text{if } \omega_{k,k'}>0\\
(\nabla g(\Omega))_{k,k'}+\xi_{k,k'} & \text{if } \omega_{k,k'}>0\\
%(S-\Omega^{-1}-\Omega^{-1}M\Omega^{-1})_{k,k'}-\xi_{k,k'} & \text{if } \omega_{k,k'}<0\\
(\nabla g(\Omega))_{k,k'}-\xi_{k,k'} & \text{if } \omega_{k,k'}<0\\
%\sign((S-\Omega^{-1}-\Omega^{-1}M\Omega^{-1})_{k,k'})[\lvert(S-\Omega^{-1}-\Omega^{-1}M\Omega^{-1})_{k,k'}\rvert-\xi_{k,k'}]_+ & \text{if } \omega_{k,k'}=0\\
\sign((\nabla g(\Omega))_{k,k'})[\lvert (\nabla g(\Omega))_{k,k'}\rvert-\xi_{k,k'}]_+ & \text{if } \omega_{k,k'}=0\\
\end{cases}
\end{equation}
Note that the subgradient evaluated on the fixed set is always equal to 0.
Thus, following Lemma 4 in \citet{Hsieh2011}, the elements of the fixed set do not change during our coordinate descent procedure.
It suffices, then, to only compute the Newton direction on the free set and update those parameters.

%Observe the same situation as in \cite{Hsieh2011}, the fixed set has subgradient of 0 always. 

%Follow Lemma 4 in \cite{Hsieh2011}, we can arrive at the same result as of Proposition 4 as in \cite{Hsieh2011}, i.e. if we perform block coordinate descent restricted to the fixed set, then no updates would occur. 
%Thus we could only calculate the Newton direction of free set and perform update on those entries. 

\subsection{Unique minimizer}
\label{sec:cgquic_proof}

In this subsection, we show that the CGLASSO problem admits a unique minimizer. 
Our proof largely follows the proofs of Lemma 3 and Theorem 1 of \cite{Hsieh2011} but makes suitable modifications to account for the extra $\tr(M\Omega^{-1})$ term in the CGLASSO objective.

\begin{myTheorem}[Unique minimizer]
	There is a unique global minimum for the CGLASSO problem ~\eqref{eqn:mod_glasso}.
	\label{thm:cgquic_unique}
\end{myTheorem}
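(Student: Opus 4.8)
The plan is to split the CGLASSO objective \eqref{eqn:mod_glasso} as $f(\Omega) = g(\Omega) + P(\Omega)$, where $g(\Omega) = -\log\lvert\Omega\rvert + \tr(S\Omega) + \tr(M\Omega^{-1})$ is the smooth part from Equation~\eqref{eq:cglasso_smooth_part} and $P(\Omega) = \sum_{k,k'}\xi_{k,k'}\lvert\omega_{k,k'}\rvert$ is the penalty, and then to prove uniqueness and existence separately. Uniqueness will come from strict convexity of $f$ on the open cone $\{\Omega \succ 0\}$, while existence will come from compactness of the sublevel sets of $f$. This mirrors the structure of Lemma 3 and Theorem 1 of \citet{Hsieh2011}, the only genuinely new ingredient being the extra term $\tr(M\Omega^{-1})$.

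First I would settle uniqueness. The domain $\{\Omega\succ 0\}$ is convex; $\Omega\mapsto-\log\lvert\Omega\rvert$ is strictly convex on this cone; $\Omega\mapsto\tr(S\Omega)$ is linear; and $\Omega\mapsto\tr(M\Omega^{-1})$ is convex because $\Omega\mapsto\Omega^{-1}$ is operator convex and $M\succeq 0$ induces a positive linear functional. Hence $g$ is strictly convex; equivalently, as noted after Equation~\eqref{eq:cglasso_smooth_part}, its Hessian is the Fisher information of the chain graph model and is positive definite. Since $P$ is convex, $f = g+P$ is strictly convex and therefore has at most one global minimizer.

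Next I would establish existence by fixing any $\Omega_0\succ 0$ and showing the sublevel set $U = \{\Omega\succ 0 : f(\Omega)\le f(\Omega_0)\}$ is compact and bounded away from singular matrices. Writing the eigenvalues of $\Omega$ as $\lambda_1,\dots,\lambda_q$ and using $\tr(S\Omega)\ge 0$ and $\tr(M\Omega^{-1})\ge 0$ (both hold since $S,M\succeq 0$ and $\Omega\succ 0$) together with the positive diagonal penalty, or, in the cgSSL instance of Equation~\eqref{eq:Omega_update}, the linear diagonal term $\xi\sum_k\omega_{k,k}$ with $\xi>0$, I would obtain the coercivity estimate
\begin{equation*}
f(\Omega) \ge -\sum_{i=1}^q\log\lambda_i + \xi_{\min}\sum_{i=1}^q\lambda_i = \sum_{i=1}^q\left(\xi_{\min}\lambda_i - \log\lambda_i\right),
\end{equation*}
where $\xi_{\min}>0$ is the smallest diagonal penalty. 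Each summand $\xi_{\min}\lambda-\log\lambda$ diverges to $+\infty$ as $\lambda\to 0^+$ and as $\lambda\to\infty$, so membership in $U$ forces every eigenvalue into a fixed compact interval $[c_1,c_2]\subset(0,\infty)$. Thus $U$ is closed (as $f$ is continuous on the cone), bounded, and bounded away from singular matrices, hence compact, and a continuous function attains its minimum on a compact set. Combined with the previous step, this minimizer is the unique global minimum.

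The main obstacle is the existence step, and specifically the upper bound $\lambda_{\max}\to\infty$. The log-det barrier automatically controls $\lambda_{\min}\to 0$, but controlling large eigenvalues requires a linear-growth term, and this is exactly where the additional $\tr(M\Omega^{-1})$ term relative to the GLASSO analysis of \citet{Hsieh2011} must be handled with care. Fortunately it is non-negative on the cone, so it cannot spoil coercivity: near the singular boundary it may itself diverge, but always with the favorable sign, and for large eigenvalues it simply vanishes. The coercivity argument therefore reduces to the GLASSO one once we observe that either a positive diagonal penalty or a positive-definite $S$ supplies the required linear growth; absent such a term, $\inf(-\log\lvert\Omega\rvert)=-\infty$, so this hypothesis is genuinely needed and is met in our application.
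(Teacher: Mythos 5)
Your proof is correct and follows the same skeleton as the paper's: show that a sublevel set of $f$ is compact and bounded away from the boundary of the positive-definite cone, invoke Weierstrass for existence, and use convexity for uniqueness. The genuine difference is in how the largest eigenvalue is controlled in the coercivity step (the paper's Lemma~\ref{lem:level_set}). You extract linear growth from the \emph{diagonal} penalties, bounding $f(\Omega) \ge \sum_{i}\left(\xi_{\min}\lambda_i - \log\lambda_i\right)$ and using the fact that each summand is bounded below, so a bound on the sum pins every eigenvalue into a fixed compact interval. The paper instead allows the diagonal penalties to vanish but assumes the \emph{off-diagonal} penalties and the diagonal entries of $S$ are positive; it then gets linear growth from $\tr(S\Omega)\ge \alpha\tr(\Omega) - \beta\sum_{k\ne k'}\lvert\omega_{k,k'}\rvert$ and absorbs the off-diagonal sum using the $L_1$ penalty, arriving at $\alpha\lVert\Omega\rVert_2 \le (1+\beta/\xi)\left(f(\Omega_0)+q\log\lVert\Omega\rVert_2\right)$. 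Your route is cleaner and its hypothesis ($\xi_{k,k}>0$) holds in the cgSSL instance~\eqref{eq:Omega_update}, where the diagonal penalty is $\xi>0$; the paper's hypothesis covers the complementary case of unpenalized diagonals with $S = Y^{\top}Y/n$ having positive diagonal. Your uniqueness step via strict convexity of $g$ on the whole cone (strictly convex log-determinant plus convex $\tr(M\Omega^{-1})$ and linear $\tr(S\Omega)$) is also slightly more direct than the paper's appeal to strong convexity on the compact level set, and it suffices for uniqueness. One minor point: to conclude that $U$ is closed in the ambient space rather than merely relatively closed in the open cone, you need the eigenvalue lower bound first, so your parenthetical about being bounded away from singular matrices is doing real work and should logically precede the closedness claim.
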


We first show the entire sequence of iterates $\{\Omega_{t}\}$ lies in a particular, compact level set.
To this end, let 
\begin{equation}
	\label{eqn:level_set}
	U=\{\Omega|f(\Omega)\le f(\Omega_0),\Omega\in S_{++}^p\}.
\end{equation}

To see that all iterations lies in $U$, we check need to check the line search condition Equation~\eqref{eqn:decrease_cond} has a $\delta<0$. By directly applying \citet{Hsieh2011}'s Lemma 2 to $g(\Omega)$, we have that

\begin{align*}
	\delta\le -\vect(D^*)^\top \nabla^2g(\Omega)\vect(D^*)
\end{align*}

where $D^*$ is the Newton direction. 
Since $g(\Omega)$ (Equation~\eqref{eq:cglasso_smooth_part}) is convex, $\nabla^2g(\Omega)$ is positive definite, so the function value $f(\Omega_t)$ is always decreasing. 

Now we need to check that the level set is actually contained in a compact set, by suitably adapt Lemma 3 of \citet{Hsieh2011}. 

\begin{myLemma}
\label{lem:level_set}
The level set $U$ defined in ~\eqref{eqn:level_set} is contained in the set $\{mI\le \Omega \le NI\}$ for some constants $m,N > 0$, if we assume that the off-diagonal elements of $\Xi$ and the diagonal elements of $S$ are positive.
\end{myLemma}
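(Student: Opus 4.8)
The plan is to show that the level set $U$ confines every eigenvalue of $\Omega$ to a fixed compact subinterval of $(0,\infty)$. First I would discard the extra Riccati-type term: since $M \succeq 0$ and $\Omega \succ 0$ we have $\tr(M\Omega^{-1}) = \tr(M^{1/2}\Omega^{-1}M^{1/2}) \ge 0$, so this term only increases $f$ and may be dropped when lower-bounding; likewise any diagonal penalties $\xi_{k,k}|\omega_{k,k}| \ge 0$ can be dropped. Thus for every $\Omega \in U$,
\[
f(\Omega_0) \ge f(\Omega) \ge -\log|\Omega| + \tr(S\Omega) + \sum_{k \ne k'} \xi_{k,k'}|\omega_{k,k'}|,
\]
which is exactly the GLASSO-type lower bound analysed by \citet{Hsieh2011}. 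So the extra term $\tr(M\Omega^{-1})$ never obstructs the argument, and the remaining task is to control the two data-dependent terms on the right.

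The crux — and the step I expect to be the main obstacle — is a coercivity claim: under the two stated hypotheses there is a constant $c > 0$ with $h(\Omega) := \tr(S\Omega) + \sum_{k \ne k'}\xi_{k,k'}|\omega_{k,k'}| \ge c\,\tr(\Omega)$ for all $\Omega \succeq 0$. The difficulty is that $S$ may be singular when $q > n$, and its off-diagonal entries have unknown sign, so a naive term-by-term estimate (which would want $\xi_{k,k'}\ge|S_{k,k'}|$) fails. Instead I would exploit positive homogeneity: $h$ is homogeneous of degree one, so it suffices to show $c := \min\{h(\Omega) : \Omega \succeq 0,\ \tr(\Omega) = 1\} > 0$, a minimum that is attained by continuity of $h$ on the compact trace-one slice of the PSD cone. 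If the minimizer $\Omega^\star$ had $h(\Omega^\star) \le 0$, then, since $\tr(S\Omega^\star) \ge 0$ (trace of a product of PSD matrices) and the penalty sum is nonnegative, both would have to vanish; positivity of the off-diagonal $\xi_{k,k'}$ forces $\Omega^\star$ to be diagonal, and then $\tr(S\Omega^\star) = \sum_k S_{k,k}\omega^\star_{k,k} = 0$ together with positivity of the diagonal of $S$ and $\omega^\star_{k,k} \ge 0$ forces $\Omega^\star = 0$, contradicting $\tr(\Omega^\star) = 1$. Hence $c > 0$, and rescaling gives $h(\Omega) \ge c\,\tr(\Omega)$ for all $\Omega \succeq 0$. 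This compactness argument is precisely where both hypotheses enter.

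Finally I would assemble the bound. Writing $a_1, \dots, a_q > 0$ for the eigenvalues of $\Omega$, the coercivity claim yields
\[
f(\Omega_0) \ge -\log|\Omega| + c\,\tr(\Omega) = \sum_{i=1}^q \big(c\,a_i - \log a_i\big) = \sum_{i=1}^q \phi(a_i),
\]
where $\phi(a) = ca - \log a$ is bounded below by $\phi(1/c) = 1 + \log c$ and satisfies $\phi(a) \to \infty$ both as $a \to 0^+$ and as $a \to \infty$. Bounding the other $q-1$ summands by $1 + \log c$ gives, for each $j$, $\phi(a_j) \le f(\Omega_0) - (q-1)(1+\log c)$; the coercivity of $\phi$ then confines every eigenvalue to a fixed compact interval $[m, N] \subset (0,\infty)$ depending only on $f(\Omega_0)$, $c$, and $q$. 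Equivalently $mI \preceq \Omega \preceq NI$, which is the claim, and the argument delivers the upper and lower eigenvalue bounds simultaneously. Alternatively, to mirror \citet{Hsieh2011} more closely, one could bound $\lambda_{\max}(\Omega) \le \tr(\Omega) \le N$ first from the coercivity estimate and then recover $\lambda_{\min}(\Omega) \ge m$ by isolating the smallest eigenvalue in $-\log|\Omega| \ge -\log a_q - (q-1)\log N$.
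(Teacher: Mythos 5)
Your proof is correct, and it reaches the conclusion by a genuinely different route from the paper. Both arguments begin identically, by discarding the nonnegative terms $\tr(M\Omega^{-1})$ and the diagonal penalties to reduce to a GLASSO-type inequality. From there the paper follows \citet{Hsieh2011} closely: it keeps the two lower bounds $-\log|\Omega|+\lVert\Omega\rVert_{1,\Xi}$ and $-\log|\Omega|+\tr(S\Omega)$ separate, uses the term-by-term estimate $\tr(S\Omega)\ge \alpha\tr(\Omega)-\beta\sum_{k\ne k'}|\omega_{k,k'}|$ with $\alpha=\min_k S_{k,k}$ and $\beta$ the largest off-diagonal of $S$, trades the off-diagonal sum for the penalty via the lower bound $\xi$ on $\Xi$'s off-diagonals, and arrives at $\alpha\lVert\Omega\rVert_2\le(1+\beta/\xi)\bigl(f(\Omega_0)+q\log\lVert\Omega\rVert_2\bigr)$, whose linear-versus-logarithmic growth bounds $\lVert\Omega\rVert_2$ by some $N$; the lower eigenvalue bound is then recovered separately from $-\log|\Omega|\ge-\log a-(q-1)\log N$. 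You instead prove a single coercivity estimate $\tr(S\Omega)+\sum_{k\ne k'}\xi_{k,k'}|\omega_{k,k'}|\ge c\,\tr(\Omega)$ on the whole PSD cone by a compactness argument on the trace-one slice, where positivity of the off-diagonal $\xi_{k,k'}$ and of the diagonal of $S$ enter exactly to rule out a zero minimizer; this reduces everything to the scalar inequality $\sum_i\phi(a_i)\le f(\Omega_0)$ with $\phi(a)=ca-\log a$ coercive at both ends of $(0,\infty)$, which pins all eigenvalues into a compact interval in one stroke. Your version is cleaner in that it avoids the explicit $\beta/\xi$ bookkeeping (and incidentally sidesteps a small imprecision in the paper, where $\beta$ should be the maximum \emph{absolute} off-diagonal of $S$), and it delivers the upper and lower eigenvalue bounds simultaneously; the trade-off is that your constant $c$ is non-constructive, whereas the paper's argument yields bounds expressed explicitly in terms of $f(\Omega_0)$, $\alpha$, $\beta$, and $\xi$.
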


\begin{proof}
We begin by showing that the largest eigenvalue of $\Omega$ is bounded by some constant that does not depend on $\Omega.$
Recall that $S$ and $M$ are positive semi-definite.
Since $\Omega$ is positive definite, we have $\tr(S\Omega)+\tr(M\Omega^{-1})>0$ and $||\Omega||_{1,\Xi}+\tr(M\Omega^{-1})>0$. 

Therefore we have

	\begin{equation}
		\label{eqn:level_inequal}
		\begin{aligned}
			f(\Omega_0)&>f(\Omega)\ge -\log(\lvert\Omega\rvert) +||\Omega||_{1,\Xi}\\
			f(\Omega_0)&>f(\Omega)\ge -\log(\lvert\Omega\rvert) +\tr(S\Omega)
		\end{aligned}
	\end{equation}

	Since $||\Omega||_2$ is the largest eigenvalue of $\Omega$, we have $\log(\lvert\Omega\rvert)\le q\log(||\Omega||_2)$. 

	Using the assumption that off-diagonal entries of $\Xi$ is larger than some positive number $\xi,$ we know that

	\begin{equation}
		\label{eqn:inequal_Lambda}
		\xi\sum_{i\ne j}|\Omega_{k,k'}|\le ||\Omega||_{1,\Xi}\le f(\Omega_0)+q\log(||\Omega||_2)
	\end{equation}

	Similarly, we have

	\begin{equation}
		\label{eqn:inequal_S}
		\tr(S\Omega)\le f(\Omega_0)+q\log(||\Omega||_2)
	\end{equation}

	Let $\alpha=\min_k(S_{k,k})$ and $\beta = \max_{k\ne k'} S_{k,k'}$. 
	We can split the $\tr(S\Omega)$ into two parts, which can be further lower bounded:

	\begin{equation}
		\label{eqn:trSX}
		\tr(S\Omega)=\sum_k S_{k,k}\Omega_{k,k}+\sum_{k\ne k'}S_{k,k'}\Omega_{k,k'}\ge \alpha \tr(\Omega)-\beta\sum_{k\ne k'} |\Omega_{k,k'}|
	\end{equation}

	Since $||\Omega||_2\le \tr(\Omega)$, by using Equation~\eqref{eqn:trSX}, we have,

	\begin{equation}
		\label{eq:Omega_2_bound}
		\alpha ||\Omega||_2\le \alpha \tr(\Omega) \le \tr(\Omega S)+\beta\sum_{k\ne k'} |\Omega_{k,k'}|
	\end{equation}

	By combining Equations~\eqref{eqn:inequal_Lambda},~\eqref{eqn:inequal_S}, and\eqref{eq:Omega_2_bound}, we conclude that
	\begin{equation}
		\alpha||\Omega||_2\le (1+\beta/\xi)(f(\Omega_0)+q\log(||\Omega||_2))
	\end{equation}
	The left hand side as a function of $||\Omega||_2$ grows much faster than then right hand side.
	 Thus $||\Omega||_2$ can be bounded by a quantity depending only on the value of $f(\Omega_0)$, $\alpha$, $\beta$ and $\xi$. 
	
	We now consider the smallest eigenvalue denoted by $a$.  We use the upper bound of other eigenvalues to bound the determinant. By using the fact that $f(\Omega)$ always decreasing during iterations, we have

	\begin{equation}
		f(\Omega_0)>f(\Omega)>-\log(\lvert\Omega\rvert)\ge -\log(a)-(q-1)\log(N)
	\end{equation}
	Thus we have $m=e^{-f(\Omega_0)M^{-q+1}}$ is a lower bound for smallest eigenvalue $a$. 
\end{proof}

We are now ready to prove Theorem \ref{thm:cgquic_unique}, by showing the objective function is strongly convex on a compact set. 

\begin{proof}
	Because of Lemma~\ref{lem:level_set}, the level set $U$ contains all iterates produced by cgQUIC. The set $U$ is further contained in the compact set $\{mI\le \Omega\le N I\}$. By the Weierstrass extreme value theorem the continuous function $f(\Omega)~\eqref{eq:cglasso_smooth_part}$ attains its minimum on this set. 

	Further, the modified objective function is also strongly convex in its smooth part.
	This is because $\tr(M\Omega^{-1})$ and $\tr(S\Omega)$ are convex and $-\log(\lvert\Omega\rvert)$ is strongly convex. 
	Since $\tr(M\Omega^{-1})$ is convex, the Hessian of the smooth part has the same lower bound as in Theorem 1 of \cite{Hsieh2011}. 
	By following the argument of in the proof of Theorem 1 of \citet{Hsieh2011}, we can show the objective function $f(\Omega)$ is strongly convex on the compact set $\{mI\le \Omega\le N I\}$, and thus has a unique minimizer.

\end{proof}

We can further show that the cgQUIC procedure converges to the unique minimizer, using the general results on quadratic approximation methods studied in \citet{Hsieh2011}.

\begin{myTheorem}[Convergence]
\label{thm:cgquic_convergence}
	The cgQUIC converge to global optimum.
\end{myTheorem}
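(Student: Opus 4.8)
The plan is to establish convergence by verifying that the general convergence theory for quadratic-approximation (proximal Newton) methods developed in \citet{Hsieh2011} carries over verbatim to the CGLASSO objective, checking only that the extra $\tr(M\Omega^{-1})$ term does not invalidate any of the required hypotheses. The skeleton is standard: (i) the iterates remain in a compact set on which $f$ is strongly convex; (ii) the objective decreases monotonically by a quantifiable amount; (iii) consequently the optimality residual vanishes along the sequence; and (iv) by uniqueness of the minimizer, the whole sequence converges to it, with a fast local rate.

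First I would recall that, by Lemma~\ref{lem:level_set} together with property P2 (which guarantees $\delta < 0$ and hence $f(\Omega_{t}) \le f(\Omega_{t-1})$), every iterate lies in the compact set $\{mI \preceq \Omega \preceq NI\}$. On this set the smooth part $g(\Omega) = -\log\lvert\Omega\rvert + \tr(S\Omega) + \tr(M\Omega^{-1})$ has gradient $\nabla g(\Omega) = S - \Omega^{-1} - \Omega^{-1}M\Omega^{-1}$, and both $\nabla g$ and $\nabla^2 g$ are continuous, hence Lipschitz, with the Hessian uniformly bounded in the positive-definite order, $h_{\min} I \preceq \nabla^2 g(\Omega) \preceq h_{\max} I$ for constants $0 < h_{\min} \le h_{\max}$. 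The lower bound is precisely the strong convexity invoked in the proof of Theorem~\ref{thm:cgquic_unique}, and it is unaffected by the extra term because $\tr(M\Omega^{-1})$ is convex and contributes only a positive-semidefinite piece to the Hessian.

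With these spectral bounds in hand, I would invoke the three line-search properties P1--P3, which the preceding discussion already obtains by adapting Propositions~1--3 of \citet{Hsieh2011} to the chain-graph Fisher information. Property P1 yields a step size bounded below by some $\bar\alpha > 0$ on the compact set, and property P2 gives $\delta \le -h_{\min}\lVert D^{*}\rVert_F^2$, so that $f(\Omega_{t-1}) - f(\Omega_{t}) \ge \bar\alpha\sigma h_{\min}\lVert D^{*}\rVert_F^2$. Since $f$ is bounded below, the telescoping sum $\sum_{t}[f(\Omega_{t-1}) - f(\Omega_{t})]$ converges, forcing $D^{*} \to 0$. Because the inner coordinate descent is run over the active (free) set while the fixed-set coordinates carry zero minimum-norm subgradient (Lemma~4 of \citet{Hsieh2011}), $D^{*} \to 0$ drives the minimum-norm subgradient $\mathrm{grad}^{S}f(\Omega_{t}) \to 0$. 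Any limit point $\Omega_{\infty}$ then satisfies $\mathrm{grad}^{S}f(\Omega_{\infty}) = 0$ by continuity, and strong convexity makes $\Omega_{\infty}$ the unique global minimizer of Theorem~\ref{thm:cgquic_unique}; as all limit points coincide, the entire sequence converges. Finally, property P3 shows that once $\Omega_{t}$ enters a neighborhood of the optimum the unit step $\alpha = 1$ is accepted, so the iteration reduces to a quadratically convergent Newton scheme.

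The main obstacle I anticipate is not the high-level scheme, which is inherited wholesale from \citet{Hsieh2011}, but the bookkeeping needed to control the extra $\tr(M\Omega^{-1})$ term: one must show its gradient $-\Omega^{-1}M\Omega^{-1}$ and the corresponding Hessian block are Lipschitz on $\{mI \preceq \Omega \preceq NI\}$ (which follows from $\Omega \mapsto \Omega^{-1}$ being Lipschitz there, the eigenvalues being bounded away from $0$ and $\infty$) and, more delicately, that the closed-form coordinate updates in Equation~\eqref{eqn:newton_mu} still solve each regularized quadratic subproblem exactly once the additional $M$-dependent contributions are folded into the coefficients $a$ and $b$. Verifying that these modified coefficients keep $a > 0$ on the compact set, so that each coordinate subproblem stays strongly convex and the soft-threshold update remains valid, is the one place where the argument genuinely departs from the original QUIC proof.
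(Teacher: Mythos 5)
Your proposal is correct and follows essentially the same route as the paper, which simply observes that cgQUIC is an instance of the quadratic approximation methods of Section 4.1 of \citet{Hsieh2011} with a strongly convex smooth part $g(\Omega)$ (via Lemma~\ref{lem:level_set} and the argument in Theorem~\ref{thm:cgquic_unique}) and invokes their Theorem 2. You have merely unpacked the internals of that cited theorem --- the descent/telescoping argument, the vanishing of $D^{*}$ and of the minimum-norm subgradient, and the identification of limit points with the unique minimizer --- together with the bookkeeping for the extra $\tr(M\Omega^{-1})$ term, all of which is consistent with what the paper leaves implicit.
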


	\begin{proof}
		cgQUIC is an example of quadratic approximation method investigated in Section 4.1 of \cite{Hsieh2011} with a strongly convex smooth part $g(\Omega)$ in~\eqref{eq:cglasso_smooth_part}. Convergence to the global optimum follows from their Theorem 2. 
	\end{proof}

\section{Synthetic experiment results}
\label{appendix:synthetic_experiment}

\subsection{Hyperparameter sensitivity analysis}

Our proposed computational procedure, cgSSL with Dynamic Posterior Exploration (hereafter, cgSSL-DPE), involves setting several hyperparameters.
Namely, one must specify (i) two slab penalties $\lambda_{1}$ and $\xi_{1}$ and two grids of $L$ increasing spike penalties $\mathcal{I}_{\lambda} = \{\lambda_{0}^{(s)}\}_{s = 1}^{L}$ and $\mathcal{I}_{\xi} = \{\xi_{0}^{(t)}\}_{t = 0}^{L};$ and (ii) Beta priors for the $\theta$ and $\eta$ parameters that help determine the overall sparsity of the estimate of the final estimates.
cgSSL-DPE returns $L^{2}$ posterior modes, one for every combination of spike penalties.
In the main text, we recommended reporting the final mode (corresponding to the largest spike penalties) once the estimates appeared to stabilize in parameter space.
We recommended to run cgSSL-DPE with the following default settings: (i) $\lambda_{1} = 1$ and $\xi_{1} = 0.01n$; (ii) setting$\mathcal{I}_{\lambda}$ to contain ten evenly spaced values ranging from $10$ to $n$ and $\mathcal{I}_{\xi}$ to contain ten evenly spaced values from $0.1n$ to $n;$ and (iii) $\theta \sim \textrm{Beta}(1,pq)$ and $\eta \sim \textrm{Beta}(1,q).$
In this subsection, we investigate how sensitive the final posterior mode estimates are to these choices.

For our sensitivity analysis, we generated 100 datasets with $(n,p,q) = (400, 100, 30)$ a sparse Gaussian chain graph model where $\Omega$'s sparsity structure corresponds to a star graph.
Like in our main simulation study, 20\% of the entries in the true $\Psi$ were non-zero and generated uniformly from the interval $[-2,2].$
We compared DPE run with our default settings (\texttt{default}) to the following competitors: (i) shorter (\texttt{grid\_5}) and longer (\texttt{grid\_20}) grids of spike penalties spanning the default ranges; (ii) much more diffuse slab distributions with $\lambda_{1} = 0.1$ and $\xi_{1} = n/100 (\texttt{small\_slab});$ (iii) grids of spike penalties that were ten times smaller or larger than the recommended defaults (\texttt{small\_spike} and \texttt{large\_spike}); and (iv) uniform priors on $\theta$ and $\eta$ (\texttt{unif}). 
Note that we did not run DPE with combinations of alternate settings.
That is, each competitor differs from \texttt{default} in exactly one respect
For instance, \texttt{small\_slab} and \texttt{default} differ only in the ranges spanned by $\mathcal{I}_{\lambda}$ and $\mathcal{I}_{\xi}:$ in \texttt{small\_slab}, these grids span $1$ to $n/10$ and $n/1000$ and $n/10$ while in \texttt{default}, these grids span $10$ to $n$ and $0.1n$ to $n.$

Figure~\ref{fig:hyperparam} shows the distribution of each settings' F1 score for recovering $\Psi$'s and $\Omega$'s support across the 100 simulated datasets.

\begin{figure}[H]
    \centering
    \begin{subfigure}{0.45\textwidth}
    \centering
    \includegraphics[width = \textwidth]{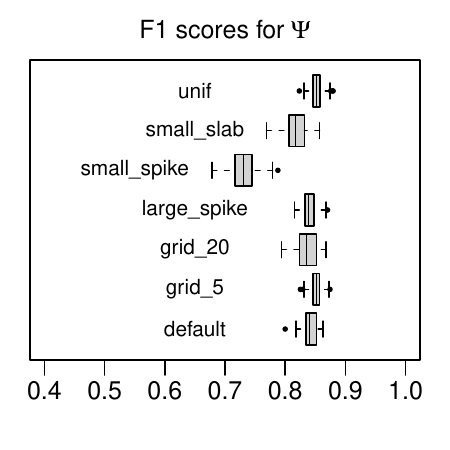}
    \caption{}
    \label{fig:hyperparameter_psi}
    \end{subfigure}
        \begin{subfigure}{0.45\textwidth}
    \centering
    \includegraphics[width = \textwidth]{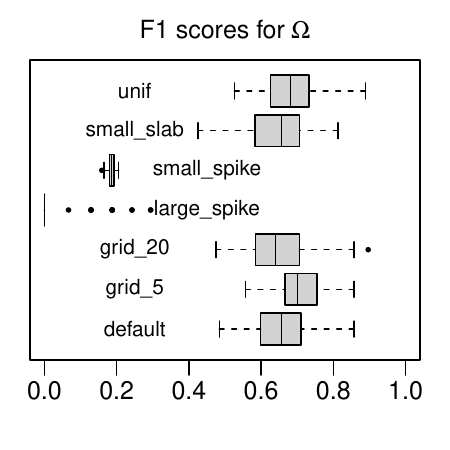}
    \caption{}
    \label{fig:hyperparameter_omega}
    \end{subfigure}
    \caption{F1 scores for recovering the supports of $\Psi$ (a) and $\Omega$ (b) for different hyperparameter settings.}
    \label{fig:hyperparam}
\end{figure}

Running DPE with along a finer (i.e. longer) grid of spike penalties resulted in F1 scores very similar to running with the default settings.
Interestingly, for both $\Psi$ and $\Omega,$ we observed a very slight improvement in support recovery when we ran DPE with shorter grids of spike penalties (\texttt{grid\_5}).
In terms of recovering $\Psi$'s support, increasing the spike penalties (\texttt{large\_spike}) yielded similar performance to \texttt{default} but reducing the range of spike penalties (\texttt{small\_spike}) and reducing the slab penalty (\texttt{small\_slab}) tended to produce worse results.
Support recovery of $\Omega$ was much more sensitive to changes in the spike penalties. 
We additionally observed very little sensitivity to the choice of prior for $\theta$ and $\eta.$

\subsection{Additional simulation results}
Given grids $\mathcal{I}_{\lambda} = \{\lambda_{0}^{(s)}\}_{s = 1}^{L}$ and $\mathcal{I}_{\xi} = \{\xi_{0}^{(t)}\}_{t = 0}^{L}$ spike penalties, the dynamic posterior exploration (DPE) implementation of cgSSL (hereafter \texttt{cgSSL-DPE}) computes $L^{2}$ posterior modes, one for each combination of spike penalties.
This implementation can be computationally intensive, as it requires running the ECM algorithm to convergence for every pair of penalties.
Following \citet{Deshpande2019}, we also implemented a faster variant, called dynamic \textit{conditional} posterior exploration (DCPE).
In DCPE, we first run our ECM algorithm with warm-starts over the grid $\mathcal{I}_{\lambda},$ while keeping $\Omega = I$ fixed.
Then, fixing $\Psi$ and $\Theta$ at their final values from the first step, we run our ECM algorithm with warm-starts over the ladder $\mathcal{I}_{\Omega}.$
Finally, we run our ECM algorithm one more time starting from the final estimates of the parameters obtained in the first two steps with $(\lambda_{0}, \xi_{0}) = (\lambda_{0}^{(L)}, \xi_{0}^{(L)}).$
In this way, DCPE maximizes $2L$ \textit{conditional} posteriors and one joint posterior.
Generally speaking, DPE and DCPE trace different paths through the parameter space and return different final estimates.

We now present the remaining results from our simulation experiments.
These results are qualitatively similar to those from the $(n,p,q) = (100,10,10)$ setting presented in the main text.
Generally speaking, in terms of support recovery, the methods that deployed a single fixed penalty (\texttt{cgLASSO} and \texttt{CAR}) displayed higher sensitivity but lower precision than \texttt{cgSSL-DPE} and \texttt{cgSSL-DCPE}.
The only exception was when $\Omega$ was dense.
Furthermore, methods with adaptive penalties (both cgSSL procedures and \texttt{CAR-A}) tended to return a fewer number of non-zero estimates than the fixed penalty.
Most of these non-zero estimates were in fact true positives.
Across all settings of $(n,p,q)$, \texttt{cgSSL-DPE} makes virtually no false positive identifications in the support of $\Psi.$
In terms of parameter estimation, the fixed penalty methods tended to have larger Frobenius error in estimating both $\Psi$ and $\Omega$ than the cgSSL.
Note that \texttt{cgLASSO} uses ten-fold cross-validation to set the two penalty levels.

\newpage
\begin{table}[H]
    \centering
    \caption{Average sensitivity, precision, and Frobenius error for $\Psi$ and $\Omega$ when $(n,p,q) = (100, 10, 10)$ across 100 simulated datasets. Best performance is bold-faced.}
    \label{tab:psi_omega_relationship_3}
    \scriptsize
    %\footnotesize
    \begin{tabular}{lccccccc}
    \hline
    ~ & \multicolumn{3}{c}{$\Psi$ recovery} & \multicolumn{3}{c}{$\Omega$ recovery} & Runtime \\ 
    Method & SEN & PREC & FROB & SEN & PREC & FROB & Time (min) \\ \hline
    \multicolumn{8}{c}{$AR(1)$ model} \\ \hline
    \texttt{cgLASSO} & \textbf{0.88} & 0.44 & 0.1 & 0.78 & 0.55 & 31.9 &  \\
    \texttt{CAR} & 0.86 & 0.31& 0.04 & \textbf{1} & 0.3 & 4.2 & 0.4 \\
    \texttt{CAR-A} & 0.87 & 0.59 & \textbf{0.02} & \textbf{1} & 0.83 & 2.8 & 0.4 \\
    \texttt{OBFB} & -- & -- & -- & 0.73 &0.32 & NA & 16.5 \\
    \texttt{cgSSL-dcpe} & 0.64 & 0.8 & 0.08 & 0.94 & 0.96 & 6.3 & 0.02 \\ 
    \texttt{cgSSL-dpe} & 0.65  & \textbf{0.99} & 0.04 & \textbf{1} & 0.97 & \textbf{2.5} & 0.2 \\ 
    \texttt{cgSSL-dpe+BB} & 0.64  & \textbf{0.99} & -- & \textbf{1} & \textbf{0.99} & -- & 0.6 \\ 
    \texttt{mSSL} & 0.89 & 0.29  & 0.1 & 1 & 0.99 & 3.2 & 0.01 \\ \hline
     \multicolumn{8}{c}{$AR(2)$ model} \\ \hline
     \texttt{cgLASSO} & \textbf{1} & 0.22 & 0.2 & 0.84  & 0.55 & 2.7 & \\
     \texttt{CAR }& 0.9 & 0.34& 0.03 & 0.98 & 0.57 & 0.6 & 0.4 \\
     \texttt{CAR-A} & 0.89 & 0.67& \textbf{0.02} & \textbf{1} & \textbf{0.91} & 0.5 & 0.4 \\
     \texttt{OBFB} & -- & -- & -- & 0.82&0.35 & -- & 15.4 \\
     \texttt{cgSSL-dcpe} & 0.96 & 0.43 & 0.5 & 0.24 & 0.63 & 5.0 & 0.4 \\
     \texttt{cgSSL-dpe} & 0.73  & \textbf{1} & \textbf{0.02} & \textbf{1} & 0.86 & \textbf{0.4} & 0.6 \\ 
     \texttt{cgSSL-dpe+BB} & 0.74 & \textbf{1} & -- & 0.99 & 0.89 & -- & 1.1 \\ 
     \texttt{mSSL} & 0.98  & 0.29  & 0.2 & 0.61& 0.95  & 2.5 & 0.004 \\ \hline
    \multicolumn{8}{c}{Block model} \\ \hline
    \texttt{cgLASSO} & \textbf{0.95} & 0.39  & 0.1 & 0.73 & 0.78& 5.2 & \\
    \texttt{CAR} & 0.89  & 0.31  & \textbf{0.03} & \textbf{0.95} & 0.61 & \textbf{1.9} & 0.4 \\
    \texttt{CAR-A} & 0.87 & 0.57 & \textbf{0.03 } & 0.86  & 0.93 & 3.0 & 0.4 \\
    \texttt{OBFB} & -- & -- & -- & 0.78&0.52 & -- & 14.5 \\
    \texttt{cgSSL-dcpe} & 0.76 & 0.29 & 0.3 & 0.0075 & 0.71 & 8.8 & 0.02 \\ 
    \texttt{cgSSL-dpe} & 0.69  & \textbf{0.99} & \textbf{0.03} & 0.71  & \textbf{0.95 } & 3.3 & 0.1 \\ 
     \texttt{cgSSL-dpe+BB} & 0.68  & \textbf{0.99} & -- & 0.72  & \textbf{0.95} & -- & 0.3 \\
     \texttt{mSSL} & 0.86 & 0.29 & 0.1 & 0.57 & 0.99 & 6.5 & 0.002 \\ \hline
     \multicolumn{8}{c}{Star model} \\ \hline
     \texttt{cgLASSO} & \textbf{0.96 } & 0.48  & 0.04 & 0.36 & 0.20  & 0.9 &  \\
     \texttt{CAR} & 0.91 & 0.34 & 0.02 & \textbf{0.55} & 0.25 & 0.6 & 0.4 \\
     \texttt{CAR-A} & 0.91 & 0.6 & 0.02  & 0.2 & 0.46 & 0.6 & 0.4 \\
     \texttt{OBFB} & -- & -- & -- & 0.15& \textbf{1} & -- & 16.5 \\
     \texttt{cgSSL-dcpe} & 0.83 & 0.96 & 0.01 & 0.049 & 0.90 & 0.2 & 0.01 \\ 
     \texttt{cgSSL-dpe} & 0.79 & \textbf{0.99} & \textbf{0.01} & 0.09 & 0.71 & \textbf{0.3} & 0.3 \\ 
     \texttt{cgSSL-dpe+BB} & 0.79  & \textbf{0.99} & -- & 0.09  & 0.38  & -- & 0.7 \\ 
     \texttt{mSSL} & 0.83 & 0.96 & 0.015 & 0 & -- & 0.5 & 0.0005 \\ \hline
    \multicolumn{8}{c}{Small world model} \\ \hline
    \texttt{cgLASSO} & 0.54 & 0.44 & 0.2 & 0.66 & 0.45 & 44.7 & \\
    \texttt{CAR} & \textbf{0.86} & 0.28 & 0.05 & \textbf{1} & 0.31& 9.7 & 0.4 \\
    \texttt{CAR-A} & 0.81& 0.55 & \textbf{0.04} & 0.94& 0.75 & 13.3 & 0.4 \\
    \texttt{OBFB} & -- & -- & -- & 0.52&0.46 & -- & 17.6 \\
    \texttt{cgSSL-dcpe} & 0.46 & 0.85 & 0.1 & 0.83& \textbf{0.93} & 19.6 & 0.009 \\
    \texttt{cgSSL-dpe} & 0.60 & \textbf{0.99} & 0.05 & 0.91& \textbf{0.93} & \textbf{8.0} & 0.06 \\ 
    \texttt{cgSSL-dpe+BB}  &0.60 & 1 & -- &0.9 & \textbf{0.94} & -- & 0.2 \\
    \texttt{mSSL} & 0.81& 0.41 & 0.2 & 0.6  & 0.95 & 17.6 & 0.005\\\hline
     \multicolumn{8}{c}{Tree model} \\ \hline
    \texttt{cgLASSO} & 0.83 & 0.47 & 0.1 & 0.83 & 0.37 & 41.6 & \\
    \texttt{CAR} & \textbf{0.87} & 0.27 & 0.06 & \textbf{0.87} & 0.24 & 12 & 0.4 \\
    \texttt{CAR-A} & 0.84 & 0.55 & \textbf{0.04} & 0.63 & 0.58 & 13 & 0.4 \\
    \texttt{OBFB} & -- & -- & -- & 0.35&0.53 & -- & 18.43 \\
    \texttt{cgSSL-dcpe} & 0.33 & \textbf{1} & 0.2 & 0.59 & 0.78 & 20 & 0.001 \\
    \texttt{cgSSL-dpe} & 0.54& \textbf{1} & 0.05& 0.57& 0.87 & \textbf{8.8} & 0.003 \\ 
    \texttt{cgSSL-dpe+BB}  &0.55 & 1 & -- &0.56 & \textbf{0.88} & -- & 0.05 \\
    \texttt{mSSL} & 0.59& 0.63 & 0.09 & 0.46 & 0.99& 14 & 0.001\\ \hline
    \multicolumn{8}{c}{Dense model} \\ \hline
     \texttt{cgLASSO} & \textbf{0.92} & 0.57 & \textbf{0.03} & \textbf{0.88} & \textbf{1} & \textbf{16.9} & \\
     \texttt{CAR} & 0.85 & 0.28 & 0.04 & 0.03 & \textbf{1} & 92.5 & 0.4 \\
     \texttt{CAR-A} & 0.84 & 0.4 & 0.04 & 0 & \textbf{1} & 96.0 & 0.4 \\
     \texttt{OBFB} & -- & -- & -- & 0.76&1 & -- & 16.04 \\
     \texttt{cgSSL-dcpe} & 0.82 & 0.84 & 0.02 & 0.011 & 1 & 99.9 & 0.01 \\ 
     \texttt{cgSSL-dpe} & 0.72 & \textbf{0.93} & \textbf{0.03} & 0.05 & \textbf{1} & 100.0 & 0.02 \\ 
     \texttt{cgSSL-dpe+BB} & 0.72  & \textbf{0.93} & -- & 0.05 & \textbf{1} & --& 0.3 \\ 
     \texttt{mSSL} & 0.82 & 0.81& 0.03& 0& -- & 99.84 & 0.3 \\ \hline
    \end{tabular}
\end{table}
    
\begin{table}[H]
    \centering
    \caption{Sensitivity, precision, and Frobenius error for $\Psi$ and $\Omega$ when $(n,p,q) = (100, 20, 30)$ for each specification of $\Omega.$ For each choice of $\Omega,$ the best performance is bold-faced.}
    \label{tab:psi_omega_relationship_2}
    %\footnotesize
    \scriptsize
    \begin{tabular}{lccccccc}
    \hline
    ~ & \multicolumn{3}{c}{$\Psi$ recovery} & \multicolumn{3}{c}{$\Omega$ recovery} & Runtime \\ 
    Method & SEN & PREC & FROB & SEN & PREC & FROB & Time (min) \\ \hline
    \multicolumn{8}{c}{$AR(1)$ model} \\ \hline
    \texttt{cgLASSO} & \textbf{0.94} & 0.30 & 0.2 & 0.74& 0.48& 111.7 & \\
    \texttt{CAR} & 0.54 & 0.39& 0.1& 1 & 0.21 & 11.2 & 8.9 \\
    \texttt{CAR-A} & 0.69& 0.69& 0.04& \textbf{1} & 0.74 & 15.1 & 9.8 \\
    \texttt{OBFB} & -- & --& -- & 0.06 &0.76 & -- & 56.2 \\
    \texttt{cgSSL-dcpe} & 0.66 & 0.82& 0.08& 0.94 & 0.68 & 34.1 & 3.0 \\
    \texttt{cgSSL-dpe} & 0.69& \textbf{1} & \textbf{0.02} & \textbf{1} & 0.82 & \textbf{4.9} & 30.2 \\ 
    \texttt{cgSSL-dpe+BB} & 0.70 & \textbf{1} & --& \textbf{1} & \textbf{0.88} & -- & 34.3 \\
    \texttt{mSSL} & 0.83 & 0.32 & 0.1& 0.97& 0.89 & 18.0 & 3.1 \\ \hline
     \multicolumn{8}{c}{$AR(2)$ model} \\ \hline
     \texttt{cgLASSO} & \textbf{0.94} & 0.3& 0.2 & 0.98  & 0.18  & 14.4 & \\
    \texttt{CAR} & 0.42& 0.37 & 0.2 & 0.38 & 0.25 & 15.8 & 9.9 \\
    \texttt{CAR-A} & 0.64& 0.76 & 0.07 & 0.91 & \textbf{0.87} & 3.2 & 10.7 \\
    \texttt{OBFB} & -- & -- & -- & 0.06&0.15 & -- & 53.0 \\
    \texttt{cgSSL-dcpe} & 0.73 & 0.7 & 0.05 & 0.81& 0.32& 14.5 & 0.6 \\
    \texttt{cgSSL-dpe} & 0.72& \textbf{0.99} & \textbf{0.01} & \textbf{1} & 0.48 & \textbf{1.0} & 26.3 \\ 
    \texttt{cgSSL-dpe+BB} & 0.72& \textbf{0.99} & -- & 0.99& 0.71& -- & 29.5 \\ 
    \texttt{mSSL} & 0.88& 0.32 & 0.1 & 0.33 & 0.61 & 21.0 & 1.3 \\ \hline
    \multicolumn{8}{c}{Block model} \\ \hline
    \texttt{cgLASSO} & \textbf{0.92} & 0.4 & 0.5 & 0.62 & 0.93 & 27.5 & \\
    \texttt{CAR} & 0.51 & 0.4& 0.1 & 0.53 & 0.68 & \textbf{13.0} & 10.8 \\
    \texttt{CAR-A} & 0.66 & 0.64 & \textbf{0.06 } & 0.47 & \textbf{0.96} & 23.1 & 9.3 \\
    \texttt{OBFB} & -- & -- & -- & 0.06&0.50 & -- & 57.1 \\
    \texttt{cgSSL-dcpe} & 0.82 & 0.29 & 0.7 & 0.1& 0.88& 30.2 & 1.6 \\
    \texttt{cgSSL-dpe} & 0.61& \textbf{0.99} & 0.07& \textbf{0.66 } & 0.9  & 30.4 & 53.4 \\ 
    \texttt{cgSSL-dpe+BB} & 0.61& \textbf{0.99} & -- & 0.38 & 0.93& -- & 57.3 \\
    \texttt{mSSL} & 0.89& 0.25& 0.6 & 0.16& 0.99  & 70.0 & 0.5 \\ \hline
     \multicolumn{8}{c}{Star model} \\ 
      \hline
     \texttt{cgLASSO} & \textbf{0.91} & 0.45& 0.08& 0.7 & 0.31 & 6.5 & \\
    \texttt{CAR} & 0.45& 0.41& 0.1& 0.32& 0.12 & 4.4 & 9.9 \\
    \texttt{CAR-A} & 0.69& 0.68& 0.06& 0.31& 0.35& 2.8 & 9.4 \\
    \texttt{OBFB} & -- & -- & -- & 0 & 0 & -- & 42.6 \\
    \texttt{cgSSL-dcpe} & 0.77& 0.94 & \textbf{0.01} & 0.61& 0.57& \textbf{0.6} & 0.09 \\
    \texttt{cgSSL-dpe} & 0.73& \textbf{1}& \textbf{0.01} & \textbf{0.83} & 0.54& 0.9 & 1.8 \\ 
    \texttt{cgSSL-dpe+BB} & \textbf{0.72} & \textbf{1.0 } & -- & 0.62 & \textbf{0.59} & -- & 3.6  \\ 
    \texttt{mSSL} & 0.76 & 0.91& 0.02& 0.32 & 1 & 1.9 &  0.2\\ \hline
    \multicolumn{8}{c}{Small world model} \\ \hline
    \texttt{cgLASSO} & 0.74 & 0.49 & 0.19 & 0.83 & 0.18 & 307.0 & \\
    \texttt{CAR} & 0.55& 0.35 & 0.12 & \textbf{0.84} & 0.15 & 40.1 & 8.6\\
    \texttt{CAR-A} & \textbf{0.7} & 0.63 & 0.05 & \textbf{0.84} & 0.56& 43.8 & 10.3 \\
    \texttt{cgSSL-dcpe} & 0.52 & 0.9& 0.09 & 0.58 & 0.47& 73.6 & 11.4 \\ 
    \texttt{OBFB} & -- & -- & -- & 0.08&0.10 & -- & 40.8 \\
    \texttt{cgSSL-dpe} & 0.6 & \textbf{0.99} & \textbf{0.04} & 0.73 & 0.58 & \textbf{30.1} & 130.1\\ 
    \texttt{cgSSL-dpe+BB}  &0.6 & 0.99 & --&0.72 & \textbf{0.6} & -- & 135.3 \\
    \texttt{mSSL} & 0.67 & 0.43& 0.16& 0.57& 0.8 & 64.1 & 3.3 \\\hline
     \multicolumn{8}{c}{Tree model} \\ \hline
    \texttt{cgLASSO} & 0.74 & 0.51 & 0.16 & 0.64 & 0.36 & 398 & \\
    \texttt{CAR} & 0.63 & 0.36  & 0.092 & \textbf{0.94 } & 0.15  & 92 & 8.7\\
    \texttt{CAR-A} & \textbf{0.74} & 0.61& \textbf{0.05} & 0.92 & 0.51 & 69 & 9.8 \\
    \texttt{OBFB} & -- & -- & -- & 0.10&0.11 & -- & 48.0 \\
    \texttt{cgSSL-dcpe} & 0.47 & 0.91 & 0.098& 0.71& \textbf{0.88} & 110 & 3.0 \\
    \texttt{cgSSL-dpe} & 0.56 & \textbf{0.99} & 0.06 & 0.86& \textbf{0.88} & \textbf{55} & 18.7 \\ 
    \texttt{cgSSL-dpe+BB}  &0.56 & 0.99 & -- &0.87 & \textbf{0.88} & -- & 21.1\\
    \texttt{mSSL} & 0.62& 0.5& 0.12& 0.65 & 0.89& 120 & 3.6 \\ \hline
    \multicolumn{8}{c}{Dense model} \\ \hline
     \texttt{cgLASSO} & \textbf{0.89} & 0.43& 0.07& \textbf{0.34} & 1& \textbf{712.5} & \\
    \texttt{CAR} & 0.49& 0.39 & 0.1& 0.05 & 1 & 914.5 & 9.4 \\
    \texttt{CAR-A} & 0.7 & 0.64& 0.05& 0.01& 1  & 897.9 & 9.4 \\
    \texttt{OBFB} & -- & -- & -- & 0 & 0 & -- & 56.2 \\
    \texttt{cgSSL-dcpe} & 0.77 & 0.99 & \textbf{0.01} & 0  & 1& 900.0 & 0.09 \\
    \texttt{cgSSL-dpe} & 0.72 & \textbf{1} & \textbf{0.01} & 0.03 & 1  & 901.5 & 1.8 \\ 
    \texttt{cgSSL-dpe+BB} & \textbf{0.72} & \textbf{1} & -- & 0.02& 1 & -- & 15.2 \\ 
    \texttt{mSSL} & 0.76& 0.99 & 0.01 & 0 & -- & 900.0 & 0.005 \\ \hline
    \end{tabular}
\end{table}

\section{Preprocessing for real data experiment}
\label{app:preprocessing}
\citet{claesson2012gut}’s dataset consisted of measurements of the following $q = 14$ most common microbial taxa: \textit{Alistipes, Bacteroides, Barnesiella, Blautia, Butyrivibrio, Caloramator, Clostridium, Eubacterium, Faecalibacterium, Hespellia, Parabacteroides, Ruminococcus, Selenomonas} and \textit{Veillonella}. 
They estimated the marginal effects of each of the following $p = 11$ predictors on each taxa: age, gender, three different stratum type (day hospital, long-term residential care and rehabilitation), five different diet groups including PEG diet, and BMI.
We refer the reader to the Supplementary Notes and Supplementary Table 3 of \citet{claesson2012gut} for specific details about these variables.

To conduct our analysis, we preprocessed the raw 16s-rRNAseq data following the workflow provided on the MG-RAST server \citep{keegan2016mg}. 
We first ``annotated'' the sequences to get genus counts (i.e.~number of segments belongs to one genus). 
The annotation process compares the rRNA segments detected during sequencing to the reference sequence of each genus of microbes, then counts the number of rRNA segments match with each genus. 
We used the MG-RAST server's default tuning parameters during the annotation process.
That is, we set e-value to be 5 and annotated with 60\% identity, alignment length of 15 bp, and set a minimal abundance of 10 reads. 

Following standard practices of analyzing microbiome data, we transformed raw counts into relative abundance.% because the total counts were set by sequencing depth. 
We selected genera with more than 0.5\% relative abundance in more than 50 samples as the focal genus and all other genera aggregated as the reference group. 
We further took the log-odds (with respect to the reference group described above) to stabilize the variances \citep{aitchison1982statistical} in order to fit our normal model.

\section{Proofs of posterior contraction for cgSSL}
\label{appendix:posterior_contraction}
This section provides detail on the posterior contraction results for the cgSSL. 
Our proof was inspired by \citet{Ning2020} and \citet{Bai2020_groupSSL}. 
We first show the contraction in log-affinity by verifying KL condition and test conditions following \citet{ghosal2017fundamentals}. Then we use the results in log-affinity to show recovery of parameters.

To establish our results, we work with a slightly modified prior on $\Omega$ that has density
\begin{align}
\begin{split}
    f_\Omega(\Omega) & \propto \prod_{k>k'}\left[\frac{(1-\eta) \xi_0}{2}\exp\left(-\xi_0|\omega_{k,k'}|\right)+\frac{\eta\xi_1}{2}\exp\left(\xi_1|\omega_{k,k'}|\right)\right] \\
    ~&~~~\times \prod_{k}\xi_1\exp\left[-\xi_1\omega_{k,k}\right] \times \mathbbm{1}(\Omega\succ \tau I) 
    \end{split}
    \\
    f_\Psi(\Psi) & =\prod_{jk}\left[\frac{(1-\theta) \lambda_0}{2}\exp\left(-\lambda_0|\psi_{j,k}|\right)+\frac{\theta\lambda_1}{2}\exp\left(\lambda_1|\psi_{j,k}|\right)\right]
\end{align}
where $0 < \tau < 1/b_{2}.$
This way $\tau$ is less than the lower bound of the smallest eigenvalue of the true precision matrix $\Omega_{0}.$
%	with $1/b_2>\tau>0$, i.e. $\tau$ is less than the lower bound of smallest eigenvalue of true $\Omega_0$.

We will need to estimate the mass of some events under this truncated prior, however, the truncation of the prior on $\Omega$ makes computing these masses intractable.
To overcome this, we first bound the prior probability of events of the form $A\cap \{\Omega\succ \tau I\}$ by observing the prior on $\Omega$ can be viewed as a particular conditional distribution.

Specifically, let $\tilde{\Pi}$ be the \textit{untruncated} spike-and-slab LASSO prior with density
$$
\tilde{f}(\Omega) = \prod_{k>k'}\left[\frac{(1-\eta) \xi_0}{2}\exp\left(-\xi_0|\omega_{k,k'}|\right)+\frac{\eta\xi_1}{2}\exp\left(\xi_1|\omega_{k,k'}|\right)\right] \times \prod_{k}\xi_1\exp\left(-\xi_1\omega_{k,k}\right).
$$

The following Lemma shows that we can bound $\Pi$ probabilities using $\tilde{\Pi}$ probabilities.
\begin{myLemma}[Bounds of the graphical prior]
    \label{lemma:prior_bounds_omega}
Let $\tilde{\Pi}$ be the untruncated version of the prior on $\Omega.$
Then for all events $A,$ for large enough $n$ there is a number $R$ that does not depend on $n$ such that
    \begin{equation}
        \label{eqn:graphical_prior_bound}
        \tilde{\Pi}(\Omega\succ \tau I|A)\tilde{\Pi}(A)\le \Pi_\Omega(A\cap \{\Omega\succ \tau I\})\le \exp(2 \xi_1 Q-\log(R))\tilde{\Pi}(A)
    \end{equation}
    where $Q=q(q-1)/2$ is the total number of free off-diagonal entries in $\Omega.$
\end{myLemma}

\begin{proof}

    Consider an event of form $A\cap \{\Omega\succ \tau I\} \subset \mathbb{R}^{q\times q}$. The prior mass $\Pi_\Omega(A\cap \{\Omega\succ \tau I\})$ can be viewed as a conditional probability:

    \begin{equation}
        \label{eqn:graphical_prior_cond_exp}
        \Pi_\Omega(A\cap \{\Omega\succ \tau I\})=\tilde{\Pi}(A|\Omega\succ \tau I)=\frac{\tilde{\Pi}(\Omega\succ \tau I|A)\tilde{\Pi}(A)}{\tilde{\Pi}(\Omega\succ \tau I)}
    \end{equation}

    The lower bound follows because the denominator is bounded from above by 1.

    For the upper bound, we first observe that
    \begin{equation}
        \Pi_\Omega(A\cap \{\Omega\succ \tau I\})=\tilde{\Pi}(A|\Omega\succ \tau I)=\frac{\tilde{\Pi}(\Omega\succ \tau I|A)\tilde{\Pi}(A)}{\tilde{\Pi}(\Omega\succ \tau I)}\le (\tilde{\Pi}(\Omega\succ \tau I))^{-1} \tilde{\Pi}(A)
    \end{equation}

To upper bound the probability in Equation~\eqref{eqn:graphical_prior_cond_exp}, we find a lower bound of the denominator $\tilde{\Pi}(\Omega\succ \tau I)$. 
To this end, let
$$
\mathcal{G}=\left\{\Omega: \omega_{k,k}> q-1, |\omega_{k,k'}|\le 1-\frac{\tau}{q-1} \text{ for } k'\ne k\right\}
$$
and consider an $\Omega \in \mathcal{G}.$
Since all of $\Omega$'s eigenvalues are real, they must each be contained in at least one Gershgorin disc.
Consider the $k^{\text{th}}$ Gershgorin disc, whose intersection with the real line is an interval centered at $\omega_{k,k}$ with half-width $\sum_{k' \neq k}{\lvert \omega_{k,k'}\rvert}.$
Any eigenvalue of $\Omega$ that lies in this disc must be greater than
$$
\omega_{k,k} - \sum_{k' \neq k}{\lvert \omega_{k,k'}\rvert} > q-1 - (q - 1 - \tau) = \tau
$$
Thus, we have $\mathcal{G}\subset \left\{\Omega \succ \tau I \right\}.$  
    
Since the entries of $\Omega$ are independent under $\tilde{\Pi}$, we compute
    \begin{equation}
        \label{eqn:graphical_prior_lower_bound_PDcone}% it's an interesting result deserve its own number
        \begin{aligned}
            \tilde{\Pi}(\mathcal{G})%&=\prod_k \int_{q-1}^\infty \xi_1 \exp(-\xi_1 \omega_{k,k})d\omega_{k,k}\prod_{k>k'}\int_{|\omega_{k,k'}|\le 1-\frac{\tau}{q-1}}(1-\eta)^{Q}\frac{\xi_0}{2}\exp(-\xi_0|\omega_{k,k'}|)d\omega_{k,k'}\\
            &\ge\prod_k \int_{q-1}^\infty \xi_1 \exp(-\xi_1 \omega_{k,k})d\omega_{k,k}(1-\eta)^{Q}\prod_{k>k'}\int_{|\omega_{k,k'}|\le 1-\frac{\tau}{q-1}}\frac{\xi_0}{2}\exp(-\xi_0|\omega_{k,k'}|)d\omega_{k,k'}\\
            &\ge \exp(-2\xi_1 Q)(1-\eta)^Q \left[1-\frac{\E|\omega_{k,k'}|}{1-\frac{\tau}{q-1}}\right]^Q\\
            &=\exp(-2\xi_1 Q)(1-\eta)^Q\left[1-\frac{1}{\xi_0(1-\frac{\tau}{q-1})}\right]^Q\\
            &\ge \exp(-2\xi_1 Q)\left[1-\frac{1}{1+K_1Q^{2+a}}\right]^Q \left[1-\frac{1}{K_3Q^{2+b}(1-\tau)}\right]^Q\\
            &\ge \exp(-2\xi_1 Q+\log(R)),\\
        \end{aligned}
        \end{equation}
    where $R>0$ does not depend on$n$. 
    Note that the first inequality holds by ignoring the contribution to the probability from slab distribution.
    The second inequality is Markov's inequality and the third inequality follows from our assumptions about how $\xi_{0}$ and $\eta$ are tuned.
\end{proof}

\subsection{The Kullback-Leibler condition}

The first technical lemma we established is the so-called Kullback-Leibler condition, on a high level meaning that our prior places enough probability mass in small neighborhoods around each of the possible values of the true parameters. These neighborhoods are defined by KL divergence and KL variance.

The KL divergence between a Gaussian chain graph model with parameters $(\Psi_{0},\Omega_{0})$ and one with parameters $(\Psi,\Omega)$ is
\begin{equation}
    \begin{aligned}
        &\frac{1}{n}K(f_0,f)=\E_0 \left[\log\left(\frac{f_0}{f}\right)\right]\\
       =&\frac{1}{2}\left(\log\left(\frac{|\Omega_0|}{|\Omega|}\right)-q+\tr(\Omega_0^{-1}\Omega)+\frac{1}{n}\sum_{i=1}^n||\Omega^{1/2}(\Psi\Omega-\Psi_0\Omega_0)^{\top}X_i^{\top}||_2^2\right)
    \end{aligned}
\end{equation}

The KL variance is:
\begin{equation}
    \begin{aligned}
        &\frac{1}{n}V(f_0,f)=Var_{0} \left[\log\left(\frac{f_0}{f}\right)\right]\\
       =&\frac{1}{2}\left(\tr((\Omega_0^{-1}\Omega)^2)-2\tr (\Omega_0^{-1}\Omega) + q \right)+\frac{1}{n}\sum_{i=1}^n||\Omega_0^{-1/2}\Omega(\Psi\Omega^{-1}-\Psi_0\Omega^{-1}_0)^{\top}X_i^{\top}||_2^2
    \end{aligned}
\end{equation}

\begin{myLemma}[KL conditions]
Let $	\epsilon_n=\sqrt{\max\{p,q,s_0^\Omega,s_0^\Psi\}\log(\max\{p,q\})/n}.$ Then for all true parameters $(\Psi_{0},\Omega_{0})$ we have
    \label{lemma:KL_cg}
    \begin{align*}
        -\log\Pi\left[(\Psi,\Omega):K(f_0,f)\le n\epsilon_n^2, V(f_0,f)\le n\epsilon_n^2\right]\le C_1n\epsilon_n^2
    \end{align*}
    %for
    %\begin{align*}
    %	\epsilon_n=\sqrt{\frac{\max\{p,q,s_0^\Omega,s_0^\Psi\}\log(\max\{p,q\})}{n}}
    %\end{align*}
\end{myLemma}

\begin{proof}
    Let $S_{0}^{\Psi}$ and $S_{0}^{\Omega}$ respectively denote the supports of $\Psi$ and $\Omega.$
Similarly, let $s^{\Psi}_{0}$ be the number of true non-zero entries in $\Psi_{0}$ and let $s^{\Omega}_{0}$ be the true number of non-zero off-diagonal entries in $\Omega_{0}$

We need to lower bound the prior probability of the event 
$$
\{(\Psi,\Omega):K(f_0,f)\le n\epsilon_n^2, V(f_0,f)\le n\epsilon_n^2\}
$$ 
for large enough $n$. 

We first obtain an upper bound of the average KL divergence and variance so that the mass of such event can serve as a lower bound. To simplify the notation, we denote $(\Psi-\Psi_0)=\Delta_\Psi$ and $\Omega-\Omega_0=\Delta_\Omega$. We observe that $\Psi\Omega^{-1}-\Psi_0\Omega^{-1}_0=(\Delta_\Psi-\Psi_0\Omega_0^{-1}\Delta_\Omega)\Omega^{-1}$. 

Using the fact that $||A-B||_2^2\le(||A||_2+||B||_2)^2\le 2||A||_2^2+2||B||_2^2$ for any two matrices $A$ and $B,$ we obtain a simple upper bound:
%\begin{align*}
%    ||A-F||_2^2\le(||A||_2+||F||_2)^2\le 2||A||_2^2+2||F||_2^2,
%\end{align*}
%we obtain a simple upper bound:
\begin{equation}
    \begin{aligned}
        &\frac{1}{n}K(f_0,f)\\
       =&\frac{1}{2}\left(\log\left(\frac{|\Omega_0|}{|\Omega|}\right)-q+\tr(\Omega_0^{-1}\Omega)+\frac{1}{n}\sum_{i=1}^n||\Omega^{-1/2}\Delta_\Psi^{\top}X_i^{\top}-\Omega^{-1/2}\Delta_ \Omega\Omega_0^{-1} \Psi_0^{\top} X_i^{\top} ||_2^2\right)\\
       \le&\frac{1}{2}\left(\log\left(\frac{|\Omega_0|}{|\Omega|}\right)-q+\tr(\Omega_0^{-1}\Omega)\right)+\frac{1}{n}\sum_{i=1}^n||\Omega^{-1/2}\Delta_ \Omega\Omega_0^{-1} \Psi_0^{\top} X_i^{\top} ||_2^2\\
       &+\frac{1}{n}\sum_{i=1}^n||\Omega^{-1/2}\Delta_\Psi^{\top}X_i^{\top}||_2^2\\
       =&\frac{1}{2}\left(\log\left(\frac{|\Omega_0|}{|\Omega|}\right)-q+\tr(\Omega_0^{-1}\Omega)\right)+\frac{1}{n}||X\Psi_0\Omega_0^{-1}\Delta_\Omega\Omega^{-1/2}||_F^2\\
       &+\frac{1}{n}||X\Delta_\Psi\Omega^{-1/2}||_F^2\\
    \end{aligned}
\end{equation}

The last line holds because $\Omega^{-1/2}\Delta_\Psi^{\top}X_i^{\top}$ is the $i^{\text{th}}$ row of $X\Delta_\Psi\Omega^{-1/2}$. 

Using the same inequality, we derive a similar upper bound for the average KL variance:
\begin{equation}
    \begin{aligned}
        &\frac{1}{n}V(f_0,f)\\
       =&\frac{1}{2}\left(\tr((\Omega_0^{-1}\Omega)^2)-2\tr (\Omega_0^{-1}\Omega) + q \right)+\frac{1}{n}\sum_{i=1}^n ||\Omega_0^{-1/2}\Delta_\Psi^{\top} X_i^{\top}- \Omega_0^{-1/2} \Delta_\Omega\Omega_0^{-1} \Psi_0^{\top} X_i^{\top}||_2^2\\
       \le&\frac{1}{2}\left(\tr((\Omega_0^{-1}\Omega)^2)-2\tr (\Omega_0^{-1}\Omega) + q \right)+\frac{2}{n}\sum_{i=1}^n || \Omega_0^{-1/2}\Delta_\Omega\Omega_0^{-1} \Psi_0^{\top} X_i^{\top}||_2^2\\
       &+\frac{2}{n}\sum_{i=1}^n ||\Omega_0^{-1/2}\Delta_\Psi^{\top} X_i^{\top}||_2^2\\
       =&\frac{1}{2}\left(\tr((\Omega_0^{-1}\Omega)^2)-2\tr (\Omega_0^{-1}\Omega) + q \right)+\frac{2}{n}|| X\Psi_0 \Omega_0^{-1} \Delta_\Omega \Omega_0^{-1/2}||_F^2+\frac{2}{n}||X\Delta_\Psi\Omega_0^{-1/2}||_F^2
    \end{aligned}
\end{equation}

We find event $\mathcal{A}_1$ involving only $\Delta_\Omega$ and event $\mathcal{A}_2$ involving both $\Delta_\Omega$ and $\Delta_\Psi$ such that $(\mathcal{A}_1\cap \{\Omega\succ 0\})\cap\mathcal{A}_2$ is a subset of the event of interest $\{ K/n\le \epsilon_n^2, V/n\le\epsilon_n^2 \}$. 

To this end, define
\begin{equation}
    \label{eqn:A1_due_to_Omega}
    \begin{aligned}
        \mathcal{A}_1=&\left\{ \Omega: \frac{1}{2}\left(\tr((\Omega_0^{-1}\Omega)^2)-2\tr (\Omega_0^{-1}\Omega) + q \right)+\frac{2}{n}|| X\Psi_0 \Omega_0^{-1} \Delta_\Omega \Omega_0^{-1/2}||_F^2\le \epsilon_n^2/2\right\} \\
        &\bigcap\left\{\frac{1}{2}\left(\log\left(\frac{|\Omega_0|}{|\Omega|}\right)-q+\tr(\Omega_0^{-1}\Omega)\right)+\frac{1}{n}||X\Psi_0\Omega_0^{-1}\Delta_\Omega\Omega^{-1/2}||_F^2\le \epsilon_n^2/2 \right\}
    \end{aligned}
\end{equation}
and 
\begin{equation}
    \label{eqn:A2_condition_on_A1}
    \begin{aligned}
        \mathcal{A}_2=\{ (\Omega, \Psi): &\frac{1}{n}||X\Delta_\Psi\Omega_0^{-1/2}||_F^2\le \frac{\epsilon_n^2}{2},\frac{2}{n}||X\Delta_\Psi\Omega^{-1/2}||_F^2  \le \frac{\epsilon_n^2}{2}  \}
    \end{aligned}
\end{equation}

We separately bound the prior probabilities $\Pi(\mathcal{A}_1)$ and $\Pi(\mathcal{A}_1|\mathcal{A}_2)$. 

To lower bound $\Pi(\mathcal{A}_1)$, first consider the event
\begin{align*}
    \mathcal{A}_1^\star=\{2\sum_{k>k'} |\omega_{0,k,k'}-\omega_{k,k'}|+\sum_k |\omega_{0,k,k}-\omega_{k,k}|\le \frac{\epsilon_n}{c_1\sqrt{p}}\}%\subset\{||\Omega_0-\Omega||_F\le \frac{\epsilon_n}{c_1\sqrt{p}}\}\subset \mathcal{A}_1
\end{align*}
where $c_{1} > 0$ is a constant to be specified.
Since the Frobenius norm is bounded by the vectorized L1 norm, we immediately conclude that 
$$
A_{1}^{\star} \subset \left\{ \lVert \Omega_{0} - \Omega \rVert_{F} \leq \frac{\epsilon_{n}}{c_{1}\sqrt{p}}\right\}.
$$
We now show that $\left\{ \lVert \Omega_{0} - \Omega \rVert_{F} \leq \frac{\epsilon_{n}}{c_{1}\sqrt{p}} \right\}\subset \mathcal{A}_1$.

Since the Frobenius norm bounds the L2 operator norm, if $||\Omega_0-\Omega||_F\le \frac{\epsilon_n}{c_1 \sqrt{p}}$ then the absolute value of the eigenvalues of $\Omega - \Omega_{0}$ are bounded by $\frac{\epsilon_n}{c_1 \sqrt{p}}.$
Further, because we have assumed $\Omega_{0}$ has bounded spectrum, the spectrum of $\Omega=\Omega_0+\Omega-\Omega_0$ is bounded by $\lambda_{\min}-\frac{\epsilon_n}{c_1 \sqrt{p}}$ and $\lambda_{\max}+\frac{\epsilon_n}{c_1 \sqrt{p}}.$
When $n$ is large enough, these quantities are further bounded by $\lambda_{\min}/2$ and $2\lambda_{\max}.$
Thus, for $n$ large enough, if $||\Omega_0-\Omega||_F\le \frac{\epsilon_n}{c_1 \sqrt{p}},$ then we know $\Omega$ has bounded spectrum. 

Consequently, $\Omega^{-1/2}$ has bounded L2 operator norm.
Using the fact that $||AB||_F\le \min(|||A|||_2||B||_F,|||B|||_2||A||_F)$, we have for some constant $c_{2}$ not depending on $n,$

\begin{align*}
    \frac{2}{n}|| X\Psi_0 \Omega_0^{-1} \Delta_\Omega \Omega_0^{-1/2}||_F^2&\le \frac{2}{n}|||X\Psi_0|||^2_2||\Omega_0^{-1} \Delta_\Omega \Omega_0^{-1/2}||_F^2\\
    &\le \frac{2}{n}||X||_F^2|||\Psi_0|||_2^2|||\Omega_0^{-1} \Delta_\Omega \Omega_0^{-1/2}||_F^2\\
    &\le pc_2^2||\Delta_\Omega||_F^2,
\end{align*}
where we have used the fact that $||X||_F=\sqrt{np}.$
Thus $||\Delta_\Omega||_F\le \frac{\epsilon_n}{2c_2\sqrt{p}}$ implies 
$$\frac{1}{n}|| X\Psi_0 \Omega_0^{-1} \Delta_\Omega \Omega_0^{-1/2}||_F^2\le \epsilon_n^2/4.$$

Similarly, for some constant $c_3$, we have that 
\begin{align*}
    \frac{1}{n}||X\Psi_0\Omega_0^{-1}\Delta_\Omega\Omega^{-1/2}||_F^2&\le \frac{1}{n}||X||_F^2||\Psi_0||_2^2 ||\Omega_0^{-1}\Delta_\Omega\Omega^{-1/2}||_F^2\\
    &\le pc_3^2||\Delta_\Omega||_F^2
\end{align*}
Thus we have $||\Delta_\Omega||_F\le \frac{\epsilon_n}{2c_3\sqrt{p}}$ implies $\frac{1}{2n}||X\Psi_0\Omega_0^{-1}\Delta_\Omega\Omega^{-1/2}||_F^2\le \epsilon_n^2/4$

We have $||\Delta_\Omega||_F\le \frac{\epsilon_n}{2b_2\sqrt{p}} \le \epsilon_n/2b_2$ implies the following two inequalities
\begin{align*}
\frac{1}{2}(\tr((\Omega_0^{-1}\Omega)^2)-2\tr (\Omega_0^{-1}\Omega) + q ) &\le \epsilon_n^2/4 \\
\frac{1}{2}(\log(\frac{|\Omega_0|}{|\Omega|})-q+\tr(\Omega_0^{-1}\Omega)) &\le \epsilon_n^2/4.
\end{align*}
Thus by taking $c_1=2\max\{c_2,c_3,b_2\}$, we can conclude $\{||\Omega_0-\Omega||_F\le \frac{\epsilon_n}{c_1\sqrt{p}}\}\subset \mathcal{A}_1$. 
Thus $\mathcal{A}_1^\star \subset \mathcal{A}_1$

Since $\mathcal{A}_1^\star\in \{\Omega: ||\Omega_0-\Omega||_F\le \epsilon_n/c_1\sqrt{p}\},$ we know that  
$\tilde{\Pi}(\Omega\succ \tau I|\mathcal{A}_1^\star)=1.$
We can therefore lower bound $\Pi(\mathcal{A}_1)$ by $\Pi(\mathcal{A}_1^*\cap \{\Omega\succ \tau I\})$.
Instead of calculating the latter probability directly, we can lower bound it by observing
\begin{align*}
    &2\sum_{k>k'} |\omega_{0,k,k'}-\omega_{k,k'}|+\sum_k |\omega_{0,k,k}-\omega_{k,k}|\\
    =&2\sum_{(k,k')\in S_0^\Omega} |\omega_{0,k,k'}-\omega_{k,k'}| + 2\sum_{(k,k')\in (S_0^\Omega)^c} |\omega_{k,k'}|+\sum_{k} |\omega_{0,k,k}-\omega_{k,k}|.
\end{align*}

Consider the following events
\begin{align*}
    \mathcal{B}_1&=\left\{ \sum_{(k,k')\in S_0^\Omega} |\omega_{0,k,k'}-\omega_{k,k'}|\le \frac{\epsilon_n}{6c_1\sqrt{p}}  \right\}\\
    \mathcal{B}_2&=\left\{\sum_{(k,k')\in (S_0^\Omega)^c} |\omega_{k,k'}|\le \frac{\epsilon_n}{6c_1\sqrt{p}}\right\}\\
    \mathcal{B}_3&=\left\{ \sum_{k} |\omega_{0,k,k}-\omega_{k,k}| \le  \frac{\epsilon_n}{3c_1\sqrt{p}}\right\}
\end{align*}

Let $\mathcal{B}=\bigcap_{i=1}^3\mathcal{B}_i\subset \mathcal{A}_1^*\subset \mathcal{A}_1$. 
Since the prior probability of $\mathcal{B}$ lower bounds $\Pi(\mathcal{A}_1)$, we now focus on estimating $\tilde{\Pi}(\mathcal{B})$. 
Recall that the untruncated prior $\tilde{\Pi}$ is separable. 
Consequently, 
\begin{align*}
    \Pi(\mathcal{A}_1\cap \{\Omega\succ \tau I\})\ge \tilde{\Pi}(\mathcal{A}_1)\ge \tilde{\Pi}(\mathcal{B})=\prod_{i=1}^3\tilde{\Pi}(\mathcal{B}_i)
\end{align*}

We first bound the probability of $\mathcal{B}_{1}.$ 
Note that we can use only the slab part of the prior to bound this probability.
%A similar technique was used by \citet{Bai2020_groupSSL} (specifically in their Equation D.18) and by \citet{RockovaGeorge2018_ssl}.
Specifically, we have
\begin{align*}
    \tilde{\Pi}(\mathcal{B}_1)&=\int_{\mathcal{B}_1}\prod_{(k,k')\in S_0^\Omega}   \pi(\omega_{k,k'}|\eta)d\mu\\
    &\ge \prod_{(k,k')\in S_0^\Omega}  \int_{|\omega_{0,k,k'}-\omega_{k,k'}|\le \frac{\epsilon_n}{6s_0^\Omega c_1\sqrt{p}}} \pi(\omega_{k,k'}|\eta) d\omega_{k,k'}\\
    &\ge \eta^{s_0^\Omega} \prod_{(k,k')\in S_0^\Omega}  \int_{|\omega_{0,k,k'}-\omega_{k,k'}|\le \frac{\epsilon_n}{6s_0^\Omega c_1\sqrt{p}}} \frac{\xi_1}{2} \exp(-\xi_1|\omega_{k,k'}|) d\omega_{k,k'}\\
    &\ge \eta^{s_0^\Omega} \exp(-\xi_1\sum_{(k,k')\in S_0^\Omega}|\omega_{0,k,k'}|)\prod_{(k,k')\in S_0^\Omega}  \int_{|\omega_{0,k,k'}-\omega_{k,k'}|\le \frac{\epsilon_n}{6s_0^\Omega c_1\sqrt{p}}} \frac{\xi_1}{2} \exp(-\xi_1|\omega_{0,k,k'}-\omega_{k,k'}|) d\omega_{k,k'}\\
    &=\eta^{s_0^\Omega}\exp(-\xi_1||\Omega_{0,S_0^\Omega}||_1) \prod_{(k,k')\in S_0^\Omega} \int_{|\Delta| \le \frac{\epsilon_n}{6s^\Omega_0c_1\sqrt{p}}}\frac{\xi_1}{2} \exp(-\xi_1|\Delta|)d\Delta\\
    &\ge \eta^{s_0^\Omega}\exp(-\xi_1||\Omega_{0,S_0^\Omega}||_1)\left[e^{-\frac{\xi_1\epsilon_n}{6c_1s^\Omega_0\sqrt{p}}}\left(\frac{\xi_1\epsilon_n}{6s^\Omega_0c_1\sqrt{p}}\right)\right]^{s^\Omega_0}
\end{align*}

The first inequality holds because the fact that $|\omega_{0,k,k'}-\omega_{k,k'}|\le \epsilon_n/(6s_0^\Omega c_1\sqrt{p})$ implies that the sum less than $\epsilon_n/(6 c_1\sqrt{p}).$
%The last inequality is a special case of Equation D.18 of \citet{Bai2020_groupSSL}.

For $\mathcal{B}_2,$ we derive the lower bound using the spike component of the prior.
To this end, let $Q=q(q-1)/2$ denote the number of off-diagonal entries of matrix $\Omega$. 
We have
\begin{align*}
    \tilde{\Pi}(\mathcal{B}_2)&=\int_{\mathcal{B}_2}\prod_{(k,k')\in (S_0^\Omega)^c}   \pi(\omega_{k,k'}|\eta)d\mu\\
    &\ge \prod_{(k,k')\in (S_0^\Omega)^c} \int_{|\omega_{k,k'}|\le \frac{\epsilon_n}{6(Q-s_0^\Omega)c_1\sqrt{p}}} \pi(\omega_{k,k'}|\eta)d\mu\\
    &\ge (1-\eta)^{Q-s_0^\Omega} \prod_{(k,k')\in (S_0^\Omega)^c} \int_{|\omega_{k,k'}|\le \frac{\epsilon_n}{6(Q-s_0^\Omega)c_1\sqrt{p}}} \frac{\xi_0}{2} \exp(-\xi_0|\omega_{k,k'}|)d\omega_{k,k'} \\
    &\ge(1-\eta)^{Q-s_0^\Omega}\prod_{(k,k')\in (S_0^\Omega)^c}\left[1-\frac{6(Q-s_0^\Omega)c_1\sqrt{p}}{\epsilon_n}\mathbb{E}_{\pi}|\omega_{k,k'}|\right]\\
    &=(1-\eta)^{Q-s_0^\Omega}\left[1-\frac{6(Q-s_0^\Omega)c_1\sqrt{p}}{\epsilon_n\xi_0}\right]^{Q-s_0^\Omega}\\
    &\gtrsim (1-\eta)^{Q-s_0^\Omega} \left[1-\frac{1}{Q-s_0^\Omega}\right]^{Q-s_0^\Omega}\\
    &\asymp (1-\eta)^{Q-s_0^\Omega}
\end{align*}

%To derive the last two lines, we used an argument similar to the one used by \citet{Bai2020_groupSSL} to derive Equation D.22. 
To derive the last two lines, we used the assumption that $\xi_0\sim \max\{Q,n,pq\}^{4+b}$ for some $b>0$ to conclude that $\sqrt{n}/\max\{Q,n,pq\}^{1/2+b}\le 1.$
This inequality allows us to control the $Q$ in the numerator.
Since $s_{0}^{\Omega}$ grows slower than $Q,$ we can lower bound the above function some multiplier of the form $(1-\eta)^{Q-s_0^\Omega}.$
Thus, for large enough $n$, we have
\begin{align*}
	\frac{6(Q-s_0^\Omega)c_1\sqrt{p}}{\epsilon_n\xi_0}&\le \frac{6(Q-s_0^\Omega)c_1\sqrt{p}\sqrt{n}}{\sqrt{p\log(q)}Q^{2+b}}\\
	&=\frac{6c_1}{\sqrt{\log(q)}}\frac{Q-s^\Omega_0}{Q^2}\frac{\sqrt{n}}{Q^b}\\
	&\le \frac{Q-s^\Omega_0}{Q^2}\\
	&\le \frac{1}{Q-s^\Omega_0}
\end{align*}
The event $\mathcal{B}_3$ only involves diagonal entries. 
The untruncated prior mass can be directly bounded using the exponential distribution
\begin{align*}
    \tilde{\Pi}(\mathcal{B}_3)&=\int_{\mathcal{B}_3}\prod_{k=1}^q   \pi(\omega_{k,k})d\mu\\
    &\ge \prod_{k=1}^q  \int_{|\omega_{0,k,k}-\omega_{k,k}|\le \frac{\epsilon_n}{3q c_1\sqrt{p}}} \pi(\omega_{k,k}) d\omega_{k,k}\\
    &=\prod_{k=1}^q  \int_{\omega_{0,k,k}- \frac{\epsilon_n}{3 q c_1\sqrt{p}}}^{\omega_{0,k,k}+ \frac{\epsilon_n}{3 q c_1\sqrt{p}}} \xi_1 \exp(-\xi_1\omega_{k,k}) d\omega_{k,k}\\
    &\ge \prod_{k=1}^q  \int_{\omega_{0,k,k}}^{\omega_{0,k,k}+ \frac{\epsilon_n}{3 q c_1\sqrt{p}}} \xi_1 \exp(-\xi_1\omega_{k,k}) d\omega_{k,k}\\
    &=\exp(-\xi_1 \sum_{i=1}^q\omega_{0,k,k})\int_{0}^{ \frac{\epsilon_n}{3 q c_1\sqrt{p}}} \xi_1 \exp(-\xi_1\omega_{k,k}) d\omega_{k,k}\\
    &\ge \exp(-\xi_1 \sum_{i=1}^q\omega_{0,k,k})\left[e^{-\frac{\xi_1\epsilon_n}{3c_1q\sqrt{p}}}\left(\frac{\xi_1\epsilon_n}{3qc_1\sqrt{p}}\right)\right]^{q}
\end{align*}

Now we are ready to show that the log prior mass on $\mathcal{B}$ can be bounded by some $C_1n\epsilon_n^2$. To this end, consider the negative log probability
\begin{align*}
    -&\log(\Pi(\mathcal{A}_1\cap \{\Omega\succ \tau I\}))\\
    \le& \sum_{i=1}^3 -\log(\tilde{\Pi}(\mathcal{B}_i)) \\
    \lesssim& -s_0^\Omega \log(\eta) + \xi_1||\Omega_{0,S_0^\Omega}||_1 + \frac{\xi_1 \epsilon_n}{6c_1\sqrt{p}}-s_0^\Omega \log\left(\frac{\xi_1\epsilon_n}{6s^\Omega_0c_1\sqrt{p}}\right)-
    (Q-s_0^\Omega) \log(1-\eta)\\
    &+\xi_1\sum_k \omega_{0,k,k}+\frac{\xi_1\epsilon_n}{3c_1\sqrt{p}}-q\log\left(\frac{\xi_1\epsilon_n}{3qc_1\sqrt{p}}  \right)\\
    =&-\log\left( \eta^{s_0^\Omega} (1-\eta)^{Q-s_0^\Omega}  \right) + \xi_1||\Omega_{0,S_0^\Omega}||_1 + \frac{\xi_1 \epsilon_n}{6c_1\sqrt{p}}+\xi_1\sum_k \omega_{0,k,k}+\frac{\xi_1\epsilon_n}{3c_1\sqrt{p}}\\
    &-s_0^\Omega \log\left(\frac{\xi_1\epsilon_n}{6s^\Omega_0c_1\sqrt{p}}\right)-q\log\left(\frac{\xi_1\epsilon_n}{3qc_1\sqrt{p}}  \right)\\
\end{align*}

The $\frac{\xi_1 \epsilon_n}{6c_1\sqrt{p}}$ and $\frac{\xi_1\epsilon_n}{3c_1\sqrt{p}}$ terms are $O(\epsilon_n/\sqrt{p})\lesssim n\epsilon_n^2$ which goes to infinity.
The 4th term is of order $q$ since the diagonal entries is controlled by the largest eigenvalue of $\Omega,$ which was assumed to be bounded. 
\begin{align*}
    \xi_1||\Omega_{0,S_0^\Omega}||_1\le \xi_1 s_0^\Omega \sup|\omega_{0,k,k'}|
\end{align*}
is of order $s_0^\Omega$ as the entries of $\omega_{0,k,k'}$ is controlled. 

Without tuning of $\eta$, the first term $-\log\left( \eta^{s_0^\Omega} (1-\eta)^{Q-s_0^\Omega}  \right)$ has order of $Q$. 
But since we assumed $\frac{1-\eta}{\eta}\sim \max\{Q,pq\}^{2+a}$ for some $a>0$, we have $K_1 \max\{Q,pq\}^{2+a} \le \frac{1-\eta}{\eta}\le K_2 \max\{Q,pq\}^{2+a}$. 
That is, we have $1/(1+K_2 \max\{Q,pq\}^{2+a})\le \eta \le 1/(1+K_1\max\{Q,pq\}^{2+a}).$
We can derive a simple lower bound as
\begin{align*}
    \eta^{s_0^\Omega} (1-\eta)^{Q-s_0^\Omega}&\ge (1+K_2 \max\{Q,pq\}^{2+a})^{-s_0^\Omega}(1-\eta)^{Q-s_0^\Omega}\\
    &\ge (1+K_2 \max\{Q,pq\}^{2+a})^{-s_0^\Omega}\left(1-\frac{1}{1+K_1\max\{Q,pq\}^{2+a}}\right)^{Q-s_0^\Omega}\\
    &\gtrsim (1+K_2 \max\{Q,pq\}^{2+a})^{-s_0^\Omega}
\end{align*}

The last line is because $\max\{Q,pq\}^{2+a}$ grows faster than $Q-s^\Omega_0.$
Thus $(1-\frac{1}{1+K_1\max\{Q,pq\}^{2+a}})^{\max\{Q,pq\}-s_0^\Omega}$ can be bounded below by some constant. 
\begin{align*}
    -\log\left( \eta^{s_0^\Omega} (1-\eta)^{Q-s_0^\Omega}  \right) &\lesssim s_0^\Omega \log(1+K_2 \max\{Q,pq\}^{2+a})\lesssim s_0^\Omega \log(\max\{Q,pq\})\\&\asymp s_0^\Omega\log(\max\{q,p\})\le \max(p,q,s_0^\Omega) \log(\max\{q,p\})
\end{align*}

The last two terms can be treated in the same way, using the assumption $\xi_1\asymp 1/\max\{Q,n\}.$
\begin{align*}
	-s_0^\Omega \log\left(\frac{\xi_1\epsilon_n}{6s^\Omega_0c_1\sqrt{p}}\right)&=s_0^\Omega \log\left(\frac{6s^\Omega_0c_1\sqrt{p}}{\xi_1\epsilon_n}\right)\\
	&\lesssim s_0^\Omega \log\left(\frac{n^{1/2}\max\{n,Q\}s_0^\Omega \sqrt{p}}{\sqrt{\max\{s_0^\Omega,p,q\}\log(q)}}\right)\\
	&\le s_0^\Omega \log\left(n^{1/2}\max\{n,Q\}s_0^\Omega \right)\\
	&\lesssim s_0^\Omega\log(q^2)\\
	&\lesssim n\epsilon_n^2
\end{align*}
The third line holds because $\sqrt{p}\le \sqrt{\max\{s_0^\Omega, p,q\}}$ and $\log(q)\ge 1,$ which together imply that $\sqrt{p}/\sqrt{\max\{s_0^\Omega, p,q\}\log(q)}\le 1$. 
The fourth line follows from our assumption that $\log(n)\lesssim \log(q)$ because $s_0^\Omega <q^2$ and $Q<q^2$. The last line uses the definition of $\epsilon_n$.  

Finally, we have
\begin{align*}
-q \log\left(\frac{\xi_1\epsilon_n}{3qc_1\sqrt{p}}\right)&=q \log\left(\frac{3qc_1\sqrt{p}}{\xi_1\epsilon_n}\right)\\
&\lesssim q \log\left(\frac{n^{1/2}\max\{Q,n\}q \sqrt{p}}{\sqrt{\max\{s_0^\Omega,p,q\}\log(q)}}\right)\\
&\le q \log\left(n^{1/2}\max\{Q,n\}q \right)\\
&\lesssim q\log(q)\\
&\lesssim n\epsilon_n^2
\end{align*}

To bound $\Pi(\mathcal{A}_{2} \vert \mathcal{A}_{1}),$ we used a similar strategy as the one above.
but on $\Psi.$
We show that mass on a L1 norm ball serves as a lower bound similar to that of $\Omega$. 
To see that, we show that powers of $\Omega$ and $\Omega_{0}$ are bounded in operator norm. 
Thus the terms $\frac{1}{n}||X\Delta_\Psi\Omega_0^{-1/2}||_F^2$ and $\frac{2}{n}||X\Delta_\Psi\Omega^{-1/2}||_F^2 $ that appear in the KL condition are bounded by a constant multiplier of $n^{-1}||X\Delta_\Psi||_F^2.$
Using the fact that the columns of $X$ have norm $\sqrt{n},$ we can found this norm:
%appeared in the KL neighborhood are bounded by a constant multiplier of $n^{-1}||X\Delta_\Psi||_F^2$. Use the fact that $X$ is column normalized to norm of $\sqrt{n}$, this norm can be bounded:
\begin{align*}
    ||X\Delta_\Psi||_F\le \sqrt{n}\sum_{j=1}^p||\Delta_{\Psi,j,.}||_F\le \sqrt{n}\sum_{j=1}^p\sum_{k=1}^q|\psi_{j,k}-\psi_{0,j,k}|
\end{align*}

Thus to bound $\Pi(\mathcal{A}_2|\mathcal{A}_1)$ from below, it suffices to bound $\Pi(\sum |\psi_{j,k}-\psi_{0,j,k}|\le c_4\epsilon_n)$ for some fixed constant $c_4>0$. 

We separate the sum based on whether the true value is 0, similar to our treatment on $\Omega$:
\begin{align*}
    \sum_{ij} |\psi_{j,k}-\psi_{0,j,k}|&=\sum_{(j,k)\in S_0^\Psi}|\psi_{j,k}-\psi_{0,j,k}|+\sum_{(j,k)\in (S_0^\Psi)^c}|\psi_{j,k}|
\end{align*}

Using the same argument as in $\Omega$, we can consider the events whose intersection is a subset of $\mathcal{A}_2$:
\begin{align*}
    \mathcal{B}_4&=\left\{ \sum_{(j,k)\in S_0^\Psi}|\psi_{j,k}-\psi_{0,j,k}|\le \frac{c_4\epsilon_n}{2}  \right\}\\
    \mathcal{B}_5&=\left\{ \sum_{(j,k)\in (S_0^\Psi)^c}|\psi_{j,k}-\psi_{0,j,k}|\le \frac{c_4\epsilon_n}{2}  \right\}\\
\end{align*}
We have $\mathcal{B}_4\cap \mathcal{B}_5\subset\mathcal{A}_2.$ 
Since the elements of $\Psi$ are \textit{a priori} independent of each other and of $\Omega,$ we compute 
\begin{align*}
    \Pi(\mathcal{A}_2|\mathcal{A}_1)\ge \Pi(\mathcal{B}_4|\mathcal{A}_1)\Pi(\mathcal{B}_5|\mathcal{A}_1)=\Pi(\mathcal{B}_4)\Pi(\mathcal{B}_5)
\end{align*}

We bound each of these terms using the same argument as in the previous subsection:
\begin{align*}
    \Pi(\mathcal{B}_4)&=\int_{\mathcal{B}_4}\prod_{(j,k)\in S_0^\Psi}\pi(\psi_{j,k}|\theta)d\mu\\
    &\ge \prod_{(j,k)\in S_0^\Psi}\int_{|\psi_{j,k}-\psi_{0,j,k}|\le \frac{c_4\epsilon_n}{2s_0^\Psi}}\pi(\psi_{j,k}|\theta)d\psi_{j,k}\\
    &\ge \theta^{s_0^\Psi}\prod_{(j,k)\in S_0^\Psi}\int_{|\psi_{j,k}-\psi_{0,j,k}|\le \frac{c_4\epsilon_n}{2s_0^\Psi}}\frac{\lambda_1}{2}\exp(-\lambda_1|\psi_{j,k}|)d\psi_{j,k}\\
    &\ge \theta^{s_0^\Psi}\exp(-\lambda_1\sum_{(j,k)\in S_0^\Psi}|\psi_{0,j,k}|)\prod_{(j,k)\in S_0^\Psi}\int_{|\psi_{j,k}-\psi_{0,j,k}|\le \frac{c_4\epsilon_n}{2s_0^\Psi}} \frac{\lambda_1}{2}\exp(-\lambda_1|\psi_{j,k}-\psi_{0,j,k}|)d\psi_{j,k}\\
    &=\theta^{s_0^\Psi}\exp(-\lambda_1\sum_{(j,k)\in S_0^\Psi}|\psi_{0,j,k}|)\prod_{(j,k)\in S_0^\Psi}\int_{|\Delta|\le \frac{c_4\epsilon_n}{2s_0^\Psi}} \frac{\lambda_1}{2}\exp(-\lambda_1|\Delta|)d\Delta\\
    &\ge \theta^{s_0^\Psi}\exp(-\lambda_1||\Psi_{0,S_0^\Psi}||_1)\left[e^{-\frac{c_4\lambda_1\epsilon_n}{2s_0^\Psi}}\frac{c_4\epsilon_n}{2s_0^\Psi}\right]^{s_0^\Psi}
\end{align*}

Similarly, we have
\begin{align*}
    \Pi(\mathcal{B}_5)&\ge (1-\theta)^{pq-s_0^\Psi}\left[1-\frac{2(pq-s_0^\Psi)c_4}{\epsilon_n\lambda_0}\right]^{pq-s_0^\Psi}\\
    &\gtrsim (1-\theta)^{pq-s_0^\Psi}
\end{align*}

From here we have
\begin{align*}
    -\log(\Pi(\mathcal{A}_2|\mathcal{A}_1))&\le -\log(\Pi(\mathcal{B}_4))-\log(\Pi(\mathcal{B}_5))\\
    &=-\log(\theta^{s_0^\Psi}(1-\theta)^{pq-s_0^\Psi})+\lambda_1||\Psi_{0,S_0^\Psi}||_1+\frac{\lambda_1c_4\epsilon_n}{2}-s_\Psi^0\log\left(\frac{c_4\epsilon_n}{2s_0^\Psi}\right)
\end{align*}

Since $\Psi_0$ has bounded L2 operator norm, we know that the entries of $\Psi_0$ are all bounded. 
%This is because the L2 operator norm bounds column-wise L2 norm thus bounds each entries. 
Thus$\lambda_1||\Psi_{0,S_0^\Psi}||_1=O(s_0^\Psi)\lesssim n\epsilon_n^2$. 
The last two terms are $O(\epsilon_n)\lesssim n\epsilon_n^2$.

For the first term, recall that we assumed $\frac{1-\theta}{\theta}\sim (pq)^{2+b}$ for some $b>0.$
That is, there are constants $M_{3}$ and $M_{4}$ such that $M_3(pq)^{2+b}\le \frac{1-\theta}{\theta} \le M_4(pq)^{2+b}\le \frac{1-\theta}{\theta}$. 
Since $1/(1+M_4(pq)^{2+b})\le \theta \le 1/(1+M_3(pq)^{2+b}),$ we compute
\begin{align*}
    \theta^{s_0^\Psi}(1-\theta)^{pq-s_0^\Psi}&\ge(1+M_4(pq)^{2+b})^{-s_0^\Psi}(1-\theta)^{pq-s_0^\Psi}\\
    &\ge(1+M_4(pq)^{2+b})^{-s_0^\Psi}\left(1-1/(1+M_3(pq)^{2+b})\right)^{pq-s_0^\Psi}\\
    &\gtrsim (1+M_4(pq)^{2+b})^{-s_0^\Psi}
\end{align*}
Note that the last line is due to the fact that $(pq)^{2+b}$ grows faster than $pq-s_0^\Psi.$
Consequently, the term $\left(1-1/(1+M_3(pq)^{2+b})\right)^{pq-s_0^\Psi}$ can be bounded from below by a constant not depending on $n.$
Thus,
%Last line is due to $(pq)^{2+b}$ grows faster than $pq-s_0^\Psi$ thus $\left(1-1/(1+M_3(pq)^{2+b})\right)^{pq-s_0^\Psi}$ can be bounded from below with some constant not depends on $n$. Thus:
\begin{align*}
    -\log\left(\theta^{s_0^\Psi}(1-\theta)^{pq-s_0^\Psi}\right)\lesssim s_0^\Psi\log(1+M_4(pq)^{2+b})\lesssim s_0^\Psi\log(pq)\lesssim s_0^\Psi\max\{\log(q),\log(p)\}
\end{align*}

For the last term, we use the same argument as we did with $\Omega$ to conclude.
\begin{align*}
	-s_\Psi^0\log\left(\frac{c_4\epsilon_n}{2s_0^\Psi}\right)&=s_\Psi^0\log\left(\frac{2s_0^\Psi}{c_4\epsilon_n}\right)\\
	&\lesssim s_0^\Psi\log\left(\frac{\sqrt{n}}{\sqrt{\log(pq)}}\right)\\
	&\le s_0^\Psi\log(\sqrt{n})\\
	&\lesssim n\epsilon_n^2
\end{align*}

\end{proof}

\begin{myCorollary}
    Let $E_{n}$ be the event 
    $$
    E_n=\{Y:\iint f/f_0d\Pi(\Psi)\Pi d(\Omega)\ge e^{-C_1n\epsilon_n^2}\}.
    $$
    Then for all $(\Psi_{0}, \Omega_{0},$ we have $\P_{0}(E_{n}^{c}) \to 0$ as $n \to \infty.$
\end{myCorollary}

\begin{proof}
    By Lemma 8.1 of \citet{ghosal2017fundamentals} and KL condition. 
\end{proof}

%\subsection{Test condition}

To simplify the parameter space to be concerned in the test condition, we first show the dimension recovery result by bounding the prior probability, with our effective dimension defined as number of entries whose absolute value is larger than the intersection of spike and slab components. 
Then we find the proper vectorized L1 norm sieve in the ``lower-dimensional" parameter space. 
We construct tests based on the supremum of a collection of single-alternative Neyman-Pearson likelihood ratio tests in the subsets of the sieve that are norm balls, then we show that the number of such subsets needed to cover the sieve can be bounded properly. 

Unlike \citet{Ning2020}, our prior assigns no mass on exactly sparse solutions.
Nevertheless, similar to \citet{RockovaGeorge2018_ssl}, we can define a notion of ``effective sparsity'' and generalized dimension.
Intuitively the generalized dimension can be defined as how many coefficients are drawn from the slab rather than the spike part of the prior. 

\begin{myDefinition}[Generalized dimension]

    Then the generalized dimension can be defined as number of entries exceeding some threshold:
    \begin{equation}
        \label{eqn:eff_dim}
        \begin{aligned}
            |\nu_{\psi}(\Psi)|&=\sum_{jk}\nu_{\psi}(\psi_{j,k})\\
            |\nu_{\omega}(\Omega)|&=\sum_{k>k'}\nu_{\omega}(\omega_{k,k'})
        \end{aligned}
    \end{equation}

    where the generalized inclusion functions $\nu_{\psi}$ and $\nu_{\omega}$ for $\Psi$ and $\Omega$ is:
    \begin{align*}
        \nu_{\psi}(\psi_{j,k})&=\mathbbm{1}(|\psi_{j,k}|>\delta_{\psi})\\
        \nu_{\omega}(\omega_{k,k'})&=\mathbbm{1}(|\omega_{k,k'}|>\delta_{\omega})
    \end{align*}
    with $\delta_{\psi}$ and $\delta_{\omega}$ being the threshold where the spike and slab part has the same density. 
    
    \begin{align*}
        \delta_{\psi}&=\frac{1}{\lambda_0-\lambda_1}\log\left[\frac{1-\theta}{\theta}\frac{\lambda_0}{\lambda_1}\right]\\
        \delta_{\omega}&=\frac{1}{\xi_0-\xi_1}\log\left[\frac{1-\eta}{\eta}\frac{\xi_0}{\xi_1}\right]\\
    \end{align*}

\end{myDefinition}
Note that we only count the off-diagonal entries in $\Omega$.

We are now ready to prove Lemma 1 from the main text.
The main idea is to check the posterior probability directly. 

\begin{proof}
Let $\mathcal{B}^\Psi_n=\{\Psi:|\nu_{\psi}(\Psi)|<r^\Psi_n\}$ for some $r_n^\Psi=C_3'\max\{p,q,s_0^\Psi,s_0^\Omega\}$ with $C_3'>C_1$ in the KL condition. 
For $\Omega$, let $\mathcal{B}^\Omega_n=\{\Omega\succ \tau I:|\nu_{\omega}(\Omega)|<r^\Omega_n\}$ for $r_n^\Omega=C_3'\max\{p,q,s_0^\Psi,s_0^\Omega\}$ with some $C_3'>C_1$ in the KL condition. 
We aim to show that $\E_0\Pi(\Omega\in(\mathcal{B}^\Omega_n)^c|Y_1,\dots,Y_n )\to 0$ and $\E_0\Pi(\Psi\in(\mathcal{B}^\Psi_n)^c|Y_1,\dots,Y_n )\to 0$.

The marginal posterior can be expressed using log-likelihood $\ell_n$:
\begin{align}
    \label{eqn:posterior_B}
    \begin{split}
        \Pi(\Psi\in \mathcal{B}_n^\Psi|Y_1,\dots,Y_n)&=\frac{\iint_{\mathcal{B}_n^\Psi}\exp(\ell_n(\Psi,\Omega)-\ell_n(\Psi_0,\Omega_0))d\Pi(\Psi)d\Pi(\Omega)}{\iint\exp(\ell_n(\Psi,\Omega)-\ell_n(\Psi_0,\Omega_0))d\Pi(\Psi)d\Pi(\Omega)}\\
        \Pi(\Omega\in \mathcal{B}_n^\Omega|Y_1,\dots,Y_n)&=\frac{\iint_{\mathcal{B}_n^\Omega}\exp(\ell_n(\Psi,\Omega)-\ell_n(\Psi_0,\Omega_0))d\Pi(\Psi)d\Pi(\Omega)}{\iint\exp(\ell_n(\Psi,\Omega)-\ell_n(\Psi_0,\Omega_0))d\Pi(\Psi)d\Pi(\Omega)}
    \end{split}
\end{align}
By using the result of KL condition (Lemma \ref{lemma:KL_cg}), we know the denominators are bounded from below by $e^{-C_1n\epsilon_n^2}$ with large probability.
Thus, we focus now on upper bounding the numerators beginning with $\Psi.$

Consider the numerator:
\begin{align*}
    \E_0\left(\iint_{(\mathcal{B}_n^{\Psi})^c}f/f_0d\Pi(\Psi)d\Pi(\Omega)\right)&=\int\iint_{(\mathcal{B}_n^{\Psi})^c}f/f_0d\Pi(\Psi)d\Pi(\Omega)f_0 dy\\
    %&=\int\iint_{(\mathcal{B}_n^{\Psi})^c}d\Pi(\Psi)d\Pi(\Omega)f dy\\
    &=\iint_{(\mathcal{B}_n^{\Psi})^c}\int f dyd\Pi(\Psi)d\Pi(\Omega)\\
    &\le \int_{(\mathcal{B}_n^{\Psi})^c}d\Pi(\Psi)=\Pi(|\nu_{\psi}(\Psi)|\ge r_n^\Psi)
\end{align*}

We can bound the above display using the fact that when $|\psi_{j,k}|>\delta_\psi$ we have $\pi(\psi_{j,k})<2\theta\frac{\lambda_1}{2}\exp(-\lambda_1|\psi_{j,k}|)$, this is by definition of the effective dimension:

\begin{align*}
    \Pi(|\nu_{\psi}(\Psi)|\ge r_n^\Psi)&\le \sum_{|S|>r_n^\Psi}(2\theta)^{|S|}\prod_{(j,k)\in S}\int_{|\psi_{j,k}|>\delta_\psi} \frac{\lambda_1}{2}\exp(-\lambda_1|\psi_{j,k}|) d\psi_{j,k} \prod_{(j,k)\notin S} \int_{|\psi_{j,k}|<\delta_\psi}\pi(\psi_{j,k})d\psi_{j,k}\\
    &\le \sum_{|S|>r_n^\Psi}(2\theta)^{|S|}\\
\end{align*}

Using the assumption on $\theta$, and the fact $\binom{pq}{k}\le (epq/k)^k$, we can further upper bound the probability
\begin{align*}
    \Pi(|\nu_{\psi}(\Psi)|\ge r_n^\Psi)&\le \sum_{|S|>r_n^\Psi}(2\theta)^{|S|}
    \le \sum_{|S|>r_n^\Psi}(\frac{2}{1+M_4 (pq)^{2+b}})^{|S|}\\
    &\le \sum_{k=\left\lfloor r_n^\Psi\right\rfloor +1}^{pq} \binom{pq}{k}\left(\frac{2}{M_4(pq)^2}\right)^k
    \le \sum_{k=\left\lfloor r_n^\Psi\right\rfloor +1}^{pq} \left(\frac{2e}{M_4kpq}\right)^k\\
    &<\sum_{k=\left\lfloor r_n^\Psi\right\rfloor +1}^{pq} \left(\frac{2e}{M_4(\left\lfloor r_n^\Psi\right\rfloor +1)pq}\right)^k\\
    &\lesssim (pq)^{-(\left\lfloor r_n^\Psi\right\rfloor +1)}\\
    %&=\exp(-(\left\lfloor r_n^\Psi\right\rfloor +1)\log(pq))\\
    &\le \exp(-(\left\lfloor r_n^\Psi\right\rfloor)\log(pq)).\\
\end{align*}

Taking $r_n^\Psi=C_3'\max\{p,q,s_0^\Psi,s_0^\Omega\}$ for some $C_3'>C_1$, we have:
\begin{align*}
    \Pi(|\nu_{\psi}(\Psi)|\ge r_n^\Psi)&\le \exp(-C_3'\max\{p,q,s_0^\Psi,s_0^\Omega\}\log(pq))
\end{align*}

Therefore,
\begin{align*}
    \E_0\Pi((\mathcal{B}_n^\Psi)^c|Y_1,\dots,Y_n)\le \E_0\Pi((\mathcal{B}_n^\Psi)^c|Y_1,\dots,Y_n)I_{E_n}+P_0(E_n^c),
\end{align*}
where $E_n$ is the event in the KL condition. 
On $E_n$, the KL condition ensures that the denominator in Equation~\eqref{eqn:posterior_B} is lower bounded by $\exp(-C_1n\epsilon_n^2)$ while the denominator is upper bounded by $\exp(-C_3'\max\{p,q,s_0^\Psi,s_0^\Omega\}\log(pq)).$, 
Since $\P_0(E_n^c)$ is $o(1)$ per KL condition, we have the upper bound
\begin{align*}
    \E_0\Pi((\mathcal{B}_n^\Psi)^c|Y_1,\dots,Y_n)&\le \exp(C_1n\epsilon_n^2-C_3'\max\{p,q,s_0^\Psi,s_0^\Omega\}\log(pq))+o(1)\to 0
\end{align*}
This completes the proof of the dimension recovery result of $\Psi.$
%We hereby have the dimension recovery result of $\Psi$.

The workflow for $\Omega$ is very similar, except we need to use the upper bound of the graphical prior in Equation~\eqref{eqn:graphical_prior_bound} to properly bound the prior mass. 

We upper bound the numerator:
\begin{align*}
    \E_0\left(\iint_{(\mathcal{B}_n^{\Omega})^c}f/f_0d\Pi(\Psi)d\Pi(\Omega)\right)&\le \int_{(\mathcal{B}_n^{\Omega})^c}d\Pi(\Omega)=\Pi(|\nu_{\omega}(\Omega)|\ge r_n^\Omega)\le \exp(2\xi_1 Q-\log(R))\tilde{\Pi}(|\nu_{\omega}(\Omega)|\ge r_n^\Omega)
\end{align*}

We bound the above display using the fact that when $|\omega_{k,k'}|>\delta_\omega$ we have $\pi(\omega_{k,k'})<2\eta\frac{\xi_1}{2}\exp(-\xi_1|\omega_{k,k'}|)$.
Note that this follows from the definition of the effective dimension.
We have
\begin{align*}
    \tilde{\Pi}(|\nu_{\omega}(\Omega)|\ge r_n^\Omega)&\le \sum_{|S|>r_n^\Omega}(2\eta)^{|S|}\prod_{(k,k')\in S}\int_{|\omega_{k,k'}|>\delta_\omega} \frac{\xi_1}{2}\exp(-\xi_1|\omega_{k,k'}|) d\omega_{k,k'} \prod_{(k,k')\notin S} \int_{|\omega_{k,k'}|<\delta_\omega}\pi(\omega_{k,k'})d\omega_{k,k'}\\
    &\le \sum_{|S|>r_n^\Omega}(2\eta)^{|S|}\\
\end{align*}

By using the assumption on $\eta$, and the fact $\binom{Q}{k}\le (eQ/k)^k$, we can further upper bound the probability:
\begin{align*}
    \tilde{\Pi}(|\nu_{\omega}(\Omega)|\ge r_n^\Omega)&\le \sum_{|S|>r_n^\Omega}(2\eta)^{|S|}
    \le \sum_{|S|>r_n^\Omega}(\frac{2}{1+K_4 \max\{pq,Q\}^{2+b}})^{|S|}\\
    &\le \sum_{k=\left\lfloor r_n^\Omega\right\rfloor +1}^{Q} \binom{Q}{k}\left(\frac{2}{K_4\max\{pq,Q\}^2}\right)^k
    \le \sum_{k=\left\lfloor r_n^\Omega\right\rfloor +1}^{\max\{pq,Q\}} \binom{\max\{pq,Q\}}{k}\left(\frac{2}{K_4\max\{pq,Q\}^2}\right)^k\\
    &\le \sum_{k=\left\lfloor r_n^\Omega\right\rfloor +1}^{\max\{pq,Q\}} \left(\frac{2e}{K_4k\max\{pq,Q\}}\right)^k
    <\sum_{k=\left\lfloor r_n^\Omega\right\rfloor +1}^{\max\{pq,Q\}} \left(\frac{2e}{K_4(\left\lfloor r_n^\Omega\right\rfloor +1)\max\{pq,Q\}}\right)^k\\
    &\lesssim \max\{pq,Q\}^{-(\left\lfloor r_n^\Omega\right\rfloor +1)}\\
    %&=\exp(-(\left\lfloor r_n^\Omega\right\rfloor +1)\log(\max\{pq,Q\}))\\
    &\le \exp(-(\left\lfloor r_n^\Omega\right\rfloor)\log(\max\{pq,Q\}))\\
\end{align*}

Taking $r_n^\Omega=C_3'\max\{p,q,s_0^\Psi,s_0^\Omega\}$ and $C_3'>C_1$, we have
\begin{align*}
    \tilde{\Pi}(|\nu_{\omega}(\Omega)|\ge r_n^\Omega)&\le \exp(-C_3'\max\{p,q,s_0^\Psi,s_0^\Omega\}\log(\max\{pq,Q\}))\le \exp(-C_3n\epsilon_n^2)
\end{align*}

Thus, using the assumption $\xi_1 \asymp 1/max\{Q,n\}$, for some $R'$ not depending on $n$, we have
\begin{align*}
    \Pi(|\nu_{\omega}(\Omega)|\ge r_n^\Omega)\le \exp(-C_3n\epsilon_n^2+2\xi_1 Q-\log(R))\le \exp(-C_3n\epsilon_n^2+\log(R'))
\end{align*}

We therefore conclude that
\begin{align*}
    \E_0\Pi((\mathcal{B}_n^\Omega)^c|Y_1,\dots,Y_n)\le \E_0\Pi((\mathcal{B}_n^\Omega)^c|Y_1,\dots,Y_n)I_{E_n},+P_0(E_n^c)
\end{align*}
where $E_n$ is the event in KL condition. 
On $E_n$, the KL condition ensures that the denominator in Equation~\eqref{eqn:posterior_B} is lower bounded by $\exp(-C_1n\epsilon_n^2)$ while the denominator is upper bounded by $\exp(-C_3'n\epsilon_n^2+\log(R')).$
Since $\P_0(E_n^c)$ is $o(1)$ per KL condition, we conclude
\begin{align*}
    \E_0\Pi((\mathcal{B}_n^\Omega)^c|Y_1,\dots,Y_n)&\le \exp(C_1n\epsilon_n^2-C_3'n\epsilon_n^2+\log(R'))+o(1)\to 0
\end{align*}

\end{proof}

We pause now to reflect on how dimension recovery can help us establish contraction. Our end goal is to show the posterior distribution contract to the true value by first showing that event with log-affinity difference larger than any given $\epsilon>0$ has an $o(1)$ posterior mass. 
For any such event, we can take a partition based on whether it intersects with $\mathcal{B}^\Psi_n, \mathcal{B}^\Omega_n$ or their complements. 
Because the complements $(\mathcal{B}^\Psi_n)^c$ and $(\mathcal{B}^\Omega_n)^c$ have $o(1)$ posterior mass, we have the partition that intersects with any of these two complements also has $o(1)$ posterior mass. 
Thus, we only need to show that events with log-affinity difference larger than any given $\epsilon>0$ \textit{and} recovered the low dimension structure have an $o(1)$ posterior mass. 
The recovery condition reduces the complexity of the events (on the parameter space) that we need to deal with by reducing the effective dimension of such events. 
We will make use of this low dimension structure during checking the test condition. 

Formally for every $\epsilon>0$, we have
\begin{align*}
    &\E_0\Pi(\Psi,\Omega\succ \tau I:\frac{1}{n}\sum \rho(f_i, f_{0,i})>\epsilon|Y_1,\dots, Y_n)\\
    \le& \E_0\Pi(\Psi\in \mathcal{B}_n^\Psi,\Omega\succ \tau I:\frac{1}{n}\sum \rho(f_i, f_{0,i})>\epsilon|Y_1,\dots, Y_n)+\E_0\Pi((\mathcal{B}^\Psi_n)^c|Y_1,\dots,Y_n)\\
    \le& \E_0\Pi(\Psi\in \mathcal{B}_n^\Psi,\Omega\in \mathcal{B}_n^\Omega:\frac{1}{n}\sum \rho(f_i, f_{0,i})>\epsilon|Y_1,\dots, Y_n)\\
    &+\E_0\Pi((\mathcal{B}^\Psi_n)^c|Y_1,\dots,Y_n)+\E_0\Pi((\mathcal{B}^\Omega_n)^c|Y_1,\dots,Y_n)
\end{align*}
The last two terms are $o(1),$ as proved above. 

\subsubsection{Sieve}
\label{appendix:sieve_exists}

As shown in the previous section, we can concentrate on the events with proper dimension recovery, i.e. $\{\Psi\in\mathcal{B}^\Psi_n,\Omega\in \mathcal{B}^\Omega_n\}$. 
To apply \citet{ghosal2017fundamentals}'s general theory of posterior contraction, to establish contraction on the event of proper dimension recovery (i.e. $\E_0\Pi(\Psi\in \mathcal{B}_n^\Psi,\Omega\in \mathcal{B}_n^\Omega:\frac{1}{n}\sum \rho(f_i, f_{0,i})>\epsilon|Y_1,\dots, Y_n)\to 0$).

The next technical lemma established that an L1 sieve is sufficient to cover enough support of the prior. 

\begin{myLemma}   
    Define the sieve $\mathcal{F}_n$: 
    \begin{equation}
        \mathcal{F}_n=\left\{\Psi\in \mathcal{B}^\Psi_n,\Omega\in \mathcal{B}^\Omega_n:||\Psi||_1\le 2C_3p, ||\Omega||_1 \le 8C_3q \right\}
    \end{equation}
    
    for some large $C_3>C_1+2+\log(3)$ where $C_1$ is the constant in KL condition. 
    We have for some $C_2$
    \begin{equation}
        \Pi(\mathcal{F}_n^c)\le \exp(-C_2n\epsilon_n^2)
    \end{equation}
\end{myLemma}

\begin{proof}
    We first observe that  
    \begin{align*}
        \Pi(\mathcal{F}_n^c)&\le \Pi(||\Psi||_1> 2C_3p)+\Pi((||\Omega||_1 > 8C_3q)\cap \{\Omega\succ \tau I\})
    \end{align*}
    We upper bound each term using the bound in Equation~\eqref{eqn:graphical_prior_bound}, we know that
\begin{align*}
    \Pi((||\Omega||_1 > 8C_3q)\cap \{\Omega\succ \tau I\})\le \exp(2\xi_1 Q-\log(R))\tilde{\Pi}(||\Omega||_1 > 8C_3q).
\end{align*}
Since $||\Omega||_1=2\sum_{k>k'}|\omega_{k,k'}|+\sum_{k}|\omega_{k,k}|$, at least one of these two sums exceeds $8C_{3}q/2.$
Thus, we can form an upper bound on the L1 norm probability
\begin{align*}
    \tilde{\Pi}(||\Omega||_1 > 8C_3q)\le \tilde{\Pi}\left(\sum_{k>k'}|\omega_{k,k'}|>\frac{8C_3q}{4}\right)+\tilde{\Pi}\left(\sum_{k}|\omega_{k,k}|>\frac{8C_3q}{2}\right).
\end{align*}

To get an upper bound under $\tilde{\Pi},$ we can act as if all $\omega_{k,k'}$'s were drawn from the slab distribution.
In that setting, $\sum_{k>k'}|\omega_{k,k'}|$ is Gamma distributed with shape parameter $Q$ and rate parameter $\xi_1$. 
By using an appropriate tail probability for the Gamma distribution (\citet{boucheron2013concentration}, pp.29) and the fact $1+x-\sqrt{1+2x}\ge (x-1)/2,$ we compute
\begin{align*}
    \exp(2\xi_1 Q-\log(R))\tilde{\Pi}(\sum_{k>k'}|\omega_{k,k'}| > 8C_3q/4)&\le \exp\left[-Q\left(1-\sqrt{1+2\frac{8C_3q}{4Q\xi_1}}+\frac{8C_3q}{4Q\xi_1}\right)+2Q-\log(R)\right]\\
    &\le \exp\left[-\frac{8C_3q}{8\xi_1}+\left(\frac{5}{2}Q-\log(R)\right)\right]
\end{align*}

Since we have assumed $\xi_{1} \asymp 1/\max\{n,Q\},$ for sufficiently large $n,$ we have $ n\epsilon_n^2\ge q\log(q).$
Consequently, $qn\epsilon_n^2\ge Q\log(q),$ $Q=o(qn\epsilon_n^2)$, and we see that
\begin{align*}
    \frac{8C_3q}{8\xi_1}-\left(\frac{5}{2}Q-\log(R)\right)&\asymp C_3(\max\{n,Q\}q)-\left(\frac{5}{2}Q-\log(R)\right)\\
    &\ge C_3(qn\epsilon_n^2)-\left(\frac{5}{2}Q-\log(R)\right)\\
    &=C_3(qn\epsilon_n^2)-o(qn\epsilon_n^2)\\
    &\ge C_3n\epsilon_n^2
\end{align*}

The first order term of $Q$ on the left hand side can be ignored when $n$ large as the left hand side is dominated by the $Q\log(q)$ term. 
Note that we used the assumption that $\epsilon_n\to 0$. 
We further have
\begin{align*}
    \exp(2 \xi_1 Q-\log(R))\tilde{\Pi}((\sum_{k>k'}|\omega_{k,k'}| > 8C_3q/4))\le \exp(-C_3n\epsilon_n^2)
\end{align*}

For the diagonal, the sum follows a gamma distribution with shape $q$ and rate $\xi_1$. We obtain a similar bound
\begin{align*}
    \exp(2 \xi_1 Q-\log(R))\tilde{\Pi}(\sum_{k}|\omega_{k,k}| > 8C_3q/2)&\le \exp(2 Q-\log(R))\exp\left[-q\left(1-\sqrt{1+2\frac{8C_3q}{2q\xi_1}}+\frac{8C_3q}{2q\xi_1}\right)\right]\\
    &\le \exp\left[-\frac{8C_3q}{4\xi_1}+Q\left(2+\frac{q}{2Q}\right)-\log(R)\right]
\end{align*}

Using the same argument as before and the fact that $\xi_1 \asymp 1/\max\{Q,n\}$, we have
\begin{align*}
\frac{8C_3q}{4\xi_1}-Q\left(2+\frac{q}{2Q}\right)+\log(R)&\asymp 2C_3(\max\{Q,n\}q))-Q\left(2+\frac{q}{2Q}\right)+\log(R)\\
&\ge  C_3qn\epsilon_n^2-o(qn\epsilon_n^2)\\
&\ge C_3n\epsilon_n^2
\end{align*}

The first order term of $Q$ on the left hand side can be ignored when $n$ large as the left hand side is dominated by the $Q\log(q)$ term and $q/Q\to 0$. 

By combining the above results, we have: 

\begin{equation}
\label{eqn:sieve_Omega_bound}
\begin{aligned}
	\Pi((||\Omega||_1 > 8C_3q)\cap \{\Omega\succ \tau I\})\le& \exp(2 Q-\log(R))\tilde{\Pi}(||\Omega||_1 > 8C_3q)\\
	\le& \exp(2 Q-\log(R))\tilde{\Pi}(\sum_{k>k'}|\omega_{k,k'}|>\frac{8C_3q}{4})\\
    &+\exp(2 Q-\log(R))\tilde{\Pi}(\sum_{k}|\omega_{k,k}|>\frac{8C_3q}{2})\\
	\le& 2\exp(-C_3n\epsilon_n^2)
\end{aligned}
\end{equation}

The probability $||\Psi||_1>2C_3p$ can be bounded by tail probability of Gamma distribution with shape parameter $pq$ and rate parameter $\lambda_1$:
\begin{align*}
	\Pi(||\Psi||_1> 2C_3p)&\le \exp\left[-pq\left(1-\sqrt{1+2\frac{2C_3p}{pq\lambda_1}}+\frac{2C_3p}{pq\lambda_1}\right)\right]\\
	&\le  \exp\left[-pq\left(\frac{2C_3p}{2pq\lambda_1}-\frac{1}{2}\right)\right]\\
	&\le \exp\left(-\frac{2C_3p}{2\lambda_1}+\frac{pq}{2}\right)\\
\end{align*}

Using the same argument, we have $pn\ge pn\epsilon_n^2\ge pq\log(q)$ and thus, $pq=o(pn\epsilon_n^2)$ for large $n$. 
Consequently,
\begin{align*}
	\exp\left(-\frac{2C_3p}{2\lambda_1}+\frac{pq}{2}\right)\le \exp\left(-C_3pn\epsilon_n^2+o(pn\epsilon_n^2)\right)\le \exp(-C_3n\epsilon_n^2)
\end{align*}
and 
\begin{equation}
\label{eqn:sieve_B_bound}
\begin{aligned}
	\Pi(||\Psi||_1> 2C_3p)&\le\exp(-C_3n\epsilon_n^2)
\end{aligned}
\end{equation}

By combining the result from Equations~\eqref{eqn:sieve_Omega_bound} and ~\eqref{eqn:sieve_B_bound}, we conclude 
\begin{align*}
	\Pi(\mathcal{F}_n^c)\le 3\exp(-C_3n\epsilon_n^2)=\exp(-C_3n\epsilon_n^2+\log(3)).
\end{align*}
With our choice of $C_3$, the above probability is asymptotically bounded from above by $\exp(-C_2n\epsilon_n^2)$ with some $C_2\ge C_1+2$. 
\end{proof}

\subsubsection{Test condition}

In the next lemma we establish the so-called ``test condition'', i.e. the model family is not too large to be identified. 

The strategy is to construct a sequence of Neyman-Pearson tests with alternatives in a ball around representative points and show number of such balls is properly bounded. To do so we first establish a lemma for packing a ``effectively low dimensional'' sets. 

The packing number of a set usually depends exponentially on the sets dimensions.
Because \citet{Ning2020} studied posteriors which place positive probability on exactly sparse parameters, they were able to directly bound the packing number of suitable low-dimensional sets.
In our case, which uses an absolutely continuous prior, we need to instead control the packing number of ``effectively low dimensional'' spaces.
%The packing number is usually exponentially depends on the dimension of the set of interest and thus the low dimension structure of posterior are critical. 
%With spike component being a delta measure, \citet{Ning2020}'s posterior can recover the exact low dimension structure of the parameters thus they can directly bound the packing number using the exact low dimension to finish their proof. 
%However in our case with an absolute continuous prior, our posterior cannot recover the exact low dimension structure. 
%Instead, we only have an effective low dimension structure defined by some parameters having small absolute values. 
%Thus we cannot directly calculate the packing number in order to apply the general theory to establish the rate in the main result. 

Lemma~\ref{lemma:packing_number_lp} provides a sufficient condition for bounding the complexity (evaluated by packing number) of an set of ``effectively sparse'' vectors can be bounded by the complexity of a set of actually sparse vectors. 

\begin{myLemma}[packing a shallow cylinder in Lp]\footnote{In an early version of this paper this lemma is numbered S4}
    \label{lemma:packing_number_lp}
    For a set of form $E=A \times [-\delta,\delta]^{Q-s}\subset \mathbb{R}^Q$ where $A\subset \mathbb{R}^s$, (with $s>0$ and $Q\ge s+1$ are integers) for $1\le p<\infty$ and a given $T>1$, if $\delta<\frac{\epsilon}{2[T(Q-s)]^{1/p}}$, we have the packing number:

    \begin{align*}
        D(\epsilon, A, ||\cdot||_p) \le D(\epsilon,E, ||\cdot ||_p)\le D((1-T^{-1})^{1/p}\epsilon, A, ||\cdot||_p)
    \end{align*}
\end{myLemma}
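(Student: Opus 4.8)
The plan is to prove the two inequalities separately, in both cases exploiting that under the $\ell_p$ norm the direct-product structure of $E$ decomposes additively at the level of $p$-th powers: writing any $x \in E$ as $x = (a,b)$ with $a \in A \subseteq \mathbb{R}^s$ and $b \in [-\delta,\delta]^{Q-s}$, one has $\|x-x'\|_p^p = \|a-a'\|_p^p + \|b-b'\|_p^p$. Throughout I would use the convention that an $\epsilon$-packing of a set $S$ is a finite subset whose points are pairwise separated by more than $\epsilon$ in $\|\cdot\|_p$ (equivalently, centers of disjoint open balls of radius $\epsilon/2$), and that $D(\epsilon,S,\|\cdot\|_p)$ is the largest cardinality of such a packing.

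For the left inequality $D(\epsilon,A,\|\cdot\|_p)\le D(\epsilon,E,\|\cdot\|_p)$, I would take a maximal $\epsilon$-packing $\{a_1,\dots,a_N\}$ of $A$ and lift it to $E$ by appending zeros, setting $x_i=(a_i,0)$; this is legitimate because $0\in[-\delta,\delta]^{Q-s}$, so each $x_i\in E$. Since the appended coordinates agree across the $x_i$, we have $\|x_i-x_j\|_p = \|a_i-a_j\|_p > \epsilon$, so $\{x_1,\dots,x_N\}$ is an $\epsilon$-packing of $E$ and hence $D(\epsilon,E)\ge N = D(\epsilon,A)$.

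For the right inequality, the harder direction, I would start from an arbitrary $\epsilon$-packing $\{x_1,\dots,x_M\}$ of $E$, write $x_i=(a_i,b_i)$, and project onto the first $s$ coordinates. The product structure gives
\[
\|a_i-a_j\|_p^p = \|x_i-x_j\|_p^p - \|b_i-b_j\|_p^p > \epsilon^p - (Q-s)(2\delta)^p,
\]
where the last bound uses $|b_{i,k}-b_{j,k}|\le 2\delta$ coordinatewise together with $\|x_i-x_j\|_p>\epsilon$. The hypothesis $\delta<\epsilon/\bigl(2[T(Q-s)]^{1/p}\bigr)$ yields $(Q-s)(2\delta)^p<\epsilon^p/T$, so $\|a_i-a_j\|_p^p>\epsilon^p(1-T^{-1})$, i.e. $\|a_i-a_j\|_p>(1-T^{-1})^{1/p}\epsilon$. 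In particular the $a_i$ are distinct, lie in $A$, and constitute a $(1-T^{-1})^{1/p}\epsilon$-packing of $A$, whence $M\le D\bigl((1-T^{-1})^{1/p}\epsilon, A,\|\cdot\|_p\bigr)$. Taking the supremum over all $\epsilon$-packings of $E$ gives the claim.

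The main technical point, rather than an obstacle, is the precise bookkeeping in the constraint on $\delta$: the factor $2\delta$ records the worst-case separation within a single shallow coordinate, the factor $(Q-s)$ arises from summing over all shallow coordinates before taking $p$-th roots, and the threshold is calibrated exactly so that this entire contribution is absorbed into the controlled multiplicative loss $(1-T^{-1})^{1/p}$ in the packing radius. I would also emphasize that the argument is uniform over $A$ and uses no structure on $A$ beyond $A\subseteq\mathbb{R}^s$, which is precisely what makes the lemma applicable to the ``effectively low-dimensional'' sets $\mathcal{A}\times[-\delta,\delta]^{r}$ encountered when bounding the covering number of the sieve $\mathcal{F}_n$.
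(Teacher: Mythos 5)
Your proof is correct, and both directions reach the paper's conclusions; the difference lies in how the upper bound is executed. The paper argues geometrically: it takes disjoint $\ell_p$ balls of radius $\epsilon/2$ centered in $E$, shows each must intersect the coordinate plane $\mathbb{R}^s\times\{0\}^{Q-s}$ (because each center's shallow coordinates are at most $\delta$ in absolute value, so the center lies within distance $(Q-s)^{1/p}\delta < T^{-1/p}\epsilon/2$ of the plane), proves that the slice of each ball by that plane is again an $\ell_p$ ball of radius at least $(1-T^{-1})^{1/p}\epsilon/2$ centered at the projected center, and concludes by contradiction that these slices induce a packing of $A$. You bypass the slicing entirely: you project the packing centers onto the first $s$ coordinates and bound the pairwise separation of the projections directly via $\lVert a_i-a_j\rVert_p^p = \lVert x_i-x_j\rVert_p^p - \lVert b_i-b_j\rVert_p^p > \epsilon^p - (Q-s)(2\delta)^p$, with the hypothesis on $\delta$ calibrated so that $(Q-s)(2\delta)^p < \epsilon^p/T$. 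The two computations land on exactly the same admissible range of $\delta$ --- the paper pays a factor $\delta$ against radius $\epsilon/2$, you pay $2\delta$ against separation $\epsilon$ --- so neither is sharper, but your version is shorter, avoids having to verify that a hyperplane slice of an $\ell_p$ ball is itself a ball, and is a direct argument rather than a proof by contradiction. Both proofs handle the lower bound identically by embedding $A\times\{0\}^{Q-s}$ into $E$.
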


\begin{proof}
    The lower bound is trivial by observing $A\times\{0\}^{Q-s}\subset E$ and the packing number of $A\times\{0\}^{Q-s}$ is exactly the packing number of $A$. 
    For the upper bound, we show that for each packing on $E,$ we can slice that packing with the 0-plane to form a packing on $A$ with the same number of balls but smaller radius (see Figure~\ref{fig:packing} for an illustration). 
    %An upper bound of the packing number on $E$ will be the packing number of $A$ with the smaller radius (see Figure~\ref{fig:packing} for an illustration). 

    We first show any Lp $\epsilon/2-$ball $B_\theta(\epsilon/2)$ centered in the set $E$ intersects the plane $\mathbb{R}^{s}\times \{0\}^{Q-s}$. 
    Assume the center is $\theta=(x_1,\dots, x_Q)$. 
    It suffices to show the center's distance to the plane is less than the radius of the ball. 
    Since the center is in $E$, we have $|x_i|\le \delta$ for the last $Q-s$ coordinates. 
    Denote the projection of the center on the plane as $\theta_A=(x_1,\dots, x_s,0)\in A\times \{0\}^{Q-s}$.
    Then the Lp distance from the center to the plane is
    \begin{align*}
        ||\theta_A-\theta||_p^p=\sum_{i=s+1}^Q |x_i|^p\le (Q-s)\delta^p<T^{-1}(\epsilon/2)^{p}
    \end{align*}
    
    Next we show the slice $B_\theta(\epsilon)\cap \mathbb{R}^{s}\times \{0\}^{Q-s}$ is also a ball centered at $\theta_A$ in the lower dimensional plane. 
    It suffice to show the boundary is a sphere. 
    Suppose we take a point $a$ from the intersection of boundary of $B_\theta(\epsilon)\cap \mathbb{R}^{s}\times \{0\}^{Q-s}$, the vector from center to the point can be decomposed to sum of two orthogonal component, namely the vector from $\theta_A$ to $a$ and from $\theta_A$ to $\theta$, we have in this case
    \begin{align*}
        ||a-\theta_A||_p^p+||\theta_A-\theta||_p^p=||a-\theta||_p^p=\epsilon^p/2^p
    \end{align*}
    because $a-\theta_A$ has all 0 entries on the last $Q-s$ axis and $\theta_A-\theta$ has all 0 entries on the first $s$ entries. 
    Thhus any such point has a fixed distance to $\theta_A$, the projection of the center $\theta$ on the plane of $A$.
Notice that
    \begin{align*}
        ||a-\theta_A||_p^p=\epsilon^p/2^p-||\theta_A-\theta||_p^p,
    \end{align*}
    which is fixed. Thus the collection of $a$ forms a sphere on $A$'s plane. 
    
    From here, we can also lower bound the radius of slice by $(1-T^{-1})^{1/p}\epsilon/2$ since $||\theta_A-\theta||_p^p<T^{-1}(\epsilon/2)^p$, thus we have the radius $||a-\theta_A||_p> (1-T^{-1})^{1/p}\epsilon/2$. Thus, we have the smaller ball must lie within the slice, i.e.
    \begin{equation}
        \label{eqn:lower_dim_balls}
        \begin{aligned}
            B_{\theta_A}((1-T^{-1})^{1/p}\epsilon/2)\times\{0\}^{Q-s}\subset \left( B_\theta(\epsilon/2)\cap (\mathbb{R}^s\times\{0\}^{Q-s})\right) \subset B_\theta(\epsilon/2)
        \end{aligned}
    \end{equation}
  
    That is, any $\epsilon/2-$ball centered in $E$ has a corresponding $(1-T^{-1})^{1/p}\epsilon/2$ lower dimension ball centered in $A$. 
    With the above observations in hand, we can now prove the inequality by contradiction.

    Suppose we have a packing on $E$ $\{\theta_1,\dots,\theta_D\}$, where $D$ is larger than the packing number of $A$ in the main result. 
    By Equation~\eqref{eqn:lower_dim_balls}, the lower dimension balls $B_{\theta_{iA}}((1-T^{-1})^{1/p}\epsilon/2)$ must also be disjoint. 
    Since the centers of the balls $\theta_{iA}\in A$, these balls form a packing of $A$ with radius $\epsilon'=(1-T^{-1})^{1/p}\epsilon$. 
    That is, we can find a packing with more balls than the packing number, yielding the desired contradiction. Thus we must have
    \begin{align*}
        D\le D((1-T^{-1})^{1/p}\epsilon, A, ||\cdot||_p)
    \end{align*}
\end{proof}

\begin{figure}[htp]
    \centering
    \includegraphics[width = 0.5\linewidth]{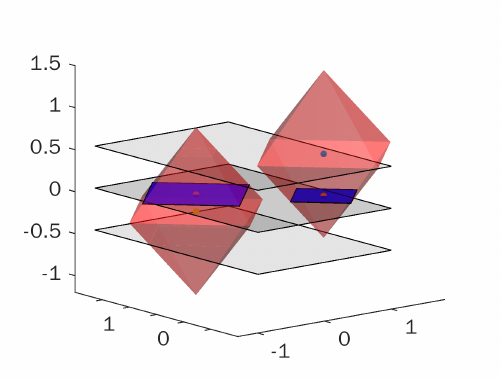}
    \caption{A schematic of argument used in the packing number lemma proof. We showed two disjoint unit L1 balls (red) centered in $(0.8,0,0.5)$ and $(-.3,1,-.2)$, all with in $A \times [-0.5,0.5]$ (with $A=[-1,1]\times [-1,1]$ shown in the middle plane), their slice in the $z=0$ plane (in blue) also forms L1 balls in $\mathbb{R}^2$ whose radius are lower bounded and centered within $A$, thus induced a packing of the lower dimensional set.} %(drawn using MatLab online \citet{})}
    \label{fig:packing}
\end{figure}

\begin{myLemma}
    \label{lemma:test_condition}
    There exists test $\varphi_n$, being maximum of single alternative tests, such that for some $M_2>C_1+1$:
    \begin{equation}
        \label{eqn:test_condition}
        \begin{aligned}
            &\mathbb{E}_{f_0}\varphi_n\lesssim e^{-M_2n\epsilon^2/2}\\
            \sup_{f\in \mathcal{F}_n:\rho(f_0,f)>M_2n\epsilon_n^2}&\mathbb{E}_{f}(1-\varphi_n)\lesssim e^{-M_2n\epsilon_n^2}
        \end{aligned}
    \end{equation}
    where $f=\prod_{i=1}^n \mathcal{N}(X_i\Psi\Omega^{-1},\Omega^{-1})$ while $f_0=\prod_{i=1}^n \mathcal{N}(X_i\Psi_0\Omega_0^{-1},\Omega_0^{-1})$
\end{myLemma}

\begin{proof}
    Instead of directly constructing the $\varphi_n$ on the whole sieve we construct tests versus a representative point and show that these tests works well in the neighborhood of the representative points.
We then take the supremum of these tests and show that the number of pieces needed to cover the entire sieve can be appropriately bounded.
%Our desired test can be the supremum of such tests, then show number of such pieces to cover the entire sieve can be bounded properly. 

For a representative point $f_1$, consider the Neyman-Pearson test for a single point alternative $H_0: f=f_0, H_1: f=f_1$, $\phi_n=I\{ f_1/f_0\ge 1 \}$. 
If the average half order R\'enyi divergence $-n^{-1}\log(\int\sqrt{f_0f_1}d\mu)\ge \epsilon^2$, we will have:
\begin{align*}
    \mathbb{E}_{f_0}(\phi_n)&\le\int_{f_1>f_0}\sqrt{f_1/f_0} f_0 d\mu\le \int \sqrt{f_1f_0}d\mu\le e^{-n\epsilon^2}\\
    \mathbb{E}_{f_1}(1-\phi_n)&\le\int_{f_0>f_1}\sqrt{f_0/f_1} f_1 d\mu\le \int \sqrt{f_0f_1}d\mu\le e^{-n\epsilon^2}\\
\end{align*}

By Cauchy-Schwarz, for any alternative $f$ we can control the Type II error rate:
\begin{align*}
    \E_f(1-\phi_n)\le \{\E_{f_1}(1-\phi_n)\}^{1/2}\{\E_{f_1}(f/f_1)^2\}^{1/2}
\end{align*}

So long as the second factor grows at most like  $e^{cn\epsilon^2}$ for some properly chosen small $c$, the full expression can be controlled.  
Thus we can consider the neighborhood around the representative point small enough so that the second factor can be actually bounded.

Consider every density with parameters satisfying
\begin{equation}
    \label{eqn:test_sets}
    \begin{aligned}
    &|||\Omega|||_2\le ||\Omega||_1\le 8C_3q,\\ 
    &||\Psi_1-\Psi||_2\le ||\Psi_1-\Psi||_1\le \frac{1}{\sqrt{2C_3np}},\\
    &|||\Omega_1-\Omega|||_2\le ||\Omega_1-\Omega||_1\le\frac{1}{8C_3n\max\{p,q\}^{3/2}} \le \frac{1}{8C_3nq^{3/2}}
\end{aligned}
\end{equation}
We show that $\E_{f_1}(f/f_1)^2$ is bounded on the above set when parameters are from the sieve $\mathcal{F}_n$. 

Denote $\Sigma_1=\Omega_1^{-1}$, $\Sigma=\Omega^{-1}$ as well as $\Sigma_1^\star=\Omega^{1/2}\Sigma_1\Omega^{1/2}$, and $\Delta_\Psi=\Psi-\Psi_1$ while $\Delta_\Omega=\Omega-\Omega_1$. 
Using the observation $\Psi\Omega^{-1}-\Psi_1\Omega_1^{-1}=(\Delta_\Psi-\Psi_1\Omega^{-1}\Delta_\Omega)\Omega^{-1},$ we have
\begin{equation}
    \label{eqn:test_f_over_f1}
    \begin{aligned}
        \E_{f_1}(f/f_1)^2=&|\Sigma_1^\star|^{n/2}|2I-\Sigma_1^{\star-1}|^{-n/2}\\
        &\times \exp\left(\sum_{i=1}^nX_i(\Psi\Omega^{-1}-\Psi_1\Omega_1^{-1})\Omega^{1/2}(2\Sigma_1^\star-I)^{-1}\Omega^{1/2}(\Psi\Omega^{-1}-\Psi_1\Omega_1^{-1})^{\top}X_i^{\top}\right)\\
        =&|\Sigma_1^\star|^{n/2}|2I-\Sigma_1^{\star-1}|^{-n/2}\\
        &\times \exp\left(\sum_{i=1}^nX_i(\Delta_\Psi-\Psi_0\Omega^{-1}\Delta_\Omega)\Omega^{-1/2}(2\Sigma_1^\star-I)^{-1}\Omega^{-1/2}(\Delta_\Psi-\Psi_0\Omega^{-1}\Delta_\Omega)^{\top}X_i^{\top}\right)
    \end{aligned}
\end{equation}

%For the first factor we use a similar argument as in \citet{Ning2020} (after Equation 5.9). 
Since $\Omega\in\mathcal{F}_n$, we have $|||\Omega^{-1}|||_2\le 1/\tau $.
The fact $|||\Omega_1-\Omega|||_2\le \delta_n'= 1/8C_3nq^{3/2}$ implies
\begin{align*}
    |||\Sigma_1^\star-I|||_2\le |||\Omega^{-1}|||_2|||\Omega_1-\Omega|||_2\le \delta_n'/\tau
\end{align*}
and thus we can bound the spectrum of $\Sigma_1^\star$, i.e. $1-\delta_n'/\tau\le \eig_1(\Sigma_1^\star)\le \eig_q(\Sigma_1^\star)\le  1+\delta_n'/\tau $.

Thus
\begin{align*}
    \left(\frac{|\Sigma_1^\star|}{|2I-\Sigma_1^{\star-1}|}\right)^{n/2}&=\exp\left(\frac{n}{2}\sum_{i=1}^q\log(\eig_i(\Sigma_1^\star))-\frac{n}{2}\sum_{i=1}^q \log\left(2-\frac{1}{\eig_i(\Sigma_1^\star)}\right)\right)\\
    &\le \exp\left(\frac{nq}{2}\log(1+\delta_n'/\tau)-\frac{nq}{2}\log\left(1-\frac{\delta_n'/\tau}{1-\delta_n'/\tau}\right)\right)\\
    &\le \exp\left(\frac{nq^2}{2}\delta_n'+\frac{nq}{2}\left(\frac{\delta_n'/\tau}{1-2\delta_n'/\tau}\right)\right)\\
    &\le \exp(nq\delta_n'/\tau)\\
    &\le e
\end{align*}
The third inequality is due to the fact $1-x^{-1}\le \log(x)\le x-1$.

We can bound the log of the second factor of Equation~\eqref{eqn:test_f_over_f1}.
\begin{align*}
    |||\Omega^{-1}|||_2|||(2\Sigma_1^\star-I)^{-1}|||_2\sum_{i=1}^n||X_i(\Delta_\Psi-\Psi_1\Omega^{-1}\Delta_\Omega)||_2^2\le 2/\tau \sum_{i=1}^n||X_i(\Delta_\Psi-\Psi_1\Omega^{-1}\Delta_\Omega)||_2^2
\end{align*}

We can further bound the sum on the sieve.
\begin{align*}
    \sum_{i=1}^n||X_i(\Delta_\Psi-\Psi_1\Omega^{-1}\Delta_\Omega)||_2^2&\le 2\sum_{i=1}^n ||X_i\Delta_\Psi||_2^2+2\sum_{i=1}^n ||X_i\Psi_1\Omega^{-1}\Delta_\Omega||_2^2\\
    &\le 2np|||\Delta_\Psi|||_2^2+2np|||\Psi_1|||_2^2|||\Omega^{-1}|||_2^2||\Delta_\Omega||_F^2\\
    &\le 2np\frac{1}{2C_3np}+2np\left(2C_3p+\frac{1}{\sqrt{2C_3np}}\right)^2\frac{1}{\tau^2}\frac{1}{(8C_3n\max\{p,q\}^{3/2})^2}\\
    &\le 2np\frac{1}{2C_3np}+2np16C_3^2p^2\frac{1}{\tau^2}\frac{1}{(8C_3n\max\{p,q\}^{3/2})^2}\\
    &\lesssim 1
\end{align*}

We bound the norm of $\Psi_1$ using triangle inequality, $|||\Psi_1|||\le |||\Psi|||+|||\Psi_1-\Psi|||\le 2C_3p+1/\sqrt{2C_3np}$. 
The first term is $O(1)$ and second term is $O(1/q)$, by combining the result we conclude the second factor of Equation~\eqref{eqn:test_f_over_f1} is bounded. 

Thus, 
%following the argument of \citet{Ning2020}, 
the desired test $\varphi_n$ in Equation~\eqref{eqn:test_condition} can be obtained as the maximum of all tests $\phi_n$ described above.

To finish the proof we show number such tests needed to maximum over is bounded using a covering argument. Formally, we show the number of sets described in Equation~\eqref{eqn:test_sets} needed to cover sieve $\mathcal{F}_n$, denoted by $N_*$, can be bounded by $\exp(Cn\epsilon_n^2)$ for some suitable constant $C$. 

we can bound the logarithm of the covering number $\log(N_*)$.
\begin{align*}
    \log(N_*)\le & \log\left[N\left(\frac{1}{\sqrt{2C_3np}},\{\Psi\in \mathcal{B}^\Psi_n:||\Psi||_1\le2C_3p\},||\cdot||_1\right)\right]\\
    &+\log\left[N\left(\frac{1}{8C_3n\max\{p,q\}^{3/2}},\{\Omega \in \mathcal{B}^\Omega_n,||\Omega||_1\le 8C_3q\},||\cdot||_1\right)\right]
\end{align*}

The two terms above can be treated in a similar way. Denote $\max\{p,q,s_0^B,s_0^\Omega\}=s^\star$. There are multiple ways to allocate the effective 0's, which introduces the binomial coefficient below: \begin{align*}
    &N\left(\frac{1}{8C_3n\max\{p,q\}^{3/2}},\{\Omega \in \mathcal{B}^\Omega_n,||\Omega||_1\le 8C_3q\},||\cdot||_1\right)\\
    \le & \binom{Q}{C_3's^\star} N\left(\frac{1}{8C_3n\max\{p,q\}^{3/2}},\{V\in \mathbb{R}^{Q+q}:|v_i|<\delta_\omega\text { for $1\le i\le Q+q-C_3's^\star$},||V||_1\le 8C_3q\},||\cdot||_1\right)\\
    &N\left(\frac{1}{\sqrt{2C_3np}},\{\Psi\in \mathcal{B}^\Psi_n:||\Psi||_1\le2C_3p\},||\cdot||_1\right)\\
    \le & \binom{pq}{C_3' s^\star}  N\left(\frac{1}{\sqrt{2C_3np}},\{V\in \mathbb{R}^{pq}:|v_i|<\delta_\psi\text { for $1\le i\le pq-C_3's^\star$},||V||_1\le 2C_3p\},||\cdot||_1\right)\\
\end{align*}

Note that $\Omega$ has $Q + q < 2Q$ free parameters.
%The dimension of $\Omega$ is $Q+q$ free parameters and can be actually bounded by $2Q$
We have first
\begin{align*}
    \log \binom{Q}{C_3's^\star}&\lesssim s^\star \log(Q)\lesssim n\epsilon_n^2 \\
    \log \binom{pq}{C_3's^\star}&\lesssim s^\star \log(pq)\lesssim n\epsilon_n^2 \\
\end{align*}

We further bound the covering number using the result in Lemma \ref{lemma:packing_number_lp}. Observe that $||V||_1\cap \{|v_i|<\delta_\omega \text{ for } 1\le i\le Q+q-C_3's^\star\}\subset \{||V'||\le 8C_3q\}\times[-\delta_\omega,\delta_\Omega]^{Q+q-C_3's^\star}$ where $V'\in \mathbb{R}^{C_3's^\star}$ we have
\begin{align*}
    &N\left(\frac{1}{8C_3n\max\{p,q\}^{3/2}},\{V:|v_i|<\delta_\omega\text { for $1\le i\le Q+q-C_3's^\star$},||V||_1\le 8C_3q\},||\cdot||_1\right)\\
    \le & N\left(\frac{1}{8C_3n\max\{p,q\}^{3/2}},\{V\in \mathbb{R}^{C_3's^\star}:||V'||_1\le 8C_3q\times [-\delta_\omega,\delta_\omega]^{Q+q-C_3's^\star}\},||\cdot||_1\right)\\
\end{align*}

We check the condition of Lemma~\ref{lemma:packing_number_lp} (with $p=1$ and $T=2$), by our assumption on $\xi_0$, we have:
\begin{align*}
    (Q+q-C_3's^\star)\delta_\omega&\le 2Q\delta_\omega = 2Q\frac{1}{\xi_0-\xi_1}\log\left[\frac{1-\eta}{\eta}\frac{\xi_0}{\xi_1}\right]\lesssim  \frac{Q\log(\max\{p,q,n\})}{\max\{Q,pq,n\}^{4+b/2+b/2}}\\
    &\le \frac{1}{\max\{Q,pq,n\}^{3+b/2}}
\end{align*}

The denominator dominates $ C_3n\max\{p,q\}^{3/2}$ thus for large enough $n$, we have $(Q+q-C_3's^\star)\delta_\omega\le \frac{1}{32C_3n\max\{p,q\}^{3/2}}$ thus by Lemma \ref{lemma:packing_number_lp}, we can control the covering number by the packing number:
\begin{align*}
    &
    \log N\left(\frac{1}{8C_3n\max\{p,q\}^{3/2}},\{V:|v_i|<\delta_\omega\text { for $1\le i\le Q+q-C_3's^\star$},||V||_1\le 8C_3q\},||\cdot||_1\right)\\
    \le &\log D\left(\frac{1}{16C_3n\max\{p,q\}^{3/2}},\{V'\in \mathbb{R}^{C_3's^\star},||V'||_1\le 8C_3q\},||\cdot||_1\right)\\
    \lesssim & s^\star \log(128C_3^2qn\max\{p,q\}^{3/2})\\
    %\le &2(s^\star) \log(128C_3^2qn\max\{p,q\}^{3/2})\\
    \lesssim& n\epsilon_n^2
\end{align*}

Similarly for $\Psi$,
\begin{align*}
    &N\left(\frac{1}{\sqrt{2C_3np}},\{V:|v_i|<\delta_\psi\text { for $1\le i\le pq-C_3's^\star$},||V||_1\le 2C_3p\},||\cdot||_1\right)\\
    \le& N\left(\frac{1}{\sqrt{2C_3np}},\{V'\in \mathbb{R}^{C_3's^\star}:||V'||_1\le 2C_3p\times [-\delta_\psi,\delta_\psi]^{pq-C_3's^\star}\},||\cdot||_1\right)\\
\end{align*}

We again check the condition of Lemma~\ref{lemma:packing_number_lp} (again with $p=1$ and $T=2$):
\begin{align*}
    (pq-C_3's^\star )\delta_\psi&\le pq\delta_\psi=\frac{pq}{\lambda_0-\lambda_1}\log\left[\frac{1-\theta}{\theta}\frac{\lambda_0}{\lambda_1}\right]\lesssim \frac{pq\log(\max\{p,q,n\})}{\max\{pq,n\}^{5/2+b/2+b/2}}\\
    &\le \frac{1}{\max\{pq,n\}^{3/2+b/2}}
\end{align*}

The denominator dominates $\sqrt{2C_3np}$, Thus for enough large $n$, we have $(pq-C_3's^\star )\delta_\psi\le 1/4\sqrt{2C_3np}$. Thus similar to $\Omega$, we have:
\begin{align*}
    &
    \log N\left(\frac{1}{\sqrt{2C_3np}},\{V:|v_i|<\delta_\omega\text { for $1\le i\le pq-C_3's^\star$},||V||_1\le 2C_3p\},||\cdot||_1\right)\\
    \le &\log D\left(\frac{1}{2\sqrt{2C_3np}},\{V'\in \mathbb{R}^{C_3's^\star},||V'||_1\le 2C_3p\},||\cdot||_1\right)\\
    \lesssim & s^\star \log(4C_3p\sqrt{2C_3np})\\
    \lesssim& n\epsilon_n^2
\end{align*}

\end{proof}

Contraction of log-affinity follows Lemma~\ref{lemma:test_condition} and Lemma~\ref{lemma:KL_cg}. 

\subsection{From log-affinity to $\Omega$ and $X\Psi\Omega^{-1}$}
\label{sec:log_affinity_to_sieve}

In this section we prove Theorem 1 of the main text using the contraction under log-affinity.

\begin{proof}
    Denoting $\Psi-\Psi_0=\Delta_\Psi$ and $\Omega-\Omega_0=\Delta_\Omega$ we have the log-affinity $\frac{1}{n}\sum \rho(f_i,f_{0i})$ is
\begin{align*}
    \frac{1}{n}\sum \rho(f_i,f_{0i})=&-\log\left(\frac{|\Omega^{-1}|^{1/4}|\Omega_0^{-1}|^{1/4}}{|(\Omega^{-1}+\Omega_0^{-1})/2|^{1/2}}\right)\\
    &+\frac{1}{8n}\sum X_i(\Psi\Omega^{-1}-\Psi_0\Omega_0^{-1})\left(\frac{\Omega^{-1}+\Omega_0^{-1}}{2}\right)^{-1}(\Psi\Omega^{-1}-\Psi_0\Omega_0^{-1})^{\top}X_i^{\top}
\end{align*}

Thus $\sum \rho(f_i-f_{0i})\lesssim n\epsilon_n^2$ implies
\begin{equation}
    \label{eqn:log_affinity_implies_some_bound}
    \begin{aligned}
        -\log\left(\frac{|\Omega^{-1}|^{1/4}|\Omega_0^{-1}|^{1/4}}{|(\Omega^{-1}+\Omega_0^{-1})/2|^{1/2}}\right)&\lesssim \epsilon_n^2\\
        \frac{1}{8n}\sum X_i(\Psi\Omega^{-1}-\Psi_0\Omega_0^{-1})\left(\frac{\Omega^{-1}+\Omega_0^{-1}}{2}\right)^{-1}(\Psi\Omega^{-1}-\Psi_0\Omega_0^{-1})^{\top}X_i^{\top}&\lesssim \epsilon_n^2\\
    \end{aligned}
\end{equation}

%This is almost the same as \citet{Ning2020}'s Equations 5.11-5.12. 
We can directly apply the result from \citet{Ning2020}'s Equation 5.11 as it is the same as the first equation in Equation~\eqref{eqn:log_affinity_implies_some_bound}. 
Because $\Psi_{0}$ and $\Omega^{-1}$ have bounded operator norms and because $\Delta_{\Omega}$ can be controlled, the cross-term is also controlled by $\epsilon_{n}.$
%The cross term is also controlled by $\epsilon_n$ because as we well show that $\Delta_\Omega$ can be controlled and $\Psi_0,\Omega^{-1}$ both has bounded operator norm. 
The first part of Equation~\eqref{eqn:log_affinity_implies_some_bound} implies $||\Omega^{-1}-\Omega_0^{-1}||_F^2\lesssim \epsilon_n^2$. 

Meanwhile $||\Omega^{-1}-\Omega_0^{-1}||_F^2\lesssim \epsilon_n^2$ implies for large enough $n$ $\Omega$'s L2 operator norm is bounded (since we assume bounds on $\Omega_0^{-1}$'s operator norm and the difference cannot have very large eigenvalues which make the sum has 0 eigenvalue), using the result $||AB||_F\le |||A|||_2||B||_F$, while also observe $\Omega_0-\Omega=\Omega(\Omega^{-1}-\Omega_0^{-1})\Omega_0$, and by assumption that $\Omega_0$ has bounded L2 operator norm, we conclude~\eqref{eqn:log_affinity_implies_some_bound} implies $||\Omega-\Omega_0||_F\lesssim \epsilon_n$.

Since $|||\Omega^{-1}|||_2$ is bounded for large enough $n$, we can 
%directly apply an argument from \citet{Ning2020} ( specifically the argument around their Equation 5.12) to 
conclude ~\eqref{eqn:log_affinity_implies_some_bound}'s second part implies:
\begin{align*}
    \epsilon_n^2&\gtrsim \frac{1}{8n}\sum ||X_i(\Psi\Omega^{-1}-\Psi_0\Omega^{-1}_0)||_2^2|||\frac{\Omega^{-1}+\Omega_0^{-1}}{2}|||^{-1}_2\\
    &\gtrsim \frac{1}{8n}\sum ||X_i(\Psi\Omega^{-1}-\Psi_0\Omega^{-1}_0)||_2^2|||\frac{\Omega^{-1}+\Omega_0^{-1}}{2}|||^{-1}_2\\
    &\gtrsim\frac{1}{n}\sum ||X_i(\Psi\Omega^{-1}-\Psi_0\Omega^{-1}_0)||_2^2/\sqrt{\epsilon_n^2+1}
\end{align*}
Combining all of these we have the contraction of $\Omega$ and $X\Psi $
\end{proof}

\subsection{Contraction of $\Psi$}

We give the proof of Corollary 1.

Contraction of $\Psi$ requires more assumptions on the design matrix $X$.
Similar to \citet{RockovaGeorge2018_ssl} and \citet{Ning2020}, we introduce the restricted eigenvalue
\begin{align*}
    \phi^2(\tilde{s})=\inf \left\{\frac{||XA||^2_F}{n||A||_F^2}:0\le |\nu(A)|\le \tilde{s}\right\}
\end{align*}

With this definition, 

\begin{proof}
    \begin{align*}
        ||X(\Psi\Omega^{-1})-\Psi_0\Omega^{-1}_0)||_F^2&\lesssim n\epsilon_n^2\\
        ||\Omega-\Omega_0||_F^2\lesssim \epsilon_n^2
    \end{align*}
    implies the result in Equation (4.5) of the main text.
    Namely,
    \begin{align*}
        ||\Psi\Omega^{-1}-\Psi_0\Omega^{-1}_0||_F^2=||(\Delta_\Psi-\Psi_0\Omega^{-1}\Delta_\Omega)\Omega^{-1}||_F^2\lesssim \epsilon_n^2/\phi^2(s_0^\Psi+C_3's^\star)
    \end{align*}
    
    Since both $\Omega$ and $\Omega^{-1}$ have bounded operator norm when $||\Omega-\Omega_0||_F^2\lesssim \epsilon_n^2,$ for large enough $n$, we must have:
    \begin{align*}
        ||\Delta_\Psi||_F-||\Psi_0\Omega^{-1}\Delta_\Omega||_F\le ||\Delta_\Psi-\Psi_0\Omega^{-1}\Delta_\Omega||_F\lesssim \epsilon_n/\sqrt{\phi^2(s_0^\Psi+C_3's^\star)}
    \end{align*}
    
    Since $\Psi_0$ and $\Omega^{-1}$ have bounded operator norm, $||\Psi_0\Omega^{-1}\Delta_\Omega||_F\lesssim \epsilon_n$, and we must have:
    \begin{align*}
        ||\Delta_\Psi||_F\lesssim \epsilon_n/\sqrt{\min\{\phi^2(s_0^\Psi+C_3's^\star),1\}}
    \end{align*}
    
    Thus we can conclude
    \begin{align*}
        \sup_{\Psi\in \mathcal{T}_0,\Omega\in\mathcal{H}_0}\E_0 \Pi\left(||\Psi-\Psi_0||_F^2\ge \frac{M'\epsilon_n^2}{\min\{\phi^2(s_0^\Psi+C_3's^\star),1\}}\right)\to 0
    \end{align*}
    
\end{proof}

\section{Bayesian Bootstrap}
\label{appendix:BB}
\subsection{Bayesian bootstrap cgSSL}

For likelihood reweighting Bayesian bootstrap, we directly reweight $Y^*_i=\sqrt{w_i^t}Y_i$, $X_i^*=\sqrt{w_i^t}X_i$ and use the reweighted data in the cgSSL algorithm to generate BB samples. 

Similarly, we defined the jittered prior on $\Psi$ same as in \citet{Nie2020}.

\begin{align}
    \begin{split}
        \pi(\Psi\vert \theta,\mu )=\left(\prod_{j,k}\left[\frac{\theta\lambda_1}{2}e^{-\lambda_1|\psi_{j,k}-\mu_{j,k}|}+\frac{(1-\theta)\lambda_0}{2}e^{-\lambda_0|\psi_{j,k}-\mu_{j,k}|}\right]\right) 
    \end{split}
    %\label{eq:jitted_prior_psi}
\end{align}

To fit the jittered version, we work with the slightly transformed parameter $\Psi^*,\Omega$ where $\Psi^*$ was defined as $\Psi^*=\Psi-\mu_t$, with reweighted data $Y^*=\sqrt{w_i^t}Y_i$ and $X_i^*=\sqrt{w_i^t}X_i$. We fit the model with the transformed data using the original $\Psi$ updating procedure in the original cgSSL with transformed data $Y^*\Omega-(x^*_i)^\top\mu_t$. We then update $\Omega$ using modified cgQUIC after the E step. We post process to get $\Psi = \Psi^*+\mu_t$. We also need to modify the QUIC algorithm to the jittered QUIC (Algorithm~\ref{algo:jitQUIC}) to account for the jittering.

\begin{algorithm}[htp]
	\caption{The jittered QUIC algorithm}
	\label{algo:jitQUIC}
	\KwData{$S=Y^{\top}Y/n$, the jittering matrix $J$, regularization parameter matrix $\Xi$, initial $\Omega_0$, inner stopping tolerance $\epsilon$, parameters $0 < \sigma < 0.5$, $0 < \beta < 1$}
	\KwResult{path of positive definite $\Omega_t$ that converge to $\argmin_\Omega f$ with $f(\Omega)=g(\Omega) +\sum_{k,k'}{\xi_{k,k'}\lvert \omega_{k,k'}-J_{k,k'}\rvert}$, where $g(\Omega)=-\log(\lvert\Omega\rvert) + \tr(S\Omega)$}
	
	Initialize $W_0=\Omega_0^{-1}$;
	
	\For{$t=1,2,\dots$}{
		$D = 0, U = 0$\;
		\While{not converged}{
			Partition the variables into fixed and free sets based on gradient\footnote{For compactness, we omit the subscript $t-1$ for $\Omega$ and $W$} \\
			$S_{fixed}:= \{(k,k'):|\nabla_{k,k'} g(\Omega)|<\xi_{k,k'}\text{ and } \omega_{k,k'}-J_{k,k'}=0\}$\;
			$S_{free}:= \{(k,k'):|\nabla_{k,k'} g(\Omega)|\ge\xi_{k,k'}\text{ or } \omega_{k,k'}-J_{k,k'}\ne0\}$\;
			\For{$(k,k')\in S_{free}$}{
				Calculate Newton direction:\\
				$b=S_{k,k'}-W_{k,k'}+w_k^{\top}Dw_{k'}$\;
				$c=\Omega_{k,k'}+D_{k,k'}-J_{k,k'}$\;
				\If{$i\ne j$}{
					$a=W_{k,k'}^2+W_{k,k}W_{k',k'}$\;
				}
				\Else{
					$a=W_{k,k}^2$\;
				}
				$\mu=-c+[\lvert c-b/a\rvert-\xi_{k,k'}/a]_+ \sign(c-b/a)$ \;
				$D_{k,k'}+=\mu$\;
				$\boldsymbol{u}_k+=\mu \boldsymbol{w}_{k'}$\;
				$\boldsymbol{u}_{k'}+=\mu \boldsymbol{w}_{k}$\;
				
			}
		}
		\For{$\alpha=1,\beta,\beta^2,\dots$}{
			Compute the Cholesky decomposition of $\Omega_{t-1}+\alpha D$\;
			\If{$\Omega_t$ is not positive definite}{
				continue\;
			}
			Compute $f(\Omega_{t-1}+\alpha D)$\;
			\If{$f(\Omega_{t-1}+\alpha D^*)\le f(\Omega_{t-1})+\alpha \sigma\delta, \delta = \tr[\nabla g(\Omega_{t-1})^{\top} D^*] + ||\Omega_{t-1}+D^*-J||_{1,\Xi}-||\Omega_{t-1}-J||_{1,\Xi}$}{
				break\;
			}
		}
	$\Omega_t = \Omega_{t-1}+\alpha D$\;
	$W_{t}=\Omega_t^{-1}$ using Cholesky decomposition result\;
	}
return $\{\Omega_t\}$\;
\end{algorithm}

\subsection{Results on simulation}

\begin{figure}[h]
	\centering
	\label{fig:BB_10_10_100}
	\includegraphics[width = \textwidth]{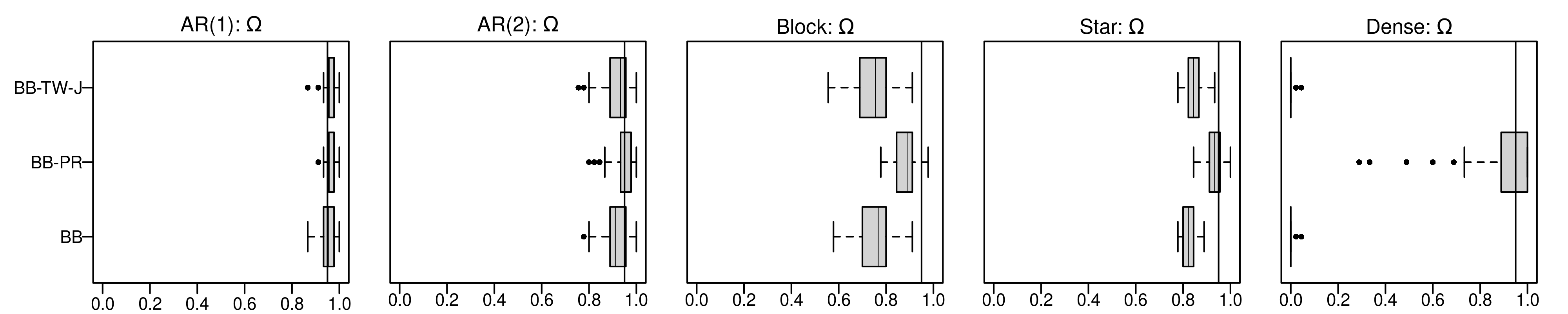}
	\includegraphics[width = \textwidth]{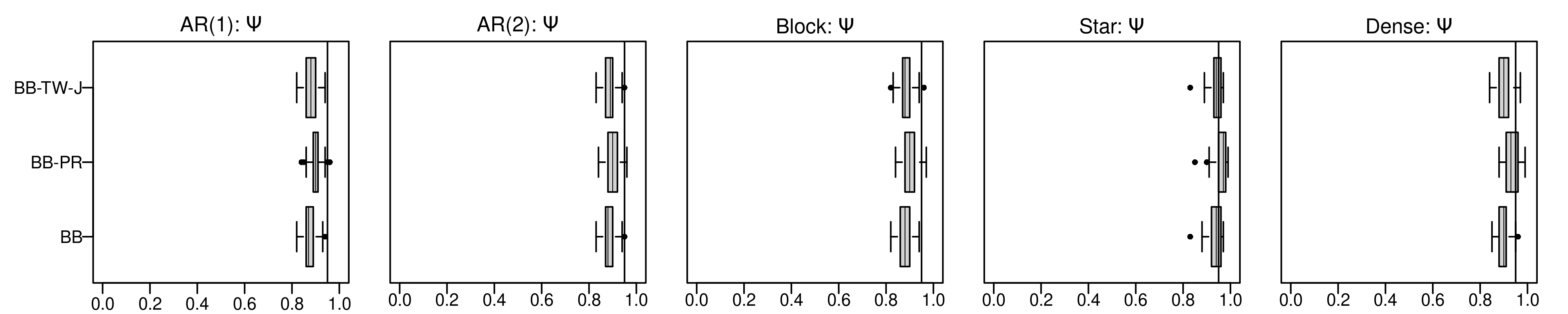}
	\caption{Coverage of three BB procedures models with $p=10,q=10,n=100$.}
\end{figure}

\begin{figure}[h]
	\centering
	\label{fig:BB_20_30_100}
	\includegraphics[width = \textwidth]{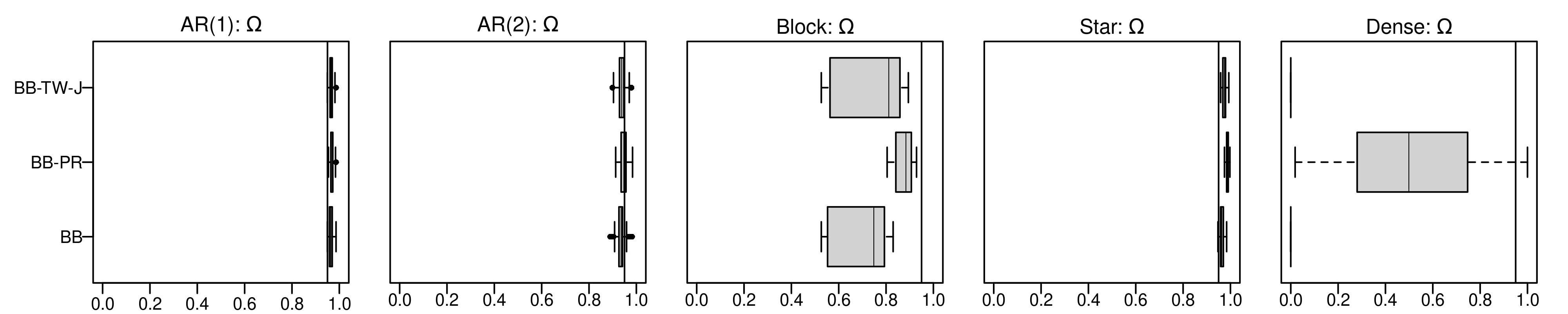}
	\includegraphics[width = \textwidth]{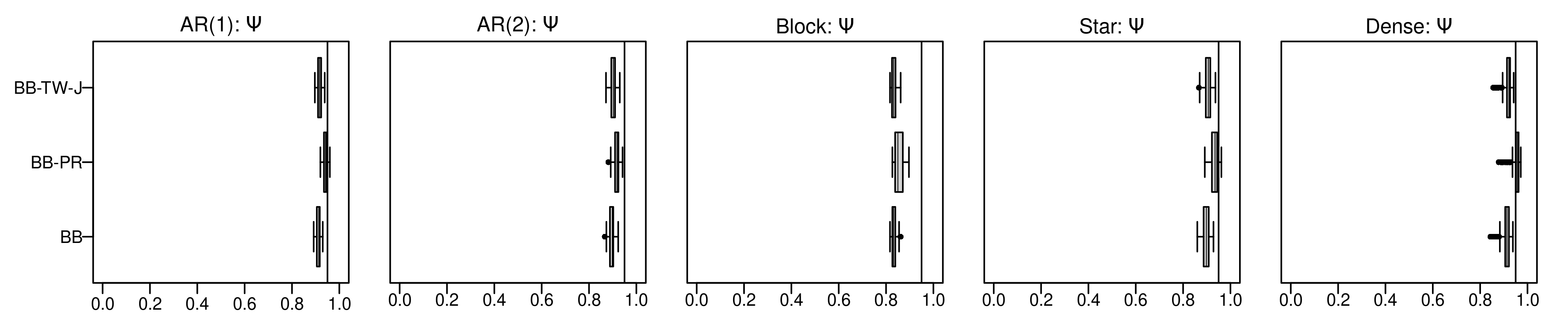}
	\caption{Coverage of three BB procedures models with $p=20,q=30,n=100$.}
\end{figure}

\begin{figure}[h]
	\centering
	\label{fig:BB_100_30_400}
	\includegraphics[width = \textwidth]{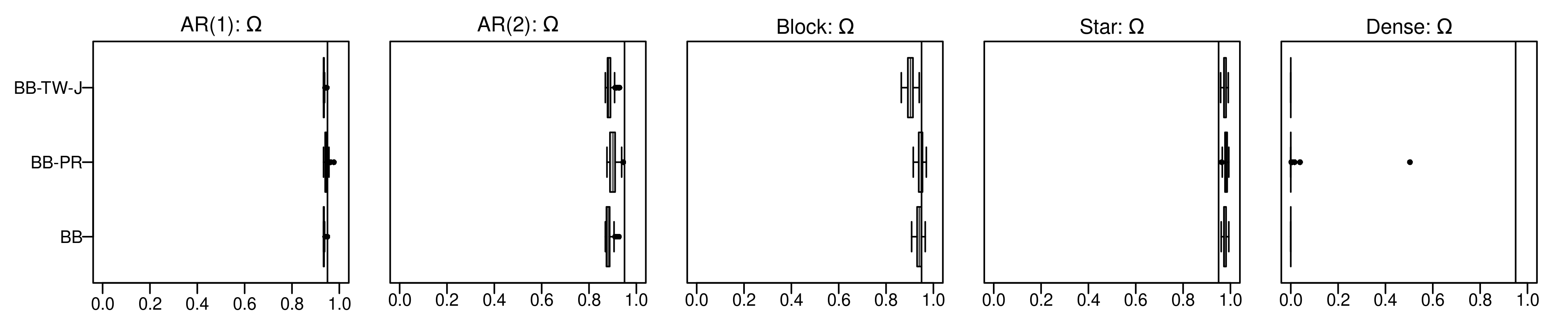}
	\includegraphics[width = \textwidth]{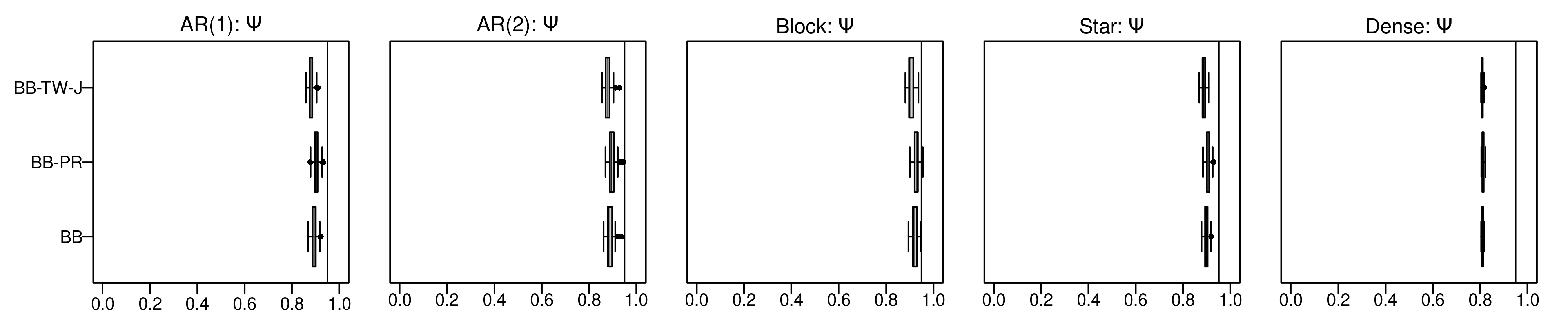}
	\caption{Coverage of three BB procedures models with $p=100,q=30,n=400$.}
\end{figure}

\begin{figure}[h]
	\centering
	\label{fig:BB}
	\includegraphics[width = \textwidth]{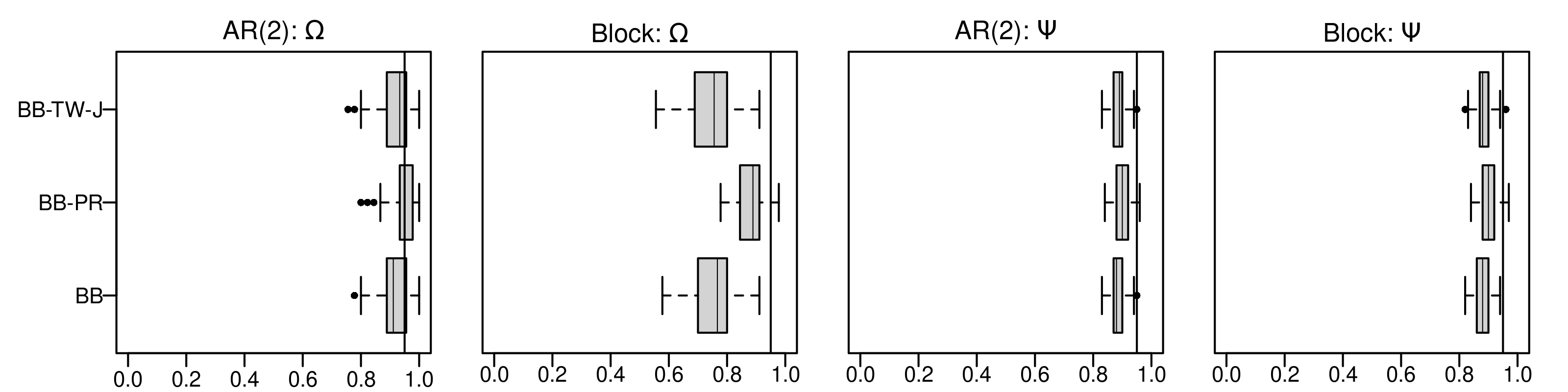}
	\caption{Coverage of three BB procedures in AR(2) and Block models with $p=10,q=10,n=100$.}
\end{figure}

\end{document}